\newcommand{\rn}{\mapsto_{\mathsf{n}}}
\newcommand{\ps}[2]{{#1}\oplus{#2}}
\newcommand{\val}{\mathsf{V}\LOP}
\newcommand{\ssemn}[2]{#1\Rightarrow_\mathsf{IN} #2}
\newcommand{\bsemn}[2]{#1\Downarrow_\mathsf{CN} #2}
\newcommand{\ibsemn}[2]{#1\Downarrow_\mathsf{IN} #2}
\newcommand{\sumd}[1]{\sum{#1}}
\newcommand{\emdist}{\emptyset}
\newcommand{\emcon}{\emptyset}
\newcommand{\RRp}[2]{\RRN_{[#1,#2]}}
\newcommand{\evlabel}{\tau}
\newcommand{\bnf}{\; ::=\;}
\newcommand{\btn}{\mathsf{bt}}
\newcommand{\bvn}{\mathsf{bv}}
\newcommand{\ban}{\mathsf{ba}}
\newcommand{\bsn}{\mathsf{bs}}
\newcommand{\midd}{\; \; \mbox{\Large{$\mid$}}\;\;}
\newcommand{\NN}{\mathbb{N}}
\newcommand{\RRN}{\mathbb{R}}
\newcommand{\PCF}{\ensuremath{\mathsf{PCF}}}
\newcommand{\LOP}{\ensuremath{\Lambda_{\oplus}}}
\newcommand{\LOPp}[1]{\ensuremath{\Lambda_{\oplus}}(#1)}
\newcommand{\pfone}{\pi}
\newcommand{\pftwo}{\rho}
\newcommand{\indsetone}{I}
\newcommand{\indsettwo}{J}
\newcommand{\indsetthree}{K}
\newcommand{\realone}{r}
\newcommand{\realtwo}{s}
\newcommand{\termone}{M}
\newcommand{\termtwo}{N}
\newcommand{\termthree}{L}
\newcommand{\termfour}{P}
\newcommand{\termfive}{Q}
\newcommand{\termsix}{R}
\newcommand{\termseven}{S}
\newcommand{\termnine}{U}
\newcommand{\varone}{x}
\newcommand{\vartwo}{y}
\newcommand{\varthree}{z}
\newcommand{\vecvarone}{\overline{\varone}}
\newcommand{\vecvartwo}{\overline{\vartwo}}
\newcommand{\valone}{V}
\newcommand{\valtwo}{W}
\newcommand{\valthree}{X}
\newcommand{\vectermthree}{\overline{\termthree}}
\newcommand{\distone}{\mathscr{D}}
\newcommand{\disttwo}{\mathscr{E}}
\newcommand{\distthree}{\mathscr{F}}
\newcommand{\distfour}{\mathscr{G}}
\newcommand{\distfive}{\mathscr{H}}
\newcommand{\setone}{X}
\newcommand{\settwo}{Y}
\newcommand{\probone}{p}
\newcommand{\probtwo}{q}
\newcommand{\probthree}{r}
\newcommand{\probfour}{s}
\newcommand{\elone}{a}
\newcommand{\eltwo}{b}
\newcommand{\natone}{n}
\newcommand{\fsone}{\mathbf{S}}
\newcommand{\fstwo}{\mathbf{T}}
\newcommand{\fsthree}{\mathbf{U}}
\newcommand{\fsfour}{\mathbf{V}}
\newcommand{\nil}{\mathtt{nil}}
\newcommand{\las}[2]{[\cdot]{#1}::{#2}}
\newcommand{\stk}[1]{\mathcal{FS}(#1)}
\newcommand{\stktm}[2]{E_{#1}(#2)}
\newcommand{\cbnfsred}{\leadsto_{\mathsf{n}}}
\newcommand{\cbnfscp}[3]{(#1,#2)\downarrow_{\mathsf{n}}^{#3}}
\newcommand{\cbnfssup}[2]{\mathbb{C}(#1,#2)}
\newcommand{\cbnciuleq}{\preceq^{\mathsf{CIU}}}
\newcommand{\cbnconleq}{\mathord{\leq_\oplus}}
\newcommand{\cbnconequiv}{\mathord{\simeq_\oplus}} 
\newcommand{\cbnsimleq}{\pas}
\newcommand{\cbnciuequiv}{\cong^{\mathsf{CIU}}}
\newcommand{\howe}[1]{{#1}^{H}}
\newcommand{\subst}[3]{#1 \sub #3 #2}
\newcommand{\substn}[5]{#1 \sub{#3}{#2} \ldots \sub{#5}{#4}}
\newcommand{\setvar}{\mathsf{X}}
\newcommand{\abstr}[2]{\lambda #1.#2}
\newcommand{\clabstr}[2]{\mathord{\nu}#1.#2}
\newcommand{\app}[2]{#1#2}
\newcommand{\pair}[2]{\langle #1,#2\rangle}
\newcommand{\supp}[1]{\mathsf{S}(#1)}
\newenvironment{varitemize}
{
\begin{list}{\labelitemi}
{\setlength{\itemsep}{0pt}
 \setlength{\topsep}{0pt}
 \setlength{\parsep}{0pt}
 \setlength{\partopsep}{0pt}
 \setlength{\leftmargin}{15pt}
 \setlength{\rightmargin}{0pt}
 \setlength{\itemindent}{0pt}
 \setlength{\labelsep}{5pt}
 \setlength{\labelwidth}{10pt}
}}
{
 \end{list} 
}
\newcounter{numberone}
\newenvironment{varenumerate}
{
\begin{list}{\arabic{numberone}.}
{
  \usecounter{numberone}
  \setlength{\itemsep}{0pt}
  \setlength{\topsep}{0pt}
  \setlength{\parsep}{0pt}
  \setlength{\partopsep}{0pt}
  \setlength{\leftmargin}{15pt}
  \setlength{\rightmargin}{0pt}
  \setlength{\itemindent}{0pt}
  \setlength{\labelsep}{5pt}
  \setlength{\labelwidth}{15pt}
}}
{
\end{list} 
}
\newcounter{numbertwo}
\newcommand{\ev}{\Downarrow}
\newcommand{\FV}[1]{\mathsf{FV}(#1)}
\newcommand{\id}{\mathsf{ID}}
\newcommand{\pdists}{\mathcal{P}}
\newcommand{\sem}[1]{\mathcal{S}(#1)}
\newcommand{\sumsem}[1]{\sum\sem{#1}}
\newcommand{\pow}[1]{\mathcal{P}(#1)}
\newcommand{\powfin}[1]{\mathcal{P}_{\mathsf{FIN}}(#1)}
\newcommand{\card}[1]{\left|{#1}\right|}
\newcommand{\statesone}{\mathcal{S}}
\newcommand{\labelsone}{\mathcal{L}}
\newcommand{\labelone}{\ell}
\newcommand{\stateone}{s}
\newcommand{\statetwo}{t}
\newcommand{\statethree}{v}
\newcommand{\transone}{\mathcal{P}}
\newcommand{\translop}{\mathcal{P}_{\oplus}}
\newcommand{\relone}{\mathcal{R}}
\newcommand{\reltwo}{\mathcal{T}}
\newcommand{\quot}[2]{#1/\mathord{#2}}
\newcommand{\econe}{\mathtt{E}}
\newcommand{\ectwo}{\mathtt{F}}
\newcommand{\xsubset}{X}
\newenvironment{proof}{\begin{trivlist}
       \item[\hskip \labelsep {\bfseries Proof.}]}{\hfill $\Box$ \end{trivlist}}
\newcommand{\st}{\;|\;}
\newcommand{\rel}[4]{#1\vdash #2\;#3\;#4}
\newcommand{\relu}[3]{#1\;#2\;#3}
\newcommand{\pab}{\sim}
\newcommand{\pabn}{\cbn{\textnormal{PAB}}}
\newcommand{\pas}{\lesssim}
\newcommand{\pasn}{\cbn{\textnormal{PAS}}}
\newcommand{\tcrel}[1]{#1^+}
\newcommand{\cbn}[1]{#1}
\newcommand{\cbnpab}{\cbn{\pab}}
\newcommand{\cbnpas}{\cbn{\pas}}
\newcommand{\cbv}[1]{{#1}_{\mathsf{v}}}
\newcommand{\cbvpab}{\cbv{\pab}}
\newcommand{\cbvpas}{\cbv{\pas}}
\newcommand{\Comone}{\ensuremath{(\mathsf{Com1})}}
\newcommand{\Comtwo}{\ensuremath{(\mathsf{Com2})}}
\newcommand{\Comthree}{\ensuremath{(\mathsf{Com3})}}
\newcommand{\ComthreeL}{\ensuremath{(\mathsf{Com3L})}}
\newcommand{\ComthreeR}{\ensuremath{(\mathsf{Com3R})}}
\newcommand{\Comfour}{\ensuremath{(\mathsf{Com4})}}
\newcommand{\ComfourL}{\ensuremath{(\mathsf{Com4L})}}
\newcommand{\ComfourR}{\ensuremath{(\mathsf{Com4R})}}
\newcommand{\Howeone}{\ensuremath{(\mathsf{How1})}}
\newcommand{\Howetwo}{\ensuremath{(\mathsf{How2})}}
\newcommand{\Howethree}{\ensuremath{(\mathsf{How3})}}
\newcommand{\Howefour}{\ensuremath{(\mathsf{How4})}}
\newcommand{\Howestkone}{\ensuremath{(\mathsf{Howstk1})}}
\newcommand{\Howestktwo}{\ensuremath{(\mathsf{Howstk2})}}
\newcommand{\Howeredone}{\ensuremath{(\mathsf{empty})}}
\newcommand{\Howeredtwo}{\ensuremath{(\mathsf{value})}}
\newcommand{\Howeredthree}{\ensuremath{(\mathsf{term})}}
\newcommand{\Ctxone}{\ensuremath{(\mathsf{Ctx1})}}
\newcommand{\Ctxtwo}{\ensuremath{(\mathsf{Ctx2})}}
\newcommand{\Ctxthree}{\ensuremath{(\mathsf{Ctx3})}}
\newcommand{\Ctxfour}{\ensuremath{(\mathsf{Ctx4})}}
\newcommand{\Ctxfive}{\ensuremath{(\mathsf{Ctx5})}}
\newcommand{\Ctxsix}{\ensuremath{(\mathsf{Ctx6})}}
\newcommand{\TCone}{\ensuremath{(\mathsf{TC1})}}
\newcommand{\TCtwo}{\ensuremath{(\mathsf{TC2})}}
\newcommand{\ctxset}{\mathsf{C}\LOP}
\newcommand{\ctxsetp}[2]{\mathsf{C}\LOP(#1\,;\,#2)}
\newcommand{\ctxone}{C}
\newcommand{\ctxtwo}{D}
\newcommand{\ctxthree}{E}
\newcommand{\ctxhole}[1]{[#1]}
\newcommand{\evp}[1]{\mathord{\ev_{#1}}}
\newcommand{\caset}{\mathbb{CA}}
\newcommand{\cbncfleq}{\mathsf{\leq_{\oplus}^{ca}}}
\newcommand{\passone}{\mathbf{P}}
\newcommand{\nt}[1]{\mathcal{N}_{#1}}
\newcommand{\vt}[1]{\mathcal{V}_{#1}}
\newcommand{\ed}[1]{\mathcal{E}_{#1}}
\newcommand{\fnet}[1]{\nt{#1}\defi(\vt{#1},\ed{#1})}
\newcommand{\ce}[1]{c(#1)}
\newcommand{\flone}{f}
\newcommand{\edone}{e}
\newcommand{\ratioone}{q}
\newcommand{\cutsone}{S}
\newcommand{\cutstwo}{T}
\newcommand{\cuttone}{A}
\newcommand{\cutttwo}{B}
\newcommand{\cutone}{C}
\newcommand{\cuttwo}{D}
\newcommand{\cutthree}{E}
\newcommand{\ccut}[1]{C_#1}
\newcommand{\arr}[1]{\mathrel{\stackrel{{\;#1\;}}{\mbox{\rightarrowfill}}}}
\newcommand{\DSLongrightarrow}{ 
\mathrel{\;\mbox{\raisebox{.7ex}{$\!\leadsto$}{$\!\!\!\!\!\!\leadsto$}}\;}}
\newcommand{\DwaDO}[1]{\mathord{\downdownarrows_{#1}}}  
\newcommand{\LLT}{\mathrel{=_{{\rm{LL}}}}}
\newcommand{\logbis}{\approx}
\newcommand{\conteqDO}{\mathrel{\simeq_\oplus^{{\rm{FS}}}}}
\newcommand{\conteqDOeval}{\mathrel{\approxeq_\oplus^{{\rm{FS}}}}}
\newcommand{\DSlongrightarrow}{\leadsto}
\newcommand{\ULdwa}{\Longrightarrow}
\newcommand{\coupledrel}{coupled relation}
\newcommand{\myoplus}{\oplus}
\newcommand{\LaDO}{\Lambda_\oplus^{{\rm{FS}}}}
\newcommand{\LaPRO}{\LOP}
\newcommand{\dist}[1]{\mathcal{D}(#1)}
\newcommand{\distQ}{\mathcal{D}}
\newcommand{\overl}[1]{#1}
\newcommand{\prob}[1]{{\boldsymbol\Sigma}({#1})}
\newcommand{\starred}[1]{\mathrel{#1^{\rm{C}}}}
\newcommand{\starredD}[1]{\mathrel{#1^{\rm{C FS}} }}
\newcommand{\Prod}[1]{\Sigma_{#1}}
\newcommand{\Sum}{\sum}
 \def\E{{\cal E}}
\def\V{{\cal V}}
\def\VV{\mathrel{\cal V}}
\def\W{{\cal W}}
\newcommand{\til}[1]{\overline{#1}}
\newcommand{\Om}{\Omega}
\newcommand{\myfrac}[2]{\frac{#1}{#2}} 
\def\st{\; {\mbox{s.t.}} \;}
\def\SS{\mathrel{\cal S}}
\def\R{{\cal R}}
\def\RR{\mathrel{\cal R}}
\def\defi{\mathrel{\stackrel{\mbox{\scriptsize {\rm{def}}}}{=}}} 
\def\sub#1#2{\{\raisebox{.5ex}{\small$#1$}\! / \mbox{\small$#2$}\}}
\def\midd{\; \; \mbox{\Large{$\mid$}}\;\;}
\newcommand{\longrightarrowP}[1]{\longrightarrow_{#1}}     
\newcommand{\LongrightarrowP}[1]{\Longrightarrow_{#1}}     
\def\trans#1{{\mathsf{#1}}}   
\newcommand{\Dwa}{\Downarrow}           
\newcommand{\Up}{\Uparrow}           
\newcommand{\DwaP}[1]{\Dwa_{#1}}     
\newcommand{\brac}[1]{[#1] }   
\newcommand{\contexthole}{ [ \cdot  ] }      
\newcommand{\ct}[1]{ C \brac{#1} }   
\newcommand{\qct}{ C  }              
\newcommand{\simO}{\mathrel{\sim^{\rm O}}}
\newcommand{\simOn}[1]{\mathrel{\sim^{\rm O}_{#1}}}
\renewcommand{\sem}[1]{[\![ #1]\!]}
\renewcommand{\bsemn}[2]{#1\Downarrow #2}
\renewcommand{\ssemn}[2]{#1\Rightarrow #2} 
\renewcommand{\ibsemn}[2]{\bsemn{#1}{#2}}
\renewcommand{\PCF}{\ensuremath{\mathsf{PCF}}}
\newtheorem{definition}{Definition}[section] 
\newtheorem{example}[definition]{Example} 
\newtheorem{lemma}[definition]{Lemma} 
\newtheorem{theorem}[definition]{Theorem} 
\newtheorem{corollary}[definition]{Corollary} 
\newtheorem{proposition}[definition]{Proposition}  
\newtheorem{remark}[definition]{Remark}
\begin{document}

\title{On Coinductive
  Equivalences\\ for Higher-Order Probabilistic Functional Programs} 
\author{Ugo Dal Lago \and Davide Sangiorgi \and Michele Alberti}
\date{}
\maketitle

\newcommand{\hsk}[1]{\emph{\texttt{#1}}}

\begin{abstract}
We study bisimulation and context equivalence in a probabilistic
$\lambda$-calculus. The contributions of this paper are
threefold.  Firstly we show a technique for proving congruence of
\emph{probabilistic applicative bisimilarity}. While the technique
follows Howe's method, some of the technicalities are quite different,
relying on non-trivial ``disentangling'' properties for sets of real
numbers. Secondly we show that, while bisimilarity is in general 
strictly finer than context equivalence, coincidence between the 
two relations is attained on pure $\lambda$-terms. The resulting 
equality is that induced by \emph{Levy-Longo trees}, generally accepted as 
the finest extensional equivalence on pure $\lambda$-terms under 
a lazy regime.  Finally, we derive a coinductive characterisation 
of context equivalence on the whole probabilistic language, via 
an extension in which terms akin to distributions may appear in redex position.  
Another motivation for the extension is that its operational semantics allows
us to experiment with a different congruence technique, namely that of
\emph{logical bisimilarity}.
\end{abstract}

\section{Introduction}
Probabilistic models are more and more pervasive. Not only are they a formidable tool
when dealing with uncertainty and incomplete information, but they sometimes are 
a \emph{necessity} rather than an option, like in computational cryptography (where, e.g.,
secure public key encryption schemes need to be
probabilistic~\cite{GoldwasserMicali}). 
A nice way to deal computationally with probabilistic models  is to allow probabilistic choice as 
a primitive when designing algorithms, this way switching from usual, deterministic computation 
to a new paradigm, called probabilistic computation. Examples of application areas in which 
probabilistic computation has proved to be useful include natural language processing~\cite{manning1999foundations}, 
robotics~\cite{thrun2002robotic}, computer vision~\cite{comaniciu2003kernel}, and machine learning~\cite{pearl1988probabilistic}.

This new form of computation, of course, needs to be available to
programmers to be accessible.  And indeed, various probabilistic programming languages
have been introduced in the last years, spanning from abstract ones~\cite{Plotkin,RamseyPfeffer,ParkPfenningThrun}
to more concrete ones~\cite{Pfeffer01,Goodman}, being
inspired by various programming paradigms like imperative, functional or even object oriented. 
A quite common scheme consists in endowing any deterministic language with
one or more primitives for probabilistic choice, like binary probabilistic
choice or primitives for distributions.

One class of languages that copes well with probabilistic computation are functional languages. Indeed,
viewing algorithms as functions allows a smooth integration of distributions into the playground, itself
nicely reflected at the level of types through monads~\cite{GordonABCGNRR13,RamseyPfeffer}. As a matter of fact, many existing probabilistic
programming languages~\cite{Pfeffer01,Goodman} are designed around the $\lambda$-calculus or one of its incarnations, like
\textsf{Scheme}. All these allows to write higher-order functions (i.e., programs can
take functions as inputs and produce them as outputs). 

The focus of this paper are operational techniques for understanding and reasoning about program equality in
higher-order probabilistic languages. Checking computer programs for equivalence is a crucial, but challenging,
problem. Equivalence between two programs generally means that the programs should  behave ``in the same manner'' 
under any context~\cite{Morris-68}. Specifically, two $\lambda$-terms are \emph{context equivalent} if they 
have the same convergence behavior (i.e., they do or do not terminate) in any possible context. Finding 
effective methods for context equivalence proofs is particularly challenging in higher-order languages.

\emph{Bisimulation} has emerged as a very powerful operational method for proving equivalence of programs 
in various kinds of languages, due to the associated coinductive proof method. To be useful, the behavioral 
relation resulting from bisimulation --- \emph{bisimilarity} --- should be a \emph{congruence}, and should
also be sound with respect to context equivalence. Bisimulation has been transplanted onto higher-order languages  
by Abramsky~\cite{Abramsky-90}. This version of bisimulation, called \emph{applicative bisimulation} has received considerable
attention~\cite{Gordon-92,PittsSurvey,Sands98,Las98a,Pit97,LengletSS09}. In short, two functions $M$ and $N$ are applicative bisimilar when
their applications $MP$ and $NP$ are applicative bisimilar for any argument $P$.

Often, checking a given notion of bisimulation to be a congruence in higher-order languages is nontrivial. 
In the case of applicative bisimilarity, congruence proofs usually rely on Howe's method~\cite{Howe-96}. 
Other forms of bisimulation have been proposed, such as environmental bisimulation and logical 
bisimulation \cite{SangiorgiKS07fsen,SangiorgiKS11,KoutavasLS11}, with the goal of
relieving the burden of the proof of congruence, and of accommodating
language extensions.

In this work, we consider the pure $\lambda$-calculus extended with a probabilistic choice operator.
Context equivalence of two terms means that they have the same \emph{probability of convergence} in all contexts. 
The objective of the paper is to understand context equivalence and bisimulation in this 
paradigmatic  probabilistic higher-order language, called $\LOP$. 

The paper contains three main technical contributions.  The first  is a
proof of congruence for probabilistic applicative bisimilarity along the
lines of Howe's method. This technique consists in defining, for every
relation on terms $\relone$, its Howe's lifting $\howe{\relone}$. The
construction, essentially by definition, ensures that the relation obtained
by lifting bisimilarity is a congruence; the latter is then proved to be
itself a bisimulation, therefore coinciding with applicative
bisimilarity. Definitionally, probabilistic applicative bisimulation is
obtained by setting up a labelled Markov chain on top of $\lambda$-terms,
then adapting to it the coinductive scheme introduced by Larsen and Skou in
a first-order setting \cite{LarsenSkou}. In the proof of congruence, the
construction $\howe{(\cdot)}$ closely reflects analogous constructions for
nondeterministic extensions of the $\lambda$-calculus.  The novelties are
in the technical details for proving that the resulting relation is a
bisimulation: in particular our proof of the so-called Key Lemma --- an essential
ingredient in Howe's method --- relies on non-trivial ``disentangling''
properties for sets of real numbers, these properties themselves proved by
modeling the problem as a flow network and then apply the Max-flow Min-cut
Theorem.  The congruence of applicative bisimilarity yields soundness
with respect to context equivalence as
an easy corollary.  Completeness, however, fails: applicative bisimilarity
is proved to be finer.  A subtle aspect is also the late vs. early
formulation of  bisimilarity; with a choice operator the two
versions are semantically different; our construction crucially relies on
the late style.

In our second main technical contribution we show that the presence of
higher-order functions and probabilistic choice in contexts gives context
equivalence and applicative bisimilarity maximal discriminating power
on pure $\lambda$-terms.  We do so by proving that, on pure
$\lambda$-terms, both context equivalence and applicative bisimilarity
coincide with the \emph{Levy-Longo tree equality}, which equates terms
with the same Levy-Longo tree (briefly 
LLT). The LLT equality is generally accepted as the finest extensional equivalence on pure
$\lambda$-terms under a \emph{lazy} regime.
The result is in sharp contrast with what happens under a
nondeterministic interpretation of choice (or in the absence of choice), 
where context equivalence is coarser than LLT equality. 

Our third main contribution is a coinductive characterisation of
probabilistic context equivalence on the whole language $\LaPRO$
(as opposed to the subset of pure $\lambda$-terms).  We obtain this
result by setting a bisimulation game on an extension of $\LaPRO$ in which
weighted formal sums~--- 
terms akin to distributions~---
 may appear in redex position.  Thinking of
distributions as sets of terms, the construction reminds us of the
reduction of nondeterministic to deterministic automata. The
technical details are however quite different, because we are in a
higher-order language and therefore --- once more --- we are faced
with the congruence problem for bisimulation, and because
formal sums may contain an \emph{infinite} number of terms.
For the proof of congruence of bisimulation  in this extended language, 
we have experimented the technique of
logical bisimulation. In this method (and in the related method of
environmental bisimulation), the clauses of applicative bisimulation
are modified so to allow the standard congruence argument for
bisimulations in first-order languages, where the bisimulation method
itself is exploited to establish that the closure of the bisimilarity
under contexts is again a bisimulation.  Logical bisimilarities have
two key elements. First, bisimilar functions may be tested with
\emph{bisimilar} (rather than identical) arguments (more precisely,
the arguments should be in the context closure of the bisimulation;
the use of contexts is necessary for soundness).  Secondly, the
transition system should be small-step, deterministic (or at least
confluent), and  the bisimulation game should also be played on
internal moves. In our probabilistic setting, the ordinary logical bisimulation game has to be
modified substantially. Formal sums represent possible evolutions
of running terms, hence they should appear in redex
position only (allowing them anywhere would complicate matters
considerably), also making the resulting bisimulation proof technique
more cumbersome). The obligation of redex position for certain terms
is in contrast with the basic schema of logical bisimulation, in which
related terms can be used as arguments to bisimilar functions and can
therefore end up in arbitrary positions.  We solve this problem by
moving to \emph{coupled logical bisimulations}, where a
bisimulation is formed by a pair of relations, one on
$\LOP$-terms, the other on terms extended with
formal sums.  The bisimulation game is played on both relations, but
only the first relation is used to assemble input arguments for
functions.

Another delicate point is the meaning of internal transitions for
formal sums.  In logical bisimilarity the transition system should
be small-step; and formal sums should evolve into values in a finite
number of steps, even if the number of terms composing the
formal sum is infinite. We satisfy these requirements by defining
the transition system for extended terms on top of that of
$\LaPRO$-terms.
The proof of congruence of coupled logical  bisimilarity also exploits
an ``up-to distribution'' bisimulation proof technique. 

In the paper we adopt call-by-name evaluation. The results on applicative bisimilarity
can be transported onto  call-by-value; in contrast, transporting  the other
results is less clear, and we leave it for future work. See Section~\ref{sect:beyond}
for more details. An extended version of this paper with more details is available~\cite{EV}.
\subsection{Further Related Work}
Research on (higher-order) probabilistic functional languages have, so far, mainly focused
on either new programming constructs, or denotational semantics, or applications. The underlying 
operational theory, which in the ordinary $\lambda$-calculus is known to be very rich, has  remained
so far largely unexplored. In this section, we give some pointers to the relevant literature on
probabilistic $\lambda$-calculi, without any hope of being exhaustive. 

Various probabilistic $\lambda$-calculi have been proposed, starting from
the pioneering work by Saheb-Djahromi~\cite{Djahromi78}, followed by more
advanced studies by Jones and Plotkin~\cite{Plotkin}. Both these works are
mainly focused on the problem of giving a denotational semantics to
higher-order probabilistic computation, rather than on studying it from an
operational point view.  More recently,
there has been a revamp on this line of work, with the introduction of
adequate (and sometimes also fully-abstract) denotational models for
probabilistic variations of
\PCF~\cite{DanosHarmer,EhrhardPaganiTasson}. There is also another thread
of research in which various languages derived from the $\lambda$-calculus
are given types in monadic style, allowing this way to nicely model
concrete problems like Bayesian inference and probability models arising in
robotics~\cite{RamseyPfeffer,ParkPfenningThrun,GordonABCGNRR13}; these
works however, do not attack the problem of giving an operationally based
theory of program equivalence.

Nondeterministic extensions of the $\lambda$-calculus have been analysed
in typed calculi~\cite{AstesianoCosta84,Sieber93,Las98a} as well as
in untyped calculi~\cite{JagadeesanPanangaden90,Boudol94,Ong93,deLiguoroPiperno95}. 
The emphasis in all these works is
mainly domain-theoretic.  Apart from~\cite{Ong93}, all cited authors
closely follow the testing theory~\cite{NicolaHennessy84}, in its
modalities \emph{may} or \emph{must}, separately or together. 
Ong's approach~\cite{Ong93} 
inherits both testing and bisimulation elements.

Our definition of applicative bisimulation follows
Larsen and Skou's scheme~\cite{LarsenSkou} for fully-probabilistic
systems. Many other forms of probabilistic bisimulation  have been introduced
in the literature, but their greater complexity is usually due to the presence 
of \emph{both} nondeterministic and probabilistic
behaviors, or to  continuous probability distributions.
See surveys such as \cite{BernardoNL13,Panangaden09,Hennessy12}.

Contextual characterisations of LLT equality include \cite{BoLa94},
in a $\lambda$-calculus with multiplicities in which deadlock is
observable, and \cite{DezTU99}, in  a
$\lambda$-calculus with choice, parallel composition, and
both call-by-name and call-by-value applications. 
The characterisation in \cite{San94sce} in a $\lambda$-calculus with non-deterministic
operators, in contrast, is not contextual, as derived from a
bisimulation that includes a clause on internal  moves so to observe
branching  structures in behaviours. See \cite{DezG01}
for a survey on  observational characterisations of
$\lambda$-calculus trees. 

\section{Preliminaries}\label{sect:p}

\subsection{A Pure, Untyped, Probabilistic Lambda Calculus}
Let $\setvar=\{\varone,\vartwo,\ldots\}$ be a denumerable set of variables.
The set $\LOP$ of \emph{term expressions}, or \emph{terms} is defined as follows:
$$
\termone,\termtwo,\termthree\bnf\varone\midd\abstr{\varone}{\termone}\midd\app{\termone}{\termtwo}\midd\ps{\termone}{\termtwo},
$$
where $\varone\in\setvar$. The only non-standard operator in $\LOP$ is probabilistic
choice: $\ps{\termone}{\termtwo}$ is a term which is meant to behave as either $\termone$
or $\termtwo$, each with probability $\frac{1}{2}$. A more general construct $\termone\oplus_p\termtwo$ where
$p$ is any (computable) real number from $[0,1]$, is derivable, given the universality of the $\lambda$-calculus (see, e.g., \cite{DalLagoZorzi}).
The set of free variables of a term $\termone$ is indicated as $\FV{\termone}$ and is
defined as usual. Given a finite set of variables $\vecvarone\subseteq\setvar$, $\LOP(\vecvarone)$ denotes the set of
terms whose free variables are among the ones in $\vecvarone$. A term $\termone$ is \emph{closed} if 
$\FV{\termone}=\emptyset$ or, equivalently, if
$\termone\in\LOP(\emptyset)$. 
The (capture-avoiding) substitution of $\termtwo$ for the free occurrences
of $\varone$ in $\termone$ is denoted
$\subst{\termone}{\varone}{\termtwo}$.  We  sometimes use the identity
term $I\defi \lambda x.x$, the projector $K\defi\lambda x . \lambda y. x$,
and the purely divergent term $\Omega\defi (\lambda x. x x)(\lambda x. x x)$.

Terms  are now given a call-by-name semantics following~\cite{DalLagoZorzi}.
A term is a \emph{value} if it is a closed $\lambda$-abstraction. We  call $\val$ 
the set of all values. Values are ranged over by metavariables like $\valone,\valtwo,\valthree$.
Closed terms evaluates not to a single value, but to a \emph{(partial) value distribution}, that is,  a function 
$\distone:\val\rightarrow\RRp{0}{1}$ such that $\sum_{\valone\in\val} \distone(\valone)\leq 1$.
The set of all value distributions is $\cbv{\pdists}$.
Distributions do not necessarily sum to $1$, so to 
model the possibility of (probabilistic) divergence.  Given a value distribution $\distone$,
its \emph{support} $\supp{\distone}$ is the subset of $\val$ whose elements
are values to which $\distone$ attributes positive probability. Value
distributions ordered pointwise form both a lower semilattice and an
$\omega\mathbf{CPO}$: limits of $\omega$-chains always exist.  Given a
value distribution $\distone$, its \emph{sum} $\sumd{\distone}$ is 
$\sum_{\valone\in\val} \distone(\valone)$.

The call-by-name semantics of a closed term $\termone$ is a value distribution 
$\sem{\termone}$ defined in one of the ways explained in~\cite{DalLagoZorzi}.
We  recall this now, though only briefly for lack of space.
The first step consists in defining a formal system deriving
finite \emph{lower approximations} to the semantics of $\termone$.
Big-step approximation semantics, as an example, derives
judgments in the form $\bsemn{\termone}{\distone}$, where $\termone$ is a term
and $\distone$ is a value distribution of finite support (see Figure~\ref{fig:bscbnsem}). 
\begin{figure*}
$$
\infer
    [\btn]
    {\bsemn{\termone}{\emdist}}
    {}
\qquad
\infer
    [\bvn]
    {\bsemn{\valone}{\{\valone\}}}
    {}
\qquad
\infer[\ban] 
      {\bsemn{\app{\termone}{\termtwo}}{\sum_{\abstr{\varone}{\termfour}\in\supp{\distone}}\distone(\abstr{\varone}{\termfour})\cdot\disttwo_{\termfour,\termtwo}}}
      {\bsemn{\termone}{\distone} & \{\bsemn{\subst{\termfour}{\varone}{\termtwo}}{\disttwo_{\termfour,\termtwo}}\}_{\abstr{\varone}{\termfour}\in\supp{\distone}}}
\qquad
\infer[\bsn]
      {\bsemn{\ps{\termone}{\termtwo}}{\frac{1}{2}\cdot\distone+\frac{1}{2}\cdot\disttwo}}
      {\bsemn{\termone}{\distone} & \bsemn{\termtwo}{\disttwo}}
$$
\caption{Big-step call-by-name approximation semantics for $\LOP$.}\label{fig:bscbnsem}
\end{figure*}
Small-step approximation semantics can be defined similarly, and derives
judgments in the form $\ssemn{\termone}{\distone}$. Noticeably, big-step and
small-step can simulate each other, i.e. if $\bsemn{\termone}{\distone}$, then
$\ssemn{\termone}{\disttwo}$ where $\disttwo\geq\distone$, and
\emph{vice versa}~\cite{DalLagoZorzi}. In the second
step, $\sem{\termone}$, called the \emph{semantics} of $\termone$, is
set as the least upper 
bound of distributions obtained in either of the two ways: 
$$
\sem{\termone}\defi\sup_{\bsemn{\termone}{\distone}}\distone=\sup_{\ssemn{\termone}{\distone}}\distone.
$$
Notice that the above is well-defined because for every $\termone$, the set of all distributions
$\distone$ such that $\bsemn{\termone}{\distone}$ is directed, and thus its least upper bound is
a value distribution because of $\omega$-completeness.

\begin{example}
  Consider the term $\termone\defi\ps{I}{(\ps{K}{\Omega})}$. We have
 $\bsemn{\termone}{\distone}$, where $\distone(I)=\frac{1}{2}$
  and $\distone(\valone)$ is $0$ elsewhere, as well as
  $\bsemn{\termone}{\emdist}$, where $\emdist$ is the empty
  distribution. The distribution $\sem{\termone}$ assigns
  $\frac{1}{2}$ to $I$ and $\frac{1}{4}$ to $K$.
\end{example}
The semantics of terms satisfies some useful equations, such as:
\begin{lemma}\label{lemma:sembetaCBN}
  $\sem{\app{(\abstr{\varone}{\termone})}{\termtwo}}=\sem{\subst{\termone}{\varone}{\termtwo}}$.
\end{lemma}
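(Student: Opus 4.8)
The plan is to prove the equation $\sem{\app{(\abstr{\varone}{\termone})}{\termtwo}}=\sem{\subst{\termone}{\varone}{\termtwo}}$ by establishing inequalities in both directions, working through the approximation semantics rather than the limit directly. Since $\sem{\cdot}$ is defined as a supremum of the finite approximants $\distone$ with $\bsemn{\cdot}{\distone}$, it suffices to show that the two sets of approximants are cofinal in each other: every finite lower approximation of one side is dominated by some finite lower approximation of the other. First I would recall Lemma~\ref{lemma:sembetaCBN}'s left-hand side: any derivation of $\bsemn{\app{(\abstr{\varone}{\termone})}{\termtwo}}{\distone}$ must end with an application of the rule $\ban$, whose premises are $\bsemn{\abstr{\varone}{\termone}}{\disttwo}$ and a family $\{\bsemn{\subst{\termfour}{\varfive}{\termtwo}}{\distthree_{\termfour,\termtwo}}\}$ indexed over the abstractions in $\supp{\disttwo}$.

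For the direction $\sem{\app{(\abstr{\varone}{\termone})}{\termtwo}}\leq\sem{\subst{\termone}{\varone}{\termtwo}}$, I would analyse the derivation of the head value distribution $\disttwo$ of $\abstr{\varone}{\termone}$. The only rules that can derive a judgment about a value (a closed $\lambda$-abstraction) are $\btn$, giving $\disttwo=\emdist$, and $\bvn$, giving $\disttwo=\{\abstr{\varone}{\termone}\}$. In the former case the whole distribution $\distone$ collapses to $\emdist\leq\sem{\subst{\termone}{\varone}{\termtwo}}$ trivially. In the latter case $\supp{\disttwo}=\{\abstr{\varone}{\termone}\}$, so the sum in the conclusion of $\ban$ has a single summand, namely $\disttwo(\abstr{\varone}{\termone})\cdot\distthree_{\termone,\termtwo}=\distthree_{\termone,\termtwo}$, where $\bsemn{\subst{\termone}{\varone}{\termtwo}}{\distthree_{\termone,\termtwo}}$ is a premise. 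Hence $\distone=\distthree_{\termone,\termtwo}$ is itself a finite approximant of $\subst{\termone}{\varone}{\termtwo}$, giving $\distone\leq\sem{\subst{\termone}{\varone}{\termtwo}}$, and taking the supremum over all such derivations yields the inequality.

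For the reverse direction $\sem{\subst{\termone}{\varone}{\termtwo}}\leq\sem{\app{(\abstr{\varone}{\termone})}{\termtwo}}$, I would take any derivation of $\bsemn{\subst{\termone}{\varone}{\termtwo}}{\distthree}$ and build a derivation of $\bsemn{\app{(\abstr{\varone}{\termone})}{\termtwo}}{\distthree}$: simply apply $\bvn$ to obtain $\bsemn{\abstr{\varone}{\termone}}{\{\abstr{\varone}{\termone}\}}$ and then apply $\ban$ with the single premise $\bsemn{\subst{\termone}{\varone}{\termtwo}}{\distthree}$. The conclusion of $\ban$ is then exactly $\{\abstr{\varone}{\termone}\}(\abstr{\varone}{\termone})\cdot\distthree=\distthree$, so $\distthree$ is an approximant of the application as well, and the supremum gives the claimed inequality. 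Combining the two directions yields equality of the suprema, which is the statement.

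I expect the main subtlety, rather than a genuine obstacle, to be the bookkeeping in the $\leq$ direction: one must argue that a derivation of the application is forced through the $\ban$ rule and that the head term $\abstr{\varone}{\termone}$ admits only the two derivations via $\btn$ and $\bvn$, so the indexed family over $\supp{\disttwo}$ degenerates to at most a single index. Once this case analysis on the shape of derivations is in place, both inclusions are immediate, and the fact that the approximant sets are directed (noted just after the definition of $\sem{\cdot}$) guarantees that passing to suprema is legitimate.
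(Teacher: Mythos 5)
Your proof is correct, but it is worth noting that the paper itself does not prove this lemma at all: its ``proof'' is a pointer to the reference by Dal Lago and Zorzi, where the properties of the approximation semantics are established. What you have written is a self-contained argument that fills in exactly what that citation hides, and it works: the case analysis on the shape of derivations is the right tool, since for the head term $\abstr{\varone}{\termone}$ only $\btn$ and $\bvn$ can apply (the conclusions of $\ban$ and $\bsn$ are syntactically applications and sums), so the indexed family in $\ban$ degenerates to at most one premise, and conversely any approximant of $\subst{\termone}{\varone}{\termtwo}$ can be promoted to an approximant of the redex by one application of $\bvn$ followed by $\ban$. Passing to suprema is legitimate for the reason you give. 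One small slip: your opening claim that any derivation of $\bsemn{\app{(\abstr{\varone}{\termone})}{\termtwo}}{\distone}$ ``must end with an application of the rule $\ban$'' is literally false, since $\btn$ applies to \emph{every} term and yields $\distone=\emdist$; this case is trivial (the empty distribution is below everything), and you do handle the analogous degenerate case for the head derivation, but the outermost $\btn$ case should be stated explicitly for the case analysis to be exhaustive. With that one-line addition the argument is complete.
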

\begin{lemma}\label{lemma:semsumCBN}
  $\sem{\ps{\termone}{\termtwo}}=\frac{1}{2}\sem{\termone}+\frac{1}{2}\sem{\termtwo}$.
\end{lemma}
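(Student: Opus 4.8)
The plan is to work directly with the big-step approximation semantics of Figure~\ref{fig:bscbnsem}, recalling that, for closed $\ps{\termone}{\termtwo}$, the semantics $\sem{\ps{\termone}{\termtwo}}$ is by definition the least upper bound $\sup_{\bsemn{\ps{\termone}{\termtwo}}{\distthree}}\distthree$. I would then prove the two inequalities separately and combine them.

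For $\sem{\ps{\termone}{\termtwo}}\leq\frac{1}{2}\sem{\termone}+\frac{1}{2}\sem{\termtwo}$, I would reason by inversion on the last rule of a derivation of $\bsemn{\ps{\termone}{\termtwo}}{\distthree}$. Since $\ps{\termone}{\termtwo}$ is neither a value nor an application, the only rules whose conclusion can have this subject are $\btn$, which forces $\distthree=\emdist$, and $\bsn$, which forces $\distthree=\frac{1}{2}\distone+\frac{1}{2}\disttwo$ for some $\distone,\disttwo$ with $\bsemn{\termone}{\distone}$ and $\bsemn{\termtwo}{\disttwo}$. In the first case $\emdist\leq\frac{1}{2}\sem{\termone}+\frac{1}{2}\sem{\termtwo}$ holds trivially; in the second, $\distone\leq\sem{\termone}$ and $\disttwo\leq\sem{\termtwo}$ directly from the definition of the semantics as a supremum, so $\distthree\leq\frac{1}{2}\sem{\termone}+\frac{1}{2}\sem{\termtwo}$. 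Taking the supremum over all such $\distthree$ gives the inequality.

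For the reverse inequality I would use $\bsn$ constructively: for any finite approximations with $\bsemn{\termone}{\distone}$ and $\bsemn{\termtwo}{\disttwo}$, the rule $\bsn$ derives $\bsemn{\ps{\termone}{\termtwo}}{\frac{1}{2}\distone+\frac{1}{2}\disttwo}$, whence $\frac{1}{2}\distone+\frac{1}{2}\disttwo\leq\sem{\ps{\termone}{\termtwo}}$. It then suffices to pass to the supremum over $\distone$ and $\disttwo$ and to commute it with the convex combination, i.e. to show that $\sup_{\distone,\disttwo}(\frac{1}{2}\distone+\frac{1}{2}\disttwo)$ equals $\frac{1}{2}\sem{\termone}+\frac{1}{2}\sem{\termtwo}$, where $\distone,\disttwo$ range over the approximants of $\termone$ and $\termtwo$ respectively.

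This last commutation is the only point requiring real (though mild) care, and is where I expect the main obstacle to lie. I would settle it pointwise: for each value $\valone$ the real sets $\{\distone(\valone)\}$ and $\{\disttwo(\valone)\}$ are bounded above by $1$ and inherit directedness from the directedness of the approximation sets noted after the definition of $\sem{\cdot}$, and for nonnegative reals ranged over independently one has $\sup_{a,b}(\frac{1}{2}a+\frac{1}{2}b)=\frac{1}{2}\sup_a a+\frac{1}{2}\sup_b b$ by a routine $\epsilon$-argument. Conceptually this is just Scott-continuity of pointwise addition and nonnegative scalar multiplication on the $\omega\mathbf{CPO}$ $\cbv{\pdists}$, so that both operations commute with suprema of directed families. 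Combining the two inequalities yields $\sem{\ps{\termone}{\termtwo}}=\frac{1}{2}\sem{\termone}+\frac{1}{2}\sem{\termtwo}$, as required.
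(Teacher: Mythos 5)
Your proof is correct. Note, however, that the paper itself offers no argument for this lemma: its ``proof'' is a pointer to the cited work~\cite{DalLagoZorzi}, so your proposal supplies exactly the details the paper delegates. Your decomposition is the natural one and all the steps check out. The inversion in the first inequality is exhaustive because $\ps{\termone}{\termtwo}$ is syntactically neither a variable, nor an abstraction, nor an application, so $\btn$ and $\bsn$ really are the only rules whose conclusion can be $\bsemn{\ps{\termone}{\termtwo}}{\distthree}$. The sup-commutation in the second inequality is legitimate for the reason you give: the approximant sets of $\termone$ and of $\termtwo$ are ranged over \emph{independently}, so for each value $\valone$ one has $\sup_{a,b}\bigl(\frac{1}{2}a+\frac{1}{2}b\bigr)=\frac{1}{2}\sup_a a+\frac{1}{2}\sup_b b$ by the routine $\epsilon$-argument; directedness (which the combined family $\{\frac{1}{2}\distone+\frac{1}{2}\disttwo\}$ inherits from the two approximant sets) is needed only to guarantee that this pointwise supremum is itself a value distribution and coincides with the least upper bound in the $\omega$-CPO, which you also note. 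In short, where the paper buys brevity by citation, your argument buys self-containedness, and it is essentially the argument one would expect to find in the reference.
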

\begin{proof}
  See~\cite{DalLagoZorzi} for detailed proofs.
\end{proof}
We are interested in context equivalence in this probabilistic
setting. Typically, in a qualitative scenario as the (non)deterministic
one, terms are considered context equivalent if they both converge or
diverge. Here, we need to take into account quantitative information.
\begin{definition}[Context Preorder and Equivalence]
The expression $\termone\evp{\probone}$ stands for $\sumsem{\termone}=\probone$, i.e.,
the term $\termone$ converges with probability $\probone$. 
The \emph{context preorder}  $\cbnconleq$ stipulates 
$\termone \cbnconleq \termtwo$ if 
$\ctxone\ctxhole{\termone}\evp{\probone}$ implies
$\ctxone\ctxhole{\termtwo}\evp{\probtwo}$ with
$\probone\leq\probtwo$, for every closing context $\ctxone$. 
The equivalence induced by   $\cbnconleq$ is
 \emph{probabilistic context equivalence}, denoted as
$\cbnconequiv $.
\end{definition}

\begin{remark}[Types, Open Terms]
\label{r:to} 
The results in this paper are stated for an untyped language.  Adapting
them to a simply-typed language is straightforward; we  use integers,
booleans and recursion in examples.  Moreover, while the results are often stated for closed
terms only, they can be generalized to open terms in the expected manner.
In the paper, context equivalences and preorders are  defined on
open terms; (bi)similarities are defined on closed terms and it is then
intended that they are extended to open terms by requiring the usual
closure under substitutions.
\end{remark}

\begin{example}\label{ex:exp}
We give some basic examples of higher-order probabilistic programs, which we will
analyse using the coinductive techniques we introduce later in this paper. Consider the 
functions \hsk{expone}, \hsk{exptwo}, and \hsk{expthree} from Figure~\ref{fig:examples}.
They are written in a \emph{\textsf{Haskell}}-like language extended with probabilistic choice, 
but can also be seen as terms in a (typed) probabilistic $\lambda$-calculus with
integers and recursion akin to $\LOP$.
\begin{figure}
{\footnotesize
\begin{verbatim}
expone f n = (f n) (+) (expone f n+1)
exptwo f = (\x -> f x) (+) (exptwo (\x -> f (x+1)))
expthree k f n = foldp k n f (expthree (expone id k) f)

foldp 0 n f g = g n
foldp m n f g = (f n) (+) (foldp (m-1) (n+1) f g)
\end{verbatim}}
\caption{Three Higher-Order Functions}\label{fig:examples}
\end{figure}
Term \hsk{expone}  takes a function \hsk{f} and a natural number \hsk{n}
in input, then it proceeds by tossing a fair coin (captured here by the binary infix operator
\hsk{(+)}) and, depending on the outcome of the toss, either calls \hsk{f} on \hsk{n}, 
or recursively calls itself on \hsk{f} and \hsk{n+1}. When fed with, e.g., the identity
and the natural number \hsk{1}, the program \hsk{expone} evaluates to the geometric distribution
assigning probability $\frac{1}{2^{n}}$ to any positive natural number $n$. A similar effect
can be obtained by \hsk{exptwo}, which only takes \hsk{f} in input, then ``modifying'' it along
the evaluation. The function \hsk{expthree} is more complicated, at least apparently. To understand
its behavior, one should first look at the auxiliary function \hsk{foldp}. If
\hsk{m} and \hsk{n} are two natural numbers and \hsk{f} and \hsk{g} are two functions,
\hsk{foldp m n f g} call-by-name reduces to the following expression:

{\footnotesize
\vspace{-8pt}
$$
\hsk{(f n) (+) ((f n+1) (+) \ldots\ ((f n+m-1) (+) (g n+m)))}.
$$}
The term \hsk{expthree} works by forwarding its three arguments to
\hsk{foldp}. The fourth argument is a recursive call to \hsk{expthree} where, however,
\hsk{k} is replaced by any number greater or equal to it, chosen according to a 
geometric distribution. The functions above can all be expressed
in
$\LOP$, using fixed-point combinators. As we will see soon,
\hsk{expone}, \hsk{exptwo}, and \hsk{expthree k} are context
equivalent whenever \hsk{k} is a natural number.
\end{example}

\subsection{Probabilistic Bisimulation}
In this section we recall the definition and a few basic notions of
bisimulation for labelled Markov chains, following Larsen and
Skou~\cite{LarsenSkou}. In Section~\ref{sect:pasht} we will then adapt this
form of bisimilarity to the probabilistic $\lambda$-calculus $\LOP$ by combining
it with Abramsky's applicative bisimilarity.
\begin{definition}
A \emph{labelled Markov chain} is a triple $(\statesone,\labelsone,\transone)$
such that:
\begin{varitemize}
\item
  $\statesone$ is a countable set of \emph{states};
\item
  $\labelsone$ is set of \emph{labels};
\item
  $\transone$ is a \emph{transition probability matrix}, i.e. a function
  $$
  \transone:\statesone\times\labelsone\times\statesone\rightarrow\RRp{0}{1}
  $$
  such that the following normalization condition holds:
  $$
  \forall\labelone\in\labelsone.\forall\stateone\in\statesone.\,\transone(\stateone,\labelone,\statesone)\leq 1
  $$
  where, as usual $\transone(\stateone,\labelone,\setone)$
  stands for $\sum_{\statetwo\in\setone}\transone(\stateone,\labelone,\statetwo)$ whenever $\setone\subseteq\statesone$.
\end{varitemize}
\end{definition}
If $\relone$ is an equivalence relation on $\statesone$, $\quot{\statesone}{\relone}$ denotes
the \emph{quotient} of $\statesone$ modulo $\relone$, i.e., the set of all equivalence classes
of $\statesone$ modulo $\relone$. Given any binary relation $\relone$, its reflexive and transitive
closure is denoted as $\relone^*$.
\begin{definition}\label{d:probabis}
Given a labelled Markov chain $(\statesone,\labelsone,\transone)$, a
\emph{probabilistic bisimulation} is an equivalence relation $\relone$ on
$\statesone$ such that $(\stateone,\statetwo)\in\relone$ implies that for
every $\labelone\in\labelsone$ and for every
$\econe\in\quot{\statesone}{\relone}$,
$\transone(\stateone,\labelone,\econe)=\transone(\statetwo,\labelone,\econe)$.
\end{definition}
Note that a probabilistic bisimulation has to be, by definition, an
\emph{equivalence relation}. This means that, in principle, we are not allowed
to define probabilistic bisimilarity simply as the union of all probabilistic
bisimulations. As a matter of fact, given $\relone, \reltwo$ two
equivalence relations, $\relone \cup \reltwo$ is not necessarily an
equivalence relation. The following is a standard way to overcome the
problem:
\begin{lemma}\label{lemma:kleenestar}
  If $\{\relone_i\}_{i\in I}$, is a collection of probabilistic
  bisimulations, then also their reflexive and transitive closure
  $(\bigcup_{i \in I} \relone_i)^*$ is a probabilistic bisimulation.
\end{lemma}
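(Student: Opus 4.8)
The plan is to verify separately the two requirements of Definition~\ref{d:probabis}: that $\reltwo\defi(\bigcup_{i\in I}\relone_i)^*$ is an equivalence relation, and that it satisfies the transfer condition. First I would dispatch the equivalence-relation part. By construction $\reltwo$ is reflexive and transitive, being a reflexive and transitive closure. For symmetry, recall that each $\relone_i$ is an equivalence relation and hence symmetric, so their union $\bigcup_{i\in I}\relone_i$ is symmetric as well; and the reflexive and transitive closure of a symmetric relation is again symmetric. Therefore $\reltwo$ is an equivalence relation.

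Next I would reduce the transfer condition to a single step. Fix $(\stateone,\statetwo)\in\reltwo$, a label $\labelone\in\labelsone$, and a class $\econe\in\quot{\statesone}{\reltwo}$; the goal is $\transone(\stateone,\labelone,\econe)=\transone(\statetwo,\labelone,\econe)$. Since $\reltwo$ is the reflexive and transitive closure of $\bigcup_{i\in I}\relone_i$, there is a finite chain $\stateone=\statethree_0,\statethree_1,\ldots,\statethree_n=\statetwo$ in which every adjacent pair $(\statethree_k,\statethree_{k+1})$ belongs to some $\relone_{i_k}$. As equality of reals is transitive, it suffices to prove $\transone(\statethree_k,\labelone,\econe)=\transone(\statethree_{k+1},\labelone,\econe)$ under the assumption that $(\statethree_k,\statethree_{k+1})\in\relone_i$ for a single index $i$, and then compose these equalities along the chain.

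The crux, and the step I expect to be the main obstacle, is that $\relone_i$ is known to be a bisimulation only relative to its own classes $\quot{\statesone}{\relone_i}$, whereas $\econe$ is a class of the coarser relation $\reltwo$. I would bridge this gap by noting that $\relone_i\subseteq\bigcup_{j\in I}\relone_j\subseteq\reltwo$, so $\relone_i$ refines $\reltwo$: each $\reltwo$-class, in particular $\econe$, is a disjoint union $\econe=\bigsqcup_j C_j$ of $\relone_i$-classes $C_j$. Since $\statesone$ is countable and $\transone(\statethree_k,\labelone,\cdot)$ is a (sub)probability measure with non-negative values, it is countably additive over this partition and the sum may be freely rearranged. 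Hence
$$
\transone(\statethree_k,\labelone,\econe)=\sum_j\transone(\statethree_k,\labelone,C_j)=\sum_j\transone(\statethree_{k+1},\labelone,C_j)=\transone(\statethree_{k+1},\labelone,\econe),
$$
where the middle equality applies the bisimulation property of $\relone_i$ to each $C_j$. Composing the single-step equalities along the chain gives $\transone(\stateone,\labelone,\econe)=\transone(\statetwo,\labelone,\econe)$, establishing the transfer condition and with it the lemma.
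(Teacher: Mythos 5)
Your proof is correct and takes essentially the same route as the paper's: first establish that $(\bigcup_{i\in I}\relone_i)^*$ is an equivalence relation, then reduce the transfer condition to single steps along a finite chain, using the fact that each $\relone_i$ refines the closure so that every class of the closure decomposes into $\relone_i$-classes. Your explicit appeal to countable additivity over that decomposition simply makes precise a step the paper leaves implicit, and your symmetry argument (closure of a symmetric relation is symmetric) is a minor, equally valid variant of the paper's chain-reversal argument.
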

\begin{proof}
  Let us fix $\reltwo\defi(\bigcup_{i \in I} \relone_i)^*$. The fact that
  $\reltwo$ is an equivalence relation can be proved as follows:
  \begin{varitemize}
  \item Reflexivity is easy: $\reltwo$ is reflexive by definition.
  \item Symmetry is a consequence of symmetry of each of the relations in
    $\{\relone_i\}_{i\in I}$: if $\stateone\;\reltwo\;\statetwo$, then
    there are $n\geq 0$ states $\statethree_0,\ldots,\statethree_n$ such
    that $\statethree_0=\stateone$, $\statethree_n=\statetwo$ and for every
    $1\leq i\leq n$ there is $j$ such that
    $\statethree_{i-1}\;\relone_j\;\statethree_i$. By the symmetry of each
    of the $\relone_j$, we easily get that
    $\statethree_{i}\;\relone_j\;\statethree_{i-1}$. As a consequence,
    $\statetwo\;\reltwo\;\stateone$.
  \item Transitivity is itself very easy: $\reltwo$ is transitive by
    definition.
  \end{varitemize}
  Now, please notice that for any $i\in I$, $\relone_i\subseteq \bigcup_{j
    \in I} \relone_j\subseteq\reltwo$.  This means that any equivalence
  class with respect to $\reltwo$ is the union of equivalence classes with
  respect to $\relone_i$. Suppose that
  $\stateone\;\reltwo\;\statetwo$. Then there are $n\geq 0$ states
  $\statethree_0,\ldots,\statethree_n$ such that $\statethree_0=\stateone$,
  $\statethree_n=\statetwo$ and for every $1\leq i\leq n$ there is $j$ such
  that $\statethree_{i-1}\;\relone_j\;\statethree_i$. Now, if
  $\labelone\in\labelsone$ and $\econe\in\statesone/\reltwo$, we obtain
  $$
  \transone(\stateone,\labelone,\econe)=\transone(\statethree_0,\labelone,\econe)=
  \ldots=\transone(\statethree_{n},\labelone,\econe)=\transone(\statetwo,\labelone,\econe).
  $$
  This concludes the proof.
\end{proof}
\noindent{}Lemma~\ref{lemma:kleenestar} allows us to define the largest probabilistic
bisimulation,  called \textit{probabilistic bisimilarity}. It is
 $\sim\defi\bigcup\{ \relone \
|\ \relone$ is a probabilistic bisimulation$\}$. Indeed, by Lemma~\ref{lemma:kleenestar}, $(\sim)^*$ is a
probabilistic bisimulation too;  we now claim that $\sim\;\mathrel{=}(\sim)^*$. The
inclusion $\sim\;\subseteq(\sim)^*$ is obvious. The other way around,
$\sim\supseteq(\sim)^*$, follows by  $(\sim)^*$ being a
probabilistic bisimulation and hence included in
the union of them all, that is $\sim$.

In the notion of a probabilistic simulation, preorders play the
role of equivalence relations: given a labelled Markov chain 
$(\statesone,\labelsone,\transone)$, a \emph{probabilistic simulation} 
is a preorder relation $\relone$ on $\statesone$ such that 
$(\stateone,\statetwo)\in\relone$ implies that for
every $\labelone\in\labelsone$ and for every
$\setone\subseteq\statesone$, $\transone(\stateone,\labelone,\setone)
\leq\transone(\statetwo,\labelone,\relone(\xsubset))$, where 
as usual $\relone(\setone)$ stands for the $\relone$\textit{-closure} of $\setone$, namely
the set $\{\vartwo\in\statesone\ |\ \exists\,\varone\in\setone.\
\varone\;\relone\;\vartwo\}$. 
Lemma~\ref{lemma:kleenestar} can be adapted to probabilistic simulations:
\begin{proposition}\label{prop:kleenestarsim}
  If $\{\relone_i\}_{i\in I}$, is a collection of probabilistic
  simulations, then also their reflexive and transitive closure
  $(\bigcup_{i \in I} \relone_i)^*$ is a probabilistic simulation.
\end{proposition}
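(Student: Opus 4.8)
The plan is to show that $\reltwo\defi(\bigcup_{i\in I}\relone_i)^*$ is both a preorder and satisfies the simulation inequality, mirroring the structure of the proof of Lemma~\ref{lemma:kleenestar} but adapting it to the asymmetric, inequational setting. The preorder part is immediate, and in fact simpler than in the bisimulation case: $\reltwo$ is reflexive and transitive by construction, and — crucially — we no longer need symmetry, so we need not invoke any symmetry property of the individual $\relone_i$. Thus the whole weight of the argument falls on the simulation condition.

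For the simulation condition, suppose $\stateone\;\reltwo\;\statetwo$. By definition of the reflexive and transitive closure there are $n\geq 0$ states $\statethree_0,\ldots,\statethree_n$ with $\statethree_0=\stateone$, $\statethree_n=\statetwo$, and for each $1\leq k\leq n$ an index $j_k$ such that $\statethree_{k-1}\;\relone_{j_k}\;\statethree_k$. Fix $\labelone\in\labelsone$ and $\setone\subseteq\statesone$. The idea is to iterate the simulation inequality along this chain, enlarging the target set at each step. Applying the simulation property of $\relone_{j_1}$ to $\setone$ gives $\transone(\statethree_0,\labelone,\setone)\leq\transone(\statethree_1,\labelone,\relone_{j_1}(\setone))$; applying the simulation property of $\relone_{j_2}$ to the set $\relone_{j_1}(\setone)$ gives $\transone(\statethree_1,\labelone,\relone_{j_1}(\setone))\leq\transone(\statethree_2,\labelone,\relone_{j_2}(\relone_{j_1}(\setone)))$, and so on. Composing these $n$ inequalities yields
$$\transone(\stateone,\labelone,\setone)\leq\transone(\statetwo,\labelone,\relone_{j_n}(\cdots\relone_{j_1}(\setone)\cdots)).$$

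It remains to replace the accumulated closure on the right by $\reltwo(\setone)$, and here I would use two facts. First, $\transone(\statetwo,\labelone,-)$ is monotone with respect to set inclusion, since by definition $\transone(\statetwo,\labelone,\setone)$ is a sum of nonnegative reals indexed by the elements of $\setone$. Second, $\relone_{j_n}(\cdots\relone_{j_1}(\setone)\cdots)\subseteq\reltwo(\setone)$: any element of the former is reached from some element of $\setone$ by a finite chain through relations each contained in $\reltwo$, hence it lies in $\reltwo(\setone)$ because $\reltwo$ contains every $\relone_i$ and is transitive. Combining monotonicity with this containment gives $\transone(\stateone,\labelone,\setone)\leq\transone(\statetwo,\labelone,\reltwo(\setone))$, as required; the degenerate case $n=0$ is covered as well, using only that $\setone\subseteq\reltwo(\setone)$ by reflexivity. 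The main subtlety — and the point where this proof genuinely departs from the bisimulation version — is precisely the need to feed a progressively larger set into the simulation clause at each step and then to recognise that the whole accumulated closure collapses inside $\reltwo(\setone)$; once the right sets are chosen, each individual inequality is routine.
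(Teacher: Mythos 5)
Your proposal is correct and follows essentially the same route as the paper's own proof: unfold a $\reltwo$-chain into steps through individual $\relone_{j_k}$'s, iterate the simulation inequality with the accumulated closures $\relone_{j_k}(\cdots\relone_{j_1}(\setone)\cdots)$ as target sets, and conclude by observing that this accumulated closure is contained in $\reltwo(\setone)$. Your treatment is in fact slightly more careful than the paper's, since you make the monotonicity of $\transone(\statetwo,\labelone,-)$ and the degenerate case $n=0$ explicit, but these are routine points and the argument is the same.
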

\begin{proof}
  The fact that $\relone\defi(\bigcup_{i \in I} \relone_i)^*$ is a preorder
  follows by construction. Then, for being a probabilistic simulation
  $\relone$ must satisfy the following property:
  $(\stateone,\statetwo)\in\relone$ implies that for every
  $\labelone\in\labelsone$ and for every $\xsubset\subseteq\statesone$,
  $\transone(\stateone,\labelone,\xsubset)\leq\transone(\statetwo,\labelone,\relone(\xsubset))$.
  Let $(\stateone,\statetwo)\in\relone$.
  There are $n\geq 0$ states $\statethree_1,\ldots,\statethree_n$ and for
  every $2\leq i\leq n$ there is $j_i$ such that
  $$
  \stateone=\statethree_1\;\relone_{j_2}\;\statethree_2\ldots\statethree_{n-1}\relone_{j_{n}}\statethree_n=\statetwo.    
  $$
  As a consequence, for every $\labelone\in\labelsone$ and for every
  $\xsubset\subseteq\statesone$, it holds that
  $$
  \transone(\statethree_1,\labelone,\xsubset)\leq\transone(\statethree_2,\labelone,\relone_{j_2}(\xsubset))
  \leq\transone(\statethree_3,\labelone,\relone_{j_3}(\relone_{j_2}(\xsubset)))\leq\dots
  \leq\transone(\statethree_n,\labelone,\relone_{j_n}(\dots(\relone_{j_2}(\relone_{j_1}(\xsubset)))))
  $$
  Since, by definition, 
  $$
  \relone_{j_n}(\dots(\relone_{j_2}(\relone_{j_1}(\xsubset))))\subseteq \relone(\xsubset),
  $$
  it follows that $\transone(\stateone,\labelone,\xsubset)\leq\transone(\statetwo,\labelone,\relone(\xsubset))$.
  This concludes the proof.
\end{proof}
As a consequence, we define \emph{similarity} simply as
$\lesssim\,\defi\bigcup\{ \relone \ |\ \relone$ is a probabilistic
simulation$\}$.

Any symmetric probabilistic simulation is a
probabilistic bisimulation.
\begin{lemma}
  If $\relone$ is a symmetric probabilistic simulation, then $\relone$ is
  a probabilistic bisimulation.
\end{lemma}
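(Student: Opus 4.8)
The plan is to exploit the fact that the two notions differ only superficially: a probabilistic simulation is stated over \emph{arbitrary} subsets and uses the $\relone$-closure on the right-hand side, whereas a probabilistic bisimulation is stated over \emph{equivalence classes} and uses no closure at all. The bridge between them is the observation that, for an equivalence relation, the $\relone$-closure of an equivalence class is the class itself.

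First I would check that $\relone$ is genuinely an equivalence relation, as required by Definition~\ref{d:probabis}. By hypothesis $\relone$ is a probabilistic simulation, hence a preorder (reflexive and transitive); being moreover symmetric, it is an equivalence relation. This discharges the structural requirement of a bisimulation for free, leaving only the transition condition to verify.

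The key step is the auxiliary observation that for every $\econe \in \quot{\statesone}{\relone}$ one has $\relone(\econe) = \econe$. The inclusion $\econe \subseteq \relone(\econe)$ is immediate from reflexivity. Conversely, if $\vartwo \in \relone(\econe)$, then there is $\varone \in \econe$ with $\varone\;\relone\;\vartwo$; since $\econe$ is an equivalence class and $\varone, \vartwo$ are $\relone$-related, $\vartwo$ lies in the same class, i.e. $\vartwo \in \econe$. This is exactly the fact that turns the one-sided simulation inequality into a two-sided equality once symmetry is in hand.

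Finally, fix $(\stateone,\statetwo) \in \relone$, a label $\labelone \in \labelsone$, and a class $\econe \in \quot{\statesone}{\relone}$. Instantiating the simulation inequality at the subset $\econe$ gives $\transone(\stateone,\labelone,\econe) \leq \transone(\statetwo,\labelone,\relone(\econe)) = \transone(\statetwo,\labelone,\econe)$, the last equality by the key step. By symmetry $(\statetwo,\stateone) \in \relone$, so the same reasoning yields $\transone(\statetwo,\labelone,\econe) \leq \transone(\stateone,\labelone,\econe)$. Combining the two inequalities gives $\transone(\stateone,\labelone,\econe) = \transone(\statetwo,\labelone,\econe)$, which is precisely the defining condition of a probabilistic bisimulation. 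There is no substantial obstacle here; the only point requiring care is the passage through $\relone(\econe) = \econe$, without which the simulation clause would not directly specialise to the bisimulation clause.
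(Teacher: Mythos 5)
Your proof is correct and follows essentially the same route as the paper's: establish that symmetry plus the preorder property yields an equivalence relation, use the fact that $\relone(\econe)=\econe$ for each equivalence class $\econe$ to specialise the simulation inequality, and then obtain the reverse inequality by symmetry. The only difference is that you spell out the proof of $\relone(\econe)=\econe$, which the paper simply asserts.
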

\begin{proof}
  If $\relone$ is a symmetric probabilistic simulation, by definition, it
  is also a preorder: that is, it is a reflexive and transitive
  relation. Therefore, $\relone$ is an equivalence relation.  But for being
  a probabilistic bisimulation $\relone$ must also satisfy the property
  that $\stateone\relone\statetwo$ implies, for every
  $\labelone\in\labelsone$ and for every
  $\econe\in\quot{\statesone}{\relone}$,
  $\transone(\stateone,\labelone,\econe)=\transone(\statetwo,\labelone,\econe)$.
  From the fact that $\relone$ is a simulation, it follows that if
  $\stateone\relone\statetwo$, for every $\labelone\in\labelsone$ and for
  every $\econe\in\quot{\statesone}{\relone}$,
  $\transone(\stateone,\labelone,\econe)\leq\transone(\statetwo,\labelone,\relone(\econe))$. Since
  $\econe\in\quot{\statesone}{\relone}$ is an $\relone$-equivalence class,
  it holds $\relone(\econe) = \econe$. Then, from the latter follows
  $\transone(\stateone,\labelone,\econe)\leq\transone(\statetwo,\labelone,\econe)$.
  We get the other way around by symmetric property of $\relone$, which
  implies that, for every label $\labelone$ and for every
  $\econe\in\quot{\statesone}{\relone}$,
  $\transone(\statetwo,\labelone,\econe)\leq\transone(\stateone,\labelone,\econe)$. Hence,
  $\transone(\stateone,\labelone,\econe)=\transone(\statetwo,\labelone,\econe)$
  which completes the proof.
\end{proof}
Moreover, every
probabilistic bisimulation, and its inverse, is a probabilistic simulation.
\begin{lemma}\label{lemma:bisim=simcosim}
  If $\relone$ is a probabilistic bisimulation, then $\relone$ and
  $\relone^{\mathit{op}}$ are probabilistic simulation.
\end{lemma}
\begin{proof}
  Let us prove $\relone$ probabilistic simulation first. 
  Consider the set $\{\setone_i\}_{i\in I}$ of equivalence subclasses
  module $\relone$ contained in $\setone$. Formally, $\setone =
  \biguplus_{i\in I}\setone_i$ such that, for all $i\in I$,
  $\setone_i\subseteq\econe_i$ with $\econe_i$ equivalence class modulo
  $\relone$. Please observe that, as a consequence,
  $\relone(\setone)=\biguplus_{i\in I}\econe_i$.
  Thus, the result easily follows, for every $\labelone\in\labelsone$ and
  every $\setone\subseteq\statesone$,
  \begin{align*}
    \transone(\stateone,\labelone,\setone)&=\sum_{i\in
      I}\transone(\stateone,\labelone,\setone_i)\\
    &\leq \sum_{i\in I}\transone(\stateone,\labelone,\econe_i)\\
    &=\sum_{i\in
      I}\transone(\statetwo,\labelone,\econe_i)=\transone(\statetwo,\labelone,\relone(\setone)).
  \end{align*}
  Finally, $\relone^{\mathit{op}}$ is also a probabilistic simulation as a
  consequence of symmetric property of $\relone$ and the fact, just proved,
  that $\relone$ is a probabilistic simulation.
\end{proof}
Contrary to the nondeterministic case, however,
simulation equivalence coincides with bisimulation:
\begin{proposition}\label{prop:pab=pascopas}
  $\sim$ coincides with $\lesssim\cap\lesssim^{\mathit{op}}$.
\end{proposition}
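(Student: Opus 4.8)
The plan is to establish the two inclusions $\sim\;\subseteq\;\lesssim\cap\lesssim^{\mathit{op}}$ and $\lesssim\cap\lesssim^{\mathit{op}}\;\subseteq\;\sim$ separately, exploiting that $\sim$ and $\lesssim$ are the \emph{largest} bisimulation and simulation respectively. The first inclusion is immediate from what precedes: since $\sim$ is a probabilistic bisimulation, Lemma~\ref{lemma:bisim=simcosim} tells us that both $\sim$ and $\sim^{\mathit{op}}$ are probabilistic simulations, hence both are contained in $\lesssim$. The containment $\sim\subseteq\lesssim$ is one half of what we want, and applying $(\cdot)^{\mathit{op}}$ to $\sim^{\mathit{op}}\subseteq\lesssim$ gives $\sim\subseteq\lesssim^{\mathit{op}}$, the other half.

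For the converse I set $\relone\defi\lesssim\cap\lesssim^{\mathit{op}}$ and aim to show that $\relone$ is itself a probabilistic bisimulation; then $\relone\subseteq\sim$ follows by maximality of $\sim$. First, $\relone$ is an equivalence relation: it is reflexive and transitive, being an intersection of the two preorders $\lesssim$ and $\lesssim^{\mathit{op}}$, and symmetric because $\relone^{\mathit{op}}=\lesssim^{\mathit{op}}\cap\lesssim=\relone$. The crux is then a single observation: if $\stateone\,\relone\,\statetwo$ and $W\subseteq\statesone$ is $\lesssim$-closed (that is, $\lesssim(W)=W$), then $\transone(\stateone,\labelone,W)=\transone(\statetwo,\labelone,W)$ for every $\labelone$. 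Indeed, $\stateone\lesssim\statetwo$ gives $\transone(\stateone,\labelone,W)\leq\transone(\statetwo,\labelone,\lesssim(W))=\transone(\statetwo,\labelone,W)$, while $\statetwo\lesssim\stateone$ gives the reverse inequality.

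It remains to transfer this equality from $\lesssim$-closed sets to individual $\relone$-classes, which is the bisimulation condition we need. Given a class $\econe\in\quot{\statesone}{\relone}$, both $\lesssim(\econe)$ and $\lesssim(\econe)\setminus\econe$ are $\lesssim$-closed: the former because $\lesssim$ is a preorder, so $\lesssim(\lesssim(\econe))=\lesssim(\econe)$; the latter by a short check using antisymmetry of $\lesssim$ modulo $\relone$ (if $w\in\lesssim(\econe)\setminus\econe$ and $w\lesssim w'$, then $w'\in\lesssim(\econe)$ by transitivity, and $w'\notin\econe$, since otherwise the chain $u\lesssim w\lesssim w'\lesssim u$ for $u\in\econe$ would force $w\in\econe$). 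Since $\econe=\lesssim(\econe)\setminus(\lesssim(\econe)\setminus\econe)$ is the difference of two $\lesssim$-closed sets and $\transone(\stateone,\labelone,\cdot)$ is countably additive, applying the observation to both sets and subtracting yields $\transone(\stateone,\labelone,\econe)=\transone(\statetwo,\labelone,\econe)$, as required.

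The step I expect to be most delicate is exactly this passage from closed sets to classes. The naive route would order the classes by the simulation preorder and argue top-down, class by class; but the quotient poset may be infinite and need not have maximal elements, so such an induction need not even get started. Phrasing everything through the two $\lesssim$-closed sets $\lesssim(\econe)$ and $\lesssim(\econe)\setminus\econe$ and invoking countable additivity (all values lie in $[0,1]$, so the subtraction is harmless) avoids this difficulty altogether.
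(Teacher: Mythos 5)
Your proof is correct and follows essentially the same route as the paper's: the first inclusion via Lemma~\ref{lemma:bisim=simcosim} and maximality of $\lesssim$, and the converse by showing $\lesssim\cap\lesssim^{\mathit{op}}$ is a bisimulation using exactly the paper's decomposition of a class $\econe$ into the two $\lesssim$-closed sets $\lesssim(\econe)$ and $\lesssim(\econe)\setminus\econe$, followed by the same subtraction argument. Your justification that $\lesssim(\econe)\setminus\econe$ is $\lesssim$-closed is in fact stated more cleanly than the paper's corresponding passage.
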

\begin{proof}
  The fact that $\sim$ is a subset of $\lesssim\cap\lesssim^{\mathit{op}}$
  is a straightforward consequence of symmetry property of $\sim$ and the
  fact that, by Lemma~\ref{lemma:bisim=simcosim}, every probabilistic
  bisimulation is also a probabilistic simulation.
  Let us now prove that $\lesssim\cap\lesssim^{\mathit{op}}$ is a subset of
  $\sim$, i.e., the former of being a probabilistic bisimulation. Of
  course, $\lesssim\cap\lesssim^{\mathit{op}}$ is an equivalence relation
  because $\lesssim$ is a preorder. Now, consider any equivalence class
  $\econe$ modulo $\lesssim\cap\lesssim^{\mathit{op}}$. Define the
  following two sets of states $\setone\defi\lesssim(\econe)$ and
  $\settwo\defi\setone-\econe$. Observe that $\settwo$ and $\econe$ are
  disjoint set of states whose union is precisely $\setone$. Moreover,
  notice that both $\setone$ and $\settwo$ are closed with respect to
  $\lesssim$:
  \begin{varitemize}
  \item On the one hand, if $\stateone\in\lesssim(\setone)$, then
    $\stateone\in\lesssim(\lesssim(\econe))=\lesssim(\econe)=\setone$;
  \item On the other hand, if $\stateone\in\lesssim(\settwo)$, then there
    is $\statetwo\in\setone$ which is not in $\econe$ such that
    $\statetwo\lesssim\stateone$. But then $\stateone$ is itself in
    $\setone$ (see the previous point), but cannot be $\econe$, because
    otherwise we would have $\stateone\lesssim\statetwo$, meaning that
    $\stateone$ and $\statetwo$ are in the same equivalence class modulo
    $\lesssim\cap\lesssim^{\mathit{op}}$, and thus $\statetwo\in\econe$, a
    contradiction.
  \end{varitemize}
  As a consequence, given any
  $(\stateone,\statetwo)\in\lesssim\cap\lesssim^{\mathit{op}}$ and any
  $\labelone\in\labelsone$,
  \begin{align*}
    \transone(\stateone,\labelone,\setone)\leq\transone(\statetwo,\labelone,\lesssim(\setone))=\transone(\statetwo,\labelone,\setone),\\
    \transone(\statetwo,\labelone,\setone)\leq\transone(\stateone,\labelone,\lesssim(\setone))=\transone(\stateone,\labelone,\setone).
  \end{align*}
  It follows
  $\transone(\stateone,\labelone,\setone)=\transone(\statetwo,\labelone,\setone)$
  and, similarly,
  $\transone(\stateone,\labelone,\settwo)=\transone(\statetwo,\labelone,\settwo)$. But
  then,
  \begin{align*}
    \transone(\stateone,\labelone,\econe)&=\transone(\stateone,\labelone,\setone)-\transone(\stateone,\labelone,\settwo)\\
    &=\transone(\statetwo,\labelone,\setone)-\transone(\statetwo,\labelone,\settwo)=\transone(\statetwo,\labelone,\econe)
  \end{align*}
  which is the thesis.
\end{proof}
For technical reasons that will become apparent soon, it is convenient to consider Markov chains in which the
state space is partitioned into disjoint sets, in such a way that comparing states coming from different components
is not possible. Remember that the disjoint union $\biguplus_{i\in I}\setone_i$ of a family of sets $\{\setone_i\}_{i\in\indsetone}$ is defined
as $\{(\elone,i)\mid i\in\indsetone\wedge\elone\in\setone_i\}$.
If the set of states $\statesone$ of a labelled Markov chain is a disjoint union $\biguplus_{i\in\indsetone}\setone_i$, one wants 
that (bi)simulation relations only compare elements coming from the same $\setone_i$, i.e. $(\elone,i)\relone(\eltwo,j)$ implies
$i=j$. In this case, we say that the underlying labelled Markov chain is \emph{multisorted}.

\section{Probabilistic Applicative Bisimulation and Howe's technique}\label{sect:pasht}

In this section, notions of similarity and bisimilarity for $\LOP$ are introduced, in
the spirit of Abramsky's work on applicative bisimulation~\cite{Abramsky-90}. Definitionally, this
consists in seeing $\LOP$'s operational semantics as a labelled Markov chain, then giving
the Larsen and Skou's notion of (bi)simulation for it. States will be terms, while labels
will be of two kinds: one can either \emph{evaluate} a term, obtaining (a distribution of) values, 
or \emph{apply} a term to a value.

The resulting bisimulation (probabilistic applicative bisimulation) 
will be shown to be a congruence, thus included in probabilistic context equivalence.
This will be done by a non-trivial generalization of Howe's technique~\cite{Howe-96}, which is
a well-known methodology to get congruence results in presence of higher-order functions, but
which has not been applied to probabilistic calculi so far. 

Formalizing probabilistic applicative bisimulation requires some care.
As usual, two values $\abstr{\varone}{\termone}$ and $\abstr{\varone}{\termtwo}$ are defined
to be bisimilar if for every $\termthree$, $\subst{\termone}{\varone}{\termthree}$
and $\subst{\termtwo}{\varone}{\termthree}$ are themselves bisimilar. But how if
we rather want to compare two arbitrary closed terms $\termone$ and $\termtwo$?
The simplest solution consists in following Larsen and Skou and stipulate that
every equivalence class of $\val$ modulo bisimulation is attributed the same
measure by both $\sem{\termone}$ and $\sem{\termtwo}$. Values
are thus treated in two different ways (they are both terms and values), and this is the reason why each of
them corresponds to \emph{two} states in the underlying Markov chain.

\begin{definition}
\label{d:multisort}
$\LOP$ can be seen as a multisorted labelled Markov chain
$(\LOPp{\emptyset}\uplus\val,\LOPp{\emptyset}\uplus\{\evlabel\},\translop)$
that we denote with $\cbn{\LOP}$. Labels are either closed terms, which model
parameter passing, or $\evlabel$, that models evaluation. 
Please observe that the states of the labelled Markov chain we have 
just defined are elements of the disjoint union $\LOPp{\emptyset}\uplus\val$. 
Two distinct states correspond to the
same value $\valone$, and to avoid ambiguities, we call the second one
(i.e. the one coming from $\val$) a \emph{distinguished value}.  When we
want to insist on the fact that a value $\abstr{\varone}{\termone}$ is
distinguished, we indicate it with $\clabstr{\varone}{\termone}$. We
define the transition probability matrix $\translop$ as follows:
\begin{varitemize}
\item 
  For every term $\termone$ and for every 
  distinguished value $\clabstr{\varone}{\termtwo}$,
  $$
  \translop(\termone, \evlabel,\clabstr{\varone}{\termtwo})\defi\sem{\termone}(\clabstr{\varone}{\termtwo});
  $$
\item 
  For every term $\termone$ and for every 
  distinguished value $\clabstr{\varone}{\termtwo}$,
  $$
  \translop(\clabstr{\varone}{\termtwo}, \termone,
  \subst{\termtwo}{\varone}{\termone})\defi 1;
  $$
\item
  In all other cases, $\translop$ returns $0$.
\end{varitemize}
\end{definition}
Terms seen as states  only interact with the
environment by performing $\evlabel$, while distinguished values 
only take other closed terms as parameters. 

Simulation and bisimulation relations can be defined for $\cbn{\LOP}$ as for any 
labelled Markov chain. Even if, strictly speaking, these are binary relations on 
$\LOPp{\emptyset}\uplus\val$, we often see them just as their restrictions to 
$\LOPp{\emptyset}$. 
Formally, a \emph{probabilistic applicative bisimulation} (a
$\pabn$) is simply a probabilistic bisimulation on $\cbn{\LOP}$. This way one can 
define \emph{probabilistic applicative bisimilarity}, which is denoted $\cbnpab$.
Similarly for \emph{probabilistic applicative simulation} ($\pasn$) and \emph{probabilistic
applicative similarity}, denoted $\cbnpas$.

\begin{remark}[Early vs. Late]\label{r:el} 
  Technically, the distinction between terms and values in
  Definition~\ref{d:multisort} means that our
  bisimulation is in \emph{late} style. In bisimulations for
  value-passing concurrent languages, ``late'' indicates the explicit
  manipulation of functions in the clause for input actions: functions are
  chosen first, and only later, the input value received is taken into
  account~\cite{SaWabook}. Late-style is used in contraposition to 
  \emph{early} style, where the order of quantifiers is exchanged, so that the
  choice of functions may depend on the specific input value received.
  In our setting, adopting an early style would mean having transitions such as
  $\lambda x. M \arr{N} M \sub N x$, and then setting up a probabilistic
  bisimulation on top of the resulting transition system. 
  We leave for future work a study of the comparison between the two
  styles. In this paper, we stick to the late style because easier to deal with,
  especially under Howe's technique.
  Previous works on applicative
  bisimulation for nondeterministic functions also focus on the late
  approach~\cite{Ong93,PittsSurvey}.
\end{remark} 

\begin{remark}
\label{r:pabfirenze} 
Defining applicative bisimulation in terms of multisorted labelled Markov
chains has the advantage of recasting the definition in a familiar
framework; most importantly, this formulation will be useful when dealing
with Howe's method. To spell out the explicit operational details of the
definition, a probabilistic applicative bisimulation can be seen as an equivalence
relation $\relone\subseteq \LOPp{\emptyset}\times\LOPp{\emptyset}$ such
that whenever $\termone\mathrel{\relone}\termtwo$:
  \begin{varenumerate}
  \item $\sem{\termone}(\econe \cap
    \val)=\sem{\termtwo}(\econe \cap \val)$, for any
    equivalence class $\econe$ of $\relone$ 
(that
    is, the probability of reaching a value in $\econe$ is the same for the
    two terms);
  \item
    \label{cl:ct} if $\termone$ and $\termtwo$ are values, say $\lambda
    x. P $ and $ \lambda x. Q$, then $P \sub L x \RR Q \sub L x$, for
    all $L \in \LOPp{\emptyset}$.
  \end{varenumerate}
  The special treatment of values, in Clause~\ref{cl:ct}., motivates the
  use of \emph{multisorted} labelled Markov chains in Definition~\ref{d:multisort}.
\end{remark}

As usual, one way to show that any two terms are bisimilar is to prove that one relation
containing the pair in question is a $\pabn$. 
Terms with the same semantics are indistinguishable:
\begin{lemma}\label{lemma:samesem}
  The binary relation $\relone=\{(\termone,\termtwo)\in
  \LOPp{\emptyset}\times\LOPp{\emptyset}\st
  \sem{\termone}=\sem{\termtwo}\}\,\biguplus\,\{(\valone,\valone)\in
  \val\times\val\}$ is a $\pabn$.
\end{lemma}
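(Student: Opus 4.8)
The plan is to verify directly that $\relone$ satisfies the two defining requirements of a probabilistic bisimulation on the multisorted Markov chain $\cbn{\LOP}$ of Definition~\ref{d:multisort}: that $\relone$ is an equivalence relation on the state space $\LOPp{\emptyset}\uplus\val$, and that it meets the transfer condition, namely that $(\stateone,\statetwo)\in\relone$ implies $\translop(\stateone,\labelone,\econe)=\translop(\statetwo,\labelone,\econe)$ for every label $\labelone$ and every class $\econe\in\quot{\statesone}{\relone}$. The observation that drives the whole argument is a description of the classes: since $\relone$ is the disjoint union of the equal-semantics relation on the term sort $\LOPp{\emptyset}$ and of the identity on the value sort $\val$, every class is \emph{either} a maximal set of closed terms sharing the same semantics \emph{or} a singleton $\{\clabstr{\varone}{N}\}$ consisting of one distinguished value.

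For the equivalence requirement I would note that equality of semantics is plainly reflexive, symmetric and transitive on $\LOPp{\emptyset}$, and that the identity is an equivalence on $\val$; as the chain is multisorted and the two components of $\relone$ live in different sorts, their disjoint union is again an equivalence, with no class straddling the two sorts. For the transfer condition I would split on the sort of the related pair. A pair in the value sort has, by definition of $\relone$, the form $(\clabstr{\varone}{N},\clabstr{\varone}{N})$, so the two states coincide and the condition is immediate. For a pair of terms $(\termone,\termtwo)$ with $\sem{\termone}=\sem{\termtwo}$ I would use that the only non-zero transitions out of a term state carry the label $\evlabel$ and land in the value sort. Hence for any parameter-passing label both sides are $0$; for $\evlabel$ together with a class $\econe$ contained in the term sort both sides are again $0$; and for $\evlabel$ with a value-singleton class $\econe=\{\clabstr{\varone}{N}\}$ we get $\translop(\termone,\evlabel,\econe)=\sem{\termone}(\clabstr{\varone}{N})=\sem{\termtwo}(\clabstr{\varone}{N})=\translop(\termtwo,\evlabel,\econe)$, the central equality being exactly the hypothesis.

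I do not anticipate a real obstacle: the statement is a bookkeeping exercise once the multisorted structure is unwound. The only point that needs care is the interplay of the two sorts, namely recognising that the value component of $\relone$ must be the identity, so that the applicative (parameter-passing) clause is discharged for free, and that the distinguished-value classes are singletons, so that the evaluation clause collapses from ``agreement on every class'' to the pointwise equality $\sem{\termone}=\sem{\termtwo}$ of value distributions. Equivalently, one may run the verification through the operational reformulation of Remark~\ref{r:pabfirenze}: its Clause~1 is immediate from $\sem{\termone}=\sem{\termtwo}$, and its Clause~2 is trivial because two values with equal (Dirac) semantics are syntactically equal, so the restriction of the equal-semantics relation to values is itself the identity, in harmony with the value component of $\relone$.
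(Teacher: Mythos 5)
Your proposal is correct and follows essentially the same route as the paper's own proof: a direct verification that $\relone$ is an equivalence relation and satisfies the transfer condition, split according to the two sorts, with the term case reduced to the hypothesis $\sem{\termone}=\sem{\termtwo}$ and the distinguished-value case discharged because related values coincide (the paper phrases this via equal Dirac semantics forcing syntactic equality, you via the identity component of $\relone$). Your additional observation that value-sort classes are singletons is a harmless sharpening of the paper's summation-over-classes bookkeeping, not a different argument.
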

\begin{proof}
  The fact $\relone$ is an equivalence easily follows from reflexivity,
  symmetry and transitivity of set-theoretic equality. $\relone$ must
  satisfy the following property for closed terms: if
  $\termone\relone\termtwo$, then for every
  $\econe\in\quot{\val}{\relone}$,
  $\translop(\termone,\evlabel,\econe)=\translop(\termtwo,\evlabel,\econe)$. Notice
  that if $\sem{\termone}=\sem{\termtwo}$, then clearly
  $\translop(\termone,\evlabel,\valone)=\translop(\termtwo,\evlabel,\valone)$,
  for every $\valone\in\val$. With the same hypothesis,
  \begin{align*}
    \translop(\termone,\evlabel,\econe)&=\sum_{\valone\in\econe}\translop(\termone,\evlabel,\valone)\\
    &=
    \sum_{\valone\in\econe}\translop(\termtwo,\evlabel,\valone)=\translop(\termtwo,\evlabel,\econe).
  \end{align*}
  Moreover, $\relone$ must satisfy the following property for cloned
  values: if
  $\clabstr{\varone}{\termone}\relone\clabstr{\varone}{\termtwo}$, then for
  every close term $\termthree$ and for every
  $\econe\in\quot{\LOPp{\emptyset}}{\relone}$,
  $\translop(\clabstr{\varone}{\termone},\termthree,\econe)=\translop(\clabstr{\varone}{\termtwo},\termthree,\econe)$. Now,
  the hypothesis
  $\sem{\clabstr{\varone}{\termone}}=\sem{\clabstr{\varone}{\termtwo}}$
  implies $\termone = \termtwo$. Then clearly
  $\translop(\clabstr{\varone}{\termone},\termthree,\termfour)=\translop(\clabstr{\varone}{\termtwo},\termthree,\termfour)$
  for every $\termfour\in\LOPp{\emptyset}$. With the same hypothesis,
  \begin{align*}
    \translop(\clabstr{\varone}{\termone},\termthree,\econe)&=\sum_{\termfour\in\econe}\translop(\clabstr{\varone}{\termone},\termthree,\termfour)\\
    &=
    \sum_{\termfour\in\econe}\translop(\clabstr{\varone}{\termtwo},\termthree,\termfour)=\translop(\clabstr{\varone}{\termtwo},\termthree,\econe).
  \end{align*}
  This concludes the proof.
\end{proof}
Please notice that the previous result yield a nice consequence: for every
$\termone,\,\termtwo\in\LOPp{\emptyset}$,
$(\abstr{\varone}{\termone})\termtwo\cbnpab\subst{\termone}{\varone}{\termtwo}$. Indeed,
Lemma~\ref{lemma:semsumCBN} tells us that the latter terms have the same
semantics.

Conversely, knowing that two terms $\termone$ and
$\termtwo$ are (bi)similar means knowing quite a lot about their
convergence probability:
\begin{lemma}[Adequacy of Bisimulation]\label{lemma:sumsempabCBN}
  If $\termone\cbnpab\termtwo$, then $\sumsem{\termone}=\sumsem{\termtwo}$. Moreover,
  if $\termone\cbnpas\termtwo$, then $\sumsem{\termone}\leq\sumsem{\termtwo}$.
\end{lemma}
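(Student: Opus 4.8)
The plan is to reduce the statement to a single elementary identity: the total convergence probability $\sumsem{\termone}$ is exactly the probability mass that the evaluation label $\evlabel$ of $\cbn{\LOP}$ sends into the whole value-sort $\val$. Once this is observed, both halves of the lemma drop out of the (bi)simulation conditions.

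First I would record that identity. By Definition~\ref{d:multisort} the only nonzero $\evlabel$-transitions out of a term $\termone$ are those into distinguished values, with $\translop(\termone,\evlabel,\clabstr{\varone}{\termtwo})=\sem{\termone}(\clabstr{\varone}{\termtwo})$. Viewing $\val$ as the value-sort of the state space and summing over it gives
$$
\translop(\termone,\evlabel,\val)=\sum_{\clabstr{\varone}{\termtwo}\in\val}\sem{\termone}(\clabstr{\varone}{\termtwo})=\sumsem{\termone}.
$$

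I would then prove the simulation statement first. Assume $\termone\cbnpas\termtwo$, i.e. $(\termone,\termtwo)\in\lesssim$. Instantiating the probabilistic-simulation condition with label $\evlabel$ and subset $\setone=\val$ yields $\translop(\termone,\evlabel,\val)\leq\translop(\termtwo,\evlabel,\lesssim(\val))$. The crucial point is that $\lesssim(\val)=\val$: reflexivity of $\lesssim$ gives $\val\subseteq\lesssim(\val)$, while multisortedness of $\cbn{\LOP}$ (a value-sort state can only be related to value-sort states) gives $\lesssim(\val)\subseteq\val$. Combining this with the identity above yields $\sumsem{\termone}\leq\sumsem{\termtwo}$. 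The bisimulation statement then follows immediately from Proposition~\ref{prop:pab=pascopas}: if $\termone\cbnpab\termtwo$ then both $\termone\cbnpas\termtwo$ and $\termtwo\cbnpas\termone$, so the simulation case gives $\sumsem{\termone}\leq\sumsem{\termtwo}$ and $\sumsem{\termtwo}\leq\sumsem{\termone}$, hence equality.

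There is no genuine technical obstacle here; the only step needing care is the equality $\lesssim(\val)=\val$, which is exactly what the multisorted structure of the Markov chain was designed to guarantee (a term is never related to a value). As an alternative route for the bisimulation case, avoiding the Proposition, one can argue directly: since $\sim$ is multisorted, its equivalence classes contained in $\val$ partition $\val$, and summing the equalities $\translop(\termone,\evlabel,\econe)=\translop(\termtwo,\evlabel,\econe)$ over all such classes $\econe$ reconstructs $\translop(\termone,\evlabel,\val)=\translop(\termtwo,\evlabel,\val)$; because every summand is a nonnegative real bounded by $1$, the rearrangement of this possibly infinite sum is unproblematic.
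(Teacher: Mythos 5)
Your proof is correct, and its simulation half is exactly the paper's argument: both write $\sumsem{\termone}=\translop(\termone,\evlabel,\val)$, instantiate the simulation clause with the subset $\val$, and use $\cbnpas(\val)=\val$ to conclude $\sumsem{\termone}\leq\sumsem{\termtwo}$. Where you diverge is the bisimulation half. The paper argues it directly, decomposing $\translop(\termone,\evlabel,\val)$ as a sum over the $\cbnpab$-equivalence classes contained in $\val$ and applying the bisimulation condition classwise --- which is precisely the ``alternative route'' you sketch at the end of your proposal --- whereas your main route derives the equality from the simulation inequality via Proposition~\ref{prop:pab=pascopas} ($\cbnpab$ coincides with $\cbnpas\cap\cbnpas^{\mathit{op}}$). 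Your reduction is slightly more economical, since the computation is done once and the bisimulation case follows formally; the cost is an appeal to that (nontrivial, but already established) coincidence result, which the paper's self-contained classwise argument does not need. A further small merit of your write-up is that you justify the step $\cbnpas(\val)=\val$ explicitly --- reflexivity gives $\val\subseteq\cbnpas(\val)$, multisortedness of the Markov chain gives the converse --- whereas the paper uses this equality silently; note that one could alternatively observe that $\translop(\termtwo,\evlabel,\cdot)$ vanishes outside the value sort, which justifies the same step without invoking multisortedness.
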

\begin{proof}
  \begin{align*}
    \sumsem{\termone}&=\sum_{\econe\in\quot{\val}{\cbnpab}}\translop(\termone,\evlabel,\econe)\\
    &=
    \sum_{\econe\in\quot{\val}{\cbnpab}}\translop(\termtwo,\evlabel,\econe)=\sumsem{\termtwo}.
  \end{align*} 
  And, 
  \begin{align*}
    \sumsem{\termone}&=\translop(\termone,\evlabel,\val)\\
    &\leq\translop(\termtwo,\evlabel,\cbnpas(\val))\\
    &=\translop(\termtwo,\evlabel,\val)=\sumsem{\termtwo}.
  \end{align*}
  This concludes the proof.
\end{proof}
\begin{example}
  Bisimilar terms do not necessarily have the same semantics. After all, this is one
  reason for using bisimulation, and its proof method, as basis to prove
  fine-grained equalities among functions. Let us consider the
  following terms:
  \begin{align*}
        \termone\defi&\;((\lambda x. (x \oplus x))\oplus(\lambda x.x))\oplus\Omega;\\
        \termtwo\defi&\;\Omega\oplus(\lambda x. I x) ;
  \end{align*}                         
  Their semantics                      
  differ, as for every value $\valone$, we have:
  \[
   \begin{array}{rcl}
  \sem{\termone}(\valone) &=& \left\{
    \begin{array}{ll}
      \frac{1}{4} &\mbox{if }\valone \mbox{ is } \lambda x. (x \oplus x) \mbox{ or } \lambda x.x;\\
      0 & \mbox{otherwise};
    \end{array}
    \right.
  \\[10pt]
  \sem{\termtwo}(\valone)& =&  \left\{
    \begin{array}{ll}
      \frac{1}{2} &\mbox{if }\valone \mbox{ is } \lambda x. I x;\\
      0 & \mbox{otherwise}.
    \end{array}
    \right.
    \end{array}
   \]
   Nonetheless,   we can prove
   $\termone\cbnpab\termtwo$. Indeed, $\nu x. (x \oplus x)\cbnpab\nu
   x.x\cbnpab\nu x.I x$ because, for every $L\in\LOPp{\emptyset}$, the three
   terms $L$, $\ps{L}{L}$ and $\app{I}{L}$ all have the same semantics,
   i.e., $\sem{L}$. Now, consider any equivalence class $\econe$ of
   distinguished values modulo $\cbnpab$. If $\econe$ includes the three
   distinguished values above, then
   \[\translop(\termone,
   \evlabel,\econe)=\sum_{\valone\in\econe}\sem{\termone}(\valone) =
   \frac{1}{2} = \sum_{\valone\in\econe}\sem{\termtwo}(\valone) =
   \translop(\termtwo,\evlabel,\econe).
   \]
   Otherwise, $\translop(\termone,
   \evlabel,\econe)= 0 = \translop(\termtwo,\evlabel,\econe)$.
\end{example}
Let us prove the following technical result that, moreover, stipulate that
bisimilar distinguished values are bisimilar values.
\begin{lemma}\label{lemma:lambdaredCBN}
  $\abstr{\varone}{\termone}\cbnpab\abstr{\varone}{\termtwo}$ iff
  $\clabstr{\varone}{\termone}\cbnpab\clabstr{\varone}{\termtwo}$ iff
  $\subst{\termone}{\varone}{\termthree}\cbnpab
  \subst{\termtwo}{\varone}{\termthree}$, for all
  $\termthree\in\LOPp{\emptyset}$.
\end{lemma}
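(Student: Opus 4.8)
The plan is to prove the three conditions equivalent via the cycle (A) $\Rightarrow$ (B) $\Rightarrow$ (C) $\Rightarrow$ (A), writing (A) for $\abstr{\varone}{\termone}\cbnpab\abstr{\varone}{\termtwo}$, (B) for $\clabstr{\varone}{\termone}\cbnpab\clabstr{\varone}{\termtwo}$, and (C) for $\subst{\termone}{\varone}{\termthree}\cbnpab\subst{\termtwo}{\varone}{\termthree}$ for all $\termthree\in\LOPp{\emptyset}$. The two forward implications are immediate from the operational description of $\translop$; the only real work is in the closing implication, which requires exhibiting an explicit probabilistic bisimulation.

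First I would record two facts about transitions out of abstractions. Seen as a term, $\abstr{\varone}{\termone}$ is a value, so $\sem{\abstr{\varone}{\termone}}$ is the Dirac distribution concentrated on $\clabstr{\varone}{\termone}$; hence $\translop(\abstr{\varone}{\termone},\evlabel,\econe)$ is $1$ if $\clabstr{\varone}{\termone}\in\econe$ and $0$ otherwise. Seen as a distinguished value, $\clabstr{\varone}{\termone}$ responds to a label $\termthree\in\LOPp{\emptyset}$ by moving with probability $1$ to $\subst{\termone}{\varone}{\termthree}$, so $\translop(\clabstr{\varone}{\termone},\termthree,\econe)$ is $1$ if $\subst{\termone}{\varone}{\termthree}\in\econe$ and $0$ otherwise. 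With these in hand, (A) $\Rightarrow$ (B) is just reading off the bisimulation clause for $\evlabel$: equality of $\translop(\abstr{\varone}{\termone},\evlabel,\econe)$ and $\translop(\abstr{\varone}{\termtwo},\evlabel,\econe)$ across all $\cbnpab$-classes $\econe$ forces $\clabstr{\varone}{\termone}$ and $\clabstr{\varone}{\termtwo}$ into the same class. Likewise (B) $\Rightarrow$ (C) is reading off the clause for a label $\termthree$: equality across classes forces $\subst{\termone}{\varone}{\termthree}$ and $\subst{\termtwo}{\varone}{\termthree}$ into the same class for every $\termthree$.

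For the closing implication (C) $\Rightarrow$ (A) (which on the way re-establishes (B) as well) I would build a candidate relation and prove it is a probabilistic bisimulation, so that it is contained in the largest one, $\cbnpab$. Concretely, let $\relone$ be the equivalence closure of
$$\cbnpab\ \cup\ \{(\abstr{\varone}{\termone},\abstr{\varone}{\termtwo}),(\clabstr{\varone}{\termone},\clabstr{\varone}{\termtwo})\}.$$
Since $\cbnpab$ is already an equivalence relation respecting the two sorts, the $\relone$-classes are exactly the $\cbnpab$-classes, except that the class of $\abstr{\varone}{\termone}$ is merged with that of $\abstr{\varone}{\termtwo}$, and the class of $\clabstr{\varone}{\termone}$ with that of $\clabstr{\varone}{\termtwo}$; in particular every $\relone$-class is a union of $\cbnpab$-classes. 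The verification splits into three kinds of related pairs. For a pair already in $\cbnpab$, the masses of the two states agree on every $\cbnpab$-class, hence, summing over the $\cbnpab$-classes forming a given $\relone$-class, on every $\relone$-class. For the added pair $(\abstr{\varone}{\termone},\abstr{\varone}{\termtwo})$ only $\evlabel$ matters, and by the first observation both terms send all their mass to $\clabstr{\varone}{\termone}$, resp. $\clabstr{\varone}{\termtwo}$, which now lie in a common $\relone$-class; the masses therefore coincide on every $\relone$-class. For the added pair $(\clabstr{\varone}{\termone},\clabstr{\varone}{\termtwo})$ only labels $\termthree\in\LOPp{\emptyset}$ matter, and by the second observation the successors $\subst{\termone}{\varone}{\termthree}$ and $\subst{\termtwo}{\varone}{\termthree}$ are $\cbnpab$-related by hypothesis (C), hence again in a common $\relone$-class. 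Pairs obtained by symmetry and transitivity are handled as in Lemma~\ref{lemma:kleenestar}. Thus $\relone$ is a probabilistic bisimulation, so $\relone\subseteq\cbnpab$, yielding both (A) and (B).

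The main obstacle, and the only step that is more than bookkeeping, is checking that passing from the $\cbnpab$-classes to the coarser $\relone$-classes preserves the bisimulation condition for the pairs inherited from $\cbnpab$; this is the coarsening observation above, which relies on each $\relone$-class being a union of $\cbnpab$-classes, exactly as in the final computation of Lemma~\ref{lemma:kleenestar}. Everything else follows directly from $\translop$ being a Dirac distribution on abstractions (for the label $\evlabel$) and on distinguished values (for term-labels). Note that the two added pairs support each other without circularity: the successor condition for the term pair is met by the value pair, which is in $\relone$ by construction, while the successor condition for the value pair is met through $\cbnpab$ by hypothesis (C).
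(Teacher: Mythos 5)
Your proof is correct, and its two forward implications are in substance the paper's: both rest on the observation that transitions are Dirac --- a term $\abstr{\varone}{\termone}$ moves under $\evlabel$ with probability $1$ to $\clabstr{\varone}{\termone}$, and a distinguished value $\clabstr{\varone}{\termone}$ moves under a label $\termthree$ with probability $1$ to $\subst{\termone}{\varone}{\termthree}$ --- so that reading the transfer condition against $\cbnpab$-classes forces the successors into a common class (the paper phrases its $(\Rightarrow)$ direction as a proof by contradiction, but it is the same computation). Where you genuinely differ is in the closing implication. The paper proves the converse by verifying only that the candidate pair $(\clabstr{\varone}{\termone},\clabstr{\varone}{\termtwo})$ satisfies the transfer condition with respect to the equivalence classes of $\cbnpab$ itself, and then declares the thesis proved; strictly speaking this leaves a gap, because bisimilarity of the pair means membership in \emph{some} probabilistic bisimulation, and one must still argue that adjoining the pair to $\cbnpab$ and closing under equivalence yields a bisimulation --- in particular that the pairs inherited from $\cbnpab$ still satisfy the transfer condition against the coarser classes. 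Your construction of $\relone$ as the equivalence closure of $\cbnpab$ together with the two new pairs, plus the observation that every $\relone$-class is a union of $\cbnpab$-classes (so inherited pairs pass the check by summing over classes, exactly as in the proof of Lemma~\ref{lemma:kleenestar}), supplies precisely this missing justification. So your route buys full rigor for the step the paper treats as immediate, at the modest cost of the extra bookkeeping about merged classes; the cyclic organization (A)$\Rightarrow$(B)$\Rightarrow$(C)$\Rightarrow$(A) also lets one bisimulation witness close both equivalences at once, whereas the paper argues the two double implications separately.
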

\begin{proof}
  The first double implication is obvious. For that matter, distinguished values are value
  terms. Let us now detail the second double implication.
  $(\Rightarrow)$ The fact that $\cbnpab$ is a $\pabn$ implies, by its
  definition, that for every $\termthree\in\LOPp{\emptyset}$ and every
  $\econe\in\quot{\LOPp{\emptyset}}{\cbnpab}$,
  $\translop(\clabstr{\varone}{\termone},\termthree,\econe) =
  \translop(\clabstr{\varone}{\termtwo},\termthree,\econe)$. Suppose then,
  by contradiction, that
  $\subst{\termone}{\varone}{\termthree}\cbn{\not\pab}
  \subst{\termtwo}{\varone}{\termthree}$, for some
  $\termthree\in\LOPp{\emptyset}$. The latter means that, there exists
  $\ectwo\in\quot{\LOPp{\emptyset}}{\cbnpab}$ such that
  $\subst{\termone}{\varone}{\termthree}\in\ectwo$ and
  $\subst{\termtwo}{\varone}{\termthree}\not\in\ectwo$. According to its
  definition, for all $\termfour\in\LOPp{\emptyset}$,
  $\translop(\clabstr{\varone}{\termone},\termthree,\termfour) = 1$ iff
  $\termfour\equiv\subst{\termone}{\varone}{\termthree}$, and
  $\translop(\clabstr{\varone}{\termone},\termthree,\termfour) = 0$
  otherwise. Then, since $\subst{\termone}{\varone}{\termthree}\in\ectwo$,
  we derive $\translop(\clabstr{\varone}{\termone},\termthree,\ectwo) =
  \sum_{\termfour\in\ectwo}\translop(\abstr{\varone}{\termone},\termthree,\termfour)
  \geq
  \translop(\clabstr{\varone}{\termone},\termthree,\subst{\termone}{\varone}{\termthree})
  = 1$, which implies
  $\sum_{\termfour\in\ectwo}\translop(\clabstr{\varone}{\termone},\termthree,\termfour)
  = \translop(\clabstr{\varone}{\termone},\termthree,\ectwo) = 1$. Although
  $\clabstr{\varone}{\termtwo}$ is a distinguished value and the starting
  reasoning we have just made above still holds,
  $\translop(\clabstr{\varone}{\termtwo},\termthree,\ectwo) =
  \sum_{\termfour\in\ectwo}\translop(\clabstr{\varone}{\termtwo},\termthree,\termfour)
  = 0$. We get the latter because there is no $\termfour\in\ectwo$ of the
  form $\subst{\termtwo}{\varone}{\termthree}$ due to the hypothesis that
  $\subst{\termtwo}{\varone}{\termthree}\not\in\ectwo$.

  From the hypothesis on the equivalence class $\ectwo$, i.e.
  $\translop(\clabstr{\varone}{\termone},\termthree,\ectwo) =
  \translop(\clabstr{\varone}{\termtwo},\termthree,\ectwo)$, we derive the
  absurd:
  \begin{align*}
    1 = \translop(\clabstr{\varone}{\termone},\termthree,\ectwo) =
    \translop(\clabstr{\varone}{\termtwo},\termthree,\ectwo) = 0.
  \end{align*}
  $(\Leftarrow)$ We need to prove that, for every
  $\termthree\in\LOPp{\emptyset}$ and every
  $\econe\in\quot{\LOPp{\emptyset}}{\cbnpab}$,
  $\translop(\clabstr{\varone}{\termone},\termthree,\econe) =
  \translop(\clabstr{\varone}{\termtwo},\termthree,\econe)$ supposing that
  $\subst{\termone}{\varone}{\termthree}\cbnpab
  \subst{\termtwo}{\varone}{\termthree}$ holds. First of all, let us
  rewrite $\translop(\clabstr{\varone}{\termone},\termthree,\econe)$ and
  $\translop(\clabstr{\varone}{\termtwo},\termthree,\econe)$ as
  $\sum_{\termfour\in\econe}\translop(\clabstr{\varone}{\termone},\termthree,\termfour)$
  and
  $\sum_{\termfour\in\econe}\translop(\clabstr{\varone}{\termtwo},\termthree,\termfour)$
  respectively. Then, from the hypothesis and the same reasoning we have
  made for ($\Rightarrow$), for every $\econe\in\quot{\LOPp{\emptyset}}{\cbnpab}$:
  $$
  \sum_{\termfour\in\econe}\translop(\clabstr{\varone}{\termone},\termthree,\termfour)
  = \left\{
    \begin{array}{ll}
      1 &\mbox{if }\subst{\termone}{\varone}{\termthree}\in\econe;\\
      0 & \mbox{otherwise}
    \end{array}
  \right.  = \left\{
    \begin{array}{ll}
      1 &\mbox{if }\subst{\termtwo}{\varone}{\termthree}\in\econe;\\
      0 & \mbox{otherwise}
    \end{array}
  \right.  =
  \sum_{\termfour\in\econe}\translop(\clabstr{\varone}{\termtwo},\termthree,\termfour)
  $$
  which proves the thesis.
\end{proof}
The same result holds for $\cbnpas$.
\subsection{Probabilistic Applicative Bisimulation is a Congruence}
In this section, we prove that probabilistic applicative bisimulation is indeed
a congruence, and that its non-symmetric sibling is a precongruence. 
The overall structure of the proof is similar to the one by
Howe~\cite{Howe-96}. The main idea consists in defining a way to turn
an arbitrary relation $\relone$ on (possibly open) terms to another one, $\howe{\relone}$,
in such a way that, if $\relone$ satisfies a few simple conditions, then $\howe{\relone}$ is 
a (pre)congruence including $\relone$. The key step, then, is to prove
that $\howe{\relone}$ is indeed a (bi)simulation. In view of Proposition~\ref{prop:pab=pascopas}, 
considering similarity suffices here.

It is here convenient to work with generalizations of relations called
\emph{$\LOP$-relations}, i.e.  sets of triples in the form
$(\vecvarone,\termone,\termtwo)$, where
$\termone,\termtwo\in\LOP(\vecvarone)$.  Thus if a relation has the pair
$(M,N)$ with $\termone,\termtwo\in\LOP(\vecvarone)$, then the corresponding
$\LOP$-relation will include $(\vecvarone,\termone,\termtwo)$.  (Recall
that applicative (bi)similarity is extended to open terms by considering
all closing substitutions.) 
Given any
$\LOP$-relation $\relone$,  we write $\rel{\vecvarone}{\termone}{\relone}{\termtwo}$
if $(\vecvarone,\termone,\termtwo)\in\relone$. A
$\LOP$-relation $\relone$ is said to be \emph{compatible} iff the four
conditions below hold:
\begin{varitemize}
\item[\Comone] $\forall\vecvarone\in\powfin{\setvar}$, $\varone\in\vecvarone$:
  $\rel{\vecvarone}{\varone}{\relone}{\varone}$,
\item[\Comtwo]
  $\forall\vecvarone\in\powfin{\setvar}$,$\forall\varone\in\setvar-\vecvarone$,$\forall\termone,
  \termtwo\in\LOP(\vecvarone\cup\{\varone\})$:
  $\rel{\vecvarone\cup\{\varone\}}{\termone}{\relone}{\termtwo}\Rightarrow
  \rel{\vecvarone}{\abstr{\varone}{\termone}}{\relone}{\abstr{\varone}{\termtwo}}$,
\item[\Comthree]
  $\forall\vecvarone\in\powfin{\setvar}$,$\forall\termone,\termtwo,\termthree,\termfour\in\LOP(\vecvarone)$:
  $\rel{\vecvarone}{\termone}{\relone}{\termtwo}\wedge\rel{\vecvarone}{\termthree}{\relone}{\termfour}\Rightarrow
  \rel{\vecvarone}{\app{\termone}{\termthree}}{\relone}{\app{\termtwo}{\termfour}}$,
\item[\Comfour]
  $\forall\vecvarone\in\powfin{\setvar}$,$\forall\termone,\termtwo,\termthree,\termfour\in\LOP(\vecvarone)$:
  $\rel{\vecvarone}{\termone}{\relone}{\termtwo}\wedge\rel{\vecvarone}{\termthree}{\relone}{\termfour}\Rightarrow
  \rel{\vecvarone}{\ps{\termone}{\termthree}}{\relone}{\ps{\termtwo}{\termfour}}$.
\end{varitemize}
We will often use the following technical results to establish
$\Comthree$ and $\Comfour$ under particular hypothesis.
\begin{lemma}\label{lemma:com3LR}
  Let us consider the properties
  \begin{varitemize}
  \item[\ComthreeL]
    $\forall\vecvarone\in\powfin{\setvar}$,$\forall\termone,\termtwo,\termthree\in\LOP(\vecvarone)$:
    $\rel{\vecvarone}{\termone}{\relone}{\termtwo}\Rightarrow
    \rel{\vecvarone}{\app{\termone}{\termthree}}{\relone}{\app{\termtwo}{\termthree}}$,
  \item[\ComthreeR]
    $\forall\vecvarone\in\powfin{\setvar}$,$\forall\termone,\termtwo,\termthree\in\LOP(\vecvarone)$:
    $\rel{\vecvarone}{\termone}{\relone}{\termtwo}\Rightarrow
    \rel{\vecvarone}{\app{\termthree}{\termone}}{\relone}{\app{\termthree}{\termtwo}}$.
  \end{varitemize}
  If $\relone$ is transitive, then $\ComthreeL$ and $\ComthreeR$ together imply $\Comthree$.
\end{lemma}
\begin{proof}
  Proving $\Comthree$ means to show that the hypothesis
  $\rel{\vecvarone}{\termone}{\relone}{\termtwo}$ and
  $\rel{\vecvarone}{\termthree}{\relone}{\termfour}$ imply
  $\rel{\vecvarone}{\app{\termone}{\termthree}}{\relone}{\app{\termtwo}{\termfour}}$. Using
  $\ComthreeL$ on the first one, with $\termthree$ as steady term, it
  follows
  $\rel{\vecvarone}{\app{\termone}{\termthree}}{\relone}{\app{\termtwo}{\termthree}}$. Similarly,
  using $\ComthreeR$ on the second one, with $\termtwo$ as steady term, it
  follows
  $\rel{\vecvarone}{\app{\termtwo}{\termthree}}{\relone}{\app{\termtwo}{\termfour}}$. Then,
  we conclude by transitivity property of $\relone$.
\end{proof}

\begin{lemma}\label{lemma:com4LR}
  Let us consider the properties
  \begin{varitemize}
  \item[\ComfourL]
    $\forall\vecvarone\in\powfin{\setvar}$,$\forall\termone,\termtwo,\termthree\in\LOP(\vecvarone)$:
    $\rel{\vecvarone}{\termone}{\relone}{\termtwo}\Rightarrow
    \rel{\vecvarone}{\ps{\termone}{\termthree}}{\relone}{\ps{\termtwo}{\termthree}}$,
  \item[\ComfourR]
    $\forall\vecvarone\in\powfin{\setvar}$,$\forall\termone,\termtwo,\termthree\in\LOP(\vecvarone)$:
    $\rel{\vecvarone}{\termone}{\relone}{\termtwo}\Rightarrow
    \rel{\vecvarone}{\ps{\termthree}{\termone}}{\relone}{\ps{\termthree}{\termtwo}}$.
  \end{varitemize}
  If $\relone$ is transitive, then $\ComfourL$ and $\ComfourR$ together imply $\Comfour$.
\end{lemma}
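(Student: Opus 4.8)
The plan is to follow \emph{verbatim} the argument used for Lemma~\ref{lemma:com3LR}, simply replacing the application constructor by the probabilistic sum $\oplus$; the two statements are structurally identical, with $\ComfourL,\ComfourR,\Comfour$ playing the roles of $\ComthreeL,\ComthreeR,\Comthree$. Unfolding what $\Comfour$ demands, I must show that the two hypotheses $\rel{\vecvarone}{\termone}{\relone}{\termtwo}$ and $\rel{\vecvarone}{\termthree}{\relone}{\termfour}$ jointly yield $\rel{\vecvarone}{\ps{\termone}{\termthree}}{\relone}{\ps{\termtwo}{\termfour}}$, and the idea is to reach the conclusion by rewriting the two summands one at a time.

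First I would apply $\ComfourL$ to the hypothesis $\rel{\vecvarone}{\termone}{\relone}{\termtwo}$, treating $\termthree$ as the fixed (steady) right-hand summand; this gives $\rel{\vecvarone}{\ps{\termone}{\termthree}}{\relone}{\ps{\termtwo}{\termthree}}$. Next I would apply $\ComfourR$ to the hypothesis $\rel{\vecvarone}{\termthree}{\relone}{\termfour}$, this time treating $\termtwo$ as the fixed left-hand summand; this gives $\rel{\vecvarone}{\ps{\termtwo}{\termthree}}{\relone}{\ps{\termtwo}{\termfour}}$. The intermediate term $\ps{\termtwo}{\termthree}$ is common to both derived statements, so composing them via the assumed transitivity of $\relone$ delivers $\rel{\vecvarone}{\ps{\termone}{\termthree}}{\relone}{\ps{\termtwo}{\termfour}}$, which is exactly $\Comfour$.

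There is no genuine difficulty here: the proof is a three-line diagram chase, and the only hypothesis beyond $\ComfourL$ and $\ComfourR$ that gets used is transitivity, precisely to glue the two one-sided rewritings at their shared midpoint. If anything merits a moment's care, it is only the bookkeeping of which argument is held steady in each application (the untouched summand in $\ComfourL$ is the one on the right, in $\ComfourR$ the one on the left), so that the two conclusions genuinely share the term $\ps{\termtwo}{\termthree}$ and transitivity applies without a gap.
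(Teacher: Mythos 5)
Your proof is correct and is essentially identical to the paper's own argument: both apply $\ComfourL$ to the first hypothesis with $\termthree$ held steady, then $\ComfourR$ to the second with $\termtwo$ held steady, and glue the two results at the shared midpoint $\ps{\termtwo}{\termthree}$ via transitivity. Nothing is missing.
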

\begin{proof}
  Proving $\Comfour$ means to show that the hypothesis
  $\rel{\vecvarone}{\termone}{\relone}{\termtwo}$ and
  $\rel{\vecvarone}{\termthree}{\relone}{\termfour}$ imply
  $\rel{\vecvarone}{\ps{\termone}{\termthree}}{\relone}{\ps{\termtwo}{\termfour}}$. Using
  $\ComfourL$ on the first one, with $\termthree$ as steady term, it
  follows
  $\rel{\vecvarone}{\ps{\termone}{\termthree}}{\relone}{\ps{\termtwo}{\termthree}}$. Similarly,
  using $\ComfourR$ on the second one, with $\termtwo$ as steady term, it
  follows
  $\rel{\vecvarone}{\ps{\termtwo}{\termthree}}{\relone}{\ps{\termtwo}{\termfour}}$. Then,
  we conclude by transitivity property of $\relone$.
\end{proof}
The notions of an equivalence relation and of a preorder can be straightforwardly generalized to $\LOP$-relations, and
any compatible $\LOP$-relation that is an equivalence relation (respectively, a preorder) is said to be a \emph{congruence} 
(respectively, a \emph{precongruence}).

If bisimilarity is a congruence, then $\ctxone[\termone]$ is bisimilar to $\ctxone[\termtwo]$ whenever
$\termone\cbnpab\termtwo$ and $\ctxone$ is a context. In other words, terms can be replaced by equivalent ones in any context.
This is a crucial sanity-check any notion of equivalence  is expected to pass.

It is well-known that proving bisimulation to be a congruence may be
nontrivial when the underlying language contains higher-order
functions. This is also the case here. Proving $\Comone,\Comtwo$ and
$\Comfour$ just by inspecting the operational semantics of the involved
terms is indeed possible, but the method fails for $\Comthree$, when the
involved contexts contain applications.  In particular, proving $\Comthree$ requires
probabilistic applicative bisimilarity of being stable with respect to
substitution of bisimilar terms, hence not necessarily the
same. In general, a $\LOP$-relation $\relone$ is called (term) \emph{substitutive} if
for all $\vecvarone\in\powfin{\setvar}$, $\varone\in\setvar-\vecvarone$,
$\termone,\termtwo\in\LOPp{\vecvarone\cup\{\varone\}}$ and
$\termthree,\termfour\in\LOPp{\vecvarone}$
\begin{equation}
  \label{eq:SubCBN}
  \rel{\vecvarone\cup\{\varone\}}{\termone}{\relone}{\termtwo}\wedge
  \rel{\vecvarone}{\termthree}{\relone}{\termfour}\Rightarrow
  \rel{\vecvarone}{\subst{\termone}{\varone}{\termthree}}{\relone}{\subst{\termtwo}{\varone}{\termfour}}.
\end{equation}
Note that if $\relone$ is also reflexive, then this implies
\begin{equation}
  \label{eq:CusCBN}
  \rel{\vecvarone\cup\{\varone\}}{\termone}{\relone}{\termtwo}\wedge\termthree\in\LOPp{\vecvarone}\Rightarrow
  \rel{\vecvarone}{\subst{\termone}{\varone}{\termthree}}{\relone}{\subst{\termtwo}{\varone}{\termthree}}.
\end{equation}
We say that $\relone$ is \textit{closed under term-substitution} if it
satisfies (\ref{eq:CusCBN}). Because of the way the open extension of
$\cbnpab$ and $\cbnpas$ are defined, they are closed under
term-substitution.

Unfortunately, directly prove $\cbnpas$ to enjoy such \emph{substitutivity}
property is hard. We will thus proceed indirectly by defining, starting
from $\cbnpas$, a new relation $\howe{\cbnpas}$, called the \emph{Howe's
  lifting} of $\cbnpas$, that has such property by construction and that
can be proved equal to $\cbnpas$.  

Actually, the Howe's lifting of any $\LOP$-relation $\relone$ is the relation $\howe{\relone}$ defined by the
rules in Figure~\ref{fig:howelifting}.
\begin{figure*}
  $$
  \infer[\Howeone]
  {\rel{\vecvarone}{\varone}{\howe{\relone}}{\termone}}
  {\rel{\vecvarone}{\varone}{\relone}{\termone}}
  \qquad
  \infer[\Howetwo]
  {\rel{\vecvarone}{\abstr{\varone}{\termone}}{\howe{\relone}}{\termtwo}}
  {
    \rel{\vecvarone\cup\{\varone\}}{\termone}{\howe{\relone}}{\termthree}
    &&
    \rel{\vecvarone}{\abstr{\varone}{\termthree}}{\relone}{\termtwo}
    &&
    \varone\notin\vecvarone
  }
  $$
  $$
  \infer[\Howethree]
  {\rel{\vecvarone}{\app{\termone}{\termtwo}}{\howe{\relone}}{\termthree}}
  {
    \rel{\vecvarone}{\termone}{\howe{\relone}}{\termfour}
    &&
    \rel{\vecvarone}{\termtwo}{\howe{\relone}}{\termfive}
    &&
    \rel{\vecvarone}{\app{\termfour}{\termfive}}{\relone}{\termthree}
  }
  $$
  $$
  \infer[\Howefour]
  {\rel{\vecvarone}{\ps{\termone}{\termtwo}}{\howe{\relone}}{\termthree}}
  {
    \rel{\vecvarone}{\termone}{\howe{\relone}}{\termfour}
    &&
    \rel{\vecvarone}{\termtwo}{\howe{\relone}}{\termfive}
    &&
    \rel{\vecvarone}{\ps{\termfour}{\termfive}}{\relone}{\termthree}
  }
  $$
\caption{Howe's Lifting for $\LOP$.}\label{fig:howelifting}
\end{figure*}
The reader familiar with Howe's method should have a sense of
\emph{d\'ej\`a vu} here: indeed, this is \emph{precisely} 
the same definition one finds in the realm of \emph{nondeterministic} $\lambda$-calculi. The language of terms, after all,
is the same. This facilitates the first part of the proof. Indeed, one
already knows that if $\relone$ is a preorder,
then $\howe{\relone}$ is compatible
and includes $\relone$, since all these properties are already known (see, e.g.~\cite{PittsSurvey}) and
only depend on the shape of terms and not on their operational semantics. 
\begin{lemma}\label{lemma:howeprop1}
    If $\relone$ is reflexive, then $\howe{\relone}$ is compatible.
  \end{lemma}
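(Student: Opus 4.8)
The plan is to verify directly the four compatibility clauses \Comone, \Comtwo, \Comthree and \Comfour for $\howe{\relone}$, handling each separately. The single observation that drives every case is that each rule defining $\howe{\relone}$ in Figure~\ref{fig:howelifting} carries exactly one premise phrased in terms of the \emph{base} relation $\relone$ (the rightmost premise in each rule), and reflexivity of $\relone$ lets us discharge that premise by relating a suitable term to \emph{itself}. Since compatibility is a purely syntactic property, making no reference to $\sem{\cdot}$ or to the bisimulation game, no operational reasoning is required; in particular, transitivity of $\relone$ is \emph{not} needed, which is why reflexivity alone suffices as hypothesis.

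For \Comone I would take $\varone\in\vecvarone$, use reflexivity to get $\rel{\vecvarone}{\varone}{\relone}{\varone}$, and apply \Howeone to conclude $\rel{\vecvarone}{\varone}{\howe{\relone}}{\varone}$. For \Comtwo, starting from $\rel{\vecvarone\cup\{\varone\}}{\termone}{\howe{\relone}}{\termtwo}$ with $\varone\notin\vecvarone$, reflexivity supplies $\rel{\vecvarone}{\abstr{\varone}{\termtwo}}{\relone}{\abstr{\varone}{\termtwo}}$; instantiating \Howetwo with middle term $\termthree\defi\termtwo$ then yields $\rel{\vecvarone}{\abstr{\varone}{\termone}}{\howe{\relone}}{\abstr{\varone}{\termtwo}}$, as required.

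The two binary cases are treated uniformly. For \Comthree, from $\rel{\vecvarone}{\termone}{\howe{\relone}}{\termtwo}$ and $\rel{\vecvarone}{\termthree}{\howe{\relone}}{\termfour}$, reflexivity gives $\rel{\vecvarone}{\app{\termtwo}{\termfour}}{\relone}{\app{\termtwo}{\termfour}}$, and \Howethree (matching the two Howe-premises with the given hypotheses and the $\relone$-premise with this reflexive instance) delivers $\rel{\vecvarone}{\app{\termone}{\termthree}}{\howe{\relone}}{\app{\termtwo}{\termfour}}$. The clause \Comfour is identical in structure: one replaces application by $\oplus$, discharges $\rel{\vecvarone}{\ps{\termtwo}{\termfour}}{\relone}{\ps{\termtwo}{\termfour}}$ by reflexivity, and appeals to \Howefour.

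The only real work — such as it is — is bookkeeping. One must check that each term handed to reflexivity of $\relone$ genuinely lives in the declared context $\LOP(\vecvarone)$: that $\abstr{\varone}{\termtwo}\in\LOP(\vecvarone)$ in \Comtwo (which holds precisely because $\termtwo\in\LOP(\vecvarone\cup\{\varone\})$), and that the compound terms $\app{\termtwo}{\termfour}$ and $\ps{\termtwo}{\termfour}$ lie in $\LOP(\vecvarone)$ in \Comthree and \Comfour. One must also keep the metavariables of the Howe rules aligned with those of the compatibility clauses. These side conditions on free variables are exactly the ones the rules are designed to respect, so I expect no genuine obstacle: the lemma is the routine, syntax-only half of Howe's construction, and the content of the method lies entirely in the later steps (substitutivity and the Key Lemma), not here.
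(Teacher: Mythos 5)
Your proof is correct and follows essentially the same route as the paper's: each compatibility clause is discharged by feeding the hypotheses into the corresponding Howe rule, using reflexivity of $\relone$ to supply the rightmost $\relone$-premise as a reflexive instance ($\varone$, $\abstr{\varone}{\termtwo}$, $\app{\termtwo}{\termfour}$, $\ps{\termtwo}{\termfour}$ respectively). Your added remarks — that transitivity is not needed and that the argument is purely syntactic — are accurate and consistent with how the paper deploys this lemma.
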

  \begin{proof}
    We need to prove that \Comone, \Comtwo,
    \Comthree, and \Comfour\ hold for $\howe{\relone}$:
    \begin{varitemize}
    \item
      Proving \Comone{} means to show:
      $$
      \forall\vecvarone\in\powfin{\setvar},
      \varone\in\vecvarone\Rightarrow\rel{\vecvarone}{\varone}{\howe{\relone}}{\varone}.
      $$
      Since $\relone$ is reflexive, $\forall\vecvarone\in\powfin{\setvar}$,
      $\varone\in\vecvarone\Rightarrow\rel{\vecvarone}{\varone}{\relone}{\varone}$. Thus,
      by \Howeone, we conclude
      $\rel{\vecvarone}{\varone}{\howe{\relone}}{\varone}$. Formally,
      $$
      \infer[\Howeone]
      {\rel{\vecvarone}{\varone}{\howe{\relone}}{\varone}}
      {\infer{\rel{\vecvarone}{\varone}{\relone}{\varone}} {}}
      $$
    \item 
      Proving \Comtwo{} means to show:
      $\forall\vecvarone\in\powfin{\setvar}$,
      $\forall\varone\in\setvar-\vecvarone$, $\forall\termone,
      \termtwo\in\LOP(\vecvarone\cup\{\varone\})$,
      $$
      \rel{\vecvarone\cup\{\varone\}}{\termone}{\howe{\relone}}{\termtwo}\Rightarrow
      \rel{\vecvarone}{\abstr{\varone}{\termone}}{\howe{\relone}}{\abstr{\varone}{\termtwo}}.
      $$
      Since $\relone$ is reflexive, we get
      $\rel{\vecvarone}{\abstr{\varone}{\termtwo}}{\relone}{\abstr{\varone}{\termtwo}}$. Moreover,
      we have
      $\rel{\vecvarone\cup\{\varone\}}{\termone}{\howe{\relone}}{\termtwo}$
      by hypothesis. Thus, by \Howetwo, we conclude
      $\rel{\vecvarone}{\abstr{\varone}{\termone}}{\howe{\relone}}{\abstr{\varone}{\termtwo}}$
      holds. Formally,
      $$
      \infer[\Howetwo]
      {\rel{\vecvarone}{\abstr{\varone}{\termone}}{\howe{\relone}}{\abstr{\varone}{\termtwo}}}
      { \rel{\vecvarone\cup\{\varone\}}{\termone}{\howe{\relone}}{\termtwo}
        &&
        \infer{\rel{\vecvarone}{\abstr{\varone}{\termtwo}}{\relone}{\abstr{\varone}{\termtwo}}}{}
        && \varone\notin\vecvarone }
      $$
   \item
      Proving \Comthree\ means to show:
      $\forall\vecvarone\in\powfin{\setvar}$, $\forall\termone,\termtwo,\termthree,\termfour\in\LOP(\vecvarone)$,
      $$
      \rel{\vecvarone}{\termone}{\howe{\relone}}{\termtwo}\wedge\rel{\vecvarone}{\termthree}{\howe{\relone}}{\termfour}\Rightarrow
      \rel{\vecvarone}{\app{\termone}{\termthree}}{\howe{\relone}}{\app{\termtwo}{\termfour}}.
      $$
      Since $\relone$ is reflexive, we get
      $\rel{\vecvarone}{\app{\termtwo}{\termfour}}{\relone}{\app{\termtwo}{\termfour}}$. Moreover,
      we have $\rel{\vecvarone}{\termone}{\howe{\relone}}{\termtwo}$ and
      $\rel{\vecvarone}{\termthree}{\howe{\relone}}{\termfour}$ by
      hypothesis. Thus, by \Howethree, we conclude
      $\rel{\vecvarone}{\app{\termone}{\termthree}}{\howe{\relone}}{\app{\termtwo}{\termfour}}$
      holds. Formally,
      $$
      \infer[\Howethree]
      {\rel{\vecvarone}{\app{\termone}{\termthree}}{\howe{\relone}}{\app{\termtwo}{\termfour}}}
      { \rel{\vecvarone}{\termone}{\howe{\relone}}{\termtwo} &&
        \rel{\vecvarone}{\termthree}{\howe{\relone}}{\termfour} &&
        \infer{\rel{\vecvarone}{\app{\termtwo}{\termfour}}{\relone}{\app{\termtwo}{\termfour}}}{}}
      $$
    \item
      Proving \Comfour\ means to show:
      $\forall\vecvarone\in\powfin{\setvar}$, $\forall\termone,\termtwo,\termthree,\termfour\in\LOP(\vecvarone)$,
      $$
      \rel{\vecvarone}{\termone}{\howe{\relone}}{\termtwo}\wedge\rel{\vecvarone}{\termthree}{\howe{\relone}}{\termfour}\Rightarrow
      \rel{\vecvarone}{\ps{\termone}{\termthree}}{\howe{\relone}}{\ps{\termtwo}{\termfour}}.$$
      Since $\relone$ is reflexive, we get
      $\rel{\vecvarone}{\ps{\termtwo}{\termfour}}{\relone}{\ps{\termtwo}{\termfour}}$. Moreover,
      we have $\rel{\vecvarone}{\termone}{\howe{\relone}}{\termtwo}$ and
      $\rel{\vecvarone}{\termthree}{\howe{\relone}}{\termfour}$ by
      hypothesis. Thus, by \Howefour, we conclude
      $\rel{\vecvarone}{\ps{\termone}{\termthree}}{\howe{\relone}}{\ps{\termtwo}{\termfour}}$
      holds. Formally,
      $$
      \infer[\Howefour]
      {\rel{\vecvarone}{\ps{\termone}{\termthree}}{\howe{\relone}}{\ps{\termtwo}{\termfour}}}
      { \rel{\vecvarone}{\termone}{\howe{\relone}}{\termtwo} &&
        \rel{\vecvarone}{\termthree}{\howe{\relone}}{\termfour} &&
        \infer{\rel{\vecvarone}{\ps{\termtwo}{\termfour}}{\relone}{\ps{\termtwo}{\termfour}}}{}
      }
      $$
    \end{varitemize}
    This concludes the proof.
\end{proof}
  \begin{lemma}\label{lemma:howeprop2}
    If $\relone$ is transitive, then
    $\rel{\vecvarone}{\termone}{\howe{\relone}}{\termtwo}$ and
    $\rel{\vecvarone}{\termtwo}{\relone}{\termthree}$ imply
    $\rel{\vecvarone}{\termone}{\howe{\relone}}{\termthree}$.
  \end{lemma}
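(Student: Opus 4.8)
The plan is to argue by a simple case analysis on the last rule used in the derivation of $\rel{\vecvarone}{\termone}{\howe{\relone}}{\termtwo}$, exploiting transitivity of $\relone$. The structural fact that makes this work is uniform across Figure~\ref{fig:howelifting}: each of the four rules defining $\howe{\relone}$ has as its \emph{rightmost} premise a judgment of the form $\rel{\vecvarone}{P}{\relone}{\termtwo}$ whose right component is exactly the right component $\termtwo$ of the conclusion, whereas none of the other premises mentions $\termtwo$. Thus, to absorb the additional hypothesis $\rel{\vecvarone}{\termtwo}{\relone}{\termthree}$ on the right, it suffices to compose this single premise with it and re-apply the \emph{same} rule; no induction on the derivation is required.

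Concretely, if the last rule is \Howeone{} then $\termone$ is a variable $\varone$ with premise $\rel{\vecvarone}{\varone}{\relone}{\termtwo}$, and transitivity with $\rel{\vecvarone}{\termtwo}{\relone}{\termthree}$ yields $\rel{\vecvarone}{\varone}{\relone}{\termthree}$, whence \Howeone{} gives the claim. If it is \Howetwo{} then $\termone=\abstr{\varone}{P}$ with premises $\rel{\vecvarone\cup\{\varone\}}{P}{\howe{\relone}}{Q}$ and $\rel{\vecvarone}{\abstr{\varone}{Q}}{\relone}{\termtwo}$; keeping the former and composing the latter with $\rel{\vecvarone}{\termtwo}{\relone}{\termthree}$ gives $\rel{\vecvarone}{\abstr{\varone}{Q}}{\relone}{\termthree}$, and \Howetwo{} derives $\rel{\vecvarone}{\abstr{\varone}{P}}{\howe{\relone}}{\termthree}$. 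The cases \Howethree{} and \Howefour{} are identical in spirit: the two $\howe{\relone}$-premises on the immediate subterms are reused verbatim, and only the final premise, $\rel{\vecvarone}{\app{Q_1}{Q_2}}{\relone}{\termtwo}$ (resp.\ $\rel{\vecvarone}{\ps{Q_1}{Q_2}}{\relone}{\termtwo}$), is composed with $\rel{\vecvarone}{\termtwo}{\relone}{\termthree}$ before re-applying the rule.

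There is no real obstacle here, since in every case all premises except the last are copied unchanged and the only property of $\relone$ invoked is transitivity, exactly as assumed. The one point deserving care is simply to note, once and for all, that the right component of the conclusion occurs \emph{only} in the last premise of each Howe rule --- which is precisely what licenses the one-step composition and lets us avoid any inductive machinery.
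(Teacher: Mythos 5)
Your proposal is correct and follows essentially the same route as the paper's proof: a case analysis on the last rule of the derivation of $\rel{\vecvarone}{\termone}{\howe{\relone}}{\termtwo}$, composing the single $\relone$-premise with $\rel{\vecvarone}{\termtwo}{\relone}{\termthree}$ by transitivity and re-applying the same Howe rule. Your explicit observation that the right component of the conclusion occurs only in that last premise (so no induction is needed) is exactly what the paper's ``inspection'' argument implicitly relies on.
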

\begin{proof}
  We prove the statement by inspection on the last 
  rule used in the derivation of
  $\rel{\vecvarone}{\termone}{\howe{\relone}}{\termtwo}$, thus on the
  structure of $\termone$.
  \begin{varitemize}
  \item 
    If $\termone$ is a variable, say $\varone \in \vecvarone$, then
    $\rel{\vecvarone}{\varone}{\howe{\relone}}{\termtwo}$ holds by
    hypothesis. The last rule used has to be \Howeone.  Thus, we get
    $\rel{\vecvarone}{\varone}{\relone}{\termtwo}$ as additional
    hypothesis. By transitivity of $\relone$, from
    $\rel{\vecvarone}{\varone}{\relone}{\termtwo}$ and
    $\rel{\vecvarone}{\termtwo}{\relone}{\termthree}$ we deduce
    $\rel{\vecvarone}{\varone}{\relone}{\termthree}$. We conclude by
    \Howeone\ on the latter, obtaining
    $\rel{\vecvarone}{\varone}{\howe{\relone}}{\termthree}$,
    i.e. $\rel{\vecvarone}{\termone}{\howe{\relone}}{\termthree}$. Formally,
    $$
    \infer[\Howeone]
    {\rel{\vecvarone}{\varone}{\howe{\relone}}{\termthree}}
    {\infer{\rel{\vecvarone}{\varone}{\relone}{\termthree}} {
        \rel{\vecvarone}{\varone}{\relone}{\termtwo} &&
        \rel{\vecvarone}{\termtwo}{\relone}{\termthree} } }
    $$
  \item If $\termone$ is a $\lambda$-abstraction, say
    $\abstr{\varone}{\termfive}$, then 
    $\rel{\vecvarone}{\abstr{\varone}{\termfive}}{\howe{\relone}}{\termtwo}$
    holds by hypothesis. The last rule used has to be \Howetwo. Thus, we
    get
    $\rel{\vecvarone\cup\{\varone\}}{\termfive}{\howe{\relone}}{\termfour}$
    and $\rel{\vecvarone}{\abstr{\varone}{\termfour}}{\relone}{\termtwo}$
    as additional hypothesis. By transitivity of $\relone$, from
    $\rel{\vecvarone}{\abstr{\varone}{\termfour}}{\relone}{\termtwo}$ and
    $\rel{\vecvarone}{\termtwo}{\relone}{\termthree}$ we deduce
    $\rel{\vecvarone}{\abstr{\varone}{\termfour}}{\relone}{\termthree}$. We
    conclude by \Howetwo\ on
    $\rel{\vecvarone\cup\{\varone\}}{\termfive}{\howe{\relone}}{\termfour}$
    and the latter, obtaining
    $\rel{\vecvarone}{\abstr{\varone}{\termfive}}{\howe{\relone}}{\termthree}$,
    i.e. $\rel{\vecvarone}{\termone}{\howe{\relone}}{\termthree}$.
    Formally, 
    $$
    \infer[\Howetwo]
    {\rel{\vecvarone}{\abstr{\varone}{\termfive}}{\howe{\relone}}{\termthree}}
    {
      \rel{\vecvarone\cup\{\varone\}}{\termfive}{\howe{\relone}}{\termfour}
      &&
      \infer{\rel{\vecvarone}{\abstr{\varone}{\termfour}}{\relone}{\termthree}}{
        \rel{\vecvarone}{\abstr{\varone}{\termfour}}{\relone}{\termtwo}
        && \rel{\vecvarone}{\termtwo}{\relone}{\termthree}}}
    $$
  \item If $\termone$ is an application, say $\app{\termsix}{\termseven}$,
    then 
    $\rel{\vecvarone}{\app{\termsix}{\termseven}}{\howe{\relone}}{\termtwo}$
    holds by hypothesis. The last rule used has to be \Howethree. Thus, we
    get $\rel{\vecvarone}{\termsix}{\howe{\relone}}{\termfour}$,
    $\rel{\vecvarone}{\termseven}{\howe{\relone}}{\termfive}$ and
    $\rel{\vecvarone}{\app{\termfour}{\termfive}}{\relone}{\termtwo}$ as
    additional hypothesis. By transitivity of $\relone$, from
    $\rel{\vecvarone}{\app{\termfour}{\termfive}}{\relone}{\termtwo}$ and
    $\rel{\vecvarone}{\termtwo}{\relone}{\termthree}$ we deduce
    $\rel{\vecvarone}{\app{\termfour}{\termfive}}{\relone}{\termthree}$. We
    conclude by \Howethree{} on
    $\rel{\vecvarone}{\termsix}{\howe{\relone}}{\termfour}$,
    $\rel{\vecvarone}{\termseven}{\howe{\relone}}{\termfive}$ and the
    latter, obtaining
    $\rel{\vecvarone}{\app{\termsix}{\termseven}}{\howe{\relone}}{\termthree}$,
    i.e. $\rel{\vecvarone}{\termone}{\howe{\relone}}{\termthree}$.
    Formally, 
    $$
    \infer[\Howethree]
    {\rel{\vecvarone}{\app{\termsix}{\termseven}}{\howe{\relone}}{\termthree}}
    { \rel{\vecvarone}{\termsix}{\howe{\relone}}{\termfour} &&
      \rel{\vecvarone}{\termseven}{\howe{\relone}}{\termfive} &&
      \infer{\rel{\vecvarone}{\app{\termfour}{\termfive}}{\relone}{\termthree}}{
        \rel{\vecvarone}{\app{\termfour}{\termfive}}{\relone}{\termtwo} &&
        \rel{\vecvarone}{\termtwo}{\relone}{\termthree}
      }
    }
    $$
  \item If $\termone$ is a probabilistic sum, say
    $\ps{\termsix}{\termseven}$, then 
    $\rel{\vecvarone}{\ps{\termsix}{\termseven}}{\howe{\relone}}{\termtwo}$
    holds by hypothesis. The last rule used has to be \Howefour. Thus, we
    get $\rel{\vecvarone}{\termsix}{\howe{\relone}}{\termfour}$,
    $\rel{\vecvarone}{\termseven}{\howe{\relone}}{\termfive}$ and
    $\rel{\vecvarone}{\ps{\termfour}{\termfive}}{\relone}{\termtwo}$ as
    additional hypothesis. By transitivity of $\relone$, from
    $\rel{\vecvarone}{\ps{\termfour}{\termfive}}{\relone}{\termtwo}$ and
    $\rel{\vecvarone}{\termtwo}{\relone}{\termthree}$ we deduce
    $\rel{\vecvarone}{\ps{\termfour}{\termfive}}{\relone}{\termthree}$.  We
    conclude by \Howefour\ on
    $\rel{\vecvarone}{\termsix}{\howe{\relone}}{\termfour}$,
    $\rel{\vecvarone}{\termseven}{\howe{\relone}}{\termfive}$ and the
    latter, obtaining
    $\rel{\vecvarone}{\ps{\termsix}{\termseven}}{\howe{\relone}}{\termthree}$,
    i.e. $\rel{\vecvarone}{\termone}{\howe{\relone}}{\termthree}$.
    Formally, 
    $$
    \infer[\Howefour]
    {\rel{\vecvarone}{\ps{\termsix}{\termseven}}{\howe{\relone}}{\termthree}}
    { \rel{\vecvarone}{\termsix}{\howe{\relone}}{\termfour} &&
      \rel{\vecvarone}{\termseven}{\howe{\relone}}{\termfive} &&
      \infer{\rel{\vecvarone}{\ps{\termfour}{\termfive}}{\relone}{\termthree}}{
        \rel{\vecvarone}{\ps{\termfour}{\termfive}}{\relone}{\termtwo}
        && \rel{\vecvarone}{\termtwo}{\relone}{\termthree} } }
    $$
  \end{varitemize}
  This concludes the proof.
\end{proof}
  \begin{lemma}\label{lemma:howeprop3}
    If $\relone$ is reflexive, then
    $\rel{\vecvarone}{\termone}{\relone}{\termtwo}$ implies
    $\rel{\vecvarone}{\termone}{\howe{\relone}}{\termtwo}$.
  \end{lemma}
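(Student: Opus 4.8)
The plan is to prove the inclusion $\relone\subseteq\howe{\relone}$ by structural induction on $\termone$, mirroring the case analysis already used in Lemma~\ref{lemma:howeprop2}. In every case the hypothesis $\rel{\vecvarone}{\termone}{\relone}{\termtwo}$ will be fed into the rightmost premise of the appropriate Howe rule, while the remaining premises (which speak about $\howe{\relone}$ on the immediate subterms of $\termone$) will be obtained by combining reflexivity of $\relone$ with the induction hypothesis.

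First I would treat the base case, where $\termone$ is a variable $\varone\in\vecvarone$. Here $\rel{\vecvarone}{\varone}{\relone}{\termtwo}$ is exactly the premise of \Howeone, so a single application of that rule yields $\rel{\vecvarone}{\varone}{\howe{\relone}}{\termtwo}$.

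For the three inductive cases I would proceed uniformly. If $\termone=\abstr{\varone}{\termfive}$, reflexivity gives $\rel{\vecvarone\cup\{\varone\}}{\termfive}{\relone}{\termfive}$, and the induction hypothesis on the smaller term $\termfive$ upgrades this to $\rel{\vecvarone\cup\{\varone\}}{\termfive}{\howe{\relone}}{\termfive}$; instantiating the intermediate term of \Howetwo to $\termfive$ itself, the hypothesis $\rel{\vecvarone}{\abstr{\varone}{\termfive}}{\relone}{\termtwo}$ is precisely its second premise, so \Howetwo concludes $\rel{\vecvarone}{\abstr{\varone}{\termfive}}{\howe{\relone}}{\termtwo}$. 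The application case $\termone=\app{\termsix}{\termseven}$ and the sum case $\termone=\ps{\termsix}{\termseven}$ are identical in spirit: reflexivity plus the induction hypothesis give $\rel{\vecvarone}{\termsix}{\howe{\relone}}{\termsix}$ and $\rel{\vecvarone}{\termseven}{\howe{\relone}}{\termseven}$, and feeding the hypothesis $\rel{\vecvarone}{\termone}{\relone}{\termtwo}$ as the last premise of \Howethree (respectively \Howefour) closes the case.

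There is no real obstacle here: the only point that deserves care is the observation that the chosen ``intermediate'' subterms in rules \Howetwo--\Howefour can be taken to be the original subterms themselves, so that the $\relone$-premise of each rule coincides verbatim with the hypothesis $\rel{\vecvarone}{\termone}{\relone}{\termtwo}$. This is exactly what makes reflexivity of $\relone$ sufficient --- we never need $\relone$ to relate distinct terms on the subterms, only to supply the diagonal instances that the induction hypothesis then lifts into $\howe{\relone}$. Alternatively, one could bypass the induction by invoking Lemma~\ref{lemma:howeprop1}: since $\relone$ reflexive makes $\howe{\relone}$ compatible, hence reflexive, one obtains the diagonal premises directly; but the direct induction keeps the argument self-contained.
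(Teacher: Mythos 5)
Your proof is correct and takes essentially the same route as the paper's: the same case analysis on the shape of $\termone$, with the hypothesis $\rel{\vecvarone}{\termone}{\relone}{\termtwo}$ fed verbatim into the $\relone$-premise of the matching rule \Howeone--\Howefour. The only divergence is in how the diagonal premises such as $\rel{\vecvarone\cup\{\varone\}}{\termfive}{\howe{\relone}}{\termfive}$ are discharged: you lift reflexivity instances of $\relone$ through a structural induction hypothesis, whereas the paper invokes reflexivity of $\howe{\relone}$ itself (a consequence of compatibility, Lemma~\ref{lemma:howeprop1}) and thus needs no induction at all --- which is precisely the alternative you note at the end of your argument.
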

\begin{proof}
  We will prove it by inspection on the structure of $\termone$.
  \begin{varitemize}
  \item 
    If $\termone$ is a variable, say $\varone \in \vecvarone$, then
    $\rel{\vecvarone}{\varone}{\relone}{\termtwo}$ holds by hypothesis. We
    conclude by \Howeone\ on the latter, obtaining
    $\rel{\vecvarone}{\varone}{\howe{\relone}}{\termtwo}$,
    i.e. $\rel{\vecvarone}{\termone}{\howe{\relone}}{\termtwo}$. Formally,
    $$
    \infer[\Howeone]
    {\rel{\vecvarone}{\varone}{\howe{\relone}}{\termtwo}}
    {\rel{\vecvarone}{\varone}{\relone}{\termtwo}}
    $$
  \item If $\termone$ is a $\lambda$-abstraction, say
    $\abstr{\varone}{\termfive}$, then 
    $\rel{\vecvarone}{\abstr{\varone}{\termfive}}{\relone}{\termtwo}$ holds
    by hypothesis. Moreover, since 
    $\relone$ reflexive implies $\howe{\relone}$ compatible,
    $\howe{\relone}$ is 
    reflexive too. Then, from
    $\rel{\vecvarone\cup\{\varone\}}{\termfive}{\howe{\relone}}{\termfive}$
    and $\rel{\vecvarone}{\abstr{\varone}{\termfive}}{\relone}{\termtwo}$
    we conclude, by \Howetwo,
    $\rel{\vecvarone}{\abstr{\varone}{\termfive}}{\howe{\relone}}{\termtwo}$,
    i.e. $\rel{\vecvarone}{\termone}{\howe{\relone}}{\termtwo}$. Formally,
    $$
    \infer[\Howetwo]
    {\rel{\vecvarone}{\abstr{\varone}{\termfive}}{\howe{\relone}}{\termtwo}}
    {
      \rel{\vecvarone\cup\{\varone\}}{\termfive}{\howe{\relone}}{\termfive}
      &&
      \rel{\vecvarone}{\abstr{\varone}{\termfive }}{\relone}{\termtwo}
      && \varone\notin\vecvarone }
    $$
  \item If $\termone$ is an application, say $\app{\termthree}{\termfour}$,
    then 
    $\rel{\vecvarone}{\app{\termthree}{\termfour}}{\relone}{\termtwo}$
    holds by hypothesis. By reflexivity of $\relone$, hence that of
    $\howe{\relone}$ too, we get
    $\rel{\vecvarone}{\termthree}{\howe{\relone}}{\termthree}$ and
    $\rel{\vecvarone}{\termfour}{\howe{\relone}}{\termfour}$. Then, from
    the latter and
    $\rel{\vecvarone}{\app{\termthree}{\termfour}}{\relone}{\termtwo}$ we
    conclude, by \Howethree,
    $\rel{\vecvarone}{\app{\termthree}{\termfour}}{\howe{\relone}}{\termtwo}$,
    i.e. $\rel{\vecvarone}{\termone}{\howe{\relone}}{\termtwo}$. Formally,
    $$
    \infer[\Howethree]
    {\rel{\vecvarone}{\app{\termthree}{\termfour}}{\howe{\relone}}{\termtwo}}
    { \rel{\vecvarone}{\termthree}{\howe{\relone}}{\termthree} &&
      \rel{\vecvarone}{\termfour}{\howe{\relone}}{\termfour} &&
      \rel{\vecvarone}{\app{\termthree}{\termfour}}{\relone}{\termtwo}
    }
    $$
  \item If $\termone$ is a probabilistic sum, say
    $\ps{\termthree}{\termfour}$, then 
    $\rel{\vecvarone}{\ps{\termthree}{\termfour}}{\relone}{\termtwo}$ holds
    by hypothesis. By reflexivity of $\relone$, hence that of
    $\howe{\relone}$ too, we get
    $\rel{\vecvarone}{\termthree}{\howe{\relone}}{\termthree}$ and
    $\rel{\vecvarone}{\termfour}{\howe{\relone}}{\termfour}$. Then, from
    the latter and
    $\rel{\vecvarone}{\ps{\termthree}{\termfour}}{\relone}{\termtwo}$ we
    conclude, by \Howefour,
    $\rel{\vecvarone}{\ps{\termthree}{\termfour}}{\howe{\relone}}{\termtwo}$,
    i.e. $\rel{\vecvarone}{\termone}{\howe{\relone}}{\termtwo}$. Formally,
    $$
    \infer[\Howefour]
    {\rel{\vecvarone}{\ps{\termthree}{\termfour}}{\howe{\relone}}{\termtwo}}
    { \rel{\vecvarone}{\termthree}{\howe{\relone}}{\termthree} &&
      \rel{\vecvarone}{\termfour}{\howe{\relone}}{\termfour} &&
      \rel{\vecvarone}{\ps{\termthree}{\termfour}}{\relone}{\termtwo} }
    $$
  \end{varitemize}
  This concludes the proof.
\end{proof}

Moreover, if $\relone$ is a preorder and closed under term-substitution,
then its lifted relation $\howe{\relone}$ is substitutive. Then,
reflexivity of $\relone$ implies compatibility of $\howe{\relone}$ by
Lemma~\ref{lemma:howeprop1}. It follows $\howe{\relone}$ reflexive too,
hence closed under term-substitution.

\begin{lemma}\label{lemma:closesubsCBN}
  If $\relone$ is reflexive, transitive and closed under term-substitution,
  then $\howe{\relone}$ is (term) substitutive and hence also closed under
  term-substitution.
\end{lemma}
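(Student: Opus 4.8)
The plan is to establish the substitutivity property (\ref{eq:SubCBN}) for $\howe{\relone}$ directly, by induction on the derivation of its first premise; closure under term-substitution will then follow for free. Concretely, I want to show that $\rel{\vecvarone\cup\{\varone\}}{\termone}{\howe{\relone}}{\termtwo}$ together with $\rel{\vecvarone}{\termthree}{\howe{\relone}}{\termfour}$ imply $\rel{\vecvarone}{\subst{\termone}{\varone}{\termthree}}{\howe{\relone}}{\subst{\termtwo}{\varone}{\termfour}}$. Since each Howe rule is syntax-directed on its left-hand term, the induction is really on the shape of $\termone$, and in every case the last rule of the derivation is forced.

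I would first dispatch the variable base cases. If $\termone$ is a variable $\vartwo\in\vecvarone$ distinct from $\varone$, then \Howeone gives $\rel{\vecvarone\cup\{\varone\}}{\vartwo}{\relone}{\termtwo}$; applying closure of $\relone$ under term-substitution (\ref{eq:CusCBN}) with $\termfour$ yields $\rel{\vecvarone}{\vartwo}{\relone}{\subst{\termtwo}{\varone}{\termfour}}$, which Lemma~\ref{lemma:howeprop3} lifts into $\howe{\relone}$; as $\subst{\vartwo}{\varone}{\termthree}=\vartwo$, this is the goal. The crucial base case is $\termone=\varone$: here \Howeone gives $\rel{\vecvarone\cup\{\varone\}}{\varone}{\relone}{\termtwo}$, and (\ref{eq:CusCBN}) with $\termfour$ gives $\rel{\vecvarone}{\termfour}{\relone}{\subst{\termtwo}{\varone}{\termfour}}$ (using $\subst{\varone}{\varone}{\termfour}=\termfour$). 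Composing the second premise $\rel{\vecvarone}{\termthree}{\howe{\relone}}{\termfour}$ with this through Lemma~\ref{lemma:howeprop2} produces $\rel{\vecvarone}{\termthree}{\howe{\relone}}{\subst{\termtwo}{\varone}{\termfour}}$, and $\subst{\varone}{\varone}{\termthree}=\termthree$ closes the case. This is the heart of the argument: it is exactly here that the ``$\howe{\relone}$ composed with $\relone$ stays in $\howe{\relone}$'' property of Lemma~\ref{lemma:howeprop2} is indispensable.

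The application and sum cases follow a single pattern. For $\termone=\app{\termsix}{\termseven}$, rule \Howethree supplies $\termeight,\termnine$ with $\rel{\vecvarone\cup\{\varone\}}{\termsix}{\howe{\relone}}{\termeight}$, $\rel{\vecvarone\cup\{\varone\}}{\termseven}{\howe{\relone}}{\termnine}$ and $\rel{\vecvarone\cup\{\varone\}}{\app{\termeight}{\termnine}}{\relone}{\termtwo}$. The induction hypothesis on $\termsix$ and $\termseven$ (with the same $\termthree,\termfour$) gives $\rel{\vecvarone}{\subst{\termsix}{\varone}{\termthree}}{\howe{\relone}}{\subst{\termeight}{\varone}{\termfour}}$ and the analogue for $\termseven,\termnine$, while (\ref{eq:CusCBN}) on the $\relone$-premise yields $\rel{\vecvarone}{\app{\subst{\termeight}{\varone}{\termfour}}{\subst{\termnine}{\varone}{\termfour}}}{\relone}{\subst{\termtwo}{\varone}{\termfour}}$; one application of \Howethree recombines these, using that substitution commutes with application. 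The case $\termone=\ps{\termsix}{\termseven}$ is verbatim the same with \Howefour.

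The abstraction case $\termone=\abstr{\vartwo}{\termfive}$ is the only delicate one, and I expect it to be the main obstacle. Choosing $\vartwo$ fresh (so $\vartwo\notin\vecvarone\cup\{\varone\}$ and $\vartwo$ not free in $\termthree,\termfour$), \Howetwo provides $\termsix$ with $\rel{\vecvarone\cup\{\varone,\vartwo\}}{\termfive}{\howe{\relone}}{\termsix}$ and $\rel{\vecvarone\cup\{\varone\}}{\abstr{\vartwo}{\termsix}}{\relone}{\termtwo}$. To invoke the induction hypothesis on $\termfive$ I first need the second premise in the enlarged context, $\rel{\vecvarone\cup\{\vartwo\}}{\termthree}{\howe{\relone}}{\termfour}$; this is where a weakening property is used, and it holds because $\howe{\relone}$ is preserved under adding fresh variables to the index set. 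With it the induction hypothesis gives $\rel{\vecvarone\cup\{\vartwo\}}{\subst{\termfive}{\varone}{\termthree}}{\howe{\relone}}{\subst{\termsix}{\varone}{\termfour}}$, and (\ref{eq:CusCBN}) on the $\relone$-premise gives $\rel{\vecvarone}{\abstr{\vartwo}{\subst{\termsix}{\varone}{\termfour}}}{\relone}{\subst{\termtwo}{\varone}{\termfour}}$; \Howetwo then delivers $\rel{\vecvarone}{\abstr{\vartwo}{\subst{\termfive}{\varone}{\termthree}}}{\howe{\relone}}{\subst{\termtwo}{\varone}{\termfour}}$, whose left term equals $\subst{(\abstr{\vartwo}{\termfive})}{\varone}{\termthree}$ by freshness of $\vartwo$. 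This completes the induction and proves substitutivity. Finally, since $\relone$ is reflexive, Lemma~\ref{lemma:howeprop3} makes $\howe{\relone}$ reflexive, so substitutivity specialises (take $\termfour=\termthree$) to closure under term-substitution (\ref{eq:CusCBN}), which is the remaining claim.
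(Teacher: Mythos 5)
Your proof is correct and follows essentially the same route as the paper's: induction on the Howe derivation (equivalently, on the shape of the left term), with Lemma~\ref{lemma:howeprop3} handling the ordinary variable case, Lemma~\ref{lemma:howeprop2} handling the crucial case $\termone=\varone$, and the same recombination via \Howetwo, \Howethree\ and \Howefour\ in the remaining cases, followed by the same reflexivity argument to get closure under term-substitution. The only difference is that in the abstraction case you make explicit the weakening of $\rel{\vecvarone}{\termthree}{\howe{\relone}}{\termfour}$ to the enlarged variable set $\vecvarone\cup\{\vartwo\}$, a step the paper's proof uses silently.
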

\begin{proof}
  We show that, for all $\vecvarone\in\powfin{\setvar}$,
  $\varone\in\setvar-\vecvarone$,
  $\termone,\,\termtwo\in\LOPp{\vecvarone\cup\{\varone\}}$ and
  $\termthree,\,\termfour\in\LOPp{\vecvarone}$,
  \begin{align*}
    \rel{\vecvarone\cup\{\varone\}}{\termone}{\howe{\relone}}{\termtwo}\wedge
    \rel{\vecvarone}{\termthree}{\howe{\relone}}{\termfour}\Rightarrow
    \rel{\vecvarone}{\subst{\termone}{\varone}{\termthree}}{\howe{\relone}}{\subst{\termtwo}{\varone}{\termfour}}.
  \end{align*}
  We prove the latter by induction on the derivation of
  $\rel{\vecvarone\cup\{\varone\}}{\termone}{\howe{\relone}}{\termtwo}$,
  thus on the structure of $\termone$.
  \begin{varitemize}
  \item If $\termone$ is a variable, then either $\termone = \varone$ or
    $\termone\in\vecvarone$. In the latter case, suppose $\termone =
    \vartwo$. Then, 
    by hypothesis, 
    $\rel{\vecvarone\cup\{\varone\}}{\vartwo}{\howe{\relone}}{\termtwo}$
    holds and the only way to deduce it is by rule \Howeone\ from
    $\rel{\vecvarone\cup\{\varone\}}{\vartwo}{\relone}{\termtwo}$. Hence,
    by the fact $\relone$ is closed under term-substitution and
    $\termfour\in\LOPp{\vecvarone}$, we obtain
    $\rel{\vecvarone}{\subst{\vartwo}{\varone}{\termfour}}{\relone}{\subst{\termtwo}{\varone}{\termfour}}$
    which is equivalent to
    $\rel{\vecvarone}{\vartwo}{\relone}{\subst{\termtwo}{\varone}{\termfour}}$. Finally,
    by Lemma~\ref{lemma:howeprop3}, we conclude
    $\rel{\vecvarone}{\vartwo}{\howe{\relone}}{\subst{\termtwo}{\varone}{\termfour}}$
    which is equivalent to
    $\rel{\vecvarone}{\subst{\vartwo}{\varone}{\termthree}}{\howe{\relone}}{\subst{\termtwo}{\varone}{\termfour}}$,
    i.e.
    $\rel{\vecvarone}{\subst{\termone}{\varone}{\termthree}}{\howe{\relone}}{\subst{\termtwo}{\varone}{\termfour}}$
    holds. Otherwise, $\termone = \varone$ and
    $\rel{\vecvarone\cup\{\varone\}}{\varone}{\howe{\relone}}{\termtwo}$
    holds. The only way to deduce the latter is by the rule \Howeone\ from
    $\rel{\vecvarone\cup\{\varone\}}{\varone}{\relone}{\termtwo}$. Hence,
    by the fact $\relone$ is closed under term-substitution and
    $\termfour\in\LOPp{\vecvarone}$, we obtain
    $\rel{\vecvarone}{\subst{\varone}{\varone}{\termfour}}{\relone}{\subst{\termtwo}{\varone}{\termfour}}$
    which is equivalent to
    $\rel{\vecvarone}{\termfour}{\relone}{\subst{\termtwo}{\varone}{\termfour}}$. By
    Lemma~\ref{lemma:howeprop2}, we deduce the following:
    $$
    \infer[]
    {\rel{\vecvarone}{\termthree}{\howe{\relone}}{\subst{\termtwo}{\varone}{\termfour}}}
    { \rel{\vecvarone}{\termthree}{\howe{\relone}}{\termfour} &&
      \rel{\vecvarone}{\termfour}{\relone}{\subst{\termtwo}{\varone}{\termfour}}
    }
    $$
    which is equivalent to
    $\rel{\vecvarone}{\subst{\varone}{\varone}{\termthree}}{\howe{\relone}}{\subst{\termtwo}{\varone}{\termfour}}$. Thus,
    $\rel{\vecvarone}{\subst{\termone}{\varone}{\termthree}}{\howe{\relone}}{\subst{\termtwo}{\varone}{\termfour}}$
    holds.
  \item If $\termone$ is a $\lambda$-abstraction, say
    $\abstr{\vartwo}{\termfive}$, then 
    $\rel{\vecvarone\cup\{\varone\}}{\abstr{\vartwo}{\termfive}}{\howe{\relone}}{\termtwo}$
    holds by hypothesis. The only way to deduce the latter is by rule
    \Howetwo\ as follows:
    $$
    \infer[\Howetwo]
    {\rel{\vecvarone\cup\{\varone\}}{\abstr{\vartwo}{\termfive}}{\howe{\relone}}{\termtwo}}
    {
      \rel{\vecvarone\cup\{\varone,\vartwo\}}{\termfive}{\howe{\relone}}{\termsix}
      &&
      \rel{\vecvarone\cup\{\varone\}}{\abstr{\vartwo}{\termsix}}{\relone}{\termtwo}
      && \varone,\vartwo\notin\vecvarone }
    $$

    Let us denote $\vecvartwo = \vecvarone\cup\{\vartwo\}$. Then, by
    induction hypothesis on
    $\rel{\vecvartwo\cup\{\varone\}}{\termfive}{\howe{\relone}}{\termsix}$,
    we get
    $\rel{\vecvartwo}{\subst{\termfive}{\varone}{\termthree}}{\howe{\relone}}{\subst{\termsix}{\varone}{\termfour}}$. Moreover,
    by the fact $\relone$ is closed under term-substitution and
    $\termfour\in\LOPp{\vecvarone}$, we obtain that
    $\rel{\vecvarone}{\subst{(\abstr{\vartwo}{\termsix})}{\varone}{\termfour}}{\relone}{\subst{\termtwo}{\varone}{\termfour}}$
    holds, i.e.
    $\rel{\vecvarone}{\subst{\abstr{\vartwo}{\termsix}}{\varone}{\termfour}}{\relone}{\subst{\termtwo}{\varone}{\termfour}}$. 
    By \Howetwo, we deduce the following:
    $$
    \infer[\Howetwo]
    {\rel{\vecvarone}{\subst{\abstr{\vartwo}{\termfive}}{\varone}{\termthree}}{\howe{\relone}}{\subst{\termtwo}{\varone}{\termfour}}}
    {
      \rel{\vecvarone\cup\{\vartwo\}}{\subst{\termfive}{\varone}{\termthree}}{\howe{\relone}}{\subst{\termsix}{\varone}{\termfour}}
      &&
      \rel{\vecvarone}{\subst{\abstr{\vartwo}{\termsix}}{\varone}{\termfour}}{\relone}{\subst{\termtwo}{\varone}{\termfour}}
      && \vartwo\notin\vecvarone } 
    $$
    which is equivalent to
    $\rel{\vecvarone}{\subst{(\abstr{\vartwo}{\termfive})}{\varone}{\termthree}}{\howe{\relone}}{\subst{\termtwo}{\varone}{\termfour}}$. Thus,
    $\rel{\vecvarone}{\subst{\termone}{\varone}{\termthree}}{\howe{\relone}}{\subst{\termtwo}{\varone}{\termfour}}$
    holds.
  \item If $\termone$ is an application, say $\app{\termfive}{\termsix}$,
    then 
    $\rel{\vecvarone\cup\{\varone\}}{\app{\termfive}{\termsix}}{\howe{\relone}}{\termtwo}$
    holds by hypothesis. The only way to deduce the latter is by rule
    \Howethree\ as follows:
    $$
    \infer[\Howethree]
    {\rel{\vecvarone\cup\{\varone\}}{\app{\termfive}{\termsix}}{\howe{\relone}}{\termtwo}}
    {
      \rel{\vecvarone\cup\{\varone\}}{\termfive}{\howe{\relone}}{\termfive'}
      &&
      \rel{\vecvarone\cup\{\varone\}}{\termsix}{\howe{\relone}}{\termsix'}
      &&
      \rel{\vecvarone\cup\{\varone\}}{\app{\termfive'}{\termsix'}}{\relone}{\termtwo}
    }
    $$

    By induction hypothesis on
    $\rel{\vecvarone\cup\{\varone\}}{\termfive}{\howe{\relone}}{\termfive'}$
    and
    $\rel{\vecvarone\cup\{\varone\}}{\termsix}{\howe{\relone}}{\termsix'}$,
    we get
    $\rel{\vecvarone}{\subst{\termfive}{\varone}{\termthree}}{\howe{\relone}}{\subst{\termfive'}{\varone}{\termfour}}$
    and
    $\rel{\vecvarone}{\subst{\termsix}{\varone}{\termthree}}{\howe{\relone}}{\subst{\termsix'}{\varone}{\termfour}}$.
    Moreover, by the fact $\relone$ is closed under term-substitution and
    $\termfour\in\LOPp{\vecvarone}$, we obtain that
    $\rel{\vecvarone}{\subst{(\app{\termfive'}{\termsix'})}{\varone}{\termfour}}{\relone}{\subst{\termtwo}{\varone}{\termfour}}$
    holds, i.e.
    $\rel{\vecvarone}{\app{\subst{\termfive'}{\varone}{\termfour}}{\subst{\termsix'}{\varone}{\termfour}}}{\relone}{\subst{\termtwo}{\varone}{\termfour}}$.
    By \Howethree, we deduce the following:
    $$
    \infer[\Howethree]
    {\rel{\vecvarone}{\app{\subst{\termfive}{\varone}{\termthree}}{\subst{\termsix}{\varone}{\termthree}}}{\howe{\relone}}{\subst{\termtwo}{\varone}{\termfour}}}
    {
      \rel{\vecvarone}{\subst{\termfive}{\varone}{\termthree}}{\howe{\relone}}{\subst{\termfive'}{\varone}{\termfour}}
      &&
      \rel{\vecvarone}{\subst{\termsix}{\varone}{\termthree}}{\howe{\relone}}{\subst{\termsix'}{\varone}{\termfour}}
      &&
      \rel{\vecvarone}{\app{\subst{\termfive'}{\varone}{\termfour}}{\subst{\termsix'}{\varone}{\termfour}}}{\relone}{\subst{\termtwo}{\varone}{\termfour}}}
    $$
    which is equivalent to
    $\rel{\vecvarone}{\subst{(\app{\termfive}{\termsix})}{\varone}{\termthree}}{\howe{\relone}}{\subst{\termtwo}{\varone}{\termfour}}$. Thus,
    $\rel{\vecvarone}{\subst{\termone}{\varone}{\termthree}}{\howe{\relone}}{\subst{\termtwo}{\varone}{\termfour}}$
    holds.
  \item If $\termone$ is a probabilistic sum, say
    $\ps{\termfive}{\termsix}$, then 
    $\rel{\vecvarone\cup\{\varone\}}{\ps{\termfive}{\termsix}}{\howe{\relone}}{\termtwo}$
    holds by hypothesis. The only way to deduce the latter is by rule
    \Howefour\ as follows:
    $$
    \infer[\Howefour]
    {\rel{\vecvarone\cup\{\varone\}}{\ps{\termfive}{\termsix}}{\howe{\relone}}{\termtwo}}
    {
      \rel{\vecvarone\cup\{\varone\}}{\termfive}{\howe{\relone}}{\termfive'}
      &&
      \rel{\vecvarone\cup\{\varone\}}{\termsix}{\howe{\relone}}{\termsix'}
      &&
      \rel{\vecvarone\cup\{\varone\}}{\ps{\termfive'}{\termsix'}}{\relone}{\termtwo}
    }
    $$
    By induction hypothesis on
    $\rel{\vecvarone\cup\{\varone\}}{\termfive}{\howe{\relone}}{\termfive'}$
    and
    $\rel{\vecvarone\cup\{\varone\}}{\termsix}{\howe{\relone}}{\termsix'}$,
    we get
    $\rel{\vecvarone}{\subst{\termfive}{\varone}{\termthree}}{\howe{\relone}}{\subst{\termfive'}{\varone}{\termfour}}$
    and
    $\rel{\vecvarone}{\subst{\termsix}{\varone}{\termthree}}{\howe{\relone}}{\subst{\termsix'}{\varone}{\termfour}}$. Moreover,
    by the fact $\relone$ is closed under term-substitution and
    $\termfour\in\LOPp{\vecvarone}$, we obtain that
    $\rel{\vecvarone}{\subst{(\ps{\termfive'}{\termsix'})}{\varone}{\termfour}}{\relone}{\subst{\termtwo}{\varone}{\termfour}}$,
    i.e.
    $\rel{\vecvarone}{\ps{\subst{\termfive'}{\varone}{\termfour}}{\subst{\termsix'}{\varone}{\termfour}}}{\relone}{\subst{\termtwo}{\varone}{\termfour}}$. 
    By \Howefour, we conclude the following:
    $$
    \infer[\Howefour]
    {\rel{\vecvarone}{\ps{\subst{\termfive}{\varone}{\termthree}}{\subst{\termsix}{\varone}{\termthree}}}{\howe{\relone}}{\subst{\termtwo}{\varone}{\termfour}}}
    {
      \rel{\vecvarone}{\subst{\termfive}{\varone}{\termthree}}{\howe{\relone}}{\subst{\termfive'}{\varone}{\termfour}}
      &&
      \rel{\vecvarone}{\subst{\termsix}{\varone}{\termthree}}{\howe{\relone}}{\subst{\termsix'}{\varone}{\termfour}}
      &&
      \rel{\vecvarone}{\ps{\subst{\termfive'}{\varone}{\termfour}}{\subst{\termsix'}{\varone}{\termfour}}}{\relone}{\subst{\termtwo}{\varone}{\termfour}}
    }
    $$
    which is equivalent to
    $\rel{\vecvarone}{\subst{(\ps{\termfive}{\termsix})}{\varone}{\termthree}}{\howe{\relone}}{\subst{\termtwo}{\varone}{\termfour}}$. Thus,
    $\rel{\vecvarone}{\subst{\termone}{\varone}{\termthree}}{\howe{\relone}}{\subst{\termtwo}{\varone}{\termfour}}$
    holds.
  \end{varitemize}
  This concludes the proof.
\end{proof}
Something is missing, however, before we can conclude that
$\howe{\cbnpas}$ is a precongruence, namely transitivity.
We also follow Howe here building the transitive
closure of a $\LOP$-relation $\relone$ as the relation $\tcrel{\relone}$
defined by the rules in Figure~\ref{fig:transitiveclosure}. 
\begin{figure*}
$$
\infer[\TCone]
{\rel{\vecvarone}{\termone}{\tcrel{\relone}}{\termtwo}}
{\rel{\vecvarone}{\termone}{\relone}{\termtwo}}
$$
$$
\infer[\TCtwo]
{\rel{\vecvarone}{\termone}{\tcrel{\relone}}{\termthree}} {
  \rel{\vecvarone}{\termone}{\tcrel{\relone}}{\termtwo} &&
  \rel{\vecvarone}{\termtwo}{\tcrel{\relone}}{\termthree} }
$$
\caption{Transitive Closure for $\LOP$.}\label{fig:transitiveclosure}
\end{figure*}
Then, it is easy to prove $\tcrel{\relone}$ of being compatible and closed
under term-substitution if $\relone$ is.  
\begin{lemma}\label{lemma:tcrelCom}
  If $\relone$ is compatible, then so is $\tcrel{\relone}$.
\end{lemma}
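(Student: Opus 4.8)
The plan is to verify the four compatibility conditions \Comone--\Comfour for $\tcrel{\relone}$ one at a time, exploiting two structural facts about $\tcrel{\relone}$ that come for free from its defining rules: it contains $\relone$ (rule \TCone) and it is transitive (rule \TCtwo). I would also record at the outset that any compatible relation is reflexive: by \Comone variables are self-related, and a routine structural induction using \Comtwo, \Comthree and \Comfour lifts this to every term. Consequently $\relone$, and a fortiori $\tcrel{\relone}$, may be used reflexively in the arguments below.

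The condition \Comone is immediate: \Comone for $\relone$ gives $\rel{\vecvarone}{\varone}{\relone}{\varone}$ whenever $\varone\in\vecvarone$, and \TCone promotes this to $\tcrel{\relone}$. For \Comtwo I would argue by induction on the derivation of the hypothesis $\rel{\vecvarone\cup\{\varone\}}{\termone}{\tcrel{\relone}}{\termtwo}$. In the base case this judgement comes from \TCone, so $\rel{\vecvarone\cup\{\varone\}}{\termone}{\relone}{\termtwo}$, and \Comtwo for $\relone$ followed by \TCone yields the claim. In the \TCtwo case there is an intermediate term $\termthree$ with $\rel{\vecvarone\cup\{\varone\}}{\termone}{\tcrel{\relone}}{\termthree}$ and $\rel{\vecvarone\cup\{\varone\}}{\termthree}{\tcrel{\relone}}{\termtwo}$; the induction hypothesis produces two $\tcrel{\relone}$-steps through $\abstr{\varone}{\termthree}$, and \TCtwo splices them back together.

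The delicate conditions are the binary ones, \Comthree and \Comfour, since the derivation of a $\tcrel{\relone}$-judgement only exposes a transitivity chain in a single argument, not in both simultaneously. This is exactly where the decomposition Lemmas~\ref{lemma:com3LR} and~\ref{lemma:com4LR} earn their keep: because $\tcrel{\relone}$ is transitive, it suffices to establish the one-sided variants \ComthreeL, \ComthreeR (respectively \ComfourL, \ComfourR). Each of these keeps one argument fixed and varies a single $\tcrel{\relone}$-related pair, so each is provable by induction on the derivation of that pair. In the \TCone base case I invoke the corresponding condition \Comthree (respectively \Comfour) for $\relone$, using reflexivity of $\relone$ to relate the fixed argument to itself; in the \TCtwo step I propagate the fixed argument through the intermediate term and reassemble with \TCtwo. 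Applying Lemma~\ref{lemma:com3LR} to \ComthreeL and \ComthreeR then delivers \Comthree, and Lemma~\ref{lemma:com4LR} applied to \ComfourL and \ComfourR delivers \Comfour.

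I expect the main obstacle to be conceptual rather than computational: recognising that \Comthree and \Comfour cannot be dispatched by a single induction over $\tcrel{\relone}$ acting on both arguments at once, and that the correct move is to reduce them to their one-sided forms via the transitivity of $\tcrel{\relone}$. Once this reduction is in place, each remaining induction is entirely routine, mirroring the pattern already used for \Comtwo.
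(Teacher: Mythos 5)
Your proof is correct, and for the two binary conditions it takes a genuinely different route from the paper. Your treatment of \Comone\ and \Comtwo\ matches the paper's exactly (base case via \TCone\ plus compatibility of $\relone$; \TCtwo\ case by splicing the two inductive hypotheses). For \Comthree\ and \Comfour, however, the paper does \emph{not} invoke Lemmas~\ref{lemma:com3LR} and~\ref{lemma:com4LR}: instead it proves two ``mixed'' auxiliary implications, namely that $\rel{\vecvarone}{\termone}{\tcrel{\relone}}{\termtwo}$ together with $\rel{\vecvarone}{\termthree}{\relone}{\termfour}$ (and symmetrically, $\relone$ on the first pair and $\tcrel{\relone}$ on the second) already yields $\rel{\vecvarone}{\app{\termone}{\termthree}}{\tcrel{\relone}}{\app{\termtwo}{\termfour}}$, and then finishes with a case analysis on the last rules of \emph{both} derivations, the \TCtwo/\TCtwo\ case requiring a double induction through intermediate terms $\termfive$ and $\termsix$. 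Your reduction to the one-sided properties \ComthreeL, \ComthreeR\ (and \ComfourL, \ComfourR), combined via the transitivity of $\tcrel{\relone}$ and the decomposition lemmas, replaces that four-case analysis and double induction with two single, independent inductions; it is arguably cleaner and reuses machinery the paper only deploys later (e.g., in the proof of Lemma~\ref{lemma:ctxprecCBN}). What the paper's approach buys is slightly stronger intermediate statements (the fixed argument may vary along $\relone$ rather than being held identical), but these are not needed elsewhere, so nothing is lost by your route. One small point to keep explicit: both arguments need reflexivity of $\relone$ on arbitrary terms, which you correctly derive from compatibility by structural induction, just as the paper implicitly does.
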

\begin{proof}
  We need to prove that \Comone, \Comtwo, \Comthree, and \Comfour\ hold for
  $\tcrel{\relone}$:
  \begin{varitemize}
  \item Proving \Comone{} means to show:
    $$
    \forall\vecvarone\in\powfin{\setvar}, \varone\in\vecvarone\Rightarrow
    \rel{\vecvarone}{\varone}{\relone}{\varone}.
    $$
    Since $\relone$ is compatible, therefore reflexive,
    $\rel{\vecvarone}{\varone}{\relone}{\varone}$ holds. Hence
    $\rel{\vecvarone}{\varone}{\tcrel{\relone}}{\varone}$ follows by
    \TCone.
  \item Proving \Comtwo{} means to show:
    $\forall\vecvarone\in\powfin{\setvar}$,
    $\forall\varone\in\setvar-\vecvarone$, $\forall\termone,
    \termtwo\in\LOP(\vecvarone\cup\{\varone\})$,
    $$
    \rel{\vecvarone\cup\{\varone\}}{\termone}{\tcrel{\relone}}{\termtwo}\Rightarrow
    \rel{\vecvarone}{\abstr{\varone}{\termone}}{\tcrel{\relone}}{\abstr{\varone}{\termtwo}}.
    $$
    We prove it by induction on the derivation of
    $\rel{\vecvarone\cup\{\varone\}}{\termone}{\tcrel{\relone}}{\termtwo}$,
    looking at the last rule used. The base case has $\TCone$ as last rule:
    thus, $\rel{\vecvarone\cup\{\varone\}}{\termone}{\relone}{\termtwo}$
    holds. Then, since $\relone$ is compatible, it follows
    $\rel{\vecvarone}{\abstr{\varone}{\termone}}{\relone}{\abstr{\varone}{\termtwo}}$. We
    conclude applying $\TCone$ on the latter, obtaining
    $\rel{\vecvarone}{\abstr{\varone}{\termone}}{\tcrel{\relone}}{\abstr{\varone}{\termtwo}}$. Otherwise,
    if $\TCtwo$ is the last rule used, we get that, for some
    $\termthree\in\LOP(\vecvarone\cup\{\varone\})$,
    $\rel{\vecvarone\cup\{\varone\}}{\termone}{\tcrel{\relone}}{\termthree}$
    and
    $\rel{\vecvarone\cup\{\varone\}}{\termthree}{\tcrel{\relone}}{\termtwo}$
    hold. Then, by induction hypothesis on both of them, we have
    $\rel{\vecvarone}{\abstr{\varone}{\termone}}{\tcrel{\relone}}{\abstr{\varone}{\termthree}}$
    and
    $\rel{\vecvarone}{\abstr{\varone}{\termthree}}{\tcrel{\relone}}{\abstr{\varone}{\termtwo}}$. We
    conclude applying $\TCtwo$ on the latter two, obtaining
    $\rel{\vecvarone}{\abstr{\varone}{\termone}}{\tcrel{\relone}}{\abstr{\varone}{\termtwo}}$.
  \item Proving \Comthree{} means to show:
    $\forall\vecvarone\in\powfin{\setvar}$, $\forall\termone,\termtwo,\termthree,\termfour\in\LOP(\vecvarone)$,
    $$
    \rel{\vecvarone}{\termone}{\tcrel{\relone}}{\termtwo}\ \wedge\
    \rel{\vecvarone}{\termthree}{\tcrel{\relone}}{\termfour}\Rightarrow
    \rel{\vecvarone}{\termone\termthree}{\tcrel{\relone}}{\termtwo\termfour}.
    $$
    Firstly, we prove the following two characterizations:
    \begin{align}
      &\label{equ:com3lp}\forall
      \termone,\termtwo,\termthree,\termfour\in\LOP(\vecvarone).\
      \rel{\vecvarone}{\termone}{\tcrel{\relone}}{\termtwo}\ \wedge \
      \rel{\vecvarone}{\termthree}{\relone}{\termfour}
      \Rightarrow \rel{\vecvarone}{\app{\termone}{\termthree}}{\tcrel{\relone}}{\app{\termtwo}{\termfour}},\\
      &\label{equ:com3rp}\forall
      \termone,\termtwo,\termthree,\termfour\in\LOP(\vecvarone).\
      \rel{\vecvarone}{\termone}{\relone}{\termtwo}\ \wedge \
      \rel{\vecvarone}{\termthree}{\tcrel{\relone}}{\termfour} \Rightarrow
      \rel{\vecvarone}{\app{\termone}{\termthree}}{\tcrel{\relone}}{\app{\termtwo}{\termfour}}.
    \end{align}
    In particular, we only prove (\ref{equ:com3lp}) in details, since
    (\ref{equ:com3rp}) is similarly provable. We prove (\ref{equ:com3lp})
    by induction on the derivation
    $\rel{\vecvarone}{\termone}{\tcrel{\relone}}{\termtwo}$, looking at the
    last rule used.  The base case has $\TCone$ as last rule: we get that
    $\rel{\vecvarone}{\termone}{\relone}{\termtwo}$ holds. Then, using
    $\relone$ compatibility property and
    $\rel{\vecvarone}{\termthree}{\relone}{\termfour}$, it follows
    $\rel{\vecvarone}{\app{\termone}{\termthree}}{\relone}{\app{\termtwo}{\termfour}}$. We
    conclude applying $\TCone$ on the latter, obtaining
    $\rel{\vecvarone}{\app{\termone}{\termthree}}{\tcrel{\relone}}{\app{\termtwo}{\termfour}}$.
    Otherwise, if $\TCtwo$ is the last rule used, we get that, for some
    $\termfive\in\LOP$,
    $\rel{\vecvarone}{\termone}{\tcrel{\relone}}{\termfive}$ and
    $\rel{\vecvarone}{\termfive}{\tcrel{\relone}}{\termtwo}$ hold. Then, by
    induction hypothesis on
    $\rel{\vecvarone}{\termone}{\tcrel{\relone}}{\termfive}$ along with
    $\rel{\vecvarone}{\termthree}{\relone}{\termfour}$, we have
    $\rel{\vecvarone}{\app{\termone}{\termthree}}{\tcrel{\relone}}{\app{\termfive}{\termfour}}$. Then,
    since $\relone$ is compatible and so reflexive too,
    $\rel{\vecvarone}{\termfour}{\relone}{\termfour}$ holds. By induction
    hypothesis on $\rel{\vecvarone}{\termfive}{\tcrel{\relone}}{\termtwo}$
    along with the latter, we get
    $\rel{\vecvarone}{\app{\termfive}{\termfour}}{\tcrel{\relone}}{\app{\termtwo}{\termfour}}$. We
    conclude applying $\TCtwo$ on
    $\rel{\vecvarone}{\app{\termone}{\termthree}}{\tcrel{\relone}}{\app{\termfive}{\termfour}}$
    and
    $\rel{\vecvarone}{\app{\termfive}{\termfour}}{\tcrel{\relone}}{\app{\termtwo}{\termfour}}$,
    obtaining
    $\rel{\vecvarone}{\app{\termone}{\termthree}}{\tcrel{\relone}}{\app{\termtwo}{\termfour}}$.

    Let us focus on the original \Comthree{} statement. We prove it by
    induction on the two derivations
    $\rel{\vecvarone}{\termone}{\tcrel{\relone}}{\termtwo}$ and
    $\rel{\vecvarone}{\termthree}{\tcrel{\relone}}{\termfour}$, which we
    name here as $\pfone$ and $\pftwo$ respectively. Looking at the last
    rules used, there are four possible cases as four are the combinations
    that permit to conclude with $\pfone$ and $\pftwo$:
    \begin{varenumerate}
    \item $\TCone$ for both $\pfone$ and $\pftwo$;
    \item $\TCone$ for $\pfone$ and $\TCtwo$ for $\pftwo$;
    \item $\TCtwo$ for $\pfone$ and $\TCone$ for $\pftwo$;
    \item $\TCtwo$ for both $\pfone$ and $\pftwo$.
    \end{varenumerate}
    Observe now that the first three cases are addressed by
    (\ref{equ:com3lp}) and (\ref{equ:com3rp}). Hence, it remains to prove
    the last case, where both derivations are concluded applying $\TCtwo$
    rule. According to $\TCtwo$ rule definition, we get two additional
    hypothesis from each derivation. In particular, for $\pfone$, we get
    that, for some $\termfive\in\LOP(\vecvarone)$,
    $\rel{\vecvarone}{\termone}{\tcrel{\relone}}{\termfive}$ and
    $\rel{\vecvarone}{\termfive}{\tcrel{\relone}}{\termtwo}$
    hold. Similarly, for $\pftwo$, we get that, for some
    $\termsix\in\LOP(\vecvarone)$,
    $\rel{\vecvarone}{\termthree}{\tcrel{\relone}}{\termsix}$ and
    $\rel{\vecvarone}{\termsix}{\tcrel{\relone}}{\termfour}$ hold. Then, by
    a double induction hypothesis, firstly on
    $\rel{\vecvarone}{\termone}{\tcrel{\relone}}{\termfive}$,
    $\rel{\vecvarone}{\termthree}{\tcrel{\relone}}{\termsix}$ and secondly
    on $\rel{\vecvarone}{\termfive}{\tcrel{\relone}}{\termtwo}$,
    $\rel{\vecvarone}{\termsix}{\tcrel{\relone}}{\termfour}$, we get
    $\rel{\vecvarone}{\app{\termone}{\termthree}}{\tcrel{\relone}}{\app{\termfive}{\termsix}}$
    and
    $\rel{\vecvarone}{\app{\termfive}{\termsix}}{\tcrel{\relone}}{\app{\termtwo}{\termfour}}$
    respectively. We conclude applying $\TCtwo$ on these latter, obtaining
    $\rel{\vecvarone}{\app{\termone}{\termthree}}{\tcrel{\relone}}{\app{\termtwo}{\termfour}}$.
  \item Proving \Comfour{} means to show:
    $\forall\vecvarone\in\powfin{\setvar}$, $\forall\termone,\termtwo,\termthree,\termfour\in\LOP(\vecvarone)$,
    $$
    \rel{\vecvarone}{\termone}{\tcrel{\relone}}{\termtwo}\ \wedge\
    \rel{\vecvarone}{\termthree}{\tcrel{\relone}}{\termfour}\Rightarrow
    \rel{\vecvarone}{\ps{\termone}{\termthree}}{\tcrel{\relone}}{\ps{\termtwo}{\termfour}}.
    $$
    We do not detail the proof since it boils down to that of \Comthree,
    where partial sums play the role of applications.
  \end{varitemize}
  This concludes the proof.
\end{proof}

\begin{lemma}\label{lemma:tcrelCTS}
  If $\relone$ is closed under term-substitution, then so is
  $\tcrel{\relone}$.
\end{lemma}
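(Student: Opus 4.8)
The plan is to proceed by induction on the derivation of $\rel{\vecvarone\cup\{\varone\}}{\termone}{\tcrel{\relone}}{\termtwo}$ --- equivalently, on the length of the transitive-closure chain witnessing the pair --- distinguishing the two possible last rules, $\TCone$ and $\TCtwo$. Precisely, I fix $\varone\in\setvar-\vecvarone$ and $\termthree\in\LOPp{\vecvarone}$, and aim to establish the implication
$$
\rel{\vecvarone\cup\{\varone\}}{\termone}{\tcrel{\relone}}{\termtwo}\;\Rightarrow\;\rel{\vecvarone}{\subst{\termone}{\varone}{\termthree}}{\tcrel{\relone}}{\subst{\termtwo}{\varone}{\termthree}},
$$
which is exactly condition (\ref{eq:CusCBN}) for $\tcrel{\relone}$.

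For the base case, the last rule is $\TCone$, so its premise gives $\rel{\vecvarone\cup\{\varone\}}{\termone}{\relone}{\termtwo}$. Since $\relone$ is closed under term-substitution and $\termthree\in\LOPp{\vecvarone}$, I immediately obtain $\rel{\vecvarone}{\subst{\termone}{\varone}{\termthree}}{\relone}{\subst{\termtwo}{\varone}{\termthree}}$, and a single application of $\TCone$ lifts this into $\tcrel{\relone}$. For the inductive case, the last rule is $\TCtwo$, so there is an intermediate term $\termfour\in\LOPp{\vecvarone\cup\{\varone\}}$ with $\rel{\vecvarone\cup\{\varone\}}{\termone}{\tcrel{\relone}}{\termfour}$ and $\rel{\vecvarone\cup\{\varone\}}{\termfour}{\tcrel{\relone}}{\termtwo}$. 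Applying the induction hypothesis to each of these two shorter derivations (with the same $\varone$ and $\termthree$) yields $\rel{\vecvarone}{\subst{\termone}{\varone}{\termthree}}{\tcrel{\relone}}{\subst{\termfour}{\varone}{\termthree}}$ and $\rel{\vecvarone}{\subst{\termfour}{\varone}{\termthree}}{\tcrel{\relone}}{\subst{\termtwo}{\varone}{\termthree}}$; one application of $\TCtwo$, with $\subst{\termfour}{\varone}{\termthree}$ as the new witness, then closes the case.

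The argument is a routine induction, structurally parallel to (and simpler than) the proof of Lemma~\ref{lemma:tcrelCom}, so I do not expect a genuine obstacle. The one point worth making explicit is that the transitivity witness $\termfour$ in the $\TCtwo$ step ranges over the \emph{extended} variable set $\vecvarone\cup\{\varone\}$, so that $\subst{\termfour}{\varone}{\termthree}$ is a well-formed term of $\LOPp{\vecvarone}$ and the induction hypothesis applies verbatim to both sub-derivations; in effect, substituting $\termthree$ for $\varone$ commutes with chaining through $\termfour$. It is also worth noting why this is lighter than Lemma~\ref{lemma:closesubsCBN}: there one needs the full substitutive property (\ref{eq:SubCBN}), with two possibly distinct terms $\termthree,\termfour$ substituted on the two sides and a structural induction on $\howe{\relone}$ appealing to Lemmas~\ref{lemma:howeprop2} and~\ref{lemma:howeprop3}; here the weaker notion of closure under term-substitution (\ref{eq:CusCBN}) uses a single $\termthree$ on both sides, so no auxiliary lemmas are required and the recursion simply threads the single substitution straight through the transitive-closure chain.
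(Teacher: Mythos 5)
Your proof is correct and follows essentially the same route as the paper's: an induction on the derivation of $\rel{\vecvarone\cup\{\varone\}}{\termone}{\tcrel{\relone}}{\termtwo}$, handling \TCone{} by the closure hypothesis on $\relone$ plus one application of \TCone, and \TCtwo{} by applying the induction hypothesis to both sub-derivations and rechaining through the substituted intermediate term $\subst{\termfour}{\varone}{\termthree}$. Your side remarks (that the witness $\termfour$ lives over $\vecvarone\cup\{\varone\}$, and that this lemma needs only closure under term-substitution rather than full substitutivity) are accurate but not points where the paper's proof differs.
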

\begin{proof}
  We need to prove $\tcrel{\relone}$ of being closed under
  term-substitution: for all $\vecvarone\in\powfin{\setvar}$,
  $\varone\in\setvar-\vecvarone$,
 $\termone,\termtwo\in\LOPp{\vecvarone\cup\{\varone\}}$ and
  $\termthree,\termfour\in\LOPp{\vecvarone}$,
  $$
  \rel{\vecvarone\cup\{\varone\}}{\termone}{\tcrel{\relone}}{\termtwo}\wedge\termthree\in\LOPp{\vecvarone}\Rightarrow
  \rel{\vecvarone}{\subst{\termone}{\varone}{\termthree}}{\tcrel{\relone}}{\subst{\termtwo}{\varone}{\termthree}}.
  $$
  We prove the latter by induction on the derivation of
  $\rel{\vecvarone\cup\{\varone\}}{\termone}{\tcrel{\relone}}{\termtwo}$,
  looking at the last rule used. The base case has $\TCone$ as last rule:
  we get that
  $\rel{\vecvarone\cup\{\varone\}}{\termone}{\relone}{\termtwo}$
  holds. Then, since $\relone$ is closed under term-substitution, it
  follows
  $\rel{\vecvarone}{\subst{\termone}{\varone}{\termthree}}{\relone}{\subst{\termtwo}{\varone}{\termthree}}$. We
  conclude applying $\TCone$ on the latter, obtaining
  $\rel{\vecvarone}{\subst{\termone}{\varone}{\termthree}}{\tcrel{\relone}}{\subst{\termtwo}{\varone}{\termthree}}$.
  Otherwise, if $\TCtwo$ is the last rule used, we get that, for some
  $\termfour\in\LOPp{\vecvarone\cup\{\varone\}}$,
  $\rel{\vecvarone\cup\{\varone\}}{\termone}{\tcrel{\relone}}{\termfour}$
  and
  $\rel{\vecvarone\cup\{\varone\}}{\termfour}{\tcrel{\relone}}{\termtwo}$
  hold. Then, by induction hypothesis on both of them, we have
  $\rel{\vecvarone}{\subst{\termone}{\varone}{\termthree}}{\tcrel{\relone}}{\subst{\termfour}{\varone}{\termthree}}$
  and
  $\rel{\vecvarone}{\subst{\termfour}{\varone}{\termthree}}{\tcrel{\relone}}{\subst{\termtwo}{\varone}{\termthree}}$. We
  conclude applying $\TCtwo$ on the latter two, obtaining
  $\rel{\vecvarone}{\subst{\termone}{\varone}{\termthree}}{\tcrel{\relone}}{\subst{\termtwo}{\varone}{\termthree}}$.
\end{proof}

It is important to note that the transitive closure of an already Howe's
lifted relation is a preorder if the starting relation is.

\begin{lemma}\label{lemma:tcrelPO}
  If a $\LOP$-relation $\relone$ is a preorder relation, then so is
  $\tcrel{(\howe{\relone})}$.
\end{lemma}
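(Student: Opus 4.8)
The plan is to verify the two defining conditions of a preorder—reflexivity and transitivity—separately, reusing the machinery already set up for $\howe{(\cdot)}$ and for the transitive closure $\tcrel{(\cdot)}$.

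Transitivity comes for free from the construction. The operator $\tcrel{(\cdot)}$ is defined precisely so that rule \TCtwo\ closes a relation under composition; hence from $\rel{\vecvarone}{\termone}{\tcrel{(\howe{\relone})}}{\termtwo}$ and $\rel{\vecvarone}{\termtwo}{\tcrel{(\howe{\relone})}}{\termthree}$ a single application of \TCtwo\ yields $\rel{\vecvarone}{\termone}{\tcrel{(\howe{\relone})}}{\termthree}$. No induction is needed.

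For reflexivity I would argue in three steps. Since $\relone$ is a preorder it is in particular reflexive, so Lemma~\ref{lemma:howeprop1} applies and gives that $\howe{\relone}$ is compatible. Next I observe that any compatible $\LOP$-relation is reflexive: this is a straightforward structural induction on $\termone\in\LOP(\vecvarone)$, using \Comone\ for the variable case and \Comtwo, \Comthree, \Comfour\ (each instantiated with the same term on both sides) for abstractions, applications and probabilistic sums respectively. This is exactly the remark already recorded in the discussion preceding Lemma~\ref{lemma:closesubsCBN}. Consequently $\rel{\vecvarone}{\termone}{\howe{\relone}}{\termone}$ holds for every $\termone\in\LOP(\vecvarone)$. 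Finally, rule \TCone\ gives the inclusion $\howe{\relone}\subseteq\tcrel{(\howe{\relone})}$, so reflexivity of $\howe{\relone}$ transfers immediately to $\tcrel{(\howe{\relone})}$.

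I expect no genuine obstacle here: the lemma is essentially bookkeeping over facts established earlier. The only point worth flagging is that transitivity of $\relone$ plays no role in this particular argument—reflexivity of $\relone$ alone drives the compatibility-to-reflexivity chain—so the full preorder hypothesis is stronger than what the stated conclusion strictly requires. It is assumed in this form because the later development pairs $\tcrel{(\howe{\relone})}$ with the substitutivity result of Lemma~\ref{lemma:closesubsCBN}, whose proof does rely on transitivity of $\relone$.
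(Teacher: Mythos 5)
Your proof is correct and follows essentially the same route as the paper's: transitivity is immediate from the closure construction, and reflexivity comes from Lemma~\ref{lemma:howeprop1} (reflexivity of $\relone$ gives compatibility, hence reflexivity, of $\howe{\relone}$). The only cosmetic difference is the final step—you transfer reflexivity to $\tcrel{(\howe{\relone})}$ via the inclusion given by rule \TCone, whereas the paper invokes Lemma~\ref{lemma:tcrelCom} to propagate compatibility through the transitive closure—and your side remark that transitivity of $\relone$ is never actually used applies equally to the paper's argument.
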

\begin{proof}
  We need to show 
  $\tcrel{(\howe{\relone})}$ of being 
  reflexive and transitive. Of course, being a transitive closure,
  $\tcrel{(\howe{\relone})}$ is a a transitive relation. Moreover, since
  $\relone$ is reflexive, by Lemma~\ref{lemma:howeprop1}, $\howe{\relone}$
  is reflexive too because compatible. Then, by Lemma~\ref{lemma:tcrelCom},
  so is $\tcrel{(\howe{\relone})}$.
\end{proof}
This is just the first half of the story: we also need
to prove that $\tcrel{(\howe{\cbnpas})}$ is  a
simulation. As we already know it is a preorder, the following lemma
gives us the missing bit:
\begin{lemma}[Key Lemma]\label{lemma:keylemma}
  If $\relu{\termone}{\howe{\cbnpas}}{\termtwo}$, then for every
  $\setone\subseteq\LOP(\varone)$ it holds that
  $\sem{\termone}(\abstr{\varone}{\setone})\leq\sem{\termtwo}(\abstr{\varone}{(\howe{\cbnpas}(\setone))})$.
\end{lemma}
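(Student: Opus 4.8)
The plan is to strengthen the statement to the finite lower approximations of $\sem{\termone}$ and then induct. Since $\sem{\termone}=\sup_{\bsemn{\termone}{\distone}}\distone$, it suffices to prove that for every $\distone$ with $\bsemn{\termone}{\distone}$, every $\termtwo$ with $\relu{\termone}{\howe{\cbnpas}}{\termtwo}$, and every $\setone\subseteq\LOP(\varone)$ one has $\distone(\abstr{\varone}{\setone})\leq\sem{\termtwo}(\abstr{\varone}{(\howe{\cbnpas}(\setone))})$; the lemma then follows by taking the supremum over $\distone$. I would prove this sharpened claim by induction on the derivation of $\bsemn{\termone}{\distone}$ (Figure~\ref{fig:bscbnsem}), in each case inverting the appropriate Howe rule (Figure~\ref{fig:howelifting}) according to the shape of $\termone$. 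One auxiliary fact is used throughout: for any $\setone$ the value set $\abstr{\varone}{(\howe{\cbnpas}(\setone))}$ is upward closed under $\cbnpas$. Indeed, if $\abstr{\varone}{P}\cbnpas\abstr{\varone}{Q}$ with $P\in\howe{\cbnpas}(\setone)$, then $\rel{\varone}{P}{\cbnpas}{Q}$ by the analogue of Lemma~\ref{lemma:lambdaredCBN} for $\cbnpas$, and as $\rel{\varone}{R}{\howe{\cbnpas}}{P}$ for some $R\in\setone$, Lemma~\ref{lemma:howeprop2} gives $\rel{\varone}{R}{\howe{\cbnpas}}{Q}$, i.e. $Q\in\howe{\cbnpas}(\setone)$. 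Consequently, whenever $\termthree\cbnpas\termtwo$ the simulation clause for $\cbn{\LOP}$ collapses to $\sem{\termthree}(\abstr{\varone}{(\howe{\cbnpas}(\setone))})\leq\sem{\termtwo}(\abstr{\varone}{(\howe{\cbnpas}(\setone))})$; call this inequality $(\star)$.

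The cases $\btn$ and $\bvn$ are immediate. For $\btn$ the distribution is empty and the bound is $0\leq\sem{\termtwo}(\cdots)$. For $\bvn$ we have $\termone=\abstr{\varone}{P}$ with $\distone$ the Dirac at it; inverting $\Howetwo$ yields $P'$ with $\rel{\varone}{P}{\howe{\cbnpas}}{P'}$ and $\abstr{\varone}{P'}\cbnpas\termtwo$. If $P\notin\setone$ the left-hand side is $0$; otherwise $P'\in\howe{\cbnpas}(\setone)$, so $\sem{\abstr{\varone}{P'}}(\abstr{\varone}{(\howe{\cbnpas}(\setone))})=1$, and $(\star)$ gives $\sem{\termtwo}(\abstr{\varone}{(\howe{\cbnpas}(\setone))})\geq 1$. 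For $\bsn$, with $\termone=\ps{\termone_1}{\termone_2}$ and $\distone=\frac12\disttwo+\frac12\distthree$, inverting $\Howefour$ gives $\termone_i\mathrel{\howe{\cbnpas}}\termone_i'$ with $\ps{\termone_1'}{\termone_2'}\cbnpas\termtwo$; the induction hypothesis on the two subderivations together with Lemma~\ref{lemma:semsumCBN} bounds the left-hand side by $\sem{\ps{\termone_1'}{\termone_2'}}(\abstr{\varone}{(\howe{\cbnpas}(\setone))})$, and $(\star)$ finishes the case.

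The case $\ban$ is the crux. Here $\termone=\app{\termone_1}{\termone_2}$, with a subderivation $\bsemn{\termone_1}{\disttwo}$ and, for each $\abstr{\vartwo}{L}\in\supp{\disttwo}$, a subderivation $\bsemn{\subst{L}{\vartwo}{\termone_2}}{\distthree_L}$, so that $\distone=\sum_{\abstr{\vartwo}{L}\in\supp{\disttwo}}\disttwo(\abstr{\vartwo}{L})\cdot\distthree_L$. Inverting $\Howethree$ yields $\termone_1',\termone_2'$ with $\termone_1\mathrel{\howe{\cbnpas}}\termone_1'$, $\termone_2\mathrel{\howe{\cbnpas}}\termone_2'$, and $\app{\termone_1'}{\termone_2'}\cbnpas\termtwo$. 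Applying the induction hypothesis to $\termone_1\mathrel{\howe{\cbnpas}}\termone_1'$ gives, for \emph{every} $\mathcal{T}\subseteq\LOP(\vartwo)$, the family of bounds $\disttwo(\abstr{\vartwo}{\mathcal{T}})\leq\sem{\termone_1'}(\abstr{\vartwo}{(\howe{\cbnpas}(\mathcal{T}))})$. These bound \emph{sets}, whereas to continue I must route the mass that $\disttwo$ puts on each $\abstr{\vartwo}{L}$ to abstractions $\abstr{\vartwo}{L'}$ on the $\termone_1'$ side along the relation $\howe{\cbnpas}$. This is the \emph{disentangling} step: reading the family above as the cut condition of a flow network whose sources are the (finitely many) atoms of $\disttwo$, whose sinks are the atoms of $\sem{\termone_1'}$, and whose arcs join $L$ to $L'$ exactly when $\rel{\vartwo}{L}{\howe{\cbnpas}}{L'}$, the Max-flow Min-cut Theorem yields a flow $p(L,L')$ saturating the sources: $\sum_{L'}p(L,L')=\disttwo(\abstr{\vartwo}{L})$, $\sum_{L}p(L,L')\leq\sem{\termone_1'}(\abstr{\vartwo}{L'})$, and $p(L,L')>0$ only if $\rel{\vartwo}{L}{\howe{\cbnpas}}{L'}$.

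With the flow in hand the rest is bookkeeping. Whenever $p(L,L')>0$ we have $\rel{\vartwo}{L}{\howe{\cbnpas}}{L'}$, so from $\termone_2\mathrel{\howe{\cbnpas}}\termone_2'$ and substitutivity of $\howe{\cbnpas}$ (Lemma~\ref{lemma:closesubsCBN}) we get $\subst{L}{\vartwo}{\termone_2}\mathrel{\howe{\cbnpas}}\subst{L'}{\vartwo}{\termone_2'}$, and the induction hypothesis on the body subderivation gives $\distthree_L(\abstr{\varone}{\setone})\leq\sem{\subst{L'}{\vartwo}{\termone_2'}}(\abstr{\varone}{(\howe{\cbnpas}(\setone))})$. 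Using the source constraint to write $\disttwo(\abstr{\vartwo}{L})=\sum_{L'}p(L,L')$, the left-hand side of the goal becomes $\sum_{L,L'}p(L,L')\,\distthree_L(\abstr{\varone}{\setone})$, which is at most $\sum_{L,L'}p(L,L')\,\sem{\subst{L'}{\vartwo}{\termone_2'}}(\abstr{\varone}{(\howe{\cbnpas}(\setone))})$; summing over $L$ first and using the sink constraint bounds this by $\sum_{L'}\sem{\termone_1'}(\abstr{\vartwo}{L'})\cdot\sem{\subst{L'}{\vartwo}{\termone_2'}}(\abstr{\varone}{(\howe{\cbnpas}(\setone))})$. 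This last quantity is a lower approximation of the application semantics of $\app{\termone_1'}{\termone_2'}$ (via rule $\ban$ and Lemma~\ref{lemma:sembetaCBN}), hence is $\leq\sem{\app{\termone_1'}{\termone_2'}}(\abstr{\varone}{(\howe{\cbnpas}(\setone))})$, and $(\star)$ applied to $\app{\termone_1'}{\termone_2'}\cbnpas\termtwo$ yields the desired bound $\sem{\termtwo}(\abstr{\varone}{(\howe{\cbnpas}(\setone))})$.

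I expect the disentangling step to be the main obstacle. One must check that the union-closure operator $\howe{\cbnpas}(\cdot)$ turns the inductively obtained set bounds into a genuine cut condition, and then extract from Max-flow Min-cut a flow compatible with $\howe{\cbnpas}$; the finiteness of $\supp{\disttwo}$ keeps the source side finite, but the sink side $\sem{\termone_1'}$ may have infinite support, so an off-the-shelf finite max-flow argument does not immediately apply and a dedicated disentangling lemma for (possibly infinite) families of reals is required. The subsidiary points---repeated appeals to Lemma~\ref{lemma:howeprop2} to keep $\abstr{\varone}{(\howe{\cbnpas}(\setone))}$ upward closed, and the passage between distinguished values and their bodies when invoking the simulation clause---are routine once $(\star)$ is isolated.
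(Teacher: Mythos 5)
Your proposal is correct and follows essentially the same route as the paper's proof: strengthening the statement to the finite big-step approximations, inducting on that derivation with inversion of the Howe rules, using substitutivity (Lemma~\ref{lemma:closesubsCBN}) in the application case, and closing every case with the simulation property of $\cbnpas$ combined with Lemma~\ref{lemma:howeprop2} (your inequality $(\star)$ is exactly the paper's use of Lemma~\ref{lemma:pascomm} together with $\cbnpas(\howe{\cbnpas}(\setone))\subseteq\howe{\cbnpas}(\setone)$). The ``dedicated disentangling lemma for (possibly infinite) families of reals'' that you flag as the main obstacle is precisely the paper's Lemma~\ref{lemma:disentangling}, which sidesteps the infinite sink side by aggregating sink atoms into finitely many subset-indexed masses and applying the Max-Flow-Min-Cut Theorem to the resulting finite network.
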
 
The proof of this lemma is delicate and is discussed in the next section.
From the lemma, using a standard argument we derive the needed
substitutivity results, and ultimately the most important result
of this section.
\begin{theorem}\label{thm:pasprecongrCBN}
  $\cbnpas$ is a precongruence relation for $\LOP$-terms.
\end{theorem}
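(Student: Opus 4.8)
The plan is to drive Howe's method to its conclusion by showing that, on closed terms, $\cbnpas$ coincides with $\tcrel{(\howe{\cbnpas})}$, and then transferring compatibility from the latter to the former. Since $\cbnpas$ is a preorder it is reflexive, transitive and (by the way its open extension is defined, as noted after~(\ref{eq:CusCBN})) closed under term-substitution, so the whole apparatus applies to $\relone:=\cbnpas$. By Lemma~\ref{lemma:howeprop1} the relation $\howe{\cbnpas}$ is compatible, hence by Lemma~\ref{lemma:tcrelCom} and Lemma~\ref{lemma:tcrelPO} the relation $\tcrel{(\howe{\cbnpas})}$ is a compatible preorder, and by Lemma~\ref{lemma:closesubsCBN} together with Lemma~\ref{lemma:tcrelCTS} it is closed under term-substitution. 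It then remains only to prove that $\tcrel{(\howe{\cbnpas})}$, read as a relation on the states of $\cbn{\LOP}$, is a probabilistic applicative simulation: once this is done, maximality of $\cbnpas$ yields $\tcrel{(\howe{\cbnpas})}\subseteq\cbnpas$ on closed terms, while Lemma~\ref{lemma:howeprop3} (using reflexivity of $\cbnpas$) and rule $\TCone$ yield the reverse inclusion $\cbnpas\subseteq\tcrel{(\howe{\cbnpas})}$, so the two relations coincide and $\cbnpas$ inherits compatibility.

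First I would check that $\howe{\cbnpas}$ itself satisfies the two simulation clauses of $\cbn{\LOP}$. For the evaluation label $\evlabel$ on a pair of closed terms $\termone\mathrel{\howe{\cbnpas}}\termtwo$, I would take an arbitrary set of distinguished values, write it up to $\alpha$-renaming as $\abstr{\varone}{\setthree}$ with $\setthree\subseteq\LOP(\varone)$, and apply the Key Lemma to obtain $\sem{\termone}(\abstr{\varone}{\setthree})\leq\sem{\termtwo}(\abstr{\varone}{(\howe{\cbnpas}(\setthree))})$; compatibility clause $\Comtwo$ for $\howe{\cbnpas}$ gives $\abstr{\varone}{(\howe{\cbnpas}(\setthree))}\subseteq\howe{\cbnpas}(\abstr{\varone}{\setthree})$, so by monotonicity the required inequality $\sem{\termone}(\setone)\leq\sem{\termtwo}(\howe{\cbnpas}(\setone))$ follows. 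For the application labels on a pair of related distinguished values $\clabstr{\varone}{\termone}\mathrel{\howe{\cbnpas}}\clabstr{\varone}{\termtwo}$, the last rule deriving $\abstr{\varone}{\termone}\mathrel{\howe{\cbnpas}}\abstr{\varone}{\termtwo}$ must be $\Howetwo$, yielding an intermediate $\termfour$ with $\rel{\{\varone\}}{\termone}{\howe{\cbnpas}}{\termfour}$ and $\abstr{\varone}{\termfour}\cbnpas\abstr{\varone}{\termtwo}$. For any closed argument $L$, closure under term-substitution of $\howe{\cbnpas}$ gives $\subst{\termone}{\varone}{L}\mathrel{\howe{\cbnpas}}\subst{\termfour}{\varone}{L}$, the $\cbnpas$-analogue of Lemma~\ref{lemma:lambdaredCBN} gives $\subst{\termfour}{\varone}{L}\cbnpas\subst{\termtwo}{\varone}{L}$, and Lemma~\ref{lemma:howeprop2} combines them into $\subst{\termone}{\varone}{L}\mathrel{\howe{\cbnpas}}\subst{\termtwo}{\varone}{L}$; since the transition out of a distinguished value is deterministic, this is exactly the simulation inequality for the application label.

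Next I would lift both clauses to the transitive closure. Given $s\mathrel{\tcrel{(\howe{\cbnpas})}}t$ there is a finite chain $s=u_0\mathrel{\howe{\cbnpas}}u_1\mathrel{\howe{\cbnpas}}\cdots\mathrel{\howe{\cbnpas}}u_n=t$; applying the clause for $\howe{\cbnpas}$ at each link and using $\howe{\cbnpas}(\setone)\subseteq\tcrel{(\howe{\cbnpas})}(\setone)$, the inequalities telescope exactly as in the proof of Proposition~\ref{prop:kleenestarsim}, giving $\translop(s,\labelone,\setone)\leq\translop(t,\labelone,\tcrel{(\howe{\cbnpas})}(\setone))$ for every label $\labelone$ (the multisorted structure guarantees the chain stays within a single sort). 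Hence $\tcrel{(\howe{\cbnpas})}$ is a probabilistic applicative simulation, and the coincidence with $\cbnpas$ on closed terms follows as described. For open terms I would note that both relations are closed under term-substitution, so $\rel{\vecvarone}{\termone}{\tcrel{(\howe{\cbnpas})}}{\termtwo}$ forces $\termone\sigma\,\cbnpas\,\termtwo\sigma$ for every closing substitution $\sigma$, i.e. $\rel{\vecvarone}{\termone}{\cbnpas}{\termtwo}$; thus the two $\LOP$-relations agree on open terms as well, and $\cbnpas$, being equal to the compatible preorder $\tcrel{(\howe{\cbnpas})}$, is a precongruence.

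I expect the real difficulty to lie not in this bookkeeping but in the Key Lemma on which it rests: deriving $\sem{\termone}(\abstr{\varone}{\setone})\leq\sem{\termtwo}(\abstr{\varone}{(\howe{\cbnpas}(\setone))})$ requires the ``disentangling'' of probabilities attached to overlapping families of values, handled in the next section through the max-flow min-cut argument. Within the present proof the only delicate points are the passage from body-closure to value-closure via $\Comtwo$ in the evaluation clause, and the decomposition of a related pair of distinguished values via $\Howetwo$ (then recombined through Lemma~\ref{lemma:howeprop2}) in the application clause.
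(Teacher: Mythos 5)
Your overall strategy is exactly the paper's own: the same preparatory lemmas, the same two inclusions, Lemma~\ref{lemma:keylemma} as the engine, and the conclusion by maximality of $\cbnpas$ among probabilistic simulations; your single-step analysis is also correct (the evaluation clause via the Key Lemma and \Comtwo, the application clause via the \Howetwo{} decomposition, Lemma~\ref{lemma:closesubsCBN} and Lemma~\ref{lemma:howeprop2}). The genuine gap is in your third paragraph, where you lift the application clause to $\tcrel{(\howe{\cbnpas})}$. That relation is a syntactic transitive closure on \emph{terms} (Figure~\ref{fig:transitiveclosure}): a chain witnessing $\abstr{\varone}{\termone}\,\tcrel{(\howe{\cbnpas})}\,\abstr{\varone}{\termtwo}$ consists of closed terms which need not be values. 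For instance, $\abstr{\varone}{\varone}$ and $(\abstr{\varone}{\varone})(\abstr{\varone}{\varone})$ have the same semantics, hence are applicative bisimilar (Lemma~\ref{lemma:samesem}) and thus similar in both directions; since $\cbnpas\subseteq\howe{\cbnpas}$ (Lemma~\ref{lemma:howeprop3}), the chain $\abstr{\varone}{\varone}\howe{\cbnpas}(\abstr{\varone}{\varone})(\abstr{\varone}{\varone})\howe{\cbnpas}\abstr{\varone}{\varone}$ is legitimate and passes through a non-value. Your parenthetical justification --- that ``the multisorted structure guarantees the chain stays within a single sort'' --- is a non sequitur: multisortedness constrains which relations on the Markov chain count as simulations, but says nothing about the chains generated by the syntactic rules \TCone{} and \TCtwo. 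On a link $u_i\howe{\cbnpas}u_{i+1}$ whose left member is not an abstraction, your single-step application clause cannot be invoked (there is no \Howetwo{} decomposition, and no distinguished-value state for $u_i$, hence no transitions labelled by arguments), so the telescoping breaks exactly where it is needed.

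The repair is the device the paper uses: do not take the candidate simulation on distinguished values to be the restriction of $\tcrel{(\howe{\cbnpas})}$ to value terms; instead relate $\clabstr{\varone}{\termone}$ and $\clabstr{\varone}{\termtwo}$ precisely when their \emph{bodies} satisfy $\rel{\{\varone\}}{\termone}{\tcrel{(\howe{\cbnpas})}}{\termtwo}$ as open terms. Then the application clause needs no chain decomposition at all: closure of $\tcrel{(\howe{\cbnpas})}$ under term-substitution (Lemmas~\ref{lemma:closesubsCBN} and~\ref{lemma:tcrelCTS}) gives $\subst{\termone}{\varone}{\termthree}\,\tcrel{(\howe{\cbnpas})}\,\subst{\termtwo}{\varone}{\termthree}$ for every closed $\termthree$, and determinism of the parameter-passing transition finishes the clause; your telescoping argument is then needed only for the label $\evlabel$, where it is sound because every element of the chain is a closed term and hence a term-state. (Alternatively, one can keep your convention but define the value-sort relation as the transitive closure, computed \emph{within} the value sort, of your single-step value pairs.) With either repair, the remainder of your proof --- maximality, the reverse inclusion via Lemma~\ref{lemma:howeprop3} and \TCone, and the passage to open terms by closure under term-substitution --- goes through unchanged.
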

\begin{proof}
  We prove the result by observing that $\tcrel{(\howe{\cbnpas})}$ is a
  precongruence and by showing that
  $\cbnpas=\tcrel{(\howe{\cbnpas})}$. First of all,
  Lemma~\ref{lemma:howeprop1} and Lemma~\ref{lemma:tcrelCom} ensure that
  $\tcrel{(\howe{\cbnpas})}$ is compatible and Lemma~\ref{lemma:tcrelPO}
  tells us that $\tcrel{(\howe{\cbnpas})}$ is a preorder. As a consequence,
  $\tcrel{(\howe{\cbnpas})}$ is a precongruence.  Consider now the
  inclusion $\cbnpas\subseteq\tcrel{(\howe{\cbnpas})}$.  By
  Lemma~\ref{lemma:howeprop3} and by definition of transitive closure
  operator $\tcrel{(\cdot)}$, it follows that
  $\cbnpas\,\subseteq\,(\howe{\cbnpas})\,\subseteq\,\tcrel{(\howe{\cbnpas})}$.
  We show the converse by proving that $\tcrel{(\howe{\cbnpas})}$ is
  included in a relation $\relone$ that is a call-by-name probabilistic
  applicative simulation, therefore contained in the largest one. In
  particular, since $\tcrel{(\howe{\cbnpas})}$ is closed under
  term-substitution (Lemma~\ref{lemma:closesubsCBN} and
  Lemma~\ref{lemma:tcrelCTS}), it suffices to show the latter only on the
  closed version of terms and cloned values. $\relone$ acts like
  $\tcrel{(\howe{\cbnpas})}$ on terms, while given two cloned values
  $\clabstr{\varone}{\termone}$ and $\clabstr{\varone}{\termtwo}$,
  $(\clabstr{\varone}{\termone})\relone(\clabstr{\varone}{\termtwo})$ iff
  $\termone\tcrel{(\howe{\cbnpas})}\termtwo$. Since we already know that
  $\tcrel{(\howe{\cbnpas})}$ is a preorder (and thus $\relone$ is itself a
  preorder), all that remain to be checked are the following two points:
  \begin{varitemize}
  \item If $\termone\tcrel{(\howe{\cbnpas})}\termtwo$, then for every
    $\setone\subseteq\LOPp{\varone}$ it holds that
    \begin{equation}\label{equ:probsimCBN}
      \translop(\termone,\evlabel,\clabstr{\varone}{\setone})\leq\translop(\termtwo,\evlabel,\relone(\clabstr{\varone}{\setone})).
    \end{equation}
    Let us proceed by induction on the structure of the proof of
    $\termone\tcrel{(\howe{\cbnpas})}\termtwo$:
    \begin{varitemize}
    \item The base case has $\TCone$ as last rule: we get that
      $\rel{\emptyset}{\termone}{\howe{\cbnpas}}{\termtwo}$ holds.  Then,
      in particular by Lemma~\ref{lemma:keylemma},
      \begin{align*}
        \translop(\termone,\evlabel,\clabstr{\varone}{\setone})&=\sem{\termone}(\abstr{\varone}{\setone})\\
        &\leq
        \sem{\termtwo}(\abstr{\varone}{\howe{\cbnpas}(\setone)})\\ &\leq\sem{\termtwo}(\abstr{\varone}{\tcrel{(\howe{\cbnpas})}(\setone)})\\
        &\leq\sem{\termtwo}(\relone(\clabstr{\varone}{\setone}))=\translop(\termtwo,\evlabel,\relone(\clabstr{\varone}{\setone})).
      \end{align*}
    \item If $\TCtwo$ is the last rule used, we obtain that, for some
      $\termfour\in\LOP(\emptyset)$,
      $\rel{\emptyset}{\termone}{\tcrel{(\howe{\cbnpas})}}{\termfour}$ and
      $\rel{\emptyset}{\termfour}{\tcrel{(\howe{\cbnpas})}}{\termtwo}$
      hold. Then, by induction hypothesis, we get
      \begin{align*}
        \translop(\termone,\evlabel,\setone)&\leq\translop(\termfour,\evlabel,\relone(\setone)),\\
        \translop(\termfour,\evlabel,\relone(\setone))&\leq\translop(\termtwo,\evlabel,\relone(\relone(\setone))).
      \end{align*}
      But of course $\relone(\relone(\setone))\subseteq\relone(\setone)$,
      and as a consequence:
      $$
      \translop(\termone,\evlabel,\setone)\leq\translop(\termtwo,\evlabel,\relone(\setone))
      $$
      and (\ref{equ:probsimCBN}) is satisfied.
    \end{varitemize}
  \item If $\termone\tcrel{(\howe{\cbnpas})}\termtwo$, then for every
    $\termthree\in\LOPp{\emptyset}$ and for every
    $\setone\subseteq\LOPp{\emptyset}$ it holds that
    $$
    \translop(\clabstr{\varone}{\termone},\termthree,\setone)\leq\translop(\clabstr{\varone}{\termtwo},\termthree,\relone(\setone)).
    $$
    But if $\termone\tcrel{(\howe{\cbnpas})}\termtwo$, then
    $\subst{\termone}{\varone}{\termthree}\tcrel{(\howe{\cbnpas})}\subst{\termtwo}{\varone}{\termthree}$.
    This is means that whenever
    $\subst{\termone}{\varone}{\termthree}\in\setone$,
    $\subst{\termtwo}{\varone}{\termthree}\in\howe{\cbnpas}(\setone)\subseteq\tcrel{(\howe{\cbnpas})}(\setone)$
    and ultimately
    \begin{align*}
      \translop(\clabstr{\varone}{\termone},\termthree,\setone)&=1\\ 
      &=\translop(\clabstr{\varone}{\termtwo},\termthree,\tcrel{(\howe{\cbnpas})}(\setone))\\
      &=\translop(\clabstr{\varone}{\termtwo},\termthree,\relone(\setone)).
    \end{align*}
    If $\subst{\termone}{\varone}{\termthree}\notin\setone$, on the other
    hand,
    $$
    \translop(\clabstr{\varone}{\termone},\termthree,\setone)=0\leq\translop(\clabstr{\varone}{\termtwo},\termthree,\relone(\setone)).
    $$
  \end{varitemize}
  This concludes the proof.
\end{proof}

\begin{corollary}\label{cor:pabcongrCBN}
  $\cbnpab$ is a congruence relation for $\LOP$-terms.
\end{corollary}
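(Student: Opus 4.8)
The plan is to obtain the corollary directly from Theorem~\ref{thm:pasprecongrCBN} together with Proposition~\ref{prop:pab=pascopas}, which already identifies bisimilarity with two-sided similarity. Concretely, I would first record that, instantiating Proposition~\ref{prop:pab=pascopas} on the multisorted Markov chain $\cbn{\LOP}$, we have $\cbnpab\,=\,\cbnpas\cap\cbnpas^{\mathit{op}}$ on closed terms. Since both (bi)similarities are extended to open terms by closure under closing substitutions (Remark~\ref{r:to}), and since this closure commutes with intersection and with taking the opposite relation, the same identity lifts to the associated $\LOP$-relations. Thus it suffices to show that the $\LOP$-relation $\cbnpas\cap\cbnpas^{\mathit{op}}$ is a congruence, i.e.\ a compatible equivalence relation.

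Being an equivalence relation is immediate: by Theorem~\ref{thm:pasprecongrCBN} the relation $\cbnpas$ is a preorder, hence reflexive and transitive, and the intersection of any preorder with its opposite is at once reflexive, transitive and symmetric.

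For compatibility I would verify \Comone--\Comfour\ for $\cbnpas\cap\cbnpas^{\mathit{op}}$ by invoking the corresponding compatibility clause of $\cbnpas$ (granted by the theorem) \emph{twice}, once in each direction. For instance, for \Comthree, from $\rel{\vecvarone}{\termone}{(\cbnpas\cap\cbnpas^{\mathit{op}})}{\termtwo}$ and $\rel{\vecvarone}{\termthree}{(\cbnpas\cap\cbnpas^{\mathit{op}})}{\termfour}$ we have both $\termone\cbnpas\termtwo$, $\termthree\cbnpas\termfour$ and $\termtwo\cbnpas\termone$, $\termfour\cbnpas\termthree$; applying \Comthree\ for $\cbnpas$ to the first pair gives $\app{\termone}{\termthree}\cbnpas\app{\termtwo}{\termfour}$, and to the second gives $\app{\termtwo}{\termfour}\cbnpas\app{\termone}{\termthree}$, whence $\rel{\vecvarone}{\app{\termone}{\termthree}}{(\cbnpas\cap\cbnpas^{\mathit{op}})}{\app{\termtwo}{\termfour}}$. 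The clauses \Comone\ (variables, by reflexivity), \Comtwo\ (abstraction) and \Comfour\ (probabilistic sum) are handled identically, each compatibility condition of $\cbnpas$ being applied in both directions.

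I expect no serious obstacle here: the hard analytic content — the Key Lemma~\ref{lemma:keylemma} and the Howe-closure machinery culminating in Theorem~\ref{thm:pasprecongrCBN} — has already been discharged, and what remains is the elementary observation that each compatibility clause, being \emph{direction-preserving}, survives intersection with the opposite relation (whereas compatibility of an intersection need not follow in general). The only mild point requiring care is the passage between closed and open terms, for which one checks that the defining identity $\cbnpab=\cbnpas\cap\cbnpas^{\mathit{op}}$ is compatible with the substitution-closure used to extend the relations to open terms; this is routine since the closure is applied uniformly to both relations.
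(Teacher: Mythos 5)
Your proposal is correct and follows essentially the same route as the paper: the paper's proof likewise combines Proposition~\ref{prop:pab=pascopas} ($\cbnpab=\cbnpas\cap\cbnpas^{\mathit{op}}$) with Theorem~\ref{thm:pasprecongrCBN}, noting that $\cbnpab$ is an equivalence relation by definition and inheriting compatibility from $\cbnpas$. You merely spell out the (routine) details the paper leaves implicit, namely that each compatibility clause applied in both directions survives the intersection with the opposite relation, and the lifting to open terms.
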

\begin{proof}
  $\cbnpab$ is an equivalence relation by definition, in particular a
  symmetric relation. Since $\cbnpab = \cbnpas \cap \cbnpas^{op}$ by
  Proposition~\ref{prop:pab=pascopas}, $\cbnpab$ is also compatible as a
  consequence of Theorem~\ref{thm:pasprecongrCBN}.
\end{proof}

\subsection{Proof of the Key Lemma}
As we have already said, Lemma~\ref{lemma:keylemma} is indeed a crucial step towards showing that probabilistic 
applicative simulation is a precongruence.
Proving the Key Lemma~\ref{lemma:keylemma} turns out to be much more difficult than for deterministic
or nondeterministic cases. In particular, the case when $\termone$ is an application relies on another technical lemma we
are now going to give, which itself can be proved by tools from linear programming.

The combinatorial problem we will face while proving the Key Lemma can actually be decontextualized and
understood independently. Suppose we have $\natone=3$ non-disjoint sets 
$\setone_1,\setone_2,\setone_3$ whose elements are labelled with real numbers.
As an example, we could be in a situation like the one in Figure~\ref{fig:vennent} (where for the
sake of simplicity only the labels are indicated).
\begin{figure*}
  \centering
  \subfigure[]
  {\includegraphics[scale=0.8]{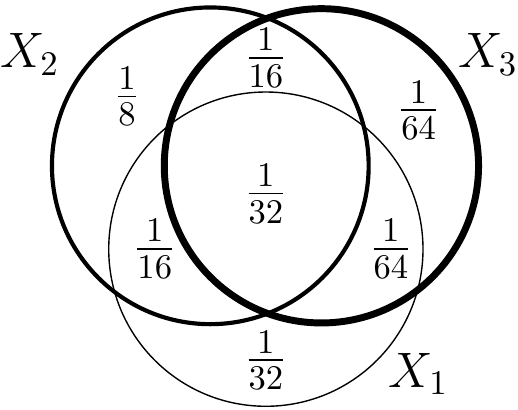}\label{fig:vennent}}
  \hspace{20mm}
  \subfigure[]
  {\includegraphics[scale=0.8]{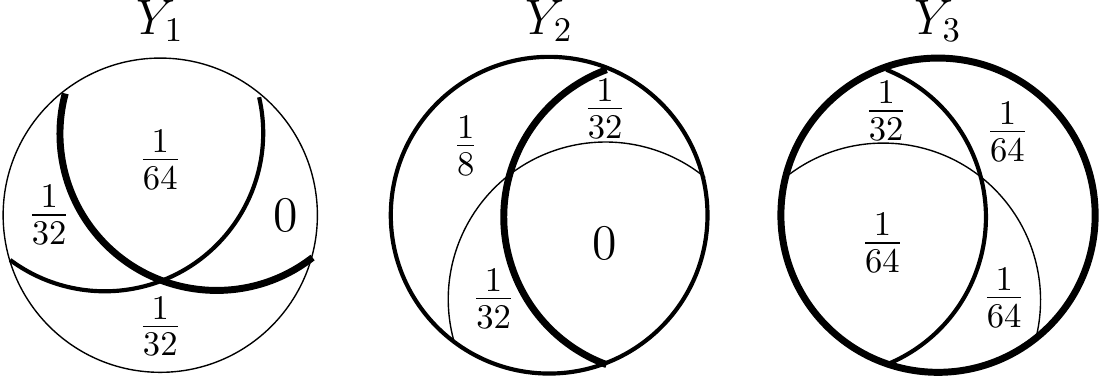}\label{fig:venndisent}}
  \caption{Disentangling Sets}\label{fig:venn}
\end{figure*}
\newcommand{\meas}[1]{||#1||} 
We fix three real numbers $\probone_1\defi\frac{5}{64}$, $\probone_2\defi\frac{3}{16}$,
$\probone_3\defi\frac{5}{64}$. It is routine to check that for every
$\indsetone\subseteq\{1,2,3\}$ it holds that
$$
\sum_{i\in\indsetone}\probone_i\leq\meas{\bigcup_{i\in\indsetone}\setone_i},
$$
where $\meas{\setone}$ is  the sum of the labels of the elements of $\setone$.
Let us observe that it is of course possible to turn the three sets $\setone_1,\setone_2,\setone_3$
into three disjoint sets $\settwo_1$, $\settwo_2$ and $\settwo_3$ where
each $\settwo_i$ contains (copies of) the elements of $\setone_i$ whose labels, 
however, are obtained by splitting the ones of the original elements. Examples
of those sets are in Figure~\ref{fig:venndisent}: if you superpose the
three sets, you obtain the Venn diagram we started from. Quite remarkably, however,
the examples from Figure~\ref{fig:venn} have an additional property, namely
that for every $i\in\{1,2,3\}$ it holds that $\probone_i\leq\meas{\settwo_i}$.
We now show that finding sets satisfying the properties above is always possible, 
even when $\natone$ is arbitrary.

Suppose $\probone_1,\ldots,\probone_n\in\RRN_{[0,1]}$, and suppose that for
each $\indsetone\subseteq\{1,\ldots,n\}$ a real number
$\realone_\indsetone\in\RRN_{[0,1]}$ is defined such that for every such
$\indsetone$ it holds that
$\sum_{i\in\indsetone}\probone_i\leq\sum_{\indsettwo\cap\indsetone\neq\emptyset}
\realone_\indsettwo\leq 1$. Then $(\{\probone_i\}_{1\leq i\leq
  n},\{\realone_\indsetone\}_{\indsetone\subseteq\{1,\ldots,n\}})$ is said
to be a \emph{probability assignment} for $\{1,\ldots,n\}$.
Is it always possible to ``disentangle'' probability assignments? The
answer is positive. 

The following is a formulation of Max-Flow-Min-Cut Theorem:
\begin{theorem}[Max-Flow-Min-Cut]\label{t:mf-mc}
  For any flow network, the value of the maximum flow is equal to the
  capacity of the minimum cut.
\end{theorem}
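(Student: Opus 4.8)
The plan is to establish the two inequalities $\max_f |f| \le \min_{(S,T)} \mathrm{cap}(S,T)$ and $\max_f |f| \ge \min_{(S,T)} \mathrm{cap}(S,T)$ separately. Here a flow network is a directed graph $G=(V,E)$ with distinguished source $s$ and sink $t$ and a capacity function $c\colon E\to\RRN_{\ge 0}$; a flow $f$ respects $0\le f(e)\le c(e)$ together with conservation at every node other than $s,t$; its value $|f|$ is the net flow leaving $s$; a cut is a partition $V=S\uplus T$ with $s\in S$, $t\in T$, and $\mathrm{cap}(S,T)=\sum_{u\in S,\,v\in T,\,(u,v)\in E} c(u,v)$.

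First I would prove weak duality. For any flow $f$ and any cut $(S,T)$, summing the conservation equations over all nodes in $S$ telescopes the internal flow and yields that $|f|$ equals the net flow crossing the cut, namely $\sum_{u\in S,\,v\in T} f(u,v) - \sum_{u\in T,\,v\in S} f(u,v)$. Bounding the first sum by $\mathrm{cap}(S,T)$ and dropping the nonnegative second sum gives $|f|\le\mathrm{cap}(S,T)$. In particular $\max_f |f|\le \min_{(S,T)}\mathrm{cap}(S,T)$, and it remains to exhibit a flow and a cut whose value and capacity coincide.

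For the reverse inequality, I would first fix a flow $f^\star$ of maximum value; its existence follows because the set of feasible flows is a nonempty compact subset of $\RRN^{E}$ and $|{\cdot}|$ is continuous. I then pass to the residual network $G_{f^\star}$, which carries a forward arc $(u,v)$ of residual capacity $c(u,v)-f^\star(u,v)$ whenever this is positive and a backward arc $(v,u)$ of residual capacity $f^\star(u,v)$ whenever $f^\star(u,v)>0$. Let $S$ be the set of vertices reachable from $s$ in $G_{f^\star}$ and $T=V\setminus S$. The key observation is that $t\notin S$: an $s$--$t$ path in $G_{f^\star}$ would be an augmenting path along which $f^\star$ could be strictly increased by the minimum residual capacity on it, contradicting maximality. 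Hence $(S,T)$ is a genuine cut. By definition of $S$, every arc $(u,v)\in E$ with $u\in S$, $v\in T$ must be saturated, i.e. $f^\star(u,v)=c(u,v)$, for otherwise it would induce a forward residual arc and $v$ would be reachable; symmetrically every arc $(u,v)\in E$ with $u\in T$, $v\in S$ must carry zero flow, else its backward residual arc $(v,u)$ would reach $u$. Feeding these into the cut identity from weak duality gives $|f^\star|=\sum_{u\in S,\,v\in T} c(u,v)-0=\mathrm{cap}(S,T)$.

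Combining the two parts yields $\max_f|f|=|f^\star|=\mathrm{cap}(S,T)\ge\min_{(S,T)}\mathrm{cap}(S,T)\ge\max_f|f|$, so all quantities coincide and the maximum flow equals the minimum cut. The main obstacle is the existence of the maximizing flow $f^\star$: for irrational capacities the naive augmenting-path procedure need not terminate, so rather than argue by algorithmic termination I would invoke compactness of the feasible region. Equivalently, the theorem also follows from linear-programming duality by recognizing the min-cut program as the dual of the max-flow linear program and checking that the incidence constraint matrix is totally unimodular, so that an integral optimal cut exists; everything else is a routine telescoping computation.
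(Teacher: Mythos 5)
You should first be aware that the paper does not prove this statement at all: it records Max-Flow-Min-Cut as a known classical theorem and immediately invokes it as a black box inside the proof of the Disentangling Lemma (Lemma~\ref{lemma:disentangling}). So there is no ``paper proof'' to compare against, and your self-contained argument is necessarily a different route. Your proof is correct and is the standard one: weak duality by summing the conservation constraints over the source side of a cut, then exhibiting a tight cut from a maximum flow $f^\star$ by taking $S$ to be the set of vertices reachable from $s$ in the residual network, where maximality of $f^\star$ rules out augmenting paths (so $t\notin S$), forces saturation of forward cut edges, and forces backward cut edges to carry zero flow. You are also right to isolate the one genuinely delicate point, the existence of $f^\star$: in the paper's intended application the capacities are arbitrary reals in $[0,1]$ (the probabilities $\probone_i$ and $\realone_\indsetone$), so termination of augmenting-path schemes is not guaranteed, and your compactness argument (the feasible flows form a nonempty compact subset of $\mathbb{R}^E$ on which the value is linear, hence continuous) is the right fix; it applies because the network used in the paper is finite --- its vertices are the nonempty subsets of $\{1,\ldots,n\}$ together with the source and sink, so both the maximum over flows and the minimum over the finitely many cuts are attained. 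The LP-duality/total-unimodularity alternative you sketch is equally valid. What the paper's choice buys is brevity, since the theorem is textbook material; what your proof buys is self-containedness and an explicit check that the hypotheses (finite network, real-valued capacities) match the way the theorem is actually used.
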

\begin{lemma}[Disentangling Probability Assignments]\label{lemma:disentangling}
  Let $\passone\defi(\{\probone_i\}_{1\leq i\leq n},\{\realone_\indsetone\}_{\indsetone\subseteq\{1,\ldots,n\}})$
  be a probability assignment.
  Then for every nonempty $\indsetone\subseteq\{1,\ldots,n\}$ and
  for every $k\in\indsetone$ there is $\realtwo_{k,\indsetone}\in\RRN_{[0,1]}$ such that
  the following conditions all hold:
  \begin{varenumerate}
  \item\label{point:first}
      for every $\indsetone$, it holds that $\sum_{k\in\indsetone}\realtwo_{k,\indsetone}\leq 1$;
    \item\label{point:second}
      for every $k\in\{1,\ldots,n\}$, it holds that $\probone_k\leq\sum_{k\in\indsetone}
      \realtwo_{k,\indsetone}\cdot\realone_\indsetone$.
  \end{varenumerate}
\end{lemma}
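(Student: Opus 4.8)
The plan is to reduce the statement to a single maximum-flow computation and read off the numbers $\realtwo_{k,\indsetone}$ from an optimal (real-valued) flow, invoking Theorem~\ref{t:mf-mc}. The guiding intuition is that $\realtwo_{k,\indsetone}\cdot\realone_\indsetone$ should be the amount of the ``mass'' $\realone_\indsetone$ sitting on the subset $\indsetone$ that we ship to the index $k\in\indsetone$; then Condition~\ref{point:first} says we never ship more mass than $\indsetone$ owns, and Condition~\ref{point:second} says each index $k$ receives at least $\probone_k$. Concretely, I would build the finite network with a source $s$, a sink $t$, one node for each nonempty $\indsetone\subseteq\{1,\ldots,\natone\}$ and one node for each index $k\in\{1,\ldots,\natone\}$, with edges $s\to\indsetone$ of capacity $\realone_\indsetone$, edges $\indsetone\to k$ of capacity $+\infty$ whenever $k\in\indsetone$, and edges $k\to t$ of capacity $\probone_k$.

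Next I would analyse the cuts of this network. Since the middle edges have infinite capacity, any finite cut $(A,B)$ (with $s\in A$, $t\in B$) cannot separate a subset-node in $A$ from an index-node in $B$ adjacent to it; hence, writing $\indsetthree\defi\{k\st k\in B\}$ for the indices lying on the sink side, a minimal cut keeps in $A$ exactly the subset-nodes $\indsetone$ with $\indsetone\cap\indsetthree=\emptyset$ and places in $B$ exactly those with $\indsetone\cap\indsetthree\neq\emptyset$. The capacity of such a cut is therefore
\[
\sum_{k\notin\indsetthree}\probone_k+\sum_{\indsetone\cap\indsetthree\neq\emptyset}\realone_\indsetone .
\]
Requiring this to be at least $\sum_{k=1}^{\natone}\probone_k$ for every $\indsetthree$ rearranges, after cancelling $\sum_{k\notin\indsetthree}\probone_k$, to $\sum_{k\in\indsetthree}\probone_k\leq\sum_{\indsetone\cap\indsetthree\neq\emptyset}\realone_\indsetone$, which is \emph{exactly} the defining inequality of a probability assignment (with $\indsetthree$ in the role of $\indsetone$). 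Thus every cut has capacity at least $\sum_{k}\probone_k$.

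By Theorem~\ref{t:mf-mc} the maximum flow equals the minimum cut capacity, which is finite (e.g.\ the cut severing all the edges $k\to t$ has capacity $\sum_k\probone_k$), so the maximum flow has value at least $\sum_k\probone_k$; the same cut shows it is at most $\sum_k\probone_k$. Hence the maximum flow has value exactly $\sum_k\probone_k$, and since the flow entering $k$ equals the flow on $k\to t$ (at most $\probone_k$) while the totals coincide, every edge $k\to t$ is saturated. From an optimal flow $f$ I would then set $\realtwo_{k,\indsetone}\defi f(\indsetone\to k)/\realone_\indsetone$ when $\realone_\indsetone>0$ and $\realtwo_{k,\indsetone}\defi 0$ otherwise. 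Flow conservation at $\indsetone$ together with the capacity $\realone_\indsetone$ of $s\to\indsetone$ gives $\sum_{k\in\indsetone}\realtwo_{k,\indsetone}\leq 1$, i.e.\ Condition~\ref{point:first} (in particular each $\realtwo_{k,\indsetone}\in\RRp{0}{1}$); flow conservation at $k$ gives $\sum_{k\in\indsetone}\realtwo_{k,\indsetone}\cdot\realone_\indsetone=\probone_k\geq\probone_k$, i.e.\ Condition~\ref{point:second}.

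The routine points are the dictionary between flows and the $\realtwo_{k,\indsetone}$ and the harmless $\realone_\indsetone=0$ case (there the product in Condition~\ref{point:second} vanishes, so any choice summing to at most $1$ works). The real content — and the step I expect to be the main obstacle — is the cut analysis: one must see that, thanks to the infinite-capacity middle edges, the minimum cut ranges precisely over the choices of a subset $\indsetthree$ of indices, and that the resulting min-cut inequality then collapses \emph{verbatim} to the probability-assignment hypothesis. Getting this correspondence exactly right (including that the empty subset, being adjacent to no index, may be ignored) is where the argument must be handled with care.
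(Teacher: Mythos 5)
Your proof is correct, but it is not the paper's proof: both arguments rest on Max-Flow Min-Cut (Theorem~\ref{t:mf-mc}), yet the networks and the way the coefficients $\realtwo_{k,\indsetone}$ are extracted differ genuinely. The paper routes mass in the opposite direction, through the subset lattice: its network has source edges $(s,\{i\})$ of capacity $\probone_i$, lattice edges $(\indsetone,\indsetone\cup\{i\})$ of capacity $1$, and sink edges $(\indsetone,t)$ of capacity $\realone_\indsetone$. That choice has two costs. First, the cut analysis is heavier: the paper must dispose of ``degenerate'' cuts (those crossing a capacity-$1$ lattice edge), introduce canonical $\indsetone$-cuts, and prove two auxiliary monotonicity lemmas by induction to conclude that every non-degenerate cut has capacity at least $\sum_i\probone_i$. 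Second---and this is where your construction genuinely simplifies matters---in the paper's network the flow arriving at a node $\indsetone$ is an indistinguishable mixture of mass that entered at different singletons $\{i\}$ with $i\in\indsetone$, so to define $\realtwo_{k,\indsetone}$ the paper must re-trace the flow, upgrading it to a vector-valued flow $\overline{\flone}$ whose $k$-th component records the portion originating at $\{k\}$. In your bipartite-style network (source $\to$ subsets with capacity $\realone_\indsetone$, subsets $\to$ member indices, indices $\to$ sink with capacity $\probone_k$) this disentangling is built into the topology: $f(\indsetone\to k)$ \emph{is} the mass of $\realone_\indsetone$ shipped to $k$, so $\realtwo_{k,\indsetone}$ is read off directly, and the minimality condition on cuts collapses verbatim to the probability-assignment hypothesis, exactly as you observe; this is essentially the classical supply--demand (Gale--Hoffman) argument, and the saturation of the edges $k\to t$ gives Condition~\ref{point:second} with equality. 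One cosmetic point: the Max-Flow Min-Cut theorem as quoted in the paper presupposes finite capacities; since any flow through $\indsetone\to k$ is already bounded by $\realone_\indsetone\leq 1$ via the edge $s\to\indsetone$, you may set the middle capacities to $1$ instead of $+\infty$ without changing anything, which makes the invocation of the theorem unobjectionable.
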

\begin{proof}
  For every probability assignment $\passone$, let us define the \emph{flow
    network of $\passone$} as the digraph $\fnet{\passone}$ where:
  \begin{varitemize}
  \item 
    $\vt{\passone}\defi(\pow{\{1,\dots,n\}}-\emptyset)\cup\{s,t\}$, where $s,t$ are a
    distinguished source and target, respectively;
  \item 
    $\ed{\passone}$ is composed by three kinds of edges:
    \begin{varitemize}
    \item $(s,\{i\})$ for every $i\in\{1,\dots,n\}$, with an assigned
      capacity of $\probone_i$;
    \item $(\indsetone,\indsetone\cup\{i\})$, for every nonempty
      $\indsetone\subseteq\{1,\dots,n\}$ and $i\not\in\indsetone$, with an
      assigned capacity of $1$;
    \item $(\indsetone,t)$, for every nonempty
      $\indsetone\subseteq\{1,\dots,n\}$, with an assigned capacity of
      $\realone_\indsetone$.
    \end{varitemize}
  \end{varitemize}
  We prove the following two lemmas on $\nt{\passone}$ which
  together entail the result.
  \begin{varitemize}
  \item
    \begin{lemma}
      If $\nt{\passone}$ admits a flow summing to
      $\sum_{i\in\{1,\dots,n\}}\probone_i$, then the $\realtwo_{k,\indsetone}$
      exist for which conditions \ref{point:first}. and
      \ref{point:second}. hold.
    \end{lemma}
    \begin{proof}
      Let us fix $\probone\defi\sum_{i\in\{1,\dots,n\}}\probone_i$. The idea
      then is to start with a flow of value $\probone$ in input to the source
      $s$, which by hypothesis is admitted by $\nt{\passone}$ and the maximum
      one can get, and split it into portions going to singleton vertices
      $\{i\}$, for every $i\in\indsetone$, each of value
      $\probone_i$. Afterwards, for every other vertex
      $\indsetone\subseteq\{1,\dots,n\}$, values of flows on the incoming
      edges are summed up and then distributed to the outgoing adges as one
      wishes, thanks to conservation property of the flow. Formally, a flow
      ${\flone}:\ed{\probone}\rightarrow\RRN_{[0,1]}$ is turned into a
      function $\overline{\flone}:\ed{\probone}\rightarrow(\RRN_{[0,1]})^n$
      defined as follows:
      \begin{varitemize}
      \item For every $i\in\{1,\dots,n\}$,
        $\overline{\flone}_{(s,\{i\})}\defi(0,\dots,\flone_{(s,\{i\})},\dots,0)$,
        where the only possibly nonnull component is exactly the $i$-th;
      \item For every nonempty $\indsetone\subseteq\{1,\dots,n\}$, as soon as
        $\overline{\flone}$ has been defined on all ingoing edges of
        $\indsetone$, we can define it on all its outgoing ones, by just
        splitting each component as we want. This is possible, of course,
        because $\flone$ is a flow and, as such, ingoing and outgoing values
        are the same. More formally, let us fix
        $\overline{\flone}_{(*,\indsetone)}\defi\sum_{\indsetthree\subseteq\{1,\dots,n\}}\overline{\flone}_{(\indsetthree,\indsetone)}$
        and indicate with $\overline{\flone}_{(*,\indsetone),k}$ its $k$-th
        component. Then, for every $i\not\in\indsetone$, we set
        $\overline{\flone}_{(\indsetone,\indsetone\cup\{i\})}\defi(\ratioone_{1,i}\cdot\overline{\flone}_{(*,\indsetone),1},\dots,\ratioone_{n,i}\cdot
        \overline{\flone}_{(*,\indsetone),n})$ where, for every
        $j\in\{1,\dots,n\}$, $\ratioone_{j,i}\in\RRN_{[0,1]}$ are such that
        $\sum_{i\not\in\indsetone}
        \ratioone_{j,i}\cdot\overline{\flone}_{(*,\indsetone),j}=\overline{\flone}_{(*,\indsetone),j}$
        and
        $\sum_{j=1}^n\ratioone_{j,i}\cdot\overline{\flone}_{(*,\indsetone),j}=
        \flone_{(\indsetone,\indsetone\cup\{i\})}$. Of course, a similar
        definition can be given to $\overline{\flone}_{(\indsetone,t)}$, for
        every nonempty $\indsetone\subseteq\{1,\dots,n\}$.
      \end{varitemize}
      Notice that, the way we have just defined $\overline{\flone}$
      guarantees that the sum of all components of $\overline{\flone}_\edone$
      is always equal to $\flone_\edone$, for every
      $\edone\in\ed{\passone}$. Now, for every nonempty
      $\indsetone\subseteq\{1,\dots,n\}$, fix $\realtwo_{k,\indsetone}$ to be
      the ratio $\ratioone_k$ of $\overline{\flone}_{(\indsetone,t)}$; i.e.,
      the $k$-th component of $\overline{\flone}_{(\indsetone,t)}$ (or $0$ if
      the first is itself $0$). On the one hand, for every nonempty
      $\indsetone\subseteq\{1,\dots,n\}$,
      $\sum_{k\in\indsetone}\realtwo_{k,\indsetone}$ is obviously less or
      equal to $1$, hence condition \ref{point:first}. holds. On the other,
      each component of $\overline{\flone}$ is itself a flow, since it
      satisfies the capacity and conservation constraints. Moreover,
      $\nt{\passone}$ is structured in such a way that the $k$-th component
      of $\overline{\flone}_{(\indsetone,t)}$ is $0$ whenever
      $k\not\in\indsetone$. As a consequence, since $\overline{\flone}$
      satisfies the capacity constraint, for every $k\in\{1,\dots,n\}$,
      $$
      \probone_k \leq \sum_{k\in\indsetone}
      \realtwo_{k,\indsetone}\cdot\overline{\flone}_{(\indsetone,t)} \leq
      \sum_{k\in\indsetone} \realtwo_{k,\indsetone}\cdot\realone_\indsetone
      $$
      and so condition \ref{point:second}.  holds too.
    \end{proof}
  \item
    \begin{lemma}
      $\nt{\passone}$ admits a flow summing to
      $\sum_{i\in\indsetone}\probone_i$.
    \end{lemma}
    \begin{proof}
      We prove the result by means of Theorem~\ref{t:mf-mc}. In
      particular, we just prove that the capacity of any cut must be at least
      $\probone\defi\sum_{i\in\{1,\dots,n\}}\probone_i$.
      A cut $(\cutsone,\cuttone)$ is said to be \emph{degenerate}
      if there are $\indsetone\subseteq\{1,\dots,n\}$ and $i\in\{1,\dots,n\}$
      such that $\indsetone\in\cutsone$ and
      $\indsetone\cup\{i\}\in\cuttone$. It is easy to verify that every
      degenerate cut has capacity greater or equal to $1$, thus greater or
      equal to $\probone$. As a consequence, we can just concentrate on
      non-degenerate cuts and prove that all of them have capacity at least
      $\probone$. Given two cuts $\cutone\defi(\cutsone,\cuttone)$ and
      $\cuttwo\defi(\cutstwo,\cutttwo)$, we say that $\cutone\leq\cuttwo$ iff
      $\cutsone\leq\cutstwo$. Then, given $\indsetone\subseteq\{1,\dots,n\}$,
      we call $\indsetone$-cut any cut $(\cutsone,\cuttone)$ such that
      $\bigcup_{\{i\}\in\cutsone}\{i\}=\indsetone$. The \emph{canonical}
      $\indsetone$-cut is the unique $\indsetone$-cut
      $\ccut{\indsetone}\defi(\cutsone,\cuttone)$ such that
      $\cutsone=\{s\}\cup\{\indsettwo\subseteq\{1,\dots,n\}\,|\,\indsettwo\cap\indsetone\neq\emptyset\}$. Please
      observe that, by definition, $\ccut{\indsetone}$ is non-degenerate
      and that the capacity $\ce{\ccut{\indsetone}}$ of $\ccut{\indsetone}$
      is at least $\probone$, because the forward edges in
      $\ccut{\indsetone}$ (those connecting elements of $\cutsone$ to those
      of $\cuttone$) are those going from $s$ to the singletons not in
      $\cutsone$, plus the edges going from any $\indsettwo\in\cutsone$ to
      $t$. The sum of the capacities of such edges are greater or equal to
      $\probone$ by hypothesis.
      We now need to prove the following two lemmas.
      \begin{varitemize}
      \item
        \begin{lemma}\label{l:helper1}
          For every non-degenerate $\indsetone$-cuts $\cutone,\cuttwo$ such
          that $\cutone>\cuttwo$, there is a non-degenerate $\indsetone$-cut
          $\cutthree$ such that $\cutone\geq\cutthree>\cuttwo$ and
          $\ce{\cutthree}\geq\ce{\cuttwo}$.
        \end{lemma}
        \begin{proof}
          Let $\cutone\defi(\cutsone,\cuttone)$ and
          $\cuttwo\defi(\cutstwo,\cutttwo)$. Moreover, let $\indsettwo$ be any
          element of $\cutsone\mathord{\setminus}\cutstwo$. Then, consider
          $\cutthree\defi(\cutstwo\cup\{\indsetthree\subseteq\{1,\dots,n\}\,|\,
          \indsettwo\subseteq\indsetthree\},\cutttwo\mathord{\setminus}\{\indsetthree\subseteq\{1,\dots,n\}\,|\,\indsettwo\subseteq\indsetthree\})$
          and verify that $\cutthree$ is the cut we are looking for. Indeed,
          $\cutthree$ is non-degenerate because it is obtained from $\cuttwo$,
          which is non-degenerate by hypothesis, by adding to it $\indsettwo$
          and all its supersets. Of course, $\cutthree>\cuttwo$. Moreover,
          $\cutone\geq\cutthree$ holds since $\indsettwo\in\cutsone$ and
          $\cutone$ is non-degenerate, which implies $\cutone$ contains all
          supersets of $\indsettwo$ as well. It is also easy to check that
          $\ce{\cutthree}\geq\ce{\cuttwo}$. In fact, in the process of
          constructing $\cutthree$ from $\cuttwo$ we do not lose any forward
          edges coming from $s$, since $\indsettwo$ cannot be a singleton with
          $\cutone$ and $\cuttwo$ both $\indsetone$-cuts, or any other edge
          coming from some element of $\cutstwo$, since $\cuttwo$ is
          non-degenerate.
        \end{proof}
      \item
        \begin{lemma}\label{l:helper2}
          For every non-degenerate $\indsetone$-cuts $\cutone,\cuttwo$ such
          that $\cutone\geq\cuttwo$, $\ce{\cutone}\geq\ce{\cuttwo}$.
        \end{lemma}
        \begin{proof}
          Let $\cutone\defi(\cutsone,\cuttone)$ and
          $\cuttwo\defi(\cutstwo,\cutttwo)$. We prove the result by induction
          on the $n\defi\card{\cutsone}-\card{\cutstwo}$. If $n=0$, then
          $\cutone=\cuttwo$ and the thesis follows. If $n>0$, then
          $\cutone>\cuttwo$ and, by Lemma~\ref{l:helper1}, there is a
          non-degenerate $\indsetone$-cut $\cutthree$ such that
          $\cutone\geq\cutthree>\cuttwo$ and
          $\ce{\cutthree}\geq\ce{\cuttwo}$. By induction hypothesis on
          $\cutone$ and $\cutthree$, it follows that
          $\ce{\cutone}\geq\ce{\cutthree}$. Thus,
          $\ce{\cutone}\geq\ce{\cuttwo}$.
        \end{proof}
      \end{varitemize}
      The two lemmas above permit to conclude. Indeed, for every
      non-degenerate cut $\cuttwo$, there is of course a $\indsetone$ such
      that $\cuttwo$ is a $\indsetone$-cut (possibly with $\indsetone$ as the
      empty set). Then, let us consider the canonical $\ccut{\indsetone}$. On
      the one hand, $\ce{\ccut{\indsetone}}\geq\probone$. On the other, since
      $\ccut{\indsetone}$ is non-degenerate,
      $\ce{\cuttwo}\geq\ce{\ccut{\indsetone}}$ by
      Lemma~\ref{l:helper2}. Hence, $\ce{\cuttwo}\geq\probone$.
    \end{proof}
  \end{varitemize}
  This concludes the main proof.
\end{proof}

In the coming proof of Lemma~\ref{lemma:keylemma} we will widely, and often
implicitly, use the following technical Lemmas. We denote with
$\clabstr{\varone}\cbnpas(\setone)$ the set of distinguished values
$\{\clabstr{\varone}{\termone}\,|\,\exists\termtwo\in\setone.\,\termtwo\cbnpas\termone\}$.
\begin{lemma}\label{lemma:pascommval}
  For every $\setone\subseteq\LOPp{\varone}$,
  $\cbnpas(\clabstr{\varone}{\setone})=\clabstr{\varone}\cbnpas(\setone)$.
\end{lemma}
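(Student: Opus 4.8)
The plan is to unfold both sides of the claimed set equality into explicit set-builder form and then reduce everything to a single pointwise equivalence that is already at our disposal. By the definition of the $\cbnpas$-closure of a set of states, the left-hand side is
$$
\cbnpas(\clabstr{\varone}{\setone})=\{\,z\mid\exists\termtwo\in\setone.\ \clabstr{\varone}{\termtwo}\cbnpas z\,\},
$$
whereas, by the definition recalled immediately before the statement, the right-hand side is
$$
\clabstr{\varone}\cbnpas(\setone)=\{\,\clabstr{\varone}{\termone}\mid\exists\termtwo\in\setone.\ \termtwo\cbnpas\termone\,\}.
$$
It therefore suffices to establish two facts: (i) every $z$ witnessed in the left-hand set is itself a distinguished value, so that it can be written $\clabstr{\varone}{\termone}$ with $\termone\in\LOPp{\varone}$; and (ii) for $\termtwo\in\setone$ and $\termone\in\LOPp{\varone}$ one has $\clabstr{\varone}{\termtwo}\cbnpas\clabstr{\varone}{\termone}$ if and only if $\termtwo\cbnpas\termone$.

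For (i) I would use the multisorted structure of the Markov chain $\cbn{\LOP}$ from Definition~\ref{d:multisort}: states drawn from the two sorts $\LOPp{\emptyset}$ and $\val$ are never compared, so any probabilistic applicative simulation, and hence $\cbnpas$ itself, only relates distinguished values to distinguished values. Since $\clabstr{\varone}{\setone}\subseteq\val$, its $\cbnpas$-closure stays within $\val$; working up to $\alpha$-equivalence, each such $z$ is of the form $\clabstr{\varone}{\termone}$ with $\termone\in\LOPp{\varone}$. For (ii) I would invoke the similarity analogue of Lemma~\ref{lemma:lambdaredCBN} (noted to hold for $\cbnpas$ right after that lemma), which yields $\clabstr{\varone}{\termtwo}\cbnpas\clabstr{\varone}{\termone}$ iff $\subst{\termtwo}{\varone}{\termthree}\cbnpas\subst{\termone}{\varone}{\termthree}$ for every $\termthree\in\LOPp{\emptyset}$; and by the convention of Remark~\ref{r:to} extending $\cbnpas$ to open terms through all closing substitutions, this last condition is by definition $\termtwo\cbnpas\termone$.

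Combining (i) and (ii) closes the argument in both directions: an element of the left-hand set is a distinguished value $\clabstr{\varone}{\termone}$ for which $\clabstr{\varone}{\termtwo}\cbnpas\clabstr{\varone}{\termone}$ holds for some $\termtwo\in\setone$, and by the bridge equivalence this is exactly the defining condition $\termtwo\cbnpas\termone$ of membership in the right-hand set; the converse reading is identical. I do not anticipate a genuine obstacle, since all the substance is carried by the two ingredients above. The only points demanding care are the use of multisortedness to confine the closure to $\val$---ensuring no closed term of the other sort spuriously enters the left-hand side---and the routine identification of open-term similarity with the family of its closing instances, both of which are already established earlier in the paper.
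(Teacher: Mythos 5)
Your proof is correct and takes essentially the same route as the paper's: the paper's proof is exactly the compact chain of equivalences you describe, unfolding the closure, applying the $\cbnpas$-analogue of Lemma~\ref{lemma:lambdaredCBN} together with the open-term convention of Remark~\ref{r:to}, and refolding. The only difference is presentational: you spell out step (i), the multisortedness argument confining the closure to distinguished values, which the paper leaves implicit by writing a generic element of the left-hand side directly as $\clabstr{\varone}{\termone}$.
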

\begin{proof}
  \begin{align*}
    \clabstr{\varone}{\termone}\in\cbnpas(\clabstr{\varone}{\setone})&\Leftrightarrow
    \exists\termtwo\in\setone.\,\clabstr{\varone}{\termtwo}\cbnpas\clabstr{\varone}{\termone}\\
    &\Leftrightarrow\exists\termtwo\in\setone.\,\termtwo\cbnpas\termone\\
    &\Leftrightarrow\clabstr{\varone}{\termone}\in\clabstr{\varone}{\cbnpas(\setone)}.
  \end{align*}
  This concludes the proof.
\end{proof}

\begin{lemma}\label{lemma:pascomm}
  If $\termone\cbnpas\termtwo$, then for every $\setone\in\LOPp{\varone}$,
  $\sem{\termone}(\abstr{\varone}{\setone})\leq\sem{\termtwo}(\abstr{\varone}{\cbnpas(\setone)})$.
\end{lemma}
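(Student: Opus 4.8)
The plan is to read the inequality off directly from the fact that $\cbnpas$ is a probabilistic (applicative) simulation on the multisorted Markov chain $\cbn{\LOP}$, instantiating the simulation clause at the evaluation label $\evlabel$ and at the set of distinguished values associated with $\setone$. No new combinatorial input (of the kind needed for the full Key Lemma) is required here: the statement is a clean instance of the defining property of $\lesssim$.

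First I would recall that, by definition of a probabilistic simulation, the hypothesis $\termone\cbnpas\termtwo$ yields, for every label $\labelone$ and every set of states $\setthree$, the inequality $\translop(\termone,\labelone,\setthree)\leq\translop(\termtwo,\labelone,\cbnpas(\setthree))$. I would apply this with $\labelone\defi\evlabel$ and with $\setthree$ taken to be the set of distinguished values $\clabstr{\varone}{\setone}\defi\{\clabstr{\varone}{\termfour}\mid\termfour\in\setone\}$; this instantiation is legitimate precisely because, under the label $\evlabel$, a term can only make transitions into distinguished values, so restricting attention to such a $\setthree$ loses nothing.

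Next I would evaluate the two sides. By the definition of $\translop$ on the label $\evlabel$, we have $\translop(\termone,\evlabel,\clabstr{\varone}{\setone})=\sum_{\termfour\in\setone}\sem{\termone}(\abstr{\varone}{\termfour})=\sem{\termone}(\abstr{\varone}{\setone})$, and likewise $\translop(\termtwo,\evlabel,\clabstr{\varone}{\cbnpas(\setone)})=\sem{\termtwo}(\abstr{\varone}{\cbnpas(\setone)})$. The one piece of genuine bookkeeping is rewriting the closure appearing on the right-hand side of the simulation inequality: Lemma~\ref{lemma:pascommval} states exactly that $\cbnpas(\clabstr{\varone}{\setone})=\clabstr{\varone}{\cbnpas(\setone)}$, so the simulation clause $\translop(\termone,\evlabel,\clabstr{\varone}{\setone})\leq\translop(\termtwo,\evlabel,\cbnpas(\clabstr{\varone}{\setone}))$ becomes $\sem{\termone}(\abstr{\varone}{\setone})\leq\sem{\termtwo}(\abstr{\varone}{\cbnpas(\setone)})$, which is the thesis.

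The only point demanding care is conceptual rather than computational: one must ensure that the closure taken inside the state space of the multisorted chain, i.e.\ over \emph{distinguished} values, coincides with the closure taken over ordinary terms. This is precisely the content of Lemma~\ref{lemma:pascommval}, which in turn rests on the $\cbnpas$-analogue of Lemma~\ref{lemma:lambdaredCBN}, namely that $\clabstr{\varone}{\termfour}\cbnpas\clabstr{\varone}{\termfive}$ iff $\termfour\cbnpas\termfive$. Once that identification is in place, the argument is a one-line instantiation of the simulation clause, and I expect no further obstacle.
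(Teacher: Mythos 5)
Your proof is correct and follows essentially the same route as the paper: instantiate the defining clause of probabilistic simulation at the label $\evlabel$ and the set of distinguished values $\clabstr{\varone}{\setone}$, then use Lemma~\ref{lemma:pascommval} to identify $\cbnpas(\clabstr{\varone}{\setone})$ with $\clabstr{\varone}{\cbnpas(\setone)}$. The paper's own proof is exactly this two-step argument, only stated more tersely.
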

\begin{proof}
  If $\termone\cbnpas\termtwo$, then by definition
  $\sem{\termone}(\clabstr{\varone}{\setone})\leq\sem{\termtwo}(\cbnpas(\clabstr{\varone}{\setone}))$.
  Therefore, by Lemma~\ref{lemma:pascommval},
  $\sem{\termtwo}(\cbnpas(\clabstr{\varone}{\setone}))\leq\sem{\termtwo}(\clabstr{\varone}\cbnpas(\setone))$.
\end{proof}
\begin{remark}\label{r:latetech}
  Throughout the following proof we will implicitly use a routine result
  stating that $\termone\cbnpas\termtwo$ implies
  $\sem{\termone}(\abstr{\varone}{\setone})\leq\sem{\termtwo}(\abstr{\varone}{\mathord{\cbnpas}(\setone)})$,
  for every $\setone\subseteq\LOPp{\varone}$. The property
  needed by the latter is precisely the reason why we have  formulated
  $\LOP$ as a multisorted labelled Markov chain: 
  $\mathord{\cbnpas}(\clabstr{\varone}{\setone})$ consists of distinguished
  values only, and is nothing but
  $\clabstr{\varone}{\mathord{\cbnpas}(\setone)}$.
\end{remark}

\begin{proof}[of Lemma~\ref{lemma:keylemma}]
  This is equivalent to proving that if $\relu{\termone}{\howe{\cbnpas}}{\termtwo}$, then 
  for every $\setone\subseteq\LOP(\varone)$ the following implication holds:
  if $\ibsemn{\termone}{\distone}$, then 
  $\distone(\abstr{\varone}{\setone})\leq\sem{\termtwo}(\abstr{\varone}{(\howe{\cbnpas}(\setone)}))$.
  This is an induction on the structure of the proof of $\ibsemn{\termone}{\distone}$.
  \begin{varitemize}
    \item
      If $\distone=\emdist$, then of course
      $\distone(\abstr{\varone}{\setone})=0\leq\sem{\termtwo}(\abstr{\varone}{\settwo})$ for every $\setone,\settwo\subseteq\LOP(\varone)$.
    \item
      If $\termone$ is a value $\abstr{\varone}{\termthree}$ and $\distone(\abstr{\varone}{\termthree})=1$, then the
      proof of $\relu{\termone}{\howe{\cbnpas}}{\termtwo}$ necessarily ends as follows:
      $$
      \infer
          {\rel{\emcon}{\abstr{\varone}{\termthree}}{\howe{\cbnpas}}{\termtwo}}
          {
          \rel{\{\varone\}}{\termthree}{\howe{\cbnpas}}{\termfour}
          &&
          \rel{\emcon}{\abstr{\varone}{\termfour}}{\cbnpas}{\termtwo}
          }
      $$
      Let $\setone$ be any subset of $\LOP(\varone)$. Now, if
      $\termthree\not\in\setone$, then
      $\distone(\abstr{\varone}{\setone})=0$ and the inequality trivially
      holds. If, on the contrary, $\termthree\in\setone$, then
      $\termfour\in\howe{\cbnpas}(\setone)$. Consider
      $\cbnpas(\termfour)$, the set of terms that are in relation
      with $\termfour$ via $\cbnpas$. 
      We have that for every $\termfive\in\;\cbnpas(\termfour)$, both
      $\rel{\{\varone\}}{\termthree}{\howe{\cbnpas}}{\termfour}$ and
      $\rel{\{\varone\}}{\termfour}{\cbnpas}{\termfive}$ hold, and as a
      consequence $\rel{\{\varone\}}{\termthree}{\howe{\cbnpas}}{\termfive}$ does
      (this is a consequence of a property of $\howe{(\cdot)}$, see~\cite{EV}).  
      In other words, $\cbnpas(\termfour)\subseteq\howe{\cbnpas}(\setone)$.
      But then, by Lemma~\ref{lemma:pascomm},
      $$
      \sem{\termtwo}(\abstr{\varone}{\howe{\cbnpas}(\setone)})\geq
      \sem{\termtwo}(\abstr{\varone}{\cbnpas(\termfour)})\geq
      \sem{\abstr{\varone}{\termfour}}(\abstr{\varone}{\termfour})=1.
      $$
    \item
      If $\termone$ is an application $\termthree\termfour$, then $\ibsemn{\termone}{\distone}$ is obtained
      as follows:
      $$
      \infer
          {\ibsemn{\termthree\termfour}{\sum_{\termfive}}\distthree(\abstr{\varone}{\termfive})\cdot\distfive_{\termfive,\termfour}}
          { \ibsemn{\termthree}{\distthree} &&
            \{\ibsemn{\subst{\termfive}{\varone}{\termfour}}{\distfive_{\termfive,\termfour}}\}_{\termfive,\termfour}
          }
          $$
      Moreover, the proof of
      $\rel{\emptyset}{\termone}{\howe{\cbnpas}}{\termtwo}$ must end as
      follows:
      $$
      \infer
          {\rel{\emcon}{\app{\termthree}{\termfour}}{\howe{\cbnpas}}{\termtwo}}
          { \rel{\emcon}{\termthree}{\howe{\cbnpas}}{\termsix} &&
            \rel{\emcon}{\termfour}{\howe{\cbnpas}}{\termseven} &&
          \rel{\emcon}{\app{\termsix}{\termseven}}{\cbnpas}{\termtwo} }
      $$
      Now, since $\ibsemn{\termthree}{\distthree}$ and
      $\rel{\emcon}{\termthree}{\howe{\cbnpas}}{\termsix}$, by induction
      hypothesis we get that for every $\settwo\subseteq\LOP(\varone)$ it
      holds that
      $\distthree(\abstr{\varone}{\settwo})\leq\sem{\termsix}(\abstr{\varone}{\howe{\cbnpas}(\settwo)})$.
      Let us now take a look at the distribution
      $$
      \distone=\sum_{\termfive}\distthree(\abstr{\varone}{\termfive})\cdot\distfive_{\termfive,\termfour}.
      $$
      Since $\distthree$ is a \emph{finite} distribution, the sum above
      is actually the sum of finitely many summands.  Let the support
      $\supp{\distthree}$ of $\distthree$ be
      $\{\abstr{\varone}{\termfive_1},\ldots,\abstr{\varone}{\termfive_n}\}$. 
      It is now time to put the above into a form that is
      amenable to treatment by Lemma~\ref{lemma:disentangling}. Let us
      consider the $n$ sets
      $\howe{\cbnpas}(\termfive_1),\ldots,\howe{\cbnpas}(\termfive_n)$;
      to each term $\termnine$ in them we can associate the probability
      $\sem{\termsix}(\abstr{\varone}{\termnine})$. We are then in the
      scope of Lemma~\ref{lemma:disentangling}, since by induction
      hypothesis we know that for every $\settwo\subseteq\LOP(\varone)$,
      $$
      \distthree(\abstr{\varone}{\setone})\leq\sem{\termsix}(\abstr{\varone}{\howe{\cbnpas}(\setone)}).
      $$
      We can then conclude that for every 
      \begin{align*}
      \termnine\in\howe{\cbnpas}(\{\termfive_1,\ldots,\termfive_n\})=\bigcup_{1\leq
        i\leq n}\howe{\cbnpas}(\termfive_i)
      \end{align*}
      there are $n$ real numbers $\realone_1^{\termnine,\termsix},\ldots,\realone_n^{\termnine,\termsix}$ such that:
      \begin{align*}
        \sem{\termsix}(\abstr{\varone}{\termnine})&\geq
        \sum_{1\leq i\leq n}\realone_i^{\termnine,\termsix}\qquad\forall\,\termnine\in\bigcup_{1\leq i\leq n}\howe{\cbnpas}(\termfive_i);\\
        \distthree(\abstr{\varone}{\termfive_i})&\leq\sum_{\termnine\in\howe{\cbnpas}(\termfive_i)}\realone_i^{\termnine,\termsix}\qquad\forall\,1\leq i\leq n.
      \end{align*}
      So, we can conclude that
      \begin{align*}
        \distone&\leq\sum_{1\leq i\leq
          n}\left(\sum_{\termnine\in\howe{\cbnpas}(\termfive_i)}\realone_i^{\termnine,\termsix}\right)\cdot
        \distfive_{\termfive_i,\termfour}\\
        &=\sum_{1\leq i\leq
          n}\sum_{\termnine\in\howe{\cbnpas}(\termfive_i)}\realone_i^{\termnine,\termsix}\cdot\distfive_{\termfive_i,\termfour}.
      \end{align*}
      Now, whenever $\termfive_i\howe{\cbnpas}\termnine$ and $\termfour\howe{\cbnpas}\termseven$,
      we know that, by Lemma~\ref{lemma:closesubsCBN}, 
      $\subst{\termfive_i}{\varone}{\termfour}\howe{\cbnpas}\subst{\termnine}{\varone}{\termseven}$. We
      can then apply the inductive hypothesis to the $n$ derivations of
      $\ibsemn{\subst{\termfive_i}{\varone}{\termfour}}{\distfive_{\termfive_i,\termfour}}$,
      obtaining that, for every $\setone\subseteq\LOP(\varone)$,
      {\footnotesize
        \begin{align*}
          &\distone(\abstr{\varone}{\setone})\leq\sum_{1\leq i\leq
            n}\sum_{\termnine\in\howe{\cbnpas}(\termfive_i)}
          \realone_i^{\termnine,\termsix}\cdot\sem{\subst{\termnine}{\varone}{\termseven}}(\abstr{\varone}{\howe{\cbnpas}{(\setone)}})\\
          &\leq\sum_{1\leq i\leq
            n}\sum_{\termnine\in\howe{\cbnpas}(\{\termfive_1,\ldots,\termfive_n\})}\realone_i^{\termnine,\termsix}\cdot
            \sem{\subst{\termnine}{\varone}{\termseven}}(\abstr{\varone}{\howe{\cbnpas}{(\setone)}})\\
          &=\sum_{\termnine\in\howe{\cbnpas}(\{\termfive_1,\ldots,\termfive_n\})}\sum_{1\leq
            i\leq
            n}\realone_i^{\termnine,\termsix}\cdot\sem{\subst{\termnine}{\varone}{\termseven}}(\abstr{\varone}{\howe{\cbnpas}{(\setone)}})\\
          &=\sum_{\termnine\in\howe{\cbnpas}(\{\termfive_1,\ldots,\termfive_n\})}\left(\sum_{1\leq
              i\leq n}\realone_i^{\termnine,\termsix}\right)\cdot\sem{\subst{\termnine}{\varone}{\termseven}}(\abstr{\varone}{\howe{\cbnpas}{(\setone)}})\\
          &\leq\sum_{\termnine\in\howe{\cbnpas}(\{\termfive_1,\ldots,\termfive_n\})}
          \sem{\termsix}(\abstr{\varone}{\termnine})\cdot
          \sem{\subst{\termnine}{\varone}{\termseven}}(\abstr{\varone}{\howe{\cbnpas}{(\setone)}})\\
          &\leq\sum_{\termnine\in\LOP(\varone)}
          \sem{\termsix}(\abstr{\varone}{\termnine})\cdot
          \sem{\subst{\termnine}{\varone}{\termseven}}(\abstr{\varone}{\howe{\cbnpas}{(\setone)}})\\
          &=\sem{\termsix\termseven}(\abstr{\varone}{\howe{\cbnpas}{(\setone)}})\leq
          \sem{\termtwo}(\abstr{\varone}{\cbnpas((\howe{\cbnpas})(\setone))})\\
          &\leq\sem{\termtwo}(\abstr{\varone}{\howe{\cbnpas}(\setone)}),
          \end{align*}
        }
        which is the thesis. 
      \item
        If $\termone$ is a probabilistic sum
        $\ps{\termthree}{\termfour}$, then $\ibsemn{\termone}{\distone}$ is
        obtained as follows:
        $$
        \infer
        {\ibsemn{\ps{\termthree}{\termfour}}{\frac{1}{2}\cdot\distthree
            + \frac{1}{2}\cdot\distfour}} {
          \ibsemn{\termthree}{\distthree} && \ibsemn{\termfour}{\distfour}
        }
        $$
        
        Moreover, the proof of
        $\rel{\emptyset}{\termone}{\howe{\cbnpas}}{\termtwo}$ must end as
        follows:
        $$
        \infer
        {\rel{\emcon}{\ps{\termthree}{\termfour}}{\howe{\cbnpas}}{\termtwo}}
        {
          \rel{\emcon}{\termthree}{\howe{\cbnpas}}{\termsix}
          &&
          \rel{\emcon}{\termfour}{\howe{\cbnpas}}{\termseven}
          &&
          \rel{\emcon}{\ps{\termsix}{\termseven}}{\cbnpas}{\termtwo}
        }
        $$
        Now:
        \begin{varitemize}
        \item
          Since $\ibsemn{\termthree}{\distthree}$ and $\rel{\emcon}{\termthree}{\howe{\cbnpas}}{\termsix}$, by induction hypothesis
          we get that for every $\settwo\subseteq\LOP(\varone)$ it holds that 
          $\distthree(\abstr{\varone}{\settwo})\leq\sem{\termsix}(\abstr{\varone}{\howe{\cbnpas}(\settwo)})$;
        \item
          Similarly, since $\ibsemn{\termfour}{\distfour}$ and $\rel{\emcon}{\termfour}{\howe{\cbnpas}}{\termseven}$, by induction hypothesis
          we get that for every $\settwo\subseteq\LOP(\varone)$ it holds that 
          $\distfour(\abstr{\varone}{\settwo})\leq\sem{\termseven}(\abstr{\varone}{\howe{\cbnpas}(\settwo)})$. 
        \end{varitemize}
        Let us now take a look at the distribution 
        $$
        \distone=\frac{1}{2}\cdot\distthree
        + \frac{1}{2}\cdot\distfour.
        $$
        The idea then is to prove that, for every
        $\setone\subseteq\LOP(\varone)$, it holds
        $\distone(\abstr{\varone}{\setone})\leq\sem{\ps{\termsix}{\termseven}}(\abstr{\varone}{\howe{\cbnpas}(\setone)})$.
        In fact, since 
        $\sem{\ps{\termsix}{\termseven}}(\abstr{\varone}{\howe{\cbnpas}(\setone)})\leq\sem{\termtwo}(\abstr{\varone}{\howe{\cbnpas}(\setone)})$,
        the latter would imply the thesis
        $\distone(\abstr{\varone}{\setone})\leq\sem{\termtwo}(\abstr{\varone}{\howe{\cbnpas}(\setone)})$.
        But by induction hypothesis and
        Lemma~\ref{lemma:semsumCBN}:
        \begin{align*}
          \distone(\abstr{\varone}{\setone})&=\frac{1}{2}\cdot\distthree(\abstr{\varone}{\setone})
          +
          \frac{1}{2}\cdot\distfour(\abstr{\varone}{\setone})\\
          &\leq
          \frac{1}{2}\cdot\sem{\termsix}(\abstr{\varone}{\howe{\cbnpas}(\setone)})
          +
          \frac{1}{2}\cdot\sem{\termseven}(\abstr{\varone}{\howe{\cbnpas}(\setone)})\\
          &=
          \sem{\ps{\termsix}{\termseven}}(\abstr{\varone}{\howe{\cbnpas}(\setone)}).
        \end{align*}
      \end{varitemize}
      This concludes the proof.
\end{proof}

\subsection{Context Equivalence}\label{sec:pabce}
We now formally introduce probabilistic context equivalence and prove it
to be coarser than probabilistic applicative bisimilarity.

\begin{definition}\label{def:context}
  A $\LOP$-term context 
  is a syntax tree with a unique ``hole'' $\ctxhole{\cdot}$, generated as follows:
  $$
  \ctxone,\ctxtwo\in\ctxset \, ::= \, \ctxhole{\cdot}\, |\,
  \abstr{\varone}{\ctxone}\, |\, \app{\ctxone}{\termone}\, |\,
  \app{\termone}{\ctxone}\, |\, \ps{\ctxone}{\termone}\, |\,
  \ps{\termone}{\ctxone}.
  $$
We denote with $\ctxone\ctxhole{\termtwo}$ the $\LOP$-term that results
from filling the hole with a $\LOP$-term $\termtwo$:
\begin{align*}
  \ctxhole{\cdot}\ctxhole{\termtwo} &\defi \termtwo;\\
  (\abstr{\varone}{\ctxone})\ctxhole{\termtwo} &\defi
  \abstr{\varone}{\ctxone\ctxhole{\termtwo}};\\
  (\app{\ctxone}{\termone})\ctxhole{\termtwo} &\defi
  \app{\ctxone\ctxhole{\termtwo}}{\termone};\\
  (\app{\termone}{\ctxone})\ctxhole{\termtwo} &\defi
  \app{\termone}{\ctxone\ctxhole{\termtwo}};\\
  (\ps{\ctxone}{\termone})\ctxhole{\termtwo} &\defi
  \ps{\ctxone\ctxhole{\termtwo}}{\termone};\\
  (\ps{\termone}{\ctxone})\ctxhole{\termtwo} &\defi
  \ps{\termone}{\ctxone\ctxhole{\termtwo}}.
\end{align*}
\end{definition}
We also write $\ctxone\ctxhole{\ctxtwo}$ for the context resulting from
replacing the occurrence of $\ctxhole{\cdot}$ in the syntax tree $\ctxone$
by the tree $\ctxtwo$.

We continue to keep track of free variables by sets $\vecvarone$ of
variables and we inductively define subsets
$\ctxsetp{\vecvarone}{\vecvartwo}$ of contexts by the following rules:
$$
\infer[\Ctxone] {\ctxhole{\cdot}\in\ctxsetp{\vecvarone}{\vecvarone}}{ }
$$

$$
\infer[\Ctxtwo]
{\abstr{\varone}{\ctxone}\in\ctxsetp{\vecvarone}{\vecvartwo}}
{\ctxone\in\ctxsetp{\vecvarone}{\vecvartwo\cup\{\varone\}} &&
  \varone\not\in\vecvartwo}
$$
  
$$
\infer[\Ctxthree]
{\app{\ctxone}{\termone}\in\ctxsetp{\vecvarone}{\vecvartwo}}
{\ctxone\in\ctxsetp{\vecvarone}{\vecvartwo} && \termone\in\LOP(\vecvartwo)}
$$

$$
\infer[\Ctxfour]
{\app{\termone}{\ctxone}\in\ctxsetp{\vecvarone}{\vecvartwo}}
{\termone\in\LOP(\vecvartwo) && \ctxone\in\ctxsetp{\vecvarone}{\vecvartwo}
}
$$

$$
\infer[\Ctxfive]
{\ps{\ctxone}{\termone}\in\ctxsetp{\vecvarone}{\vecvartwo}}
{\ctxone\in\ctxsetp{\vecvarone}{\vecvartwo} && \termone\in\LOP(\vecvartwo)}
$$

$$
\infer[\Ctxsix] {\ps{\termone}{\ctxone}\in\ctxsetp{\vecvarone}{\vecvartwo}}
{\termone\in\LOP(\vecvartwo) && \ctxone\in\ctxsetp{\vecvarone}{\vecvartwo}
}
$$
We use double indexing over $\vecvarone$ and $\vecvartwo$ to indicate the
sets of free variables before and after the filling of the hole by a
term. The two following properties explain this idea.

\begin{lemma}\label{lemma:fillholeterm}
  If $\termone\in\LOPp{\vecvarone}$ and
  $\ctxone\in\ctxsetp{\vecvarone}{\vecvartwo}$, then
  $\ctxone\ctxhole{\termone}\in\LOPp{\vecvartwo}$.
\end{lemma}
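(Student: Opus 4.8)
The plan is to proceed by a straightforward structural induction on the context $\ctxone$ (equivalently, on the derivation of $\ctxone\in\ctxsetp{\vecvarone}{\vecvartwo}$ in Definition~\ref{def:context}), keeping $\vecvarone$ and the term $\termone\in\LOPp{\vecvarone}$ fixed throughout while letting the target set $\vecvartwo$ vary. Concretely, I would fix $\termone\in\LOPp{\vecvarone}$ and establish the universally quantified statement: for every $\vecvartwo$ and every $\ctxone$ with $\ctxone\in\ctxsetp{\vecvarone}{\vecvartwo}$, the filled term satisfies $\ctxone\ctxhole{\termone}\in\LOPp{\vecvartwo}$, i.e.\ $\FV{\ctxone\ctxhole{\termone}}\subseteq\vecvartwo$. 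Phrasing the induction hypothesis with $\vecvartwo$ universally quantified (and $\vecvarone$ held fixed) is the only point that requires a little care, since rule $\Ctxtwo$ invokes the hypothesis at a strictly larger target set.

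For the base case $\Ctxone$ we have $\ctxone=\ctxhole{\cdot}$ and $\vecvartwo=\vecvarone$, so $\ctxone\ctxhole{\termone}=\termone\in\LOPp{\vecvarone}=\LOPp{\vecvartwo}$ directly by hypothesis. The interesting inductive case is the binder rule $\Ctxtwo$, where $\ctxone=\abstr{\varone}{\ctxtwo}$ with $\ctxtwo\in\ctxsetp{\vecvarone}{\vecvartwo\cup\{\varone\}}$ and $\varone\notin\vecvartwo$. Applying the inductive hypothesis at the enlarged target $\vecvartwo\cup\{\varone\}$ gives $\ctxtwo\ctxhole{\termone}\in\LOPp{\vecvartwo\cup\{\varone\}}$; since $\ctxone\ctxhole{\termone}=\abstr{\varone}{\ctxtwo\ctxhole{\termone}}$ and $\FV{\abstr{\varone}{\ctxtwo\ctxhole{\termone}}}=\FV{\ctxtwo\ctxhole{\termone}}\setminus\{\varone\}\subseteq(\vecvartwo\cup\{\varone\})\setminus\{\varone\}\subseteq\vecvartwo$, we conclude $\ctxone\ctxhole{\termone}\in\LOPp{\vecvartwo}$.

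The remaining four rules are handled uniformly. For $\Ctxthree$, $\ctxone=\app{\ctxtwo}{\termtwo}$ with $\ctxtwo\in\ctxsetp{\vecvarone}{\vecvartwo}$ and $\termtwo\in\LOPp{\vecvartwo}$; the inductive hypothesis gives $\ctxtwo\ctxhole{\termone}\in\LOPp{\vecvartwo}$, and since $\ctxone\ctxhole{\termone}=\app{\ctxtwo\ctxhole{\termone}}{\termtwo}$ with $\FV{\app{\ctxtwo\ctxhole{\termone}}{\termtwo}}=\FV{\ctxtwo\ctxhole{\termone}}\cup\FV{\termtwo}\subseteq\vecvartwo$, the claim follows. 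The symmetric rule $\Ctxfour$ and the two probabilistic-sum rules $\Ctxfive$ and $\Ctxsix$ are identical modulo which immediate subterm carries the hole and whether the constructor is application or $\oplus$, using in each case only that the free-variable set of a binary node is the union of the free-variable sets of its two children.

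There is essentially no substantive obstacle: the whole argument is a bookkeeping induction that tracks free variables through the context-formation rules, and the double-indexed design of $\ctxsetp{\vecvarone}{\vecvartwo}$ is precisely what makes this bookkeeping close. The one mildly delicate aspect, worth recording explicitly in the statement of the induction hypothesis, is the universal quantification over $\vecvartwo$ demanded by rule $\Ctxtwo$; once the hypothesis is set up this way, every case closes immediately.
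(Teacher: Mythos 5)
Your proof is correct and follows exactly the same route as the paper, whose proof is simply stated as ``by induction on the derivation of $\ctxone\in\ctxsetp{\vecvarone}{\vecvartwo}$ from the rules $\Ctxone$--$\Ctxsix$''; you have merely spelled out the cases, including the one genuinely delicate point (quantifying the induction hypothesis over $\vecvartwo$ to handle $\Ctxtwo$), which the paper leaves implicit.
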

\begin{proof}
  By induction on the derivation of
  $\ctxone\in\ctxsetp{\vecvarone}{\vecvartwo}$ from the rules
  $\Ctxone$-$\Ctxsix$.
\end{proof}

\begin{lemma}\label{lemma:fillholectx}
  If $\ctxone\in\ctxsetp{\vecvarone}{\vecvartwo}$ and
  $\ctxtwo\in\ctxsetp{\vecvartwo}{\vecvartwo}$, then
  $\ctxtwo\ctxhole\ctxone\in\ctxsetp{\vecvarone}{\vecvartwo}$.
\end{lemma}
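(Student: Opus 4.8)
The plan is to prove the statement by induction on the derivation of $\ctxtwo\in\ctxsetp{\vecvartwo}{\vecvartwo}$ from the formation rules $\Ctxone$--$\Ctxsix$, i.e.\ by structural induction on the context $\ctxtwo$, unfolding the tree-replacement definition of $\ctxtwo\ctxhole{\ctxone}$ at each constructor. The first thing I would notice is that the statement as phrased does \emph{not} carry through such an induction: in the abstraction case, rule $\Ctxtwo$ places the immediate subcontext of $\ctxtwo$ in $\ctxsetp{\vecvartwo}{\vecvartwo\cup\{\varfour\}}$, whose two indices no longer coincide, so an induction hypothesis restricted to the ``diagonal'' sets $\ctxsetp{\vecvartwo}{\vecvartwo}$ would be inapplicable. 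I would therefore first strengthen the claim to an asymmetric form: fixing $\ctxone\in\ctxsetp{\vecvarone}{\vecvartwo}$, for every finite set of variables $\overline{\varthree}$ and every $\ctxtwo\in\ctxsetp{\vecvartwo}{\overline{\varthree}}$ one has $\ctxtwo\ctxhole{\ctxone}\in\ctxsetp{\vecvarone}{\overline{\varthree}}$. The lemma is the instance $\overline{\varthree}=\vecvartwo$. The crucial feature of this generalisation is that the ``before'' index of $\ctxtwo$ stays pinned at $\vecvartwo$ (matching the ``after'' index of the fixed $\ctxone$) throughout, while only the ``after'' index is allowed to vary; this is exactly what makes the induction self-supporting, and it mirrors the semantics suggested by Lemma~\ref{lemma:fillholeterm}.

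With $\ctxone\in\ctxsetp{\vecvarone}{\vecvartwo}$ fixed, I would induct on the derivation of $\ctxtwo\in\ctxsetp{\vecvartwo}{\overline{\varthree}}$. In the base case the last rule is $\Ctxone$, which forces $\ctxtwo=\ctxhole{\cdot}$ and $\overline{\varthree}=\vecvartwo$; since $\ctxhole{\cdot}\ctxhole{\ctxone}=\ctxone$ by the definition of hole-filling, the goal $\ctxone\in\ctxsetp{\vecvarone}{\vecvartwo}$ is precisely the standing hypothesis on $\ctxone$. The four cases in which the outermost constructor carries an untouched side term are uniform. For $\Ctxthree$, say $\ctxtwo=\app{\ctxthree}{\termone}$ with $\ctxthree\in\ctxsetp{\vecvartwo}{\overline{\varthree}}$ and $\termone\in\LOP(\overline{\varthree})$; the induction hypothesis gives $\ctxthree\ctxhole{\ctxone}\in\ctxsetp{\vecvarone}{\overline{\varthree}}$, and re-applying $\Ctxthree$ (the side term $\termone$ and its index $\overline{\varthree}$ are unchanged by hole-filling) yields $\app{(\ctxthree\ctxhole{\ctxone})}{\termone}=\ctxtwo\ctxhole{\ctxone}\in\ctxsetp{\vecvarone}{\overline{\varthree}}$. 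The cases $\Ctxfour$, $\Ctxfive$, $\Ctxsix$ are identical up to the position of the hole and the choice of constructor ($\app{\cdot}{\cdot}$ versus $\ps{\cdot}{\cdot}$).

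The only place where the generalisation is actually used is rule $\Ctxtwo$, and I expect identifying this generalisation up front to be the sole real obstacle; once it is in place every case is mechanical. Here $\ctxtwo=\abstr{\varfour}{\ctxthree}$ with $\ctxthree\in\ctxsetp{\vecvartwo}{\overline{\varthree}\cup\{\varfour\}}$ and $\varfour\notin\overline{\varthree}$. Applying the induction hypothesis with the enlarged ``after'' index $\overline{\varthree}\cup\{\varfour\}$ gives $\ctxthree\ctxhole{\ctxone}\in\ctxsetp{\vecvarone}{\overline{\varthree}\cup\{\varfour\}}$, and rule $\Ctxtwo$ --- whose side condition $\varfour\notin\overline{\varthree}$ is inherited verbatim --- then produces $\abstr{\varfour}{\ctxthree\ctxhole{\ctxone}}=\ctxtwo\ctxhole{\ctxone}\in\ctxsetp{\vecvarone}{\overline{\varthree}}$, as required. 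No $\alpha$-conversion or freshness condition beyond the one supplied by $\Ctxtwo$ is needed, because hole-filling for contexts is literal (capturing) tree replacement rather than capture-avoiding substitution, so each formation rule re-applies to the filled context exactly as it applied to $\ctxtwo$.
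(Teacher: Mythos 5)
Your proof is correct and follows the same skeleton as the paper's: the paper's entire proof is the one-line remark that the result follows by induction on the derivation of $\ctxtwo\in\ctxsetp{\vecvartwo}{\vecvartwo}$ from the rules $\Ctxone$--$\Ctxsix$. What you add, and what that sketch silently needs, is the strengthening of the statement before the induction: fixing $\ctxone\in\ctxsetp{\vecvarone}{\vecvartwo}$ and letting the ``after'' index of $\ctxtwo$ range over arbitrary finite sets $\overline{\varthree}$, with the ``before'' index pinned at $\vecvartwo$. You are right that this is not cosmetic: in the $\Ctxtwo$ case the premise of the rule lives in $\ctxsetp{\vecvartwo}{\overline{\varthree}\cup\{\varfour\}}$, so an induction hypothesis confined to the diagonal $\ctxsetp{\vecvartwo}{\vecvartwo}$ never applies, and the induction does not close on the literal statement. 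Moreover, the generalized statement is the one the paper actually uses downstream: in the proof of Lemma~\ref{lemma:ctxprecCBN} the present lemma is invoked with $\ctxone=\abstr{\varone}{\ctxhole{\cdot}}\in\ctxsetp{\vecvarone\cup\{\varone\}}{\vecvarone}$ and $\ctxtwo\in\ctxsetp{\vecvarone}{\emptyset}$ (and later with $\ctxone=\app{\ctxhole{\cdot}}{\termthree}$ and the same $\ctxtwo$), i.e.\ with distinct indices on $\ctxtwo$, a situation the diagonal statement does not cover but your version does. So your proposal both fills the gap in the paper's one-line proof and restates the lemma in the form in which it is later needed; the case analysis itself --- including the observation that hole-filling is capturing tree replacement, so each formation rule re-applies verbatim with no freshness side issues --- is exactly the intended argument.
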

\begin{proof}
  By induction on the derivation of
  $\ctxtwo\in\ctxsetp{\vecvartwo}{\vecvartwo}$ from the rules
  $\Ctxone$-$\Ctxsix$.
\end{proof}
Let us recall here the definition of context preorder and equivalence. 

\begin{definition}
  \label{def:ctxeqCBN}
  The \emph{probabilistic context preorder} with respect to
  call-by-name evaluation is the $\LOP$-relation given by
  $\rel{\vecvarone}{\termone}{\cbnconleq}{\termtwo}$ iff 
  $\forall \, \ctxone\in\ctxsetp{\vecvarone}{\emptyset}$,
  $\ctxone\ctxhole{\termone}\evp{\probone}$ implies
  $\ctxone\ctxhole{\termtwo}\evp{\probtwo}$ with $\probone\leq\probtwo$.
  The $\LOP$-relation of \emph{probabilistic context equivalence},
  denoted $\rel{\vecvarone}{\termone}{\cbnconequiv}{\termtwo}$, holds iff
  $\rel{\vecvarone}{\termone}{\cbnconleq}{\termtwo}$ and
  $\rel{\vecvarone}{\termtwo}{\cbnconleq}{\termone}$ do.
\end{definition}

\begin{lemma}\label{lemma:ctxprecCBN}
  The context preorder $\cbnconleq$ is a precongruence relation.
\end{lemma}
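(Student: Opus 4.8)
The plan is to establish the two defining ingredients of a precongruence separately: that $\cbnconleq$ is a preorder, and that it is compatible, i.e.\ that it satisfies \Comone--\Comfour. Reflexivity is immediate, since for any $\ctxone\in\ctxsetp{\vecvarone}{\emptyset}$ the term $\ctxone\ctxhole{\termone}$ converges with exactly the probability with which it converges. For transitivity I would take $\rel{\vecvarone}{\termone}{\cbnconleq}{\termtwo}$ and $\rel{\vecvarone}{\termtwo}{\cbnconleq}{\termthree}$, fix any $\ctxone\in\ctxsetp{\vecvarone}{\emptyset}$ with $\ctxone\ctxhole{\termone}\evp{\probone}$, apply the first hypothesis to get $\ctxone\ctxhole{\termtwo}\evp{\probtwo}$ with $\probone\leq\probtwo$, then the second to get $\ctxone\ctxhole{\termthree}\evp{\probthree}$ with $\probtwo\leq\probthree$, and conclude $\probone\leq\probthree$ by transitivity of $\leq$ on the reals.

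With the preorder property in hand, \Comone{} is simply reflexivity instantiated at a variable. For the compound clauses I would use transitivity together with Lemma~\ref{lemma:com3LR} and Lemma~\ref{lemma:com4LR}, which reduce \Comthree{} to \ComthreeL{} and \ComthreeR{}, and \Comfour{} to \ComfourL{} and \ComfourR{}. It therefore suffices to treat the single-sided clauses \Comtwo, \ComthreeL, \ComthreeR, \ComfourL, \ComfourR, each of which varies only one subterm. All of these I would handle by a uniform \emph{context-absorption} argument: to compare a compound term under an arbitrary closing context, absorb the top constructor (together with its unchanged sibling) into the testing context, thereby reducing the claim to the definition of $\cbnconleq$ applied to the single varying subterm.

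Concretely, for \ComthreeL{} I would assume $\rel{\vecvarone}{\termone}{\cbnconleq}{\termtwo}$ and fix an arbitrary $\ctxone\in\ctxsetp{\vecvarone}{\emptyset}$. Setting $\conttwo{\cdot}\defi\ctxone\ctxhole{\app{\ctxhole{\cdot}}{\termthree}}$, the inner context $\app{\ctxhole{\cdot}}{\termthree}$ lies in $\ctxsetp{\vecvarone}{\vecvarone}$ by rule \Ctxthree{} (using $\ctxhole{\cdot}\in\ctxsetp{\vecvarone}{\vecvarone}$ and $\termthree\in\LOP(\vecvarone)$), so $\conttwo{\cdot}\in\ctxsetp{\vecvarone}{\emptyset}$ by context composition. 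Since $\conttwo{\termone}=\ctxone\ctxhole{\app{\termone}{\termthree}}$ and $\conttwo{\termtwo}=\ctxone\ctxhole{\app{\termtwo}{\termthree}}$, applying the hypothesis to the context $\conttwo{\cdot}$ gives: if $\conttwo{\termone}\evp{\probone}$ then $\conttwo{\termtwo}\evp{\probtwo}$ with $\probone\leq\probtwo$; as $\ctxone$ was arbitrary, $\rel{\vecvarone}{\app{\termone}{\termthree}}{\cbnconleq}{\app{\termtwo}{\termthree}}$. The clauses \ComthreeR, \ComfourL, \ComfourR are identical with absorbing contexts $\app{\termthree}{\ctxhole{\cdot}}$, $\ps{\ctxhole{\cdot}}{\termthree}$, $\ps{\termthree}{\ctxhole{\cdot}}$, justified by rules \Ctxfour, \Ctxfive, \Ctxsix. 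For \Comtwo{} the absorbing context is $\abstr{\varone}{\ctxhole{\cdot}}$, which by rule \Ctxtwo{} lies in $\ctxsetp{\vecvarone\cup\{\varone\}}{\vecvarone}$; composing it inside $\ctxone\in\ctxsetp{\vecvarone}{\emptyset}$ yields a context in $\ctxsetp{\vecvarone\cup\{\varone\}}{\emptyset}$, against which the hypothesis $\rel{\vecvarone\cup\{\varone\}}{\termone}{\cbnconleq}{\termtwo}$ applies.

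The main obstacle I anticipate is the variable-index bookkeeping when composing the absorbing context with the arbitrary closing context, most visibly in \Comtwo{}, where the bound variable $\varone$ enters the ``before'' index of the composite. Lemma~\ref{lemma:fillholectx} is stated only for an inner context in $\ctxsetp{\vecvarone}{\vecvartwo}$ plugged into an outer context in $\ctxsetp{\vecvartwo}{\vecvartwo}$, whereas here the outer context lies in $\ctxsetp{\vecvarone}{\emptyset}$; even in the non-binding cases the outer index $\emptyset$ differs from the intermediate index. I therefore expect to need the mild generalisation that $\ctxone\in\ctxsetp{\vecvarone}{\vecvartwo}$ and $\ctxtwo\in\ctxsetp{\vecvartwo}{\overline{\varthree}}$ imply $\ctxtwo\ctxhole{\ctxone}\in\ctxsetp{\vecvarone}{\overline{\varthree}}$, which I would prove by the same straightforward induction on the derivation of $\ctxtwo\in\ctxsetp{\vecvartwo}{\overline{\varthree}}$ that underlies Lemma~\ref{lemma:fillholectx}, with Lemma~\ref{lemma:fillholeterm} ensuring the unchanged subterms stay well-scoped. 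Once this composition fact is available, every compatibility clause collapses to a one-line application of the definition of $\cbnconleq$, and together with the preorder property this shows $\cbnconleq$ is a precongruence.
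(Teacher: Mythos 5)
Your proposal is correct and follows essentially the same route as the paper's proof: transitivity by chaining the two hypotheses on one and the same testing context, \Comone{} trivially, and the remaining clauses by reducing \Comthree{}/\Comfour{} to their one-sided versions via Lemma~\ref{lemma:com3LR} and Lemma~\ref{lemma:com4LR} and then absorbing the top constructor into the closing context by context composition. Your point about needing the three-index generalisation of Lemma~\ref{lemma:fillholectx} is well taken but does not mark a difference in approach: the paper itself silently applies the lemma in exactly that generalised form (e.g.\ composing $\abstr{\varone}{\ctxhole{\cdot}}\in\ctxsetp{\vecvarone\cup\{\varone\}}{\vecvarone}$ into a context in $\ctxsetp{\vecvarone}{\emptyset}$), so your explicit strengthening only makes the argument more careful.
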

\begin{proof}
  Proving $\cbnconleq$ being a precongruence relation means to prove it
  transitive and compatible.  We start by proving $\cbnconleq$ being
  transitive, that is, for every $\vecvarone\in\powfin{\setvar}$ and for
  every $\termone,\,\termtwo,\,\termthree\in\LOPp{\vecvarone}$,
  $\rel{\vecvarone}{\termone}{\cbnconleq}{\termtwo}$ and
  $\rel{\vecvarone}{\termtwo}{\cbnconleq}{\termthree}$ imply
  $\rel{\vecvarone}{\termone}{\cbnconleq}{\termthree}$. By
  Definition~\ref{def:ctxeqCBN}, the latter boils down to prove that,
  the following hypotheses
  \begin{varitemize}
  \item For every $\ctxone$, $\ctxone\ctxhole{\termone}\evp{\probone}$
    implies $\ctxone\ctxhole{\termtwo}\evp{\probtwo}$, with
    $\probone\leq\probtwo$;
  \item For every $\ctxone$, $\ctxone\ctxhole{\termtwo}\evp{\probone}$
    implies $\ctxone\ctxhole{\termthree}\evp{\probtwo}$, with
    $\probone\leq\probtwo$,
  \item $\ctxtwo\ctxhole{\termone}\evp{\probthree}$
  \end{varitemize}
  imply $\ctxtwo\ctxhole{\termthree}\evp{\probfour}$, with
  $\probthree\leq\probfour$. We can easily apply the first hypothesis when
  $\ctxone$ is just $\ctxtwo$, then the second hypothesis (again with
  $\ctxone$ equal to $\ctxtwo$), and get the thesis.
  We prove $\cbnconleq$ of being a compatible relation starting from
  $\Comtwo$ property because $\Comone$ is trivially valid. In particular,
  we must show that, for every $\vecvarone\in\powfin{\setvar}$, for every
  $\varone\in\setvar-\{\vecvarone\}$ and for every $\termone,\,
  \termtwo\in\LOP(\vecvarone\cup\{\varone\})$, if
  $\rel{\vecvarone\cup\{\varone\}}{\termone}{\cbnconleq}{\termtwo}$ then
  $\rel{\vecvarone}{\abstr{\varone}{\termone}}{\cbnconleq}{\abstr{\varone}{\termtwo}}$.
  By Definition~\ref{def:ctxeqCBN}, the latter boils down to prove that,
  the following hypotheses
  \begin{varitemize}
  \item For every $\ctxone$, $\ctxone\ctxhole{\termone}\evp{\probone}$
    implies $\ctxone\ctxhole{\termtwo}\evp{\probtwo}$, with
    $\probone\leq\probtwo$,
  \item $\ctxtwo\ctxhole{\abstr{\varone}{\termone}}\evp{\probthree}$
  \end{varitemize}
  imply $\ctxtwo\ctxhole{\abstr{\varone}{\termtwo}}\evp{\probfour}$, with
  $\probthree\leq\probfour$. Since
  $\ctxtwo\in\ctxsetp{\vecvarone}{\emptyset}$, let us consider the context
  $\abstr{\varone}{\ctxhole{\cdot}}\in\ctxsetp{\vecvarone\cup\{\varone\}}{\vecvarone}$. Then,
  by Lemma~\ref{lemma:fillholectx}, the context $\ctxthree$ of the form
  $\ctxtwo\ctxhole{\abstr{\varone}{\ctxhole{\cdot}}}$ is in
  $\ctxsetp{\vecvarone\cup\{\varone\}}{\emptyset}$. Please note that, by
  Definition~\ref{def:context}, $\ctxtwo\ctxhole{\abstr{\varone}{\termone}} =
  \ctxthree\ctxhole{\termone}$ and, therefore, the second hypothesis can be
  rewritten as $\ctxthree\ctxhole{\termone}\evp{\probthree}$. Thus,
  it follows that $\ctxthree\ctxhole{\termtwo}\evp{\probfour}$, with
  $\probthree\leq\probfour$.
  Moreover, observe that $\ctxthree\ctxhole{\termtwo}$ is nothing else than
  $\ctxtwo\ctxhole{\abstr{\varone}{\termtwo}}$.  Since we have just proved
  $\cbnconleq$ of being transitive, we prove $\Comthree$ property by
  showing that $\ComthreeL$ and $\ComthreeR$ hold. In fact, recall that by
  Lemma~\ref{lemma:com3LR}, the latter two, together, imply the former. In
  particular, to prove $\ComthreeL$ we must show that, for every
  $\vecvarone\in\powfin{\setvar}$ and for every $\termone,\,
  \termtwo,\,\termthree\in\LOP(\vecvarone)$, if
  $\rel{\vecvarone}{\termone}{\cbnconleq}{\termtwo}$ then
  $\rel{\vecvarone}{\app{\termone}{\termthree}}{\cbnconleq}{\app{\termtwo}{\termthree}}$.
  By Definition~\ref{def:ctxeqCBN}, the latter boils down to prove that,
  the following hypothesis
  \begin{varitemize}
  \item For every $\ctxone$, $\ctxone\ctxhole{\termone}\evp{\probone}$
    implies $\ctxone\ctxhole{\termtwo}\evp{\probtwo}$, with
    $\probone\leq\probtwo$,
  \item $\ctxtwo\ctxhole{\app{\termone}{\termthree}}\evp{\probthree}$
  \end{varitemize}
  imply $\ctxtwo\ctxhole{\app{\termtwo}{\termthree}}\evp{\probfour}$, with
  $\probthree\leq\probfour$. Since
  $\ctxtwo\in\ctxsetp{\vecvarone}{\emptyset}$, let us consider the context
  $\app{\ctxhole{\cdot}}{\termthree}\in\ctxsetp{\vecvarone}{\vecvarone}$. Then,
  by Lemma~\ref{lemma:fillholectx}, the context $\ctxthree$ of the form
  $\ctxtwo\ctxhole{\app{\ctxhole{\cdot}}{\termthree}}$ is in
  $\ctxsetp{\vecvarone}{\emptyset}$. Please note that, by
  Definition~\ref{def:context}, $\ctxtwo\ctxhole{\app{\termone}{\termthree}} =
  \ctxthree\ctxhole{\termone}$ and, therefore, the second hypothesis can be
  rewritten as $\ctxthree\ctxhole{\termone}\evp{\probthree}$.  Thus, 
  it follows that $\ctxthree\ctxhole{\termtwo}\evp{\probfour}$, with
  $\probthree\leq\probfour$. 
  Moreover, observe that $\ctxthree\ctxhole{\termtwo}$ is nothing else than
  $\ctxtwo\ctxhole{\abstr{\varone}{\termtwo}}$.
  We do not detail the proof for $\ComthreeR$ that follows the reasoning
  made for $\ComthreeL$, but considering $\ctxthree$ as the context
  $\ctxtwo\ctxhole{\app{\termthree}{\ctxhole{\cdot}}}$.
  Proving $\Comfour$ follows the same pattern resulted for $\Comthree$. In
  fact, by Lemma~\ref{lemma:com4LR}, $\ComfourL$ and $\ComfourR$ together
  imply $\Comfour$. We do not detail the proofs since they proceed the
  reasoning made for $\ComthreeL$, considering the appropriate context each
  time. This concludes the proof.
\end{proof}

\begin{corollary}
  The context equivalence $\cbnconequiv$ is a congruence relation.
\end{corollary}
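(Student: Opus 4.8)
The plan is to derive this directly from Lemma~\ref{lemma:ctxprecCBN}, which already establishes that $\cbnconleq$ is a precongruence, i.e. a compatible preorder. Recall from Definition~\ref{def:ctxeqCBN} that $\rel{\vecvarone}{\termone}{\cbnconequiv}{\termtwo}$ holds precisely when both $\rel{\vecvarone}{\termone}{\cbnconleq}{\termtwo}$ and $\rel{\vecvarone}{\termtwo}{\cbnconleq}{\termone}$ do; in other words $\cbnconequiv$ is the symmetric core $\cbnconleq\cap\cbnconleq^{\mathit{op}}$ of the context preorder. It therefore suffices to observe that the symmetric core of any compatible preorder is itself a compatible equivalence relation, hence a congruence.

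First I would check that $\cbnconequiv$ is an equivalence relation. Symmetry is immediate from the definition. Reflexivity and transitivity follow from the corresponding properties of $\cbnconleq$: since $\cbnconleq$ is a preorder (Lemma~\ref{lemma:ctxprecCBN}), $\rel{\vecvarone}{\termone}{\cbnconleq}{\termone}$ always holds, giving reflexivity of $\cbnconequiv$; and if $\rel{\vecvarone}{\termone}{\cbnconequiv}{\termtwo}$ and $\rel{\vecvarone}{\termtwo}{\cbnconequiv}{\termthree}$, then chaining the two $\cbnconleq$-inequalities in each direction via transitivity of $\cbnconleq$ yields $\rel{\vecvarone}{\termone}{\cbnconequiv}{\termthree}$.

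Next I would verify compatibility, i.e. the four conditions $\Comone$--$\Comfour$ for $\cbnconequiv$. The key idea is that each compatibility rule of $\cbnconleq$ can be applied in both directions. The cases $\Comone$ and $\Comtwo$ are routine. For $\Comthree$, suppose $\rel{\vecvarone}{\termone}{\cbnconequiv}{\termtwo}$ and $\rel{\vecvarone}{\termthree}{\cbnconequiv}{\termfour}$; unfolding the definition gives $\rel{\vecvarone}{\termone}{\cbnconleq}{\termtwo}$, $\rel{\vecvarone}{\termtwo}{\cbnconleq}{\termone}$, $\rel{\vecvarone}{\termthree}{\cbnconleq}{\termfour}$ and $\rel{\vecvarone}{\termfour}{\cbnconleq}{\termthree}$. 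Applying $\Comthree$ for $\cbnconleq$ to the first and third yields $\rel{\vecvarone}{\app{\termone}{\termthree}}{\cbnconleq}{\app{\termtwo}{\termfour}}$, while applying it to the second and fourth yields $\rel{\vecvarone}{\app{\termtwo}{\termfour}}{\cbnconleq}{\app{\termone}{\termthree}}$; together these give $\rel{\vecvarone}{\app{\termone}{\termthree}}{\cbnconequiv}{\app{\termtwo}{\termfour}}$. The case $\Comfour$ is identical, with $\oplus$ in place of application.

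Since this corollary is an entirely formal consequence of Lemma~\ref{lemma:ctxprecCBN}, I do not expect any genuine obstacle: the context preorder already carries all the operational content, and passing to its symmetric core is purely structural. The only point requiring (minimal) care is the symmetric bookkeeping in the compatibility step, ensuring that each clause of $\cbnconleq$-compatibility is invoked once per direction so that both halves of the $\cbnconequiv$-relation are produced.
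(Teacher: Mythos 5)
Your proof is correct and follows exactly the route the paper takes: the paper's own proof is the one-line observation that the result is a straightforward consequence of $\cbnconequiv=\cbnconleq\cap\cbnconleq^{\mathit{op}}$ together with Lemma~\ref{lemma:ctxprecCBN}. You have simply spelled out the routine details (symmetric core of a compatible preorder is a congruence) that the paper leaves implicit.
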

\begin{proof}
  Straightforward consequence of the definition
  $\cbnconequiv=\cbnconleq\cap\cbnconleq^{\mathit{op}}$.
\end{proof}

\begin{lemma}\label{lemma:ctxcomprel}
  Let $\relone$ be a compatible $\LOP$-relation. If
  $\rel{\vecvarone}{\termone}{\relone}{\termtwo}$ and
  $\ctxone\in\ctxsetp{\vecvarone}{\vecvartwo}$, then
  $\rel{\vecvartwo}{\ctxone\ctxhole{\termone}}{\relone}{\ctxone\ctxhole{\termtwo}}$.
\end{lemma}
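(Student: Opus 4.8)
The plan is to argue by induction on the derivation of $\ctxone\in\ctxsetp{\vecvarone}{\vecvartwo}$ from the rules $\Ctxone$–$\Ctxsix$, which amounts to a structural induction on the context $\ctxone$. At each step I will invoke the matching compatibility clause among $\Comone$–$\Comfour$ to push the relatedness of $\termone$ and $\termtwo$ through one constructor of the context, relying on the equations of Definition~\ref{def:context} that describe how hole-filling commutes with each constructor.

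Before entering the induction I would first record a preliminary fact that is used repeatedly: every compatible $\LOP$-relation is reflexive. This follows by a short separate structural induction on terms, using $\Comone$ for the variable case and then $\Comtwo$, $\Comthree$, $\Comfour$ to lift reflexivity through abstraction, application, and probabilistic sum respectively. Reflexivity is precisely what is needed in those context cases where the hole sits next to a \emph{fixed} subterm, since that subterm must be related to itself in order to apply $\Comthree$ or $\Comfour$.

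For the induction proper: the base case $\Ctxone$ has $\ctxone=\ctxhole{\cdot}$ and $\vecvartwo=\vecvarone$, so $\ctxone\ctxhole{\termone}=\termone$ and $\ctxone\ctxhole{\termtwo}=\termtwo$, and the goal is exactly the hypothesis $\rel{\vecvarone}{\termone}{\relone}{\termtwo}$. In case $\Ctxtwo$, where $\ctxone=\abstr{\varone}{\ctxtwo}$ with $\ctxtwo\in\ctxsetp{\vecvarone}{\vecvartwo\cup\{\varone\}}$, the induction hypothesis yields $\rel{\vecvartwo\cup\{\varone\}}{\ctxtwo\ctxhole{\termone}}{\relone}{\ctxtwo\ctxhole{\termtwo}}$, and $\Comtwo$ then gives the relatedness of the two abstractions, using $(\abstr{\varone}{\ctxtwo})\ctxhole{\termone}=\abstr{\varone}{\ctxtwo\ctxhole{\termone}}$. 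Cases $\Ctxthree$ and $\Ctxfour$ are dispatched by $\Comthree$: the induction hypothesis supplies relatedness for the subcontext, the reflexivity fact supplies relatedness for the fixed argument, and $\Comthree$ combines the two. Cases $\Ctxfive$ and $\Ctxsix$ are identical with $\Comfour$ replacing $\Comthree$.

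The argument is essentially routine, being the standard statement that compatibility is exactly what makes a relation stable under context filling; I expect no genuine obstacle. The one point meriting care—the nearest thing to a difficulty—is having the reflexivity lemma established up front, since the definition of compatibility asserts reflexivity only on variables directly (via $\Comone$), with reflexivity on compound terms obtained only afterwards by induction. It is this derived reflexivity that licenses treating the non-hole subterms as related to themselves in the application and sum cases.
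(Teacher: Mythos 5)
Your proof is correct and follows essentially the same route as the paper's: induction on the derivation of $\ctxone\in\ctxsetp{\vecvarone}{\vecvartwo}$, dispatching each case with the matching compatibility clause applied to the induction hypothesis. The only difference is that you make explicit the derived fact that a compatible relation is reflexive on all terms (needed to relate the fixed subterm to itself in the \Ctxthree--\Ctxsix\ cases), a point the paper uses implicitly when it invokes compatibility there.
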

\begin{proof}
  By induction on the derivation of
  $\ctxone\in\ctxsetp{\vecvarone}{\vecvartwo}$:
  \begin{varitemize}
  \item If $\ctxone$ is due to $\Ctxone$ then $\ctxone =
    \ctxhole{\cdot}$. Thus, $\ctxone\ctxhole{\termone} = \termone$,
    $\ctxone\ctxhole{\termtwo} = \termtwo$ and the result trivially holds.
  \item If $\Ctxtwo$ is the last rule used, then $\ctxone =
    \abstr{\varone}{\ctxtwo}$, with
    $\ctxtwo\in\ctxsetp{\vecvarone}{\vecvartwo\cup\{\varone\}}$. By
    induction hypothesis, it holds that
    $\rel{\vecvartwo\cup\{\varone\}}{\ctxtwo\ctxhole{\termone}}{\relone}{\ctxtwo\ctxhole{\termtwo}}$. Since
    $\relone$ is a compatible relation, it follows
    $\rel{\vecvartwo}{\abstr{\varone}{\ctxtwo\ctxhole{\termone}}}{\relone}{\abstr{\varone}{\ctxtwo\ctxhole{\termtwo}}}$,
    that is
    $\rel{\vecvartwo}{\ctxone\ctxhole{\termone}}{\relone}{\ctxone\ctxhole{\termtwo}}$.
  \item If $\Ctxthree$ is the last rule used, then $\ctxone =
    \app{\ctxtwo}{\termthree}$, with
    $\ctxtwo\in\ctxsetp{\vecvarone}{\vecvartwo}$ and
    $\termthree\in\LOP(\vecvartwo)$. By induction hypothesis, it holds that
    $\rel{\vecvartwo}{\ctxtwo\ctxhole{\termone}}{\relone}{\ctxtwo\ctxhole{\termtwo}}$.
    Since $\relone$ is a compatible relation, it follows
    $\rel{\vecvartwo}{\app{\ctxtwo\ctxhole{\termone}}{\termthree}}{\relone}{\app{\ctxtwo\ctxhole{\termtwo}}{\termthree}}$,
    which by definition means
    $\rel{\vecvartwo}{(\app{\ctxtwo}{\termthree})\ctxhole{\termone}}{\relone}{(\app{\ctxtwo}{\termthree})\ctxhole{\termtwo}}$. Hence,
    the result
    $\rel{\vecvartwo}{\ctxone\ctxhole{\termone}}{\relone}{\ctxone\ctxhole{\termtwo}}$
    holds. The case of rule $\Ctxfour$ holds by a similar reasoning.
  \item If $\Ctxfive$ is the last rule used, then $\ctxone =
    \ps{\ctxtwo}{\termthree}$, with
    $\ctxtwo\in\ctxsetp{\vecvarone}{\vecvartwo}$ and
    $\termthree\in\LOP(\vecvartwo)$. By induction hypothesis, it holds that
    $\rel{\vecvartwo}{\ctxtwo\ctxhole{\termone}}{\relone}{\ctxtwo\ctxhole{\termtwo}}$.
    Since $\relone$ is a compatible relation, it follows
    $\rel{\vecvartwo}{\ps{\ctxtwo\ctxhole{\termone}}{\termthree}}{\relone}{\ps{\ctxtwo\ctxhole{\termtwo}}{\termthree}}$,
    which by definition means
    $\rel{\vecvartwo}{(\ps{\ctxtwo}{\termthree})\ctxhole{\termone}}{\relone}{(\ps{\ctxtwo}{\termthree})\ctxhole{\termtwo}}$. Hence,
    the result
    $\rel{\vecvartwo}{\ctxone\ctxhole{\termone}}{\relone}{\ctxone\ctxhole{\termtwo}}$
    holds. The case of rule $\Ctxsix$ holds by a similar reasoning.
  \end{varitemize}
  This concludes the proof.
\end{proof}

\begin{lemma}\label{lemma:ctxbisimCBN}
  If $\rel{\vecvarone}{\termone}{\cbnpab}{\termtwo}$ and
  $\ctxone\in\ctxsetp{\vecvarone}{\vecvartwo}$, then
  $\rel{\vecvartwo}{\ctxone\ctxhole{\termone}}{\cbnpab}{\ctxone\ctxhole{\termtwo}}$.
\end{lemma}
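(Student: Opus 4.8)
The plan is to recognize that this statement is an immediate instance of the more general Lemma~\ref{lemma:ctxcomprel}, once we know that $\cbnpab$ is compatible. The hypothesis of Lemma~\ref{lemma:ctxcomprel} is a \emph{compatible} $\LOP$-relation $\relone$, together with a pair $\rel{\vecvarone}{\termone}{\relone}{\termtwo}$ and a context $\ctxone\in\ctxsetp{\vecvarone}{\vecvartwo}$; its conclusion is $\rel{\vecvartwo}{\ctxone\ctxhole{\termone}}{\relone}{\ctxone\ctxhole{\termtwo}}$. So the first, and essentially only, step is to instantiate $\relone$ as $\cbnpab$ and to check that this relation is compatible.

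Compatibility of $\cbnpab$ is exactly what Corollary~\ref{cor:pabcongrCBN} delivers: there $\cbnpab$ is shown to be a congruence relation, and a congruence is by definition a compatible $\LOP$-relation (that is moreover an equivalence). In particular the four clauses $\Comone$, $\Comtwo$, $\Comthree$, $\Comfour$ all hold for $\cbnpab$. With this in place, I would simply apply Lemma~\ref{lemma:ctxcomprel} with $\relone = \cbnpab$: from $\rel{\vecvarone}{\termone}{\cbnpab}{\termtwo}$ and $\ctxone\in\ctxsetp{\vecvarone}{\vecvartwo}$ it yields $\rel{\vecvartwo}{\ctxone\ctxhole{\termone}}{\cbnpab}{\ctxone\ctxhole{\termtwo}}$, which is precisely the thesis.

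There is no genuine obstacle at this point: all the substantive work has already been carried out upstream. The congruence of $\cbnpab$ rests on Theorem~\ref{thm:pasprecongrCBN} (precongruence of $\cbnpas$), which in turn relies on the Key Lemma~\ref{lemma:keylemma} and the full Howe's-lifting machinery; and Lemma~\ref{lemma:ctxcomprel}, which performs the induction on the structure of the context derivation (rules $\Ctxone$ through $\Ctxsix$, invoking the relevant compatibility clause at each step), has likewise been proved in full generality for arbitrary compatible relations. The present statement is therefore a one-line corollary of these two facts, recording that bisimilar terms may be substituted freely into any context while preserving bisimilarity.
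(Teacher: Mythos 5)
Your proof is correct and follows essentially the same route as the paper: both reduce the statement to Lemma~\ref{lemma:ctxcomprel}, the only difference being that you invoke compatibility of $\cbnpab$ directly via Corollary~\ref{cor:pabcongrCBN}, whereas the paper unfolds $\cbnpab$ as $\cbnpas\cap\cbnpas^{\mathit{op}}$ (Proposition~\ref{prop:pab=pascopas}) and applies the lemma to $\cbnpas$ in each direction using Theorem~\ref{thm:pasprecongrCBN}. Since Corollary~\ref{cor:pabcongrCBN} is itself proved by exactly that decomposition, the two arguments are the same proof with the steps grouped differently.
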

\begin{proof}
  Since $\cbnpab = \cbnpas \cap \cbnpas^{\mathit{op}}$ by
  Proposition~\ref{prop:pab=pascopas},
  $\rel{\vecvarone}{\termone}{\cbnpab}{\termtwo}$ implies
  $\rel{\vecvarone}{\termone}{\cbnpas}{\termtwo}$ and
  $\rel{\vecvarone}{\termtwo}{\cbnpas}{\termone}$. Since, by
  Theorem~\ref{thm:pasprecongrCBN}, $\cbnpas$ is a precongruence hence a
  compatible relation,
  $\rel{\vecvartwo}{\ctxone\ctxhole{\termone}}{\cbnpas}{\ctxone\ctxhole{\termtwo}}$
  and
  $\rel{\vecvartwo}{\ctxone\ctxhole{\termtwo}}{\cbnpas}{\ctxone\ctxhole{\termone}}$
  follow by Lemma~\ref{lemma:ctxcomprel}, i.e. 
  $\rel{\vecvartwo}{\ctxone\ctxhole{\termone}}{\cbnpab}{\ctxone\ctxhole{\termtwo}}$.
\end{proof}
\begin{theorem}\label{thm:pab_ce}
  For all $\vecvarone\in\powfin{\setvar}$ and every $\termone,\,
  \termtwo\in\LOP(\vecvarone)$,
  $\rel{\vecvarone}{\termone}{\cbnpab}{\termtwo}$ implies
  $\rel{\vecvarone}{\termone}{\cbnconequiv}{\termtwo}$.
\end{theorem}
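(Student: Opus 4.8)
The plan is to derive this theorem as an easy corollary of the two facts already established: that $\cbnpab$ is a congruence (Corollary~\ref{cor:pabcongrCBN}, used here through its contextual reformulation Lemma~\ref{lemma:ctxbisimCBN}) and that bisimilar terms have the same convergence probability (Adequacy, Lemma~\ref{lemma:sumsempabCBN}). First I would unfold the definition of context equivalence from Definition~\ref{def:ctxeqCBN}: $\rel{\vecvarone}{\termone}{\cbnconequiv}{\termtwo}$ holds exactly when both $\rel{\vecvarone}{\termone}{\cbnconleq}{\termtwo}$ and $\rel{\vecvarone}{\termtwo}{\cbnconleq}{\termone}$ hold. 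Since $\cbnpab$ is symmetric, the hypothesis $\rel{\vecvarone}{\termone}{\cbnpab}{\termtwo}$ also gives $\rel{\vecvarone}{\termtwo}{\cbnpab}{\termone}$, so it suffices to establish a single implication, namely that $\rel{\vecvarone}{\termone}{\cbnpab}{\termtwo}$ implies $\rel{\vecvarone}{\termone}{\cbnconleq}{\termtwo}$; the reverse preorder then follows by running the same argument on the symmetric pair.

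For this core implication I would fix an arbitrary closing context $\ctxone\in\ctxsetp{\vecvarone}{\emptyset}$ and suppose $\ctxone\ctxhole{\termone}\evp{\probone}$, that is $\sumsem{\ctxone\ctxhole{\termone}}=\probone$. By Lemma~\ref{lemma:ctxbisimCBN}, the congruence of $\cbnpab$ with respect to contexts, the hypothesis $\rel{\vecvarone}{\termone}{\cbnpab}{\termtwo}$ yields $\rel{\emptyset}{\ctxone\ctxhole{\termone}}{\cbnpab}{\ctxone\ctxhole{\termtwo}}$, where both filled terms are closed by Lemma~\ref{lemma:fillholeterm}. Then the Adequacy Lemma~\ref{lemma:sumsempabCBN} applies and gives $\sumsem{\ctxone\ctxhole{\termone}}=\sumsem{\ctxone\ctxhole{\termtwo}}$, hence $\ctxone\ctxhole{\termtwo}\evp{\probtwo}$ with $\probtwo=\probone$. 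In particular $\probone\leq\probtwo$, which is precisely the condition required in Definition~\ref{def:ctxeqCBN}. As $\ctxone$ was an arbitrary closing context, this establishes $\rel{\vecvarone}{\termone}{\cbnconleq}{\termtwo}$, and combining with the symmetric instance concludes the proof.

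I do not expect any genuine obstacle at this stage: all the difficulty has already been absorbed into the congruence result (whose hard kernel is the Key Lemma~\ref{lemma:keylemma} and the disentangling Lemma~\ref{lemma:disentangling}) and into adequacy. The only point worth remarking is that the present statement asks only for the preorder inequality $\probone\leq\probtwo$, whereas adequacy in fact delivers the stronger equality $\probone=\probtwo$; this is harmless, since equality trivially entails the required inequality, and it is exactly what lets the forward and symmetric directions go through uniformly.
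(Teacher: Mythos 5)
Your proposal is correct and follows essentially the same route as the paper's proof: both reduce the statement to Lemma~\ref{lemma:ctxbisimCBN} (closure of $\cbnpab$ under contexts, itself a corollary of the precongruence theorem) followed by the Adequacy Lemma~\ref{lemma:sumsempabCBN}, which yields equal convergence probabilities for $\ctxone\ctxhole{\termone}$ and $\ctxone\ctxhole{\termtwo}$ in every closing context. The only cosmetic difference is that you split $\cbnconequiv$ into its two $\cbnconleq$ halves and invoke symmetry, whereas the paper concludes both directions at once from the equality of sums; the mathematical content is identical.
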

\begin{proof}
  If $\rel{\vecvarone}{\termone}{\cbnpab}{\termtwo}$, then for every
  $\ctxone\in\ctxsetp{\vecvarone}{\emptyset}$,
  $\rel{\emptyset}{\ctxone\ctxhole{\termone}}{\cbnpab}{\ctxone\ctxhole{\termtwo}}$
  follows by Lemma~\ref{lemma:ctxbisimCBN}. By
  Lemma~\ref{lemma:sumsempabCBN}, the latter implies
  $\sumsem{\ctxone\ctxhole{\termone}}=\probone=\sumsem{\ctxone\ctxhole{\termtwo}}$. This
  means in particular that $\ctxone\ctxhole{\termone}\evp{\probone}$ iff
  $\ctxone\ctxhole{\termtwo}\evp{\probone}$, which is equivalent to
  $\rel{\vecvarone}{\termone}{\cbnconequiv}{\termtwo}$ by definition.
\end{proof}
The converse inclusion fails. A counterexample 
is described in the following.
\begin{example}\label{ex:count}
  For $\termone\defi\abstr{\varone}{\ps{\termthree}{\termfour}}$
  and $\termtwo\defi\ps{(\abstr{\varone}{\termthree})}{(\abstr{\varone}{\termfour})}$
  (where $\termthree$ is $\abstr{\vartwo}{\Omega}$ and $\termfour$ is $\abstr{\vartwo}{\abstr{\varthree}{\Omega}}$), 
  we have $\termone\not\cbnsimleq\termtwo$, hence
  $\termone\not\cbnpab\termtwo$, but
  $\termone\cbnconequiv\termtwo$.
\end{example}
We prove that the above two terms are context equivalent by means
of \emph{CIU-equivalence}. This is a relation that can be shown to
coincide with context equivalence by a Context Lemma, itself proved by the
Howe's technique.  See Section~\ref{sec:cfctxeq} and
Section~\ref{sec:ciu-eq} for supplementary details on the above
counterexample.  

\section{Context Free Context Equivalence}\label{sec:cfctxeq}

We present here a way of treating the problem of too concrete
representations of contexts: right now, we cannot basically work up-to
$\alpha$-equivalence classes of contexts. Let us dispense with them
entirely, and work instead with a coinductive characterization of the
context preorder, and equivalence, phrased in terms of $\LOP$-relations.

\begin{definition}\label{def:adequateCBN}
  A $\LOP$-relation $\relone$ is said to be \textit{adequate} if, for every
  $\termone,\,\termtwo\in\LOPp{\emptyset}$,
  $\rel{\emptyset}{\termone}{\relone}{\termtwo}$ implies
  $\termone\evp{\probone}$ and $\termtwo\evp{\probtwo}$, with
  $\probone\leq\probtwo$.
\end{definition}
Let us indicate with $\caset$ the collection of all compatible and adequate
$\LOP$-relations and let
\begin{equation}
  \label{eq:cfctxCBN}
  \cbncfleq \defi \bigcup\,\caset.
\end{equation}
It turns out that the context preorder $\cbnconleq$ is the largest
$\LOP$-relation that is both compatible and adequate, that is $\cbnconleq =
\cbncfleq$. Let us proceed towards a proof for the latter.

\begin{lemma}\label{lemma:cfctxcomp}
  For every $\relone,\reltwo\in\caset$, $\relone\circ\reltwo\in\caset$.
\end{lemma}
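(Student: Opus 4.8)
The plan is to show directly that $\relone\circ\reltwo$ satisfies both defining properties of $\caset$, namely compatibility and adequacy, given that each of $\relone$ and $\reltwo$ does. Recall that composition of $\LOP$-relations is taken sortwise: $\rel{\vecvarone}{\termone}{(\relone\circ\reltwo)}{\termthree}$ holds iff there is some $\termtwo\in\LOPp{\vecvarone}$ with $\rel{\vecvarone}{\termone}{\relone}{\termtwo}$ and $\rel{\vecvarone}{\termtwo}{\reltwo}{\termthree}$. Note that the witness $\termtwo$ automatically lives in $\LOPp{\vecvarone}$; I will use this in the adequacy argument when $\vecvarone=\emptyset$, to guarantee the witness is closed.

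For compatibility I would verify $\Comone$--$\Comfour$ in turn, each time threading the composition through the common intermediate term. For $\Comone$, since $\relone$ and $\reltwo$ are compatible they each satisfy $\Comone$, so $\rel{\vecvarone}{\varone}{\relone}{\varone}$ and $\rel{\vecvarone}{\varone}{\reltwo}{\varone}$ hold for $\varone\in\vecvarone$, and composing gives $\rel{\vecvarone}{\varone}{(\relone\circ\reltwo)}{\varone}$. For $\Comtwo$, from $\rel{\vecvarone\cup\{\varone\}}{\termone}{(\relone\circ\reltwo)}{\termtwo}$ pick the witness $\termthree$; applying $\Comtwo$ for $\relone$ and for $\reltwo$ separately yields $\rel{\vecvarone}{\abstr{\varone}{\termone}}{\relone}{\abstr{\varone}{\termthree}}$ and $\rel{\vecvarone}{\abstr{\varone}{\termthree}}{\reltwo}{\abstr{\varone}{\termtwo}}$, which recompose to the required relation. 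The cases $\Comthree$ and $\Comfour$ are entirely analogous: given two composed hypotheses, one chooses a witness for each, applies $\Comthree$ (resp. $\Comfour$) of $\relone$ to the left halves and of $\reltwo$ to the right halves, and composes through the intermediate application $\app{\cdot}{\cdot}$ (resp. sum $\ps{\cdot}{\cdot}$). None of this uses the operational semantics, only the purely syntactic closure conditions of the two component relations.

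For adequacy, suppose $\rel{\emptyset}{\termone}{(\relone\circ\reltwo)}{\termthree}$ and let $\termtwo\in\LOPp{\emptyset}$ be the witness, so that $\rel{\emptyset}{\termone}{\relone}{\termtwo}$ and $\rel{\emptyset}{\termtwo}{\reltwo}{\termthree}$. Adequacy of $\relone$ gives $\termone\evp{\probone}$ and $\termtwo\evp{\probtwo}$ with $\probone\le\probtwo$, while adequacy of $\reltwo$ gives $\termtwo\evp{\probfour}$ and $\termthree\evp{\probthree}$ with $\probfour\le\probthree$. The crucial observation is that $\evp{\cdot}$ is functional: $\termtwo\evp{\probtwo}$ abbreviates $\sumsem{\termtwo}=\probtwo$, so the two probabilities attributed to $\termtwo$ must coincide, $\probtwo=\probfour=\sumsem{\termtwo}$. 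Hence $\probone\le\probtwo=\probfour\le\probthree$, and we conclude $\termone\evp{\probone}$ and $\termthree\evp{\probthree}$ with $\probone\le\probthree$, which is exactly adequacy of $\relone\circ\reltwo$.

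The argument is almost mechanical; the only genuinely non-formal point --- and thus the step to get right --- is the adequacy part, where one must notice that the convergence probability of the intermediate term $\termtwo$ is uniquely determined as $\sumsem{\termtwo}$. Without this functionality one could only chain $\probone\le\probtwo$ and $\probfour\le\probthree$ with no link between $\probtwo$ and $\probfour$; it is precisely the fact that $\evp{\cdot}$ records the single quantity $\sumsem{\cdot}$ that lets the two inequalities compose. The remaining care is purely bookkeeping: ensuring that the composition witness at sort $\emptyset$ is itself a closed term, which holds by the definition of $\LOP$-relation composition, so that adequacy of the components may legitimately be invoked.
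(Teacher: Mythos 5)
Your proof is correct and follows essentially the same route as the paper's: adequacy is obtained by chaining the two inequalities through the closed intermediate witness (the paper does this implicitly via the same functionality of $\evp{\cdot}$ that you spell out), and compatibility \Comone--\Comfour\ is verified by threading each composed hypothesis through its witness and applying the corresponding compatibility property of $\relone$ and $\reltwo$ separately. The only cosmetic difference is that the paper derives \Comone\ by observing that the identity relation is contained in $\relone\circ\reltwo$, whereas you compose the two instances of \Comone\ directly; both are fine.
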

\begin{proof}
  We need to show that $\relone\circ\reltwo =
  \{(\termone,\termtwo)\,|\,\exists\,\termthree\in\LOPp{\vecvarone}.\,\rel{\vecvarone}{\termone}{\relone}{\termthree}\,
  \wedge\,\rel{\vecvarone}{\termthree}{\reltwo}{\termtwo}\}$ is a
  compatible and adequate $\LOP$-relation.  Obviously,
  $\relone\circ\reltwo$ is adequate: for every
  $(\termone,\termtwo)\in\relone\circ\reltwo$, there exists a term
  $\termthree$ such that
  $\termone\evp{\probone}\Rightarrow\termthree\evp{\probtwo}\Rightarrow\termtwo\evp{\probthree}$,
  with $\probone\leq\probtwo\leq\probthree$. Then,
  $\termone\evp{\probone}\Rightarrow\termtwo\evp{\probthree}$, with
  $\probone\leq\probthree$.  Note that the identity relation $\id \defi
  \{(\termone,\termone)\,|\,\termone\in\LOPp{\vecvarone}\}$ is in
  $\relone\circ\reltwo$. Then, $\relone\circ\reltwo$ is reflexive and, in
  particular, satisfies compatibility property $\Comone$.  Proving
  $\Comtwo$ means to show that, if
  $\rel{\vecvarone\cup\{\varone\}}{\termone}{(\relone\circ\reltwo)}{\termtwo}$,
  then
  $\rel{\vecvarone}{\abstr{\varone}{\termone}}{(\relone\circ\reltwo)}{\abstr{\varone}{\termtwo}}$. From
  the hypothesis, it follows that there exists a term $\termthree$ such
  that $\rel{\vecvarone\cup\{\varone\}}{\termone}{\relone}{\termthree}$ and
  $\rel{\vecvarone\cup\{\varone\}}{\termthree}{\reltwo}{\termtwo}$. Since
  both $\relone$ and $\reltwo$ are in $\caset$, hence compatible, it holds
  $\rel{\vecvarone}{\abstr{\varone}{\termone}}{\relone}{\abstr{\varone}{\termthree}}$
  and
  $\rel{\vecvarone}{\abstr{\varone}{\termthree}}{\reltwo}{\abstr{\varone}{\termtwo}}$. The
  latter together imply
  $\rel{\vecvarone}{\abstr{\varone}{\termone}}{(\relone\circ\reltwo)}{\abstr{\varone}{\termtwo}}$.
  Proving $\Comthree$ means to show that, if
  $\rel{\vecvarone}{\termone}{(\relone\circ\reltwo)}{\termtwo}$ and
  $\rel{\vecvarone}{\termfive}{(\relone\circ\reltwo)}{\termsix}$, then
  $\rel{\vecvarone}{\app{\termone}{\termfive}}{(\relone\circ\reltwo)}{\app{\termtwo}{\termsix}}$.
  From the hypothesis, it follows that there exist two terms
  $\termthree,\,\termfour$ such that, on the one hand,
  $\rel{\vecvarone}{\termone}{\relone}{\termthree}$ and
  $\rel{\vecvarone}{\termthree}{\reltwo}{\termtwo}$, and on the other hand,
  $\rel{\vecvarone}{\termfive}{\relone}{\termfour}$ and
  $\rel{\vecvarone}{\termfour}{\reltwo}{\termsix}$. Since both $\relone$
  and $\reltwo$ are in $\caset$, hence compatible, it holds:
  $$
  \rel{\vecvarone}{\termone}{\relone}{\termthree}\,\wedge\,\rel{\vecvarone}{\termfive}{\relone}{\termfour}\Rightarrow
  \rel{\vecvarone}{\app{\termone}{\termfive}}{\relone}{\app{\termthree}{\termfour}};
  $$
  $$
  \rel{\vecvarone}{\termthree}{\reltwo}{\termtwo}\,\wedge\,\rel{\vecvarone}{\termfour}{\reltwo}{\termsix}\Rightarrow
  \rel{\vecvarone}{\app{\termthree}{\termfour}}{\reltwo}{\app{\termtwo}{\termsix}}.
  $$
  The two together imply
  $\rel{\vecvarone}{\app{\termone}{\termfive}}{(\relone\circ\reltwo)}{\app{\termtwo}{\termsix}}$.

  Proceeding in the same fashion, one can easily prove property $\Comfour$.
\end{proof}

\begin{lemma}\label{lemma:cfprCBNadeq}
  $\LOP$-relation $\cbncfleq$ is adequate.
\end{lemma}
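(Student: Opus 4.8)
The plan is to observe that adequacy, unlike compatibility, is a purely \emph{pointwise} condition on related pairs of closed terms, and is therefore preserved under arbitrary unions with no extra work. So the whole argument reduces to unfolding the definition of $\cbncfleq$ as a union and invoking adequacy of a single witnessing member of $\caset$.

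Concretely, I would first fix arbitrary $\termone,\termtwo\in\LOPp{\emptyset}$ with $\rel{\emptyset}{\termone}{\cbncfleq}{\termtwo}$. By the defining equation \reff{eq:cfctxCBN}, namely $\cbncfleq=\bigcup\caset$, this triple must already lie in some member of the union; that is, there exists a compatible and adequate $\LOP$-relation $\relone\in\caset$ with $\rel{\emptyset}{\termone}{\relone}{\termtwo}$. I would then simply apply Definition~\ref{def:adequateCBN} to $\relone$: since $\relone$ is adequate, there are $\probone,\probtwo$ with $\termone\evp{\probone}$, $\termtwo\evp{\probtwo}$ and $\probone\leq\probtwo$. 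Crucially, the convergence probabilities here are $\probone=\sumsem{\termone}$ and $\probtwo=\sumsem{\termtwo}$, which depend only on the terms themselves and not at all on which relation witnesses the pair; hence this inequality is exactly the adequacy condition demanded of $\cbncfleq$. As $\termone,\termtwo$ were arbitrary, $\cbncfleq$ is adequate.

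I do not expect any genuine obstacle in this lemma. The only point worth flagging is the contrast with the \emph{compatibility} direction of the overall development: properties $\Comthree$ and $\Comfour$ require combining \emph{two} related pairs that may be witnessed by \emph{different} members of $\caset$, so the union need not be compatible on the nose — this is precisely why the separate composition Lemma~\ref{lemma:cfctxcomp} is needed. Adequacy, by contrast, never needs to combine pairs, so a single witness $\relone$ suffices immediately and the proof is essentially a one-line dereferencing of the definitions.
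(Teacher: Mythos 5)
Your proof is correct and follows essentially the same route as the paper's: the paper likewise observes that adequacy is a pointwise property preserved under unions of relations, so any pair in $\cbncfleq$ lies in some adequate member of $\caset$ and inherits the convergence inequality, which depends only on the terms. Your closing remark about why $\Comthree$/$\Comfour$ need the separate composition argument (Lemma~\ref{lemma:cfctxcomp}) also matches the paper's treatment in Lemma~\ref{lemma:cfprCBNprec}.
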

\begin{proof}
  It suffices to note that the property of being \textit{adequate} is
  closed under taking unions of relations. Indeed, if $\relone,\,\reltwo$
  are adequate relations, then it is easy to see that the union
  $\relone\cup\reltwo$ is: for every couple
  $(\termone,\termtwo)\in\relone\cup\reltwo$, either
  $\rel{\vecvarone}{\termone}{\relone}{\termtwo}$ or
  $\rel{\vecvarone}{\termone}{\reltwo}{\termtwo}$. Either way,
  $\termone\evp{\probone}\Rightarrow \termtwo\evp{\probtwo}$, with
  $\probone\leq\probtwo$, implying $\relone\cup\reltwo$ of being adequate.
\end{proof}

\begin{lemma}\label{lemma:cfprCBNprec}
  $\LOP$-relation $\cbncfleq$ is a precongruence.
\end{lemma}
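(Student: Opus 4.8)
The plan is to establish the three defining properties of a precongruence for $\cbncfleq$: reflexivity, transitivity, and compatibility. I would first record a preliminary fact used throughout, namely that \emph{every compatible $\LOP$-relation is reflexive}: this follows by a straightforward induction on the structure of terms, using $\Comone$ for variables, $\Comtwo$ for abstractions, $\Comthree$ for applications and $\Comfour$ for probabilistic sums. In particular every $\relone\in\caset$ is reflexive, and since $\caset$ is nonempty --- it contains the identity relation $\id$, which is plainly both compatible and adequate --- the union $\cbncfleq=\bigcup\caset$ is reflexive. Transitivity is then immediate from Lemma~\ref{lemma:cfctxcomp}: if $\rel{\vecvarone}{\termone}{\cbncfleq}{\termtwo}$ and $\rel{\vecvarone}{\termtwo}{\cbncfleq}{\termthree}$, there are $\relone,\reltwo\in\caset$ witnessing the two pairs, so $(\termone,\termthree)\in\relone\circ\reltwo\in\caset\subseteq\cbncfleq$.

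The delicate part is compatibility, and the two easy conditions should be dispatched first. Condition $\Comone$ is just a special case of the reflexivity established above. Condition $\Comtwo$ is also unproblematic because it has a single hypothesis pair: from $\rel{\vecvarone\cup\{\varone\}}{\termone}{\cbncfleq}{\termtwo}$ there is one $\relone\in\caset$ with $\rel{\vecvarone\cup\{\varone\}}{\termone}{\relone}{\termtwo}$, and the $\Comtwo$ property of $\relone$ itself yields $\rel{\vecvarone}{\abstr{\varone}{\termone}}{\relone}{\abstr{\varone}{\termtwo}}$, hence membership in $\cbncfleq$.

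The interesting cases are the binary conditions $\Comthree$ and $\Comfour$, where the two hypothesis pairs may be witnessed by \emph{different} members of $\caset$. Here I would reduce the simultaneous replacement on both sides to two successive one-sided replacements glued by composition. For $\Comthree$, suppose $\rel{\vecvarone}{\termone}{\cbncfleq}{\termtwo}$ and $\rel{\vecvarone}{\termthree}{\cbncfleq}{\termfour}$, witnessed by $\relone,\reltwo\in\caset$ respectively. Using reflexivity of $\relone$ together with its own $\Comthree$, from $\rel{\vecvarone}{\termone}{\relone}{\termtwo}$ and $\rel{\vecvarone}{\termthree}{\relone}{\termthree}$ I obtain $\rel{\vecvarone}{\app{\termone}{\termthree}}{\relone}{\app{\termtwo}{\termthree}}$. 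Using reflexivity of $\reltwo$ together with its $\Comthree$, from $\rel{\vecvarone}{\termtwo}{\reltwo}{\termtwo}$ and $\rel{\vecvarone}{\termthree}{\reltwo}{\termfour}$ I obtain $\rel{\vecvarone}{\app{\termtwo}{\termthree}}{\reltwo}{\app{\termtwo}{\termfour}}$. Composing, $(\app{\termone}{\termthree},\app{\termtwo}{\termfour})\in\relone\circ\reltwo\in\caset\subseteq\cbncfleq$, which is exactly $\Comthree$. The argument for $\Comfour$ is verbatim the same with $\oplus$ in place of application.

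The main obstacle is precisely that compatibility is \emph{not} preserved under arbitrary unions, so one cannot simply argue that a union of compatible relations is compatible. The resolution is the composition closure of $\caset$ (Lemma~\ref{lemma:cfctxcomp}) combined with the reflexivity of each compatible relation: together these let me simulate the two-sided compatibility clauses by a pair of one-sided steps that land inside a single composite relation still belonging to $\caset$, and thus inside $\cbncfleq$.
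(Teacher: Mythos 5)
Your proof is correct and takes essentially the same route as the paper's: transitivity via the composition-closure property of $\caset$ (Lemma~\ref{lemma:cfctxcomp}), reflexivity and \Comone{} via the identity relation belonging to $\caset$, \Comtwo{} via a single witnessing relation, and the binary clauses \Comthree{}, \Comfour{} recovered by splitting the two-sided replacement into two one-sided steps glued together. The only cosmetic difference is that where the paper establishes \ComthreeL{}/\ComthreeR{} for the union (these being closed under unions) and then invokes Lemma~\ref{lemma:com3LR} and Lemma~\ref{lemma:com4LR} using transitivity, you inline that same argument by performing the one-sided steps inside the two witnessing relations and composing them via Lemma~\ref{lemma:cfctxcomp} --- the underlying mechanism is identical.
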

\begin{proof}
  We need to show that $\cbncfleq$ is a transitive and compatible
  relation. By Lemma~\ref{lemma:cfctxcomp},
  $\cbncfleq\circ\cbncfleq\subseteq\cbncfleq$ which implies $\cbncfleq$ of
  being transitive. Let us now prove that $\cbncfleq$ is also compatible.
  Note that the identity relation $\id =
  \{(\termone,\termone)\,|\,\termone\in\LOPp{\vecvarone}\}$ is in $\caset$,
  which implies reflexivity of $\cbncfleq$ and hence, in particular, it
  satisfies property $\Comone$.  It is clear that property $\Comtwo$ is
  closed under taking unions of relations, so that $\cbncfleq$ satisfies
  $\Comtwo$ too. The same is not true for properties $\Comthree$ and
  $\Comfour$. By Lemma~\ref{lemma:com3LR} (respectively,
  Lemma~\ref{lemma:com4LR}), for $\Comthree$ (resp., $\Comfour$) it
  suffices to show that $\cbncfleq$ satisfies $\ComthreeL$ and $\ComthreeR$
  (resp., $\ComfourL$ and $\ComfourR$). This is obvious: contrary to
  $\Comthree$ (resp., $\Comfour$), these properties clearly are closed
  under taking unions of relations.

  This concludes the proof.
\end{proof}

\begin{corollary}
  $\cbncfleq$ is the largest compatible and adequate $\LOP$-relation.
\end{corollary}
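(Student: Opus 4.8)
The plan is to observe that the statement is precisely the conjunction of two facts about $\cbncfleq$, both of which have already been established in the preceding lemmas: that $\cbncfleq$ itself belongs to $\caset$, and that it contains every member of $\caset$. First I would recall that, by its definition (\ref{eq:cfctxCBN}), $\cbncfleq$ is $\bigcup\caset$, so maximality with respect to set inclusion is immediate: for any compatible and adequate $\LOP$-relation $\relone$, i.e. any $\relone\in\caset$, we have $\relone\subseteq\bigcup\caset=\cbncfleq$. Thus nothing remains to be checked for the ``largest'' part once we know that $\cbncfleq\in\caset$.

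Next I would verify that $\cbncfleq$ is itself compatible and adequate, so that it genuinely is an element of $\caset$ rather than merely an upper bound for it. Adequacy is supplied directly by Lemma~\ref{lemma:cfprCBNadeq}. Compatibility is supplied by Lemma~\ref{lemma:cfprCBNprec}, which shows $\cbncfleq$ to be a precongruence and hence in particular compatible, a precongruence being by definition a transitive and compatible $\LOP$-relation. Combining the two, $\cbncfleq$ satisfies both defining closure conditions of $\caset$, so $\cbncfleq\in\caset$; together with the inclusion above this yields that $\cbncfleq$ is the greatest element of $\caset$ ordered by $\subseteq$, which is exactly the claim.

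There is no real obstacle left at the level of the corollary itself: all the difficulty has been front-loaded into the two lemmas it invokes, and the proof I would write is essentially a one-line assembly. The genuinely delicate point sits inside Lemma~\ref{lemma:cfprCBNprec}: compatibility is \emph{not} preserved under arbitrary unions, since $\Comthree$ and $\Comfour$ fail to be union-closed. That difficulty is circumvented there by passing, via Lemma~\ref{lemma:com3LR} and Lemma~\ref{lemma:com4LR}, to the one-sided variants $\ComthreeL,\ComthreeR$ and $\ComfourL,\ComfourR$, which \emph{are} union-closed, together with transitivity of $\cbncfleq$ (obtained from Lemma~\ref{lemma:cfctxcomp} through $\cbncfleq\circ\cbncfleq\subseteq\cbncfleq$). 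Accordingly I would present the corollary simply as: $\cbncfleq$ is adequate (Lemma~\ref{lemma:cfprCBNadeq}) and compatible (Lemma~\ref{lemma:cfprCBNprec}), hence $\cbncfleq\in\caset$, while $\relone\subseteq\cbncfleq$ for every $\relone\in\caset$ by (\ref{eq:cfctxCBN}).
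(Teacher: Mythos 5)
Your proof is correct and is essentially the paper's own argument: the paper likewise derives the corollary as a straightforward consequence of Lemma~\ref{lemma:cfprCBNadeq} (adequacy) and Lemma~\ref{lemma:cfprCBNprec} (precongruence, hence compatibility), with maximality being immediate from the definition of $\cbncfleq$ as $\bigcup\caset$.
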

\begin{proof}
  Straightforward consequence of Lemma~\ref{lemma:cfprCBNadeq} and
  Lemma~\ref{lemma:cfprCBNprec}.
\end{proof}

\begin{lemma}\label{lemma:ctxeq=cfctxpr}
  $\LOP$-relations $\cbnconleq$ and $\cbncfleq$ coincide.
\end{lemma}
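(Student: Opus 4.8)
The plan is to establish the two inclusions $\cbnconleq \subseteq \cbncfleq$ and $\cbncfleq \subseteq \cbnconleq$ separately, in each case assembling facts already proved. The two pillars are that $\cbncfleq$ is compatible and adequate (Lemma~\ref{lemma:cfprCBNprec} and Lemma~\ref{lemma:cfprCBNadeq}), while $\cbnconleq$ is a precongruence by Lemma~\ref{lemma:ctxprecCBN}.

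For the inclusion $\cbnconleq \subseteq \cbncfleq$, I would show that $\cbnconleq$ itself is compatible and adequate, i.e.\ $\cbnconleq \in \caset$; the inclusion is then immediate since $\cbncfleq = \bigcup\caset$ by definition (equation~\ref{eq:cfctxCBN}). Compatibility of $\cbnconleq$ is exactly the compatibility half of Lemma~\ref{lemma:ctxprecCBN}. Adequacy is read off directly from Definition~\ref{def:ctxeqCBN} by instantiating the quantified context with the bare hole context $\ctxhole{\cdot} \in \ctxsetp{\emptyset}{\emptyset}$ (rule $\Ctxone$): since $\ctxhole{\cdot}\ctxhole{\termone} = \termone$ and $\ctxhole{\cdot}\ctxhole{\termtwo} = \termtwo$, the defining condition of $\cbnconleq$ on closed $\termone,\termtwo$ states precisely that $\termone\evp{\probone}$ and $\termtwo\evp{\probtwo}$ with $\probone \leq \probtwo$, which is adequacy.

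For the converse inclusion $\cbncfleq \subseteq \cbnconleq$, suppose $\rel{\vecvarone}{\termone}{\cbncfleq}{\termtwo}$ and fix an arbitrary closing context $\ctxone \in \ctxsetp{\vecvarone}{\emptyset}$. Because $\cbncfleq$ is compatible (Lemma~\ref{lemma:cfprCBNprec}), Lemma~\ref{lemma:ctxcomprel} lets me push the relation through the context, yielding $\rel{\emptyset}{\ctxone\ctxhole{\termone}}{\cbncfleq}{\ctxone\ctxhole{\termtwo}}$. Then adequacy of $\cbncfleq$ (Lemma~\ref{lemma:cfprCBNadeq}) gives $\ctxone\ctxhole{\termone}\evp{\probone}$ and $\ctxone\ctxhole{\termtwo}\evp{\probtwo}$ with $\probone \leq \probtwo$. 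As $\ctxone$ was an arbitrary closing context, this is exactly the defining condition of $\rel{\vecvarone}{\termone}{\cbnconleq}{\termtwo}$, and the converse inclusion follows.

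I do not expect a genuine obstacle: the statement is essentially a corollary of the structural lemmas relating compatibility and adequacy to behaviour under contexts, with Lemma~\ref{lemma:ctxcomprel} doing the real work of transporting a related pair through an arbitrary closing context. The only point needing a moment's care is adequacy of $\cbnconleq$, which rests on recognizing the empty hole context as a legitimate member of $\ctxsetp{\emptyset}{\emptyset}$, so that the context-preorder condition specializes to a statement about the bare terms.
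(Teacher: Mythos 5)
Your proposal is correct and follows essentially the same route as the paper's own proof: both directions are established exactly as you describe, with $\cbnconleq \subseteq \cbncfleq$ obtained by showing $\cbnconleq \in \caset$ (adequacy from Definition~\ref{def:ctxeqCBN} via the hole context, compatibility from Lemma~\ref{lemma:ctxprecCBN}), and the converse by pushing $\cbncfleq$ through an arbitrary closing context via its compatibility (Lemma~\ref{lemma:ctxcomprel}) and then invoking its adequacy. The only cosmetic difference is that you make explicit two steps the paper leaves implicit, namely the instantiation with $\ctxhole{\cdot}$ for adequacy of $\cbnconleq$ and the citation of Lemma~\ref{lemma:ctxcomprel}.
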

\begin{proof}
  By Definition~\ref{def:ctxeqCBN}, it is immediate that $\cbnconleq$ is
  adequate. Moreover, by Lemma~\ref{lemma:ctxprecCBN}, $\cbnconleq$ is a
  precongruence. Therefore $\cbnconleq\in\caset$ implying
  $\cbnconleq\subseteq\cbncfleq$. Let us prove the converse.
  Since, by Lemma~\ref{lemma:cfprCBNprec}, $\cbncfleq$ is a precongruence
  hence a compatible relation, it holds that, for every
  $\termone,\,\termtwo\in\LOPp{\vecvarone}$ and for every
  $\ctxone\in\ctxsetp{\vecvarone}{\vecvartwo}$,
  $\rel{\vecvarone}{\termone}{\cbncfleq}{\termtwo}$ implies
  $\rel{\vecvartwo}{\ctxone\ctxhole{\termone}}{\cbncfleq}{\ctxone\ctxhole{\termtwo}}$.
  Therefore, for every $\termone,\,\termtwo\in\LOPp{\vecvarone}$ and for
  every $\ctxone\in\ctxsetp{\vecvarone}{\emptyset}$,
  \begin{align*}
    \rel{\vecvarone}{\termone}{\cbncfleq}{\termtwo}&\Rightarrow
    \rel{\emptyset}{\ctxone\ctxhole{\termone}}{\cbncfleq}{\ctxone\ctxhole{\termtwo}}
  \end{align*}
  which implies, by the fact that $\cbncfleq$ is adequate,
  \begin{align*}
    \ctxone\ctxhole{\termone}\evp{\probone}\Rightarrow\ctxone\ctxhole{\termtwo}\evp{\probtwo},\textnormal{
      with }\probone\leq\probtwo
  \end{align*}
  that is, by Definition~\ref{def:ctxeqCBN},
  \begin{align*}
    \rel{\vecvarone}{\termone}{\cbnconleq}{\termtwo}.
  \end{align*}
  This concludes the proof.
\end{proof}

\section{CIU-Equivalence}
\label{sec:ciu-eq}

\renewcommand{\ssemn}[2]{#1\Rightarrow_\mathsf{IN} #2}
\renewcommand{\bsemn}[2]{#1\Downarrow_\mathsf{IN} #2}

CIU-equivalence is a simpler characterization of that kind of program
equivalence we are interested in, i.e., context equivalence. In fact, we
will prove that the two notions coincide. While context equivalence
envisages a quantification over all contexts, CIU-equivalence relaxes such
constraint to a restricted class of contexts without affecting the
associated notion of program equivalence. Such a class of contexts is that
of \emph{evaluation} contexts. In particular, we use a different representation of
evaluation contexts, seeing them as a stack of evaluation frames.

\begin{definition}
  The set of \emph{frame stacks} is given by the following set of rules:
  $$
  \fsone,\fstwo::=\nil\midd\las{\termone}{\fsone}.
  $$
\end{definition}

The set of free variables of a frame stack $\fsone$ can be easily defined
as the union of the variables occurring free in the terms embedded into
it. Given a set of variables $\vecvarone$, define $\stk{\vecvarone}$ as the
set of frame stacks whose free variables are all from $\vecvarone$.
Given a frame stack $\fsone\in\stk{\vecvarone}$ and a term
$\termone\in\LOP(\vecvarone)$, we define the term
$\stktm{\fsone}{\termone}\in\LOP(\vecvarone)$ as follows:
\begin{align*}
  \stktm{\nil}{\termone}&\defi\termone;\\
  \stktm{\las{\termone}{\fsone}}{\termtwo}&\defi\stktm{\fsone}{\termtwo\termone}.
\end{align*}
We now define a binary relation $\cbnfsred$ between pairs of the form
$(\fsone,\termone)$ and \emph{sequences} of pairs in the same form:
\begin{align*}
  (\fsone,\termone\termtwo)&\cbnfsred(\las{\termtwo}{\fsone},\termone);\\
  (\fsone,\ps{\termone}{\termtwo})&\cbnfsred(\fsone,\termone),(\fsone,\termtwo);\\
  (\las{\termone}{\fsone},\abstr{\varone}{\termtwo})&\cbnfsred(\fsone,\subst{\termtwo}{\varone}{\termone}).
\end{align*}
Finally, we define a formal system whose judgments are in the form
$\cbnfscp{\fsone}{\termone}{\probone}$ and whose rules are as follows:
$$
\infer[\Howeredone] {\cbnfscp{\fsone}{\termone}{0}} {}
$$
$$
\infer[\Howeredtwo] {\cbnfscp{\nil}{\valone}{1}} {}
$$
$$
\infer[\Howeredthree]
{\cbnfscp{\fsone}{\termone}{\frac{1}{n}\sum_{i=1}^n\probone_i}}
{(\fsone,\termone)\cbnfsred(\fstwo_1,\termtwo_1),\ldots,(\fstwo_n,\termtwo_n)
  & \cbnfscp{\fstwo_i}{\termtwo_i}{\probone_i}}
$$
The expression $\cbnfssup{\fsone}{\termone}$ stands for the real number
$\sup_{\probone\in\RRN}\cbnfscp{\fsone}{\termone}{\probone}$.  

\begin{lemma}\label{lemma:ciuctxeq}
  For all closed frame stacks $\fsone\in\stk{\emptyset}$ and closed
  $\LOP$-terms $\termone\in\LOPp{\emptyset}$,
  $\cbnfssup{\fsone}{\termone}=\probone$ iff
  $\stktm{\fsone}{\termone}\evp{\probone}$. In particular,
  $\termone\evp{\probone}$ holds iff $\cbnfssup{\nil}{\termone}=\probone$.
\end{lemma}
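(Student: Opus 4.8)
The plan is to prove the two inequalities $\cbnfssup{\fsone}{\termone}\leq\sumsem{\stktm{\fsone}{\termone}}$ and $\sumsem{\stktm{\fsone}{\termone}}\leq\cbnfssup{\fsone}{\termone}$ separately; the final sentence of the statement is then the instance $\fsone=\nil$, where $\stktm{\nil}{\termone}=\termone$, and the ``iff about $\probone$'' is immediate once the two numbers coincide. The engine of both inequalities is a \emph{compositionality} property of the semantics on frame stacks. Using the equation $\sem{\app{\termone}{\termtwo}}=\sum_{\abstr{\varone}{\termfour}}\sem{\termone}(\abstr{\varone}{\termfour})\cdot\sem{\subst{\termfour}{\varone}{\termtwo}}$ (which the rules of Figure~\ref{fig:bscbnsem} yield in the limit, see~\cite{DalLagoZorzi}), I would show by induction on $\fsone$ that $\sem{\stktm{\fsone}{\termone}}=\sum_{\valone}\sem{\termone}(\valone)\cdot\sem{\stktm{\fsone}{\valone}}$, i.e. $\sem{\stktm{\fsone}{\cdot}}$ is a linear function of $\sem{\cdot}$. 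From this I would derive the key \emph{invariance lemma}: whenever $(\fsone,\termone)\cbnfsred(\fstwo_1,\termtwo_1),\dots,(\fstwo_n,\termtwo_n)$ one has $\sem{\stktm{\fsone}{\termone}}=\frac{1}{n}\sum_{i=1}^n\sem{\stktm{\fstwo_i}{\termtwo_i}}$. There are exactly three cases: the push rule is an identity on the plugged term by definition of $\stktm{\cdot}{\cdot}$ ($n=1$); the $\beta$-rule uses $\stktm{\las{\termthree}{\fstwo}}{\abstr{\varone}{\termfour}}=\stktm{\fstwo}{\app{(\abstr{\varone}{\termfour})}{\termthree}}$ together with Lemma~\ref{lemma:sembetaCBN} and compositionality ($n=1$); and the choice rule uses Lemma~\ref{lemma:semsumCBN} and compositionality to produce the factor $\frac12$ ($n=2$).

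For the inequality $\cbnfssup{\fsone}{\termone}\leq\sumsem{\stktm{\fsone}{\termone}}$ I would show that every frame-machine value is a lower bound of the right-hand side, by induction on the derivation of $\cbnfscp{\fsone}{\termone}{\probone}$. Rule $\Howeredone$ gives $\probone=0$; rule $\Howeredtwo$ gives $\fsone=\nil$ with $\termone$ a value and $\sumsem{\termone}=1$; rule $\Howeredthree$ gives $\probone=\frac1n\sum_i\probone_i$ with $\probone_i\leq\sumsem{\stktm{\fstwo_i}{\termtwo_i}}$ by the induction hypothesis, whence $\probone\leq\frac1n\sum_i\sumsem{\stktm{\fstwo_i}{\termtwo_i}}=\sumsem{\stktm{\fsone}{\termone}}$ by the invariance lemma. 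Taking the supremum over all derivations yields the claim.

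The converse inequality is the delicate one. Since $\sumsem{\stktm{\fsone}{\termone}}$ is itself the supremum of $\sumd{\distone}$ over finite approximations $\bsemn{\stktm{\fsone}{\termone}}{\distone}$, it suffices to dominate each such approximation by a frame-machine value, i.e. to prove that $\bsemn{\stktm{\fsone}{\termone}}{\distone}$ implies $\sumd{\distone}\leq\cbnfssup{\fsone}{\termone}$. The natural induction is on the pair consisting of the height of the approximation derivation and the size of the focus $\termone$, ordered lexicographically. The push case $\termone=\app{\termfive}{\termsix}$ leaves the plugged term — hence the derivation — unchanged while strictly shrinking the focus, and is matched by $\Howeredthree$ with $n=1$; the final configurations $(\nil,\valone)$ are matched by $\Howeredtwo$.

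The main obstacle is precisely the mismatch between the frame-machine steps, which act on the focus, and the shape of the big-step derivation of the plugged term: when the focus is a choice $\ps{\termfive}{\termsix}$ (or a value) under a \emph{non-empty} stack, the head constructor of $\stktm{\fsone}{\termone}$ is an application, so the top rule of the derivation is $\ban$ rather than $\bsn$ (resp. rather than $\bvn$), and the derivation does not split along the frame-machine transition. To bridge this I would prove a commutation lemma allowing a choice (resp. a $\beta$-redex) to be pushed out of the evaluation context represented by $\fsone$ at the level of finite approximations, with non-increasing derivation height; its semantic content is exactly the invariance lemma together with Lemmas~\ref{lemma:sembetaCBN} and~\ref{lemma:semsumCBN}. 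Equivalently, and more cleanly, one may replace the big-step approximants by the \emph{small-step} approximants $\ssemn{\stktm{\fsone}{\termone}}{\distone}$ of~\cite{DalLagoZorzi}: a frame-machine reduction sequence is then a transparent reorganisation of head small-step reductions, each $\cbnfsred$-transition corresponding to one reduction of the head redex, so that the simulation becomes a straightforward induction on the number of small-step reductions and the alignment problem disappears. Either way, combining the two inequalities gives $\cbnfssup{\fsone}{\termone}=\sumsem{\stktm{\fsone}{\termone}}=\probone$ exactly when $\stktm{\fsone}{\termone}\evp{\probone}$.
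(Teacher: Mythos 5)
Your proposal is correct, and on the harder direction it converges with the paper's own proof. The paper establishes the lemma by a two-way translation between derivations: (i) every frame-machine derivation $\cbnfscp{\fsone}{\termone}{\probone}$ is turned, by induction on its structure, into a \emph{small-step} approximant $\ssemn{\stktm{\fsone}{\termone}}{\distone}$ with $\sum\distone=\probone$; (ii) conversely, every small-step approximant of the plugged term is turned, by induction on its derivation, into a machine derivation of the same mass; the statement then follows by taking suprema. Your second inequality, after you discard the big-step route, is exactly (ii): the alignment obstacle you identify for big-step approximants (when the stack is non-empty the top rule of the derivation of $\stktm{\fsone}{\termone}$ is $\ban$, so the derivation does not split along the machine transition) is genuine, and the paper never meets it because it works with the one-step relation $\rn$ from the start, for which a choice or $\beta$-step in the focus \emph{is} the head reduction of the plugged term, pushes being administrative identities on $\stktm{\fsone}{\termone}$. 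Where you genuinely differ is the first inequality: instead of constructing approximant derivations as in (i), you bound each machine derivation directly against the limit semantics, using your linearity lemma ($\sem{\stktm{\fsone}{\cdot}}$ is a linear function of $\sem{\cdot}$) and the invariance lemma ($\sem{\stktm{\fsone}{\termone}}=\frac{1}{n}\sum_{i=1}^{n}\sem{\stktm{\fstwo_i}{\termtwo_i}}$ along each $\cbnfsred$-transition, via Lemmas~\ref{lemma:sembetaCBN} and~\ref{lemma:semsumCBN}). This buys a cleaner, reusable semantic argument and lets the same invariance lemma serve both inequalities, but it costs an extra ingredient the paper never needs: the application equation $\sem{\app{\termone}{\termtwo}}=\sum_{\abstr{\varone}{\termfour}}\sem{\termone}(\abstr{\varone}{\termfour})\cdot\sem{\subst{\termfour}{\varone}{\termtwo}}$, which indeed holds for this semantics but must be imported from the underlying development of the operational semantics, whereas the paper's argument stays entirely inside the inductively defined approximation judgments.
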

\begin{proof}
  First of all, we recall here that the work of Dal Lago and
  Zorzi~\cite{DalLagoZorzi} provides various call-by-name inductive
  semantics, either big-steps or small-steps, which are all equivalent. 
  Then, the result can be deduced from the following properties:
  \begin{enumerate}
  \item For all $\fsone\in\stk{\emptyset}$, if
    $\cbnfscp{\fsone}{\termone}{\probone}$ then
    $\exists\distone.\;\ssemn{\stktm{\fsone}{\termone}}{\distone}$ with
    $\sum\distone=\probone$.
    \begin{proof}
      By induction on the derivation of
      $\cbnfscp{\fsone}{\termone}{\probone}$, looking at the last rule
      used.
      \begin{varitemize}
      \item $\Howeredone$ rule used: $\cbnfscp{\fsone}{\termone}{0}$. Then,
        consider the empty distribution $\distone\defi\emptyset$ and observe
        that $\ssemn{\stktm{\fsone}{\termone}}{\distone}$ by
        $\mathsf{se_{n}}$ rule.
      \item $\Howeredtwo$ rule used: $\cbnfscp{\fsone}{\termone}{1}$
        implies $\fsone=\nil$ and $\termone$ of being a value, say
        $\valone$. Then, consider the distribution $\distone\defi\{\valone^1\}$
        and observe that $\ssemn{\stktm{\nil}{\valone}=\valone}{\distone}$
        by $\mathsf{sv_{n}}$ rule. Of course, $\sum\distone=1=\probone$.
      \item $\Howeredthree$ rule used:
        $\cbnfscp{\fsone}{\termone}{\frac{1}{n}\sum_{i=1}^n\probone_i}$
        obtained from
        $(\fsone,\termone)\cbnfsred(\fstwo_1,\termtwo_1),\ldots,(\fstwo_n,\termtwo_n)$
        and, for every $i\in\{1,\dots,n\}$,
        $\cbnfscp{\fstwo_i}{\termtwo_i}{\probone_i}$. Then, by induction
        hypothesis, there exist $\disttwo_1,\ldots,\disttwo_n$ such that
        $\ssemn{\stktm{\fstwo_i}{\termtwo_i}}{\disttwo_i}$ with
        $\sum\disttwo_i=\probone_i$.
          
        Let us now proceed by cases according to the structure of
        $\termone$.
        \begin{varitemize}
        \item If $\termone=\abstr{\varone}{\termthree}$, then
          $\fsone=\las{\termfour}{\fstwo}$ implying $n=1$,
          $\fstwo_1=\fstwo$ and
          $\termtwo_1=\subst{\termthree}{\varone}{\termfour}$. Then,
          consider the distribution $\distone\defi\disttwo_1$ and observe that
          $\stktm{\fsone}{\termone}=\stktm{\las{\termfour}{\fstwo}}{\abstr{\varone}{\termthree}}=
          \stktm{\fstwo}{\app{(\abstr{\varone}{\termthree})}{\termfour}}\rn
          \stktm{\fstwo}{\subst{\termthree}{\varone}{\termfour}}=\stktm{\fstwo_1}{\termtwo_1}$. Hence,
          $\ssemn{\stktm{\fsone}{\termone}}{\distone}$ by $\mathsf{sm_{n}}$
          rule. Moreover,
          $\sum\distone=\sum\disttwo_1=\probone_1=\frac{1}{n}\sum_{i=1}^n\probone_i=\probone$.
        \item If $\termone=\ps{\termthree}{\termfour}$, then $n=2$,
          $\fstwo_1=\fstwo_2=\fsone$, $\termtwo_1=\termthree$ and
          $\termtwo_2=\termfour$. Then, consider the distribution
          $\distone\defi\sum_{i=1}^2\frac{1}{2}\disttwo_i$ and observe that
          $\stktm{\fsone}{\termone}=\stktm{\fsone}{\ps{\termthree}{\termfour}}\rn
          \stktm{\fsone}{\termthree},\stktm{\fsone}{\termfour}=\stktm{\fstwo_1}{\termtwo_1},\stktm{\fstwo_2}{\termtwo_2}$. Hence,
          $\ssemn{\stktm{\fsone}{\termone}}{\distone}$ by $\mathsf{sm_{n}}$
          rule. Moreover,
          $\sum\distone=\sum\sum_{i=1}^2\frac{1}{2}\disttwo_i=\frac{1}{2}\sum_{i=1}^2\sum\disttwo_i=\frac{1}{2}\sum_{i=1}^2\probone_i=p$.
        \item If $\termone=\app{\termthree}{\termfour}$, then $n=1$,
          $\fstwo_1=\las{\termfour}{\fsone}$ and
          $\termtwo_1=\termthree$. Then, consider the distribution
          $\distone\defi\disttwo_1$ and observe that
          $\ssemn{\stktm{\las{\termfour}{\fsone}}{\termthree}}{\disttwo_1}$
          implies $\ssemn{\stktm{\fsone}{\termone}}{\distone}$. Moreover,
          $\sum\distone=\sum\disttwo_1=\probone_1=
          \frac{1}{n}\sum_{i=1}^n\probone_i=p$.
        \end{varitemize}
      \end{varitemize}

      This concludes the proof.
    \end{proof}
  \item For all $\distone$, if $\ssemn{\termone}{\distone}$ then
    $\exists\fsone,\,\termtwo.\;\stktm{\fsone}{\termtwo}=\termone$ and
    $\cbnfscp{\fsone}{\termtwo}{\probone}$ with $\sum\distone=\probone$.
    \begin{proof}
      By induction on the derivation of $\ssemn{\termone}{\distone}$,
      looking at the last rule used. (We refer here to the inductive schema
      of inference rules gave in~\cite{DalLagoZorzi} for small-step
      call-by-name semantics of $\LOP$.)
      \begin{varitemize}
      \item $\mathsf{se_{n}}$ rule used:
        $\ssemn{\termone}{\emptyset}$. Then, for every $\fsone$ and every
        $\termtwo$ such that $\stktm{\fsone}{\termtwo}=\termone$,
        $\cbnfscp{\fsone}{\termtwo}{0}$ by $\Howeredone$ rule. Of course,
        $\sum\distone=0=\probone$.
      \item $\mathsf{sv_{n}}$ rule used: $\termone$ is a value, say
        $\valone$, and $\distone=\{\valone^1\}$ with
        $\ssemn{\valone}{\{\valone^1\}}$. Then, consider $\fsone\defi\nil$ and
        $\termtwo\defi\valone$: by definition
        $\stktm{\fsone}{\termtwo}=\stktm{\nil}{\valone}=\valone=\termone$. By
        $\Howeredtwo$ rule, $\cbnfscp{\nil}{\valone}{1}$ hence
        $\sum\distone=1=\probone$.
      \item $\mathsf{sv_{n}}$ rule used:
        $\ssemn{\termone}{\sum_{i=1}^n\frac{1}{n}\disttwo_i}$ from
        $\termone\rn\termfive_1,\ldots,\termfive_n$ with, for every
        $i\in\{1,\dots,n\}$, $\ssemn{\termfive_i}{\disttwo_i}$. By
        induction hypothesis, for every $i\in\{1,\dots,n\}$, there exist
        $\fstwo_i$ and $\termtwo_i$ such that
        $\stktm{\fstwo_i}{\termtwo_i}=\termfive_i$ and
        $\cbnfscp{\fstwo_i}{\termtwo_i}{\probone_i}$ with
        $\sum\disttwo_i=\probone_i$.

        Let us proceed by cases according to the structure of $\termone$.
        \begin{varitemize}
        \item If $\termone=\app{(\abstr{\varone}{\termthree})}{\termfour}$,
          then $n=1$ and
          $\termfive_1=\subst{\termthree}{\varone}{\termfour}$. Hence,
          consider $\fsone\defi\las{\termfour}{\nil}$ and
          $\termtwo\defi\abstr{\varone}{\termthree}$: by definition,
          $\stktm{\fsone}{\termtwo}=\stktm{\las{\termfour}{\nil}}{\abstr{\varone}{\termthree}}=
          \stktm{\nil}{\app{(\abstr{\varone}{\termthree})}{\termfour}}=\app{(\abstr{\varone}{\termthree})}{\termfour}=
          \termone$. By $\Howeredthree$ rule,
          $(\fsone,\termtwo)=(\las{\termfour}{\nil},\abstr{\varone}{\termthree})\cbnfsred(\nil,\subst{\termthree}{\varone}{\termfour})$
          with, by induction hypothesis result,
          $\cbnfscp{\nil}{\subst{\termthree}{\varone}{\termfour}}{\probone_1}$.
          The latter implies
          $\cbnfscp{\fsone}{\termtwo}{\probone_1}$. Moreover,
          $\sum\distone=\sum\sum_{i=1}^n\frac{1}{n}\disttwo_i=\sum\disttwo_1=\probone_1=p$.
        \item If $\termone=\ps{\termthree}{\termfour}$, then $n=2$,
          $\termfive_1=\termthree$ and $\termfive_2=\termfour$. Hence,
          consider $\fsone\defi\nil$ and $\termtwo\defi\ps{\termthree}{\termfour}$:
          by definition,
          $\stktm{\fsone}{\termtwo}=\stktm{\nil}{\ps{\termthree}{\termfour}}=
          \ps{\termthree}{\termfour}=\termone$. By $\Howeredthree$ rule,
          $(\fsone,\termtwo)=(\nil,\ps{\termthree}{\termfour})\cbnfsred(\nil,\termthree),(\nil,\termfour)$
          with, by induction hypothesis result,
          $\cbnfscp{\nil}{\termthree}{\probone_1}$ and
          $\cbnfscp{\nil}{\termfour}{\probone_2}$. The latter implies
          $\cbnfscp{\fsone}{\termtwo}{\frac{1}{2}\sum_{i=1}^2\probone_i}$. Moreover,
          $\sum\distone=\sum\sum_{i=1}^n\frac{1}{n}\disttwo_i=\sum\sum_{i=1}^n\frac{1}{2}\disttwo_i=
          \frac{1}{2}\sum_{i=1}^2\sum\disttwo_i=\frac{1}{2}\sum_{i=1}^2\probone_i=p$.
        \item If $\termone=\app{\termthree}{\termfour}$ and
          $\termthree\rn\termsix_1,\ldots,\termsix_n$, then
          $\termfive_i=\app{\termsix_i}{\termfour}$ for every
          $i\in\{1,\dots,n\}$. Hence, consider
          $\fsone\defi\las{\termfour}{\nil}$ and $\termtwo\defi\termthree$: by
          definition,
          $\stktm{\fsone}{\termtwo}=\stktm{\las{\termfour}{\nil}}{\termthree}=
          \stktm{\nil}{\app{\termthree}{\termfour}}=\app{\termthree}{\termfour}=
          \termone$. By $\Howeredthree$ rule,
          $(\fsone,\termtwo)=(\las{\termfour}{\nil},\termthree)\cbnfsred(\las{\termfour}{\nil},\termsix_1),\ldots,(\las{\termfour}{\nil},\termsix_n)$
          with, by induction hypothesis result,
          $\cbnfscp{\las{\termfour}{\nil}}{\termsix_i}{\probone_i}$ for
          every $i\in\{1,\dots,n\}$. The latter implies
          $\cbnfscp{\fsone}{\termtwo}{\frac{1}{n}\sum_{i=1}^n\probone_i}$. Moreover,
          $\sum\distone=\sum\sum_{i=1}^n\frac{1}{n}\disttwo_i=
          \frac{1}{n}\sum_{i=1}^n\sum\disttwo_i=\frac{1}{n}\sum_{i=1}^n\probone_i=p$.
        \end{varitemize}
      \end{varitemize}
      This concludes the proof.
    \end{proof}
  \end{enumerate}
  Generally speaking, the two properties above prove the following double
  implication:
  \begin{equation}\label{eq:stacksem}
    \cbnfscp{\fsone}{\termone}{\probone}\;\Longleftrightarrow\;
    \ibsemn{\stktm{\fsone}{\termone}}{\distone} \textnormal{ with }
    \sum\distone=\probone.
  \end{equation}
  Then,
  \begin{align*}
    p=\cbnfssup{\fsone}{\termone}&=\sup_{\probtwo\in\RRN}\cbnfscp{\fsone}{\termone}{\probtwo}
    =\sup_{\ibsemn{\stktm{\fsone}{\termone}}{\distone}}\sum\distone\\
    &=\sum\sup_{\ibsemn{\stktm{\fsone}{\termone}}{\distone}}\distone=
    \sum\sem{\stktm{\fsone}{\termone}}=\stktm{\fsone}{\termone}\evp{\probone},
  \end{align*}
  which concludes the proof.
\end{proof}

Given $\termone,\termtwo\in\LOP(\emptyset)$, we define
$\termone\cbnciuleq\termtwo$ iff for every $\fsone$,
$\cbnfssup{\fsone}{\termone}\leq\cbnfssup{\fsone}{\termtwo}$. This relation
can be extended to a relation on open terms in the usual way. Moreover, we
stipulate $\termone\cbnciuequiv\termtwo$ iff both
$\termone\cbnciuleq\termtwo$ and $\termtwo\cbnciuleq\termone$.

Since $\cbnciuleq$ is a preorder, proving it to be a precongruence boils
down to show the following implication:
$$
\termone\howe{(\cbnciuleq)}\termtwo\Rightarrow\termone\cbnciuleq\termtwo.
$$
Indeed, the converse implication is a consequence of
Lemma~\ref{lemma:howeprop3} and the obvious reflexivity of $\cbnciuleq$
relation. To do that, we extend Howe's construction to frame stacks in a
natural way:
$$
\infer[\Howestkone] {\nil\howe{\relone}\nil} {}
$$
$$
\infer[\Howestktwo]
{(\las{\termone}{\fsone})\howe{\relone}(\las{\termtwo}{\fstwo})} {
  \rel{\emptyset}{\termone}{\howe{\relone}}{\termtwo} &&
  \fsone\howe{\relone}\fstwo }
$$

\begin{lemma}\label{lemma:betaciueq}
  For every $\vecvarone\in\powfin{\setvar}$, it holds
  $\rel{\vecvarone}{\app{(\abstr{\varone}{\termone})}{\termtwo}}{\cbnciuequiv}{\subst{\termone}{\varone}{\termtwo}}$.
\end{lemma}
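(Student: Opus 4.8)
The plan is to reason directly with the frame-stack convergence judgement $\cbnfscp{\fsone}{\termone}{\probone}$ and the associated supremum $\cbnfssup{\fsone}{\termone}$, avoiding any detour through the denotational semantics. Since $\cbnciuequiv$ is the symmetrisation of $\cbnciuleq$, and the latter is extended from closed to open terms by quantifying over closing substitutions, it suffices to establish, for every closing substitution $\sigma$ of the variables in $\vecvarone$ and every closed frame stack $\fsone$, the equality $\cbnfssup{\fsone}{(\app{(\abstr{\varone}{\termone})}{\termtwo})\sigma} = \cbnfssup{\fsone}{(\subst{\termone}{\varone}{\termtwo})\sigma}$. Because substitution commutes with the term formers and with meta-substitution---so that, writing $\termone\sigma$ for the result of applying $\sigma$, we have $(\app{(\abstr{\varone}{\termone})}{\termtwo})\sigma = \app{(\abstr{\varone}{\termone\sigma})}{\termtwo\sigma}$ and $(\subst{\termone}{\varone}{\termtwo})\sigma = \subst{\termone\sigma}{\varone}{\termtwo\sigma}$, assuming as usual $\varone\notin\vecvarone$---the closed instance is again a $\beta$-redex opposed to its own contractum, so I only need to treat the closed case.

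The technical core is a single-step preservation fact: whenever $(\fsone,\termone)$ has a \emph{unique} $\cbnfsred$-successor $(\fstwo,\termtwo)$, it holds that $\cbnfssup{\fsone}{\termone} = \cbnfssup{\fstwo}{\termtwo}$. For $\cbnfssup{\fsone}{\termone}\geq\cbnfssup{\fstwo}{\termtwo}$ I would take any derivation of $\cbnfscp{\fstwo}{\termtwo}{\probone}$ and extend it by one instance of $\Howeredthree$ whose only premise is $(\fsone,\termone)\cbnfsred(\fstwo,\termtwo)$; with $n=1$ the conclusion $\cbnfscp{\fsone}{\termone}{\probone}$ carries the same weight. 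For the converse I would inspect the last rule of a derivation of $\cbnfscp{\fsone}{\termone}{\probone}$: rule $\Howeredone$ forces $\probone=0$, which is harmless; rule $\Howeredtwo$ cannot apply, as it requires $\fsone=\nil$ with $\termone$ a value, a configuration that has no successor and thus contradicts the hypothesis; and rule $\Howeredthree$ must, by determinism of $\cbnfsred$, use precisely the successor $(\fstwo,\termtwo)$ with $n=1$, so that $\probone$ is a weight attainable at $(\fstwo,\termtwo)$. Determinism is the syntactic observation that the three reduction clauses have pairwise non-overlapping left-hand sides, keyed respectively by an application, a probabilistic sum, and an abstraction facing a non-empty stack; consequently each configuration triggers at most one clause.

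With this fact the lemma follows at once. In the machine the closed redex unfolds in two deterministic single-successor steps, $(\fsone,\app{(\abstr{\varone}{\termone\sigma})}{\termtwo\sigma})\cbnfsred(\las{\termtwo\sigma}{\fsone},\abstr{\varone}{\termone\sigma})\cbnfsred(\fsone,\subst{\termone\sigma}{\varone}{\termtwo\sigma})$, the first by the application clause and the second by the $\beta$-clause. Applying the preservation fact to each step yields $\cbnfssup{\fsone}{\app{(\abstr{\varone}{\termone\sigma})}{\termtwo\sigma}} = \cbnfssup{\las{\termtwo\sigma}{\fsone}}{\abstr{\varone}{\termone\sigma}} = \cbnfssup{\fsone}{\subst{\termone\sigma}{\varone}{\termtwo\sigma}}$ for every $\fsone$, which is exactly $\cbnciuleq$ in both directions, i.e. $\cbnciuequiv$.

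I do not anticipate a genuine difficulty; the only points needing care are the capture-avoidance bookkeeping for the closing substitution together with the substitution lemma, and the case analysis in the $\leq$ direction of the helper, where one must explicitly discard $\Howeredtwo$ and use determinism to pin down the $\Howeredthree$ premise. An alternative would invoke Lemma~\ref{lemma:ciuctxeq} to replace $\cbnfssup{\fsone}{\termone}$ by $\sumsem{\stktm{\fsone}{\termone}}$ and then appeal to Lemma~\ref{lemma:sembetaCBN}; but that route additionally demands a compositionality lemma asserting that $\sem{\stktm{\fsone}{\termone}}$ depends on its argument only through its semantics, which is more laborious than the two machine steps above.
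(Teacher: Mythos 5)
Your proposal is correct and follows essentially the same route as the paper's proof: both work directly with the frame-stack judgements, reduce to closed terms via closing substitutions, and exploit the two-step reduction $(\fsone,\app{(\abstr{\varone}{\termone})}{\termtwo})\cbnfsred(\las{\termtwo}{\fsone},\abstr{\varone}{\termone})\cbnfsred(\fsone,\subst{\termone}{\varone}{\termtwo})$, with one direction obtained by extending a derivation with $\Howeredthree$ (exactly the paper's explicit derivation) and the other by analysing how a derivation for the redex must decompose. The only difference is packaging: your single-step preservation lemma, with its last-rule inspection and determinism argument, makes rigorous and symmetric what the paper splits into an ``obvious consequence'' claim for one direction and an explicit derivation for the other.
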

\begin{proof}
  We need to show that both
  $\rel{\vecvarone}{\app{(\abstr{\varone}{\termone})}{\termtwo}}{\cbnciuleq}{\subst{\termone}{\varone}{\termtwo}}$
  and
  $\rel{\vecvarone}{\subst{\termone}{\varone}{\termtwo}}{\cbnciuleq}{\app{(\abstr{\varone}{\termone}){\termtwo}}}$
  hold. Since $\cbnciuleq$ is defined on open terms by taking closing
  term-substitutions, 
  it suffices to show the result for close $\LOP$-terms only:
  $\app{(\abstr{\varone}{\termone})}{\termtwo}\cbnciuleq\subst{\termone}{\varone}{\termtwo}$
  and
  $\subst{\termone}{\varone}{\termtwo}\cbnciuleq\app{(\abstr{\varone}{\termone})}{\termtwo}$.

  Let us start with
  $\app{(\abstr{\varone}{\termone})}{\termtwo}\cbnciuleq\subst{\termone}{\varone}{\termtwo}$
  and prove that, for every close frame stack $\fsone$,
  $\cbnfssup{\fsone}{\app{(\abstr{\varone}{\termone})}{\termtwo}}\leq\cbnfssup{\fsone}{\subst{\termone}{\varone}{\termtwo}}$. The
  latter is an obvious consequence of the fact
  that 
  $(\fsone,\app{(\abstr{\varone}{\termone})}{\termtwo})$
  reduces to $(\fsone,\subst{\termone}{\varone}{\termtwo})$. Let us look
  into the details distinguishing two cases:
  \begin{varitemize}
  \item If $\fsone=\nil$, then
    $(\fsone,\app{(\abstr{\varone}{\termone})}{\termtwo})\cbnfsred(\las{\termtwo}{\fsone},\abstr{\varone}{\termone})
    \cbnfsred(\fsone,\subst{\termone}{\varone}{\termtwo})$ which implies
    that $\cbnfssup{\fsone}{\app{(\abstr{\varone}{\termone})}{\termtwo}} =
    \sup_{\probone\in\RRN}\cbnfscp{\fsone}{\app{(\abstr{\varone}{\termone})}{\termtwo}}{\probone}
    =
    \sup_{\probone\in\RRN}\cbnfscp{\fsone}{\subst{\termone}{\varone}{\termtwo}}{\probone}
    = \cbnfssup{\fsone}{\subst{\termone}{\varone}{\termtwo}}$.
  \item If $\fsone=\las{\termthree}{\fstwo}$, then we can proceed
    similarly.
  \end{varitemize}
    
  \noindent{}Similarly, to prove the converse,
  $\subst{\termone}{\varone}{\termtwo}\cbnciuleq\app{(\abstr{\varone}{\termone})}{\termtwo}$,
  let us fix $\probone$ as
  $\cbnfscp{\fsone}{\subst{\termone}{\varone}{\termtwo}}{\probone}$ and
  distinguish two cases:
  \begin{varitemize}
  \item If $\fsone=\nil$ 
    and $\probone = 0$, then
    $\cbnfscp{\fsone}{\app{(\abstr{\varone}{\termone})}{\termtwo}}{0}$
    holds too by $\Howeredone$ rule. Otherwise,
    $$
    \infer[\Howeredthree]
    {\cbnfscp{\fsone}{\app{(\abstr{\varone}{\termone})}{\termtwo}}{\probone}
    }
    {(\fsone,\termone)\cbnfsred(\las{\termtwo}{\fsone},\abstr{\varone}{\termone})
      & \infer[\Howeredthree]
      {\cbnfscp{\las{\termtwo}{\fsone}}{\abstr{\varone}{\termone}}{\probone}
      }
      {(\las{\termtwo}{\fsone},\abstr{\varone}{\termone})\cbnfsred(\fsone,\subst{\termone}{\varone}{\termtwo})
        & \cbnfscp{\fsone}{\subst{\termone}{\varone}{\termtwo}}{\probone}}
    }
    $$
    which implies $\cbnfssup{\fsone}{\subst{\termone}{\varone}{\termtwo}} =
    \sup_{\probone\in\RRN}\cbnfscp{\fsone}{\subst{\termone}{\varone}{\termtwo}}{\probone}
    =
    \sup_{\probone\in\RRN}\cbnfscp{\fsone}{\app{(\abstr{\varone}{\termone})}{\termtwo}}{\probone}
    = \cbnfssup{\fsone}{\app{(\abstr{\varone}{\termone})}{\termtwo}}$.
  \item If $\fsone=\las{\termthree}{\fstwo}$, then we can proceed
    similarly.
  \end{varitemize}
  This concludes the proof.
\end{proof}

\begin{lemma}\label{lemma:keyciu}
  For every $\fsone,\fstwo\in\stk{\emptyset}$ and
  $\termone,\termtwo\in\LOP(\emptyset)$, if
  $\fsone\howe{(\cbnciuleq)}\fstwo$ and
  $\termone\howe{(\cbnciuleq)}\termtwo$ and
  $\cbnfscp{\fsone}{\termone}{\probone}$, then
  $\cbnfssup{\fstwo}{\termtwo}\geq\probone$.
\end{lemma}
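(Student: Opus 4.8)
The plan is to prove Lemma~\ref{lemma:keyciu} by induction on the derivation of $\cbnfscp{\fsone}{\termone}{\probone}$, proceeding by cases on the last rule used. The two base cases are routine. If the last rule is $\Howeredone$ then $\probone=0$ and $\cbnfssup{\fstwo}{\termtwo}\geq 0$ holds trivially. If it is $\Howeredtwo$ then $\fsone=\nil$, $\termone$ is a value $\abstr{\varone}{\termthree}$ and $\probone=1$. Here I invert the two Howe relations: the only rule deriving $\nil\howe{(\cbnciuleq)}\fstwo$ is $\Howestkone$, forcing $\fstwo=\nil$, while the derivation of $\abstr{\varone}{\termthree}\howe{(\cbnciuleq)}\termtwo$ must end in $\Howetwo$, yielding some $\termfour$ with $\abstr{\varone}{\termfour}\cbnciuleq\termtwo$. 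Since $\cbnfssup{\nil}{\abstr{\varone}{\termfour}}=1$ (by $\Howeredtwo$), the definition of $\cbnciuleq$ gives $\cbnfssup{\nil}{\termtwo}\geq 1=\probone$.

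The substance is the inductive case $\Howeredthree$, where $(\fsone,\termone)\cbnfsred(\fstwo_1,\termtwo_1),\dots,(\fstwo_n,\termtwo_n)$, each $\cbnfscp{\fstwo_i}{\termtwo_i}{\probone_i}$, and $\probone=\frac1n\sum_i\probone_i$. I split according to which reduction fired, i.e.\ on the head shape of $\termone$. \textbf{Application} $\termone=\app{\termthree}{\termfour}$: then $n=1$, $\fstwo_1=\las{\termfour}{\fsone}$, $\termtwo_1=\termthree$, and inverting $\Howethree$ gives $\termthree\howe{(\cbnciuleq)}\termthree'$, $\termfour\howe{(\cbnciuleq)}\termfour'$ and $\app{\termthree'}{\termfour'}\cbnciuleq\termtwo$. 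By $\Howestktwo$ one has $\las{\termfour}{\fsone}\howe{(\cbnciuleq)}\las{\termfour'}{\fstwo}$, so the induction hypothesis on the premise yields $\cbnfssup{\las{\termfour'}{\fstwo}}{\termthree'}\geq\probone$. Since $\stktm{\las{\termfour'}{\fstwo}}{\termthree'}=\stktm{\fstwo}{\app{\termthree'}{\termfour'}}$, Lemma~\ref{lemma:ciuctxeq} gives $\cbnfssup{\las{\termfour'}{\fstwo}}{\termthree'}=\cbnfssup{\fstwo}{\app{\termthree'}{\termfour'}}$, and the absorbed step $\app{\termthree'}{\termfour'}\cbnciuleq\termtwo$ gives $\cbnfssup{\fstwo}{\app{\termthree'}{\termfour'}}\leq\cbnfssup{\fstwo}{\termtwo}$; chaining closes the case.

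The remaining two subcases follow the same pattern. \textbf{Probabilistic sum} $\termone=\ps{\termthree}{\termfour}$: here $n=2$, $\fstwo_1=\fstwo_2=\fsone$, and inverting $\Howefour$ gives $\termthree\howe{(\cbnciuleq)}\termsix$, $\termfour\howe{(\cbnciuleq)}\termseven$, $\ps{\termsix}{\termseven}\cbnciuleq\termtwo$. The induction hypothesis on the two premises yields $\cbnfssup{\fstwo}{\termsix}\geq\probone_1$ and $\cbnfssup{\fstwo}{\termseven}\geq\probone_2$; since $(\fstwo,\ps{\termsix}{\termseven})\cbnfsred(\fstwo,\termsix),(\fstwo,\termseven)$, rule $\Howeredthree$ together with directedness of the approximants gives $\cbnfssup{\fstwo}{\ps{\termsix}{\termseven}}=\frac12(\cbnfssup{\fstwo}{\termsix}+\cbnfssup{\fstwo}{\termseven})\geq\probone$, and $\ps{\termsix}{\termseven}\cbnciuleq\termtwo$ concludes. \textbf{Beta step} $\termone=\abstr{\varone}{\termthree}$ with $\fsone=\las{\termfour}{\fsone'}$: then $n=1$ and $(\fsone,\termone)\cbnfsred(\fsone',\subst{\termthree}{\varone}{\termfour})$. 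Inverting $\Howestktwo$ gives $\fstwo=\las{\termfour'}{\fstwo'}$ with $\termfour\howe{(\cbnciuleq)}\termfour'$ and $\fsone'\howe{(\cbnciuleq)}\fstwo'$; inverting $\Howetwo$ gives $\rel{\{\varone\}}{\termthree}{\howe{(\cbnciuleq)}}{\termfive}$ and $\abstr{\varone}{\termfive}\cbnciuleq\termtwo$. Since $\cbnciuleq$ is a preorder closed under term-substitution, Lemma~\ref{lemma:closesubsCBN} gives $\subst{\termthree}{\varone}{\termfour}\howe{(\cbnciuleq)}\subst{\termfive}{\varone}{\termfour'}$, so the induction hypothesis on the premise yields $\cbnfssup{\fstwo'}{\subst{\termfive}{\varone}{\termfour'}}\geq\probone$. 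Finally $\stktm{\las{\termfour'}{\fstwo'}}{\abstr{\varone}{\termfive}}=\stktm{\fstwo'}{\app{(\abstr{\varone}{\termfive})}{\termfour'}}$, so combining Lemma~\ref{lemma:ciuctxeq} with the beta law $\app{(\abstr{\varone}{\termfive})}{\termfour'}\cbnciuequiv\subst{\termfive}{\varone}{\termfour'}$ (Lemma~\ref{lemma:betaciueq}) gives $\cbnfssup{\fstwo}{\abstr{\varone}{\termfive}}=\cbnfssup{\fstwo'}{\subst{\termfive}{\varone}{\termfour'}}\geq\probone$, and $\abstr{\varone}{\termfive}\cbnciuleq\termtwo$ concludes.

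I expect the main obstacle to be the application and beta cases, where one must correctly translate between ``evaluating the head under an extended frame stack'' and ``evaluating the redex under the original stack,'' and, in the beta case, use substitutivity of the Howe relation together with the beta CIU-law to push the absorbed $\cbnciuleq$-step back under a single common frame stack, so that the definition of $\cbnciuleq$ (quantifying over stacks) can be invoked at exactly the right point. The bookkeeping that $\howe{(\cbnciuleq)}$ and its extension to stacks are preserved along each reduction step, and that all relevant terms and stacks remain closed, is where the care lies; the probabilistic sum case instead only needs the mild additional fact that the supremum defining $\cbnfssup{\fstwo}{\cdot}$ commutes with the convex combination, which follows from directedness of the approximant judgments.
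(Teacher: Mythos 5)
Your proof is correct and follows essentially the same route as the paper's: induction on the derivation of $\cbnfscp{\fsone}{\termone}{\probone}$, inversion of the Howe rules (including $\Howestktwo$ to build the extended stack in the application case), and substitutivity of $\howe{(\cbnciuleq)}$ via Lemma~\ref{lemma:closesubsCBN} in the beta case. The only difference is cosmetic: where the paper transfers the supremum across a machine step directly via the one-step reductions $(\fstwo,\termsix\termseven)\cbnfsred(\las{\termseven}{\fstwo},\termsix)$ and $(\las{\termsix}{\fsfour},\abstr{\varone}{\termseven})\cbnfsred(\fsfour,\subst{\termseven}{\varone}{\termsix})$, you route the same equalities through Lemma~\ref{lemma:ciuctxeq} and the beta law of Lemma~\ref{lemma:betaciueq}, which is equally valid.
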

\begin{proof}
  We go by induction on the structure of the proof of
  $\cbnfscp{\fsone}{\termone}{\probone}$, looking at the last rule used.
  \begin{varitemize}
  \item
    If $\cbnfscp{\fsone}{\termone}{0}$, then trivially $\cbnfssup{\fstwo}{\termtwo}\geq 0$.
  \item If $\fsone=\nil$, $\termone=\abstr{\varone}{\termthree}$ and
    $\probone=1$, then $\fstwo=\nil$ since
    $\fsone\howe{(\cbnciuleq)}\fstwo$. From
    $\termone\howe{(\cbnciuleq)}\termtwo$, it follows that there is
    $\termfour$ with
    $\rel{\varone}{\termthree}{\howe{(\cbnciuleq)}}{\termfour}$ and
    $\rel{\emptyset}{\abstr{\varone}{\termfour}}{\cbnciuleq}{\termtwo}$.
    But the latter implies that $\cbnfssup{\nil}{\termtwo}\geq 1$, which is
    the thesis.
  \item Otherwise, $\Howeredthree$ rule is used and suppose we are in the
    following situation
    $$
    \infer[\Howeredthree] {\cbnfscp{\fsone}{\termone}{\frac{1}{n}\sum_{i=1}^n\probone_i}}
    {(\fsone,\termone)\cbnfsred(\fsthree_1,\termthree_1),\ldots,(\fsthree_n,\termthree_n)
      & \cbnfscp{\fsthree_i}{\termthree_i}{\probone_i}}
    $$
    Let us distinguish the following cases as in definition of $\cbnfsred$:
    \begin{varitemize}
    \item If $\termone=\termfour\termfive$, then $n=1$,
      $\fsthree_1=\las{\termfive}{\fsone}$ and $\termthree_1=\termfour$.
      From $\termone\howe{(\cbnciuleq)}\termtwo$ it follows that there are
      $\termsix,\termseven$ with
      $\rel{\emptyset}{\termfour}{\howe{(\cbnciuleq)}}{\termsix}$,
      $\rel{\emptyset}{\termfive}{\howe{(\cbnciuleq)}}{\termseven}$ and
      $\rel{\emptyset}{\app{\termsix}{\termseven}}{\cbnciuleq}{\termtwo}$.
      But then we can form the following:
      $$
      \infer[\Howestktwo]
      {\rel{\emptyset}{\fsthree_1}{\howe{(\cbnciuleq)}}{\las{\termseven}{\fstwo}}}
      {\rel{\emptyset}{\termfive}{\howe{(\cbnciuleq)}}{\termseven} &
        \rel{\emptyset}{\fsone}{\howe{(\cbnciuleq)}}{\fstwo}}
      $$
      and, by the induction hypothesis, conclude that
      $\cbnfssup{\las{\termseven}{\fstwo}}{\termsix}\geq\probone$. Now
      observe that
      $$
      (\fstwo,\termsix\termseven)\cbnfsred(\las{\termseven}{\fstwo},\termsix),
      $$
      and, as a consequence,
      $\cbnfssup{\fstwo}{\termsix\termseven}\geq\probone$, from which the
      thesis easily follows given that
      $\rel{\emptyset}{\app{\termsix}{\termseven}}{\cbnciuleq}{\termtwo}$.
    \item If $\termone=\ps{\termfour}{\termfive}$, then $n=2$,
      $\fsthree_1=\fsthree_2=\fsone$ and $\termthree_1=\termfour$,
      $\termthree_2=\termfive$. From $\fsone\howe{(\cbnciuleq)}\fstwo$, we
      get that $\fsthree_1\howe{(\cbnciuleq)}\fstwo$ and
      $\fsthree_2\howe{(\cbnciuleq)}\fstwo$. From
      $\termone\howe{(\cbnciuleq)}\termtwo$ it follows that there are
      $\termsix,\termseven$ with
      $\rel{\emptyset}{\termfour}{\howe{(\cbnciuleq)}}{\termsix}$,
      $\rel{\emptyset}{\termfive}{\howe{(\cbnciuleq)}}{\termseven}$ and
      $\rel{\emptyset}{\ps{\termsix}{\termseven}}{\cbnciuleq}{\termtwo}$. Then,
      by a double induction hypothesis, it follows
      $\cbnfssup{\fstwo}{\termsix}\geq\probone$ and
      $\cbnfssup{\fstwo}{\termseven}\geq\probone$. The latter together
      imply $\cbnfssup{\fstwo}{\ps{\termsix}{\termseven}}\geq\probone$,
      from which the thesis easily follows given that
      $\rel{\emptyset}{\ps{\termsix}{\termseven}}{\cbnciuleq}{\termtwo}$.
    \item If $\termone=\abstr{\varone}{\termfour}$, then
      $\fsone=\las{\termfive}{\fsthree}$ because the only case left. Hence
      $n=1$, $\fsthree_1=\fsthree$ and
      $\termthree_1=\subst{\termfour}{\varone}{\termfive}$. From
      $\fsone\howe{(\cbnciuleq)}\fstwo$, we get that
      $\fstwo=\las{\termsix}{\fsfour}$ where
      $\rel{\emptyset}{\termfive}{\howe{(\cbnciuleq)}}{\termsix}$ and
      $\fsthree\howe{(\cbnciuleq)}\fsfour$. From
      $\termone\howe{(\cbnciuleq)}\termtwo$, it follows that for some
      $\termseven$, it holds that
      $\rel{\varone}{\termfour}{\howe{(\cbnciuleq)}}{\termseven}$ and
      $\rel{\emptyset}{\abstr{\varone}{\termseven}}{\cbnciuleq}{\termtwo}$. Now:
      \begin{equation}\label{eq:fsred}
        (\fstwo,\abstr{\varone}{\termseven})=(\las{\termsix}{\fsfour},\abstr{\varone}{\termseven})
        \cbnfsred(\fsfour,\subst{\termseven}{\varone}{\termsix}).
      \end{equation}
      From $\rel{\varone}{\termfour}{\howe{(\cbnciuleq)}}{\termseven}$ and
      $\rel{\emptyset}{\termfive}{\howe{(\cbnciuleq)}}{\termsix}$, by
      substitutivity of $\cbnciuleq$, follow that
      $\rel{\emptyset}{\subst{\termfour}{\varone}{\termfive}}{\howe{(\cbnciuleq)}}{\subst{\termseven}{\varone}{\termsix}}$
      holds. By induction hypothesis, it follows that
      $\cbnfssup{\fsfour}{\subst{\termseven}{\varone}{\termsix}}\geq\probone$. Then,
      from (\ref{eq:fsred}) and
      $\rel{\emptyset}{\abstr{\varone}{\termseven}}{\cbnciuleq}{\termtwo}$,
      the thesis easily follows:
      $$
      \cbnfssup{\fstwo}{\termtwo}\geq\cbnfssup{\fstwo}{\abstr{\varone}{\termseven}}=
      \cbnfssup{\fsfour}{\subst{\termseven}{\varone}{\termsix}}\geq\probone.
      $$
    \end{varitemize}
  \end{varitemize}
  This concludes the proof.
\end{proof}

\begin{theorem}\label{thm:ciupr=ctxpr}
  For all $\vecvarone\in\powfin{\setvar}$ and for all
  $\termone,\,\termtwo\in\LOPp{\vecvarone}$,
  $\rel{\vecvarone}{\termone}{\cbnciuleq}{\termtwo}$ iff
  $\rel{\vecvarone}{\termone}{\cbnconleq}{\termtwo}$.
\end{theorem}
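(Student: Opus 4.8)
The plan is to establish the two inclusions $\cbnconleq\subseteq\cbnciuleq$ and $\cbnciuleq\subseteq\cbnconleq$ by complementary means. For the first I will observe that a frame stack, read as an evaluation context and wrapped so as to close up the free variables, is a particular instance of a closing $\LOP$-context, so that every CIU test is subsumed by a contextual test. For the second I will show that $\cbnciuleq$ is a compatible and adequate $\LOP$-relation and then appeal to the fact that, by Lemma~\ref{lemma:ctxeq=cfctxpr} together with the Corollary characterising $\cbncfleq$, the relation $\cbnconleq=\cbncfleq$ is the \emph{largest} compatible and adequate $\LOP$-relation. The two nontrivial ingredients, namely Lemma~\ref{lemma:ciuctxeq} (which identifies $\cbnfssup{\fsone}{\termone}$ with the convergence probability $\sumsem{\stktm{\fsone}{\termone}}$) and the Key Lemma~\ref{lemma:keyciu} for frame stacks, are already in hand.

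For $\cbnconleq\subseteq\cbnciuleq$ I would assume $\rel{\vecvarone}{\termone}{\cbnconleq}{\termtwo}$ with $\vecvarone=\{\varone_1,\dots,\varone_m\}$, and fix a closing substitution $\sigma=\sub{\termthree_1}{\varone_1}\cdots\sub{\termthree_m}{\varone_m}$ and a closed frame stack $\fsone$. Since $\fsone$ is closed, the operation $\stktm{\fsone}{\cdot}$ is realised by an evaluation context $\ctxone_\fsone\in\ctxsetp{\vecvarone}{\vecvarone}$, i.e. $\ctxone_\fsone\ctxhole{M}=\stktm{\fsone}{M}$ for every $M$. Forming $\ctxtwo\defi(\abstr{\varone_1}{\cdots\abstr{\varone_m}{\ctxone_\fsone}})\termthree_1\cdots\termthree_m$, the rules $\Ctxone$, $\Ctxtwo$, $\Ctxthree$ give $\ctxtwo\in\ctxsetp{\vecvarone}{\emptyset}$. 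Now $\ctxtwo\ctxhole{\termone}$ is exactly $\stktm{\fstwo}{\abstr{\varone_1}{\cdots\abstr{\varone_m}{\stktm{\fsone}{\termone}}}}$ for $\fstwo\defi\las{\termthree_1}{\cdots\las{\termthree_m}{\nil}}$; evaluating this frame stack strips the $m$ leading abstractions against the $m$ frames, performing precisely the substitutions $\sub{\termthree_i}{\varone_i}$ and reaching $\stktm{\fsone}{\termone\sigma}$ (the $\fsone$-frames are closed, so $\sigma$ touches only $\termone$). Hence, by Lemma~\ref{lemma:ciuctxeq}, $\sumsem{\ctxtwo\ctxhole{\termone}}=\cbnfssup{\fsone}{\termone\sigma}$, and symmetrically for $\termtwo$. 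Applying the contextual hypothesis to $\ctxtwo$ yields $\cbnfssup{\fsone}{\termone\sigma}\leq\cbnfssup{\fsone}{\termtwo\sigma}$; since $\fsone$ and $\sigma$ are arbitrary, $\rel{\vecvarone}{\termone}{\cbnciuleq}{\termtwo}$ follows.

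For $\cbnciuleq\subseteq\cbnconleq$ I would first note that $\cbnciuleq$ is reflexive, transitive, and closed under term-substitution, the last because its open extension quantifies over all closing substitutions. By Lemma~\ref{lemma:howeprop1} and Lemma~\ref{lemma:closesubsCBN}, $\howe{(\cbnciuleq)}$ is then compatible and (term) substitutive, and by Lemma~\ref{lemma:howeprop3} we have $\cbnciuleq\subseteq\howe{(\cbnciuleq)}$. For the converse inclusion it suffices, using substitutivity of $\howe{(\cbnciuleq)}$, to treat closed $\termone,\termtwo$: given $\termone\howe{(\cbnciuleq)}\termtwo$ and any frame stack $\fsone$, reflexivity of the frame-stack lifting (immediate from $\Howestkone$, $\Howestktwo$ and reflexivity of $\howe{(\cbnciuleq)}$) gives $\fsone\howe{(\cbnciuleq)}\fsone$, so the Key Lemma~\ref{lemma:keyciu} yields $\cbnfssup{\fsone}{\termtwo}\geq\probone$ whenever $\cbnfscp{\fsone}{\termone}{\probone}$; taking the supremum over such $\probone$ gives $\cbnfssup{\fsone}{\termone}\leq\cbnfssup{\fsone}{\termtwo}$, i.e. $\termone\cbnciuleq\termtwo$. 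Thus $\howe{(\cbnciuleq)}=\cbnciuleq$, whence $\cbnciuleq$ is compatible. Adequacy is the $\fsone=\nil$ case of Lemma~\ref{lemma:ciuctxeq}: $\rel{\emptyset}{\termone}{\cbnciuleq}{\termtwo}$ forces $\cbnfssup{\nil}{\termone}\leq\cbnfssup{\nil}{\termtwo}$, i.e. $\termone\evp{\probone}$ and $\termtwo\evp{\probtwo}$ with $\probone\leq\probtwo$. Hence $\cbnciuleq\in\caset$ is a compatible and adequate $\LOP$-relation, so $\cbnciuleq\subseteq\cbncfleq=\cbnconleq$.

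The whole weight of the argument rests on the Key Lemma~\ref{lemma:keyciu}, whose inductive proof on the derivation of $\cbnfscp{\fsone}{\termone}{\probone}$ is the genuinely hard part and is assumed here. Granting it, the remaining obstacles are purely organisational: carrying the Howe identity $\howe{(\cbnciuleq)}=\cbnciuleq$ from closed terms to open ones through closing substitutions (licensed by substitutivity of $\howe{(\cbnciuleq)}$), and constructing the single closing context $\ctxtwo$ that simultaneously realises the closing substitution $\sigma$ and the evaluation context encoded by $\fsone$. The step where Lemma~\ref{lemma:ciuctxeq} converts $\cbnfssup{}{}$ into a convergence probability is essential on both sides: it lets the per-derivation bound of the Key Lemma be promoted to an inequality of suprema, and it ties $\cbnfssup{\fsone}{\termone\sigma}$ back to the observable $\sumsem{\ctxtwo\ctxhole{\termone}}$.
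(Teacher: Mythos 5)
Your proposal is correct. The inclusion $\cbnciuleq\subseteq\cbnconleq$ is essentially the paper's own argument: reflexivity of $\cbnciuleq$ together with Lemmas~\ref{lemma:howeprop1} and~\ref{lemma:howeprop3} sandwich $\cbnciuleq$ inside its Howe lifting, the Key Lemma~\ref{lemma:keyciu} instantiated with $\fstwo=\fsone$ (your explicit appeal to reflexivity of the frame-stack lifting is exactly what licenses this) collapses the lifting back onto $\cbnciuleq$, the identity is transported from closed to open terms through closure under term-substitution, and compatibility plus adequacy (the $\fsone=\nil$ case of Lemma~\ref{lemma:ciuctxeq}) place $\cbnciuleq$ inside $\cbncfleq=\cbnconleq$ via Lemma~\ref{lemma:ctxeq=cfctxpr}. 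Where you genuinely diverge is the inclusion $\cbnconleq\subseteq\cbnciuleq$. The paper first treats closed terms --- compatibility of $\cbnconleq$ (via Lemma~\ref{lemma:ctxcomprel}) gives $\stktm{\fsone}{\termone}\cbnconleq\stktm{\fsone}{\termtwo}$, and adequacy plus Lemma~\ref{lemma:ciuctxeq} turn this into a CIU inequality --- and then handles open terms by abstracting the free variables, invoking the compatibility of $\cbnciuleq$ established in the \emph{other} direction, and undoing the abstractions with the $\beta$-law for CIU (Lemma~\ref{lemma:betaciueq}). You instead build, for each closing substitution $\sigma$ and closed stack $\fsone$, a single closing context $(\abstr{\varone_1}{\cdots\abstr{\varone_m}{\ctxone_{\fsone}}})\termthree_1\cdots\termthree_m$ that implements the substitution and the stack test simultaneously, and discharge the bookkeeping with $\beta$-invariance of the semantics (Lemma~\ref{lemma:sembetaCBN}, extended along application, or equivalently invariance of $\mathbb{C}(\cdot,\cdot)$ under the deterministic, non-branching reduction steps) together with Lemma~\ref{lemma:ciuctxeq}. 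Both routes rest on the same observation that stack tests are particular context tests; yours buys modularity --- this direction becomes independent of the Howe-method direction, and Lemma~\ref{lemma:betaciueq} is no longer needed --- at the price of verifying that the composite context lies in $\ctxsetp{\vecvarone}{\emptyset}$ and that the $m$ head reductions preserve convergence probability, which is the same kind of check the paper performs inside the proof of Lemma~\ref{lemma:betaciueq}.
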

\begin{proof}
  ($\Rightarrow$) Since $\cbnciuleq$ is defined on open terms by taking
  closing term-substitutions, by Lemma~\ref{lemma:closesubsCBN} both it and
  $\howe{(\cbnciuleq)}$ are closed under term-substitution. Then, it
  suffices to show the result for closed $\LOP$-terms: for all
  $\termone,\,\termtwo\in\LOPp{\emptyset}$, if
  $\rel{\emptyset}{\termone}{\cbnciuleq}{\termtwo}$, then
  $\rel{\emptyset}{\termone}{\cbnconleq}{\termtwo}$.  Since $\cbnciuleq$ is
  reflexive, by Lemma~\ref{lemma:howeprop1} follows that
  $\howe{(\cbnciuleq)}$ is compatible, hence reflexive too. Taking $\fstwo
  = \fsone$ in Lemma~\ref{lemma:keyciu}, we conclude that
  $\rel{\emptyset}{\termone}{\howe{(\cbnciuleq)}}{\termtwo}$ implies
  $\rel{\emptyset}{\termone}{\cbnciuleq}{\termtwo}$. As we have remarked
  before the lemma, the latter entails that $\howe{(\cbnciuleq)} =
  \cbnciuleq$ which implies $\cbnciuleq$ of being compatible. Moreover,
  from Lemma~\ref{lemma:ciuctxeq} immediately follows that $\cbnciuleq$ is
  also adequate. Thus, $\cbnciuleq$ is contained in the largest compatible
  adequate $\LOP$-relation, $\cbncfleq$. From
  Lemma~\ref{lemma:ctxeq=cfctxpr} follows that $\cbnciuleq$ is actually
  contained in $\cbnconleq$. In particular, the latter means
  $\rel{\emptyset}{\termone}{\cbnciuleq}{\termtwo}$ implies
  $\rel{\emptyset}{\termone}{\cbnconleq}{\termtwo}$.

  ($\Leftarrow$) First of all, please observe that, since context
  preorder is compatible, if
  $\rel{\emptyset}{\termone}{\cbnconleq}{\termtwo}$ then, for all
  $\fsone\in\stk{\emptyset}$,
  $\rel{\emptyset}{\stktm{\fsone}{\termone}}{\cbnconleq}{\stktm{\fsone}{\termtwo}}$
  by Lemma~\ref{lemma:ctxcomprel}. Then, by adequacy property of
  $\cbnconleq$ and Lemma~\ref{lemma:ciuctxeq}, the latter implies
  $\rel{\emptyset}{\termone}{\cbnciuleq}{\termtwo}$. Ultimately, it holds
  that $\rel{\emptyset}{\termone}{\cbnconleq}{\termtwo}$ implies
  $\rel{\emptyset}{\termone}{\cbnciuleq}{\termtwo}$. Let us take into
  account the general case of open terms.  If
  $\rel{\vecvarone}{\termone}{\cbnconleq}{\termtwo}$, then by compatibility
  property of $\cbnconleq$ it follows
  $\rel{\emptyset}{\abstr{\vecvarone}{\termone}}{\cbnconleq}{\abstr{\vecvarone}{\termtwo}}$
  and hence
  $\rel{\emptyset}{\abstr{\vecvarone}{\termone}}{\cbnciuleq}{\abstr{\vecvarone}{\termtwo}}$. Then,
  from the fact that $\cbnciuleq$ is compatible (as established in
  $(\Rightarrow)$ part of this proof) and Lemma~\ref{lemma:betaciueq}, for
  every suitable $\vectermthree\subseteq\LOPp{\emptyset}$, it holds
  $\rel{\emptyset}{\subst{\termone}{\vecvarone}{\vectermthree}}{\cbnciuleq}{\subst{\termtwo}{\vecvarone}{\vectermthree}}$,
  i.e. $\rel{\vecvarone}{\termone}{\cbnciuleq}{\termtwo}$.
\end{proof}

\begin{corollary}\label{cor:ciueq=ctxeq}
  $\cbnciuequiv$ coincides with $\cbnconequiv$.
\end{corollary}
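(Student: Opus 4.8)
The plan is to obtain this as an immediate consequence of Theorem~\ref{thm:ciupr=ctxpr}, which already does all the real work by establishing that the two \emph{preorders} coincide. The only thing that needs checking is that each of the two equivalences is packaged from its preorder in the same way, so that equality of preorders transfers to equality of equivalences. First I would recall the two definitions: by the stipulation preceding Lemma~\ref{lemma:betaciueq}, $\termone\cbnciuequiv\termtwo$ holds exactly when both $\termone\cbnciuleq\termtwo$ and $\termtwo\cbnciuleq\termone$ hold, i.e. $\cbnciuequiv=\cbnciuleq\cap\cbnciuleq^{\mathit{op}}$; and by Definition~\ref{def:ctxeqCBN}, $\cbnconequiv=\cbnconleq\cap\cbnconleq^{\mathit{op}}$ as $\LOP$-relations.

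Then I would invoke Theorem~\ref{thm:ciupr=ctxpr}, which states that $\rel{\vecvarone}{\termone}{\cbnciuleq}{\termtwo}$ iff $\rel{\vecvarone}{\termone}{\cbnconleq}{\termtwo}$ for all $\vecvarone$ and all $\termone,\termtwo\in\LOPp{\vecvarone}$; that is, $\cbnciuleq=\cbnconleq$ as $\LOP$-relations. Taking opposite relations preserves this equality, so $\cbnciuleq^{\mathit{op}}=\cbnconleq^{\mathit{op}}$ as well. Intersecting the two equalities yields
\[
\cbnciuequiv=\cbnciuleq\cap\cbnciuleq^{\mathit{op}}=\cbnconleq\cap\cbnconleq^{\mathit{op}}=\cbnconequiv,
\]
which is the claim. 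There is no genuine obstacle at this stage: the substantive content---the frame-stack presentation of CIU via $\cbnfssup{\fsone}{\termone}$, the Context-Lemma--style argument, and the appeal to Howe's technique through the Key Lemma for frame stacks (Lemma~\ref{lemma:keyciu})---is entirely discharged inside Theorem~\ref{thm:ciupr=ctxpr}. The corollary is therefore purely definitional bookkeeping, reducing to the observation that both notions of equivalence are the symmetric part of their respective (already coincident) preorders.
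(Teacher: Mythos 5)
Your proposal is correct and matches the paper's own proof, which simply states that the corollary is a ``straightforward consequence of Theorem~\ref{thm:ciupr=ctxpr}''; you have merely made explicit the bookkeeping that both $\cbnciuequiv$ and $\cbnconequiv$ are the symmetric parts of their respective preorders, so equality of the preorders transfers immediately to the equivalences.
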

\begin{proof}
  Straightforward consequence of Theorem~\ref{thm:ciupr=ctxpr}.
\end{proof}

\begin{proposition}
  $\cbnconleq$ and $\cbnsimleq$ do not coincide.
\end{proposition}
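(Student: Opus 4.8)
The plan is to exhibit the witnessing pair from Example~\ref{ex:count} and verify its two claims. Write $\termone\defi\abstr{\varone}{\ps{\termthree}{\termfour}}$ and $\termtwo\defi\ps{(\abstr{\varone}{\termthree})}{(\abstr{\varone}{\termfour})}$, where $\termthree\defi\abstr{\vartwo}{\Omega}$ and $\termfour\defi\abstr{\vartwo}{\abstr{\varthree}{\Omega}}$ are closed values in which $\varone$ does not occur free. Recall that $\cbnsimleq$ is exactly $\cbnpas$, and that soundness ($\cbnpas\subseteq\cbnconleq$) already follows from Theorem~\ref{thm:pasprecongrCBN} together with the Adequacy Lemma~\ref{lemma:sumsempabCBN}; so it suffices to show that this inclusion is strict, by proving $\termone\cbnconleq\termtwo$ and $\termone\not\cbnpas\termtwo$. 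First I would record the two value distributions. The term $\termone$ is already a value, so $\sem{\termone}$ is the point mass on the distinguished value $\clabstr{\varone}{\ps{\termthree}{\termfour}}$, while by Lemma~\ref{lemma:semsumCBN} the distribution $\sem{\termtwo}$ assigns $\frac{1}{2}$ to $\clabstr{\varone}{\termthree}$ and $\frac{1}{2}$ to $\clabstr{\varone}{\termfour}$.

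For $\termone\not\cbnpas\termtwo$ I would argue by contradiction. Since $\cbnpas$ is itself a simulation, instantiating its evaluation clause at the singleton $\setone=\{\clabstr{\varone}{\ps{\termthree}{\termfour}}\}$ forces $\sem{\termtwo}(\cbnpas(\setone))\geq\sem{\termone}(\setone)=1$; as $\sem{\termtwo}$ is supported on $\{\clabstr{\varone}{\termthree},\clabstr{\varone}{\termfour}\}$, both these values must be $\cbnpas$-above $\clabstr{\varone}{\ps{\termthree}{\termfour}}$, in particular $\clabstr{\varone}{\ps{\termthree}{\termfour}}\cbnpas\clabstr{\varone}{\termthree}$. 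Using Lemma~\ref{lemma:lambdaredCBN} in its $\cbnpas$ form, together with $\varone\notin\FV{\ps{\termthree}{\termfour}}\cup\FV{\termthree}$, this is equivalent to $\ps{\termthree}{\termfour}\cbnpas\termthree$. Applying the simulation clause once more, now at the singleton of $\clabstr{\vartwo}{\abstr{\varthree}{\Omega}}$ (reached by the left term with probability $\frac{1}{2}$, whereas $\termthree$ reaches only itself), forces $\termfour\cbnpas\termthree$, hence by Lemma~\ref{lemma:lambdaredCBN} again $\abstr{\varthree}{\Omega}\cbnpas\Omega$. The Adequacy Lemma~\ref{lemma:sumsempabCBN} then yields $1=\sumsem{\abstr{\varthree}{\Omega}}\leq\sumsem{\Omega}=0$, the desired contradiction.

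For $\termone\cbnconleq\termtwo$ I would pass through CIU-equivalence: by Corollary~\ref{cor:ciueq=ctxeq} it suffices to show $\termone\cbnciuequiv\termtwo$, i.e. $\cbnfssup{\fsone}{\termone}=\cbnfssup{\fsone}{\termtwo}$ for every closed frame stack $\fsone$, which I would read off directly from the rules defining $\cbnfsred$ and the judgement $\cbnfscp{\cdot}{\cdot}{\cdot}$. There are two shapes of $\fsone$. If $\fsone=\nil$, both suprema equal $1$, since each term converges with probability $1$ (cf.\ Lemma~\ref{lemma:ciuctxeq}). If $\fsone=\las{\termfive}{\fstwo}$, then on the left the frame-stack reduction first performs the $\beta$-step $(\fsone,\termone)\cbnfsred(\fstwo,\ps{\termthree}{\termfour})$ and then the probabilistic split, giving $\cbnfssup{\fsone}{\termone}=\frac{1}{2}\bigl(\cbnfssup{\fstwo}{\termthree}+\cbnfssup{\fstwo}{\termfour}\bigr)$; on the right the split comes first, after which each branch $\beta$-reduces, $(\las{\termfive}{\fstwo},\abstr{\varone}{\termthree})\cbnfsred(\fstwo,\termthree)$ and symmetrically for $\termfour$, yielding the identical value $\frac{1}{2}\bigl(\cbnfssup{\fstwo}{\termthree}+\cbnfssup{\fstwo}{\termfour}\bigr)$. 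Combining the two directions gives $\termone\cbnconleq\termtwo$ while $\termone\not\cbnsimleq\termtwo$, so the two preorders differ (indeed the inclusion $\cbnsimleq\subseteq\cbnconleq$ is strict).

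I expect the simulation-failure direction to be the main obstacle, as it is what genuinely separates the probabilistic late bisimulation game from context testing: it requires threading the simulation condition through two successive applications of the distinguished-value characterisation (Lemma~\ref{lemma:lambdaredCBN}) down to the point where Adequacy can be invoked. The CIU side is, by contrast, a routine unfolding of the frame-stack semantics, whose only conceptual content is that $\termone$ delays the probabilistic choice under a $\lambda$ whereas $\termtwo$ performs it eagerly, yet the two reorderings produce the same convergence probability against every evaluation context.
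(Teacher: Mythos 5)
Your proof is correct and takes essentially the same route as the paper's: exhibit a delayed-choice versus eager-choice pair, establish the context-preorder half via CIU-equivalence and a frame-stack computation (Theorem~\ref{thm:ciupr=ctxpr}, Corollary~\ref{cor:ciueq=ctxeq}), and refute similarity directly. The differences are inessential: you use the witness pair of Example~\ref{ex:count} instead of the paper's $\abstr{\varone}{\abstr{\vartwo}{\ps{\varone}{\vartwo}}}$ versus $\ps{(\abstr{\varone}{\abstr{\vartwo}{\varone}})}{(\abstr{\varone}{\abstr{\vartwo}{\vartwo}})}$ (so your stack analysis needs one fewer case), and you spell out the non-similarity argument (simulation clauses, Lemma~\ref{lemma:lambdaredCBN}, adequacy) that the paper dismisses as ``easily verified'' --- a worthwhile addition.
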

\begin{proof}
  We will prove that $\termone\cbnciuleq\termtwo$ but
  $\termone\not\cbnsimleq\termtwo$, where
  \begin{align*}
    \termone&\defi\abstr{\varone}{\abstr{\vartwo}{\ps{\varone}{\vartwo}}};\\
    \termtwo&\defi\ps{(\abstr{\varone}{\abstr{\vartwo}{\varone}})}{(\abstr{\varone}{\abstr{\vartwo}{\vartwo}})}.
  \end{align*}
  $\termone\not\cbnsimleq\termtwo$ can be easily verified, so let us
  concentrate on $\termone\cbnciuleq\termtwo$, and prove that for every
  $\fsone$,
  $\cbnfssup{\fsone}{\termone}\leq\cbnfssup{\fsone}{\termtwo}$. Let us
  distinguish three cases:
  \begin{varitemize}
  \item If $\fsone=\nil$, then $(\fsone,\termone)$ cannot be further
    reduced and
    $(\fsone,\termtwo)\cbnfsred(\fsone,\abstr{\varone}{\abstr{\vartwo}{\varone}}),(\fsone,\abstr{\varone}{\abstr{\vartwo}{\vartwo}})$,
    where the last two pairs cannot be reduced. As a consequence,
    $\cbnfssup{\fsone}{\termone}=0=\cbnfssup{\fsone}{\termtwo}$.
  \item If $\fsone=\las{\termthree}{\fstwo}$, then we can proceed
    similarly.
  \item If $\fsone=\las{\termthree}{\las{\termfour}{\fstwo}}$, then observe
    that
    \begin{align*}
      (\fsone,\termone)&\cbnfsred(\las{\termfour}{\fstwo},\abstr{\vartwo}{\ps{\termthree}{\vartwo}})\cbnfsred(\fstwo,\ps{\termthree}{\termfour})\\
      &\cbnfsred(\fstwo,\termthree),(\fstwo,\termfour);\\
      (\fsone,\termtwo)&\cbnfsred(\fsone,\abstr{\varone}{\abstr{\vartwo}{\varone}}),(\fsone,\abstr{\varone}{\abstr{\vartwo}{\vartwo}});\\
      (\fsone,\abstr{\varone}{\abstr{\vartwo}{\varone}})&\cbnfsred(\las{\termfour}{\fstwo},\abstr{\vartwo}{\termthree})\cbnfsred(\fstwo,\termthree);\\
      (\fsone,\abstr{\varone}{\abstr{\vartwo}{\vartwo}})&\cbnfsred(\las{\termfour}{\fstwo},\abstr{\vartwo}{\vartwo})\cbnfsred(\fstwo,\termfour).\\
    \end{align*}
    As a consequence,
    $$
    \cbnfssup{\fsone}{\termone}=\frac{1}{2}\cbnfssup{\fstwo}{\termthree}+\frac{1}{2}\cbnfssup{\fstwo}{\termfour}=\cbnfssup{\fsone}{\termtwo}.
    $$
  \end{varitemize}
  This concludes the proof.
\end{proof}

\begin{example}
  We consider again the programs from Example~\ref{ex:exp}.  Terms
  \hsk{expone} and \hsk{exptwo} only differ because the former
  performs all probabilistic choices on {natural numbers} obtained by
  applying a function to its argument, while in the latter choices are
  done at the functional level, and the argument to those functions is
  provided only at a later stage.  As a consequence, the two terms are
  not applicative bisimilar, and the reason is akin to that for
  the   inequality of the terms in 
  Example~\ref{ex:count}. 
  In contrast, the bisimilarity between  \hsk{expone} and \hsk{expthree
  k}, where \hsk{k} is any natural number, intuitively holds because
  both 
  \hsk{expone} and \hsk{expthree
    k} evaluate to a single term when fed with a
  function, while they start evolving in a genuinely probabilistic way
  only after the second argument is provided. At
  that point, the two functions evolve in very different ways, but
  their semantics (in the sense of Section~\ref{sect:p}) is 
  the same (cf.,  Lemma~\ref{lemma:samesem}).
  As a bisimulation one can use the equivalence generated by the relation
  \begin{align*}
    &\Bigl(\bigcup_{\hsk{k}}\{( \hsk{expone}, \hsk{expthree k})\}\Bigr) \cup
    \{(M,N)\; | \:  {\sem M} = {\sem N }\}\\
    &\cup \Bigl(\bigcup_L \{ (\lambda \hsk{n}. \hsk{B} \sub L {\tt{f}},
    \lambda \hsk{n}. \hsk{C} \sub L {\tt{f}} )\} \Bigr)
  \end{align*}
  using $\tt{B}$ and $\tt{C}$ for the body of $\hsk{expone}$ and
  $\hsk{expthree}$ respectively.
\end{example}

\section{The Discriminating Power of Probabilistic Contexts}\label{sect:dppc}
\setlength{\unitlength}{15pt}
We show here that applicative bisimilarity and context equivalence
collapse if the tested terms are pure, \emph{deterministic}, $\lambda$-terms. In other words, if the
probabilistic choices are brought into the terms only through the inputs supplied to the tested functions, 
applicative bisimilarity and context equivalence yield exactly the same discriminating power.
To show this, we prove that, on pure $\lambda$-terms, both relations coincide 
with the \emph{Levy-Longo tree equality}, which equates terms with the same Levy-Longo tree (briefly
LLT) \cite{DezG01}.  

LLT's are the lazy variant of B\"{o}hm Trees (briefly BT), the most popular
tree structure in the $\lambda$-calculus.  BT's only correctly express the
computational content of $\lambda$-terms in a \emph{strong} regime, while
they fail to do so in the lazy one.  For instance, the term $\lambda x.
\Omega $ and $\Omega$, as both {\em unsolvable}~\cite{Barendregt84}, have
identical BT's, but in a lazy regime we would always distinguish between
them; hence they have different LLT's.  
LLT's were introduced by 
Longo~\cite{Longo}, 
developing an original idea by Levy~\cite{Levy75}.
The \emph{Levy-Longo tree} of $M$, $LT(M)$, is coinductively constructed as follows:
$LT(M) \defi \lambda x_1. \ldots x_n.  \bot$ if $M$ is an unsolvable of order
$n$; $LT(M) \defi \top$ if $M$ is an unsolvable of order $\infty$; finally if
$M$ has principal head normal form $\lambda x_1. \ldots x_n . y M_1 \ldots
M_m$, then $LT(M)$ is a tree with root $\lambda x_1. \ldots x_n . y$ and
with $LT(M_1), \ldots, LT(M_m)$ as subtrees. Being defined coinductively,
LLT's can of course be infinite. We write $\termone\LLT\termtwo$ iff
$LT(\termone)=LT(\termtwo)$.

\begin{example}
\label{e:mn}
Let $\Xi$ be an unsolvable of order $\infty$ such as $\Xi \defi (\lambda x
. \lambda y . ( x x) )(\lambda x . \lambda y . ( x x) )$, and consider the
terms
$$
M \defi \lambda  x. ( x (\lambda  y. (x \Xi \Omega y))\Xi);
\qquad 
N \defi \lambda x. ( x (x \Xi \Omega)\Xi).
$$ 
These terms have been used to prove non-full-abstraction results in a
canonical model for the lazy $\lambda$-calculus by Abramsky and 
Ong~\cite{AbramskyOng93}. For this, they show that in the model the convergence test is
definable (this operator, when it receives an argument, would return the
identity function if the supplied argument is convergent, and would diverge
otherwise). The convergence test, $\nabla$, can distinguish between the two
terms, as $M \nabla$ reduces to an abstraction, whereas $N \nabla$
diverges.  However, no pure $\lambda$-term can make the same distinction.  The
two terms also have different LL trees:
\begin{center}
  \begin{tabular}{lll}
    $
    \begin{picture}(8,4)(0,0)      
      \put(0,3.3){ $LT(M) =$}
      \put(4.2,3.5){\makebox(0,0){$\lambda x . x$}}
      \put(4.2,3){\line(-1,-1){.7}}
      \put(4.2,3){\line(1,-1){.7}}
      \put(3.2,1.9){\makebox(0,0){$ \lambda y. x$}}
      \put(5.2,1.9){\makebox(0,0){$\top$}}

      \put(3.2,1.5){\line(0,-1){.7}}
      \put(3.2,1.5){\line(1,-1){.7}}
      \put(3.2,1.5){\line(-1,-1){.7}}
      \put(1.9,0.1){$\top$}
      \put(2.9,0.1){$\bot$}
      \put(3.9,0.1){$y$}
    \end{picture}$ 
    &
    \hspace{3mm}
    &
    $
    \begin{picture}(8,4)(0,0)
      \put(-1,3.3){$LT(N) =$}
      \put(3.2,3.5){\makebox(0,0){$\lambda x . x$}}
      \put(3.2,3){\line(-1,-1){.7}}
      \put(3.2,3){\line(1,-1){.7}}
      \put(2.2,1.9){\makebox(0,0){$x$}}
      \put(4.2,1.9){\makebox(0,0){$\top$}}
      \put(2.2,1.5){\line(1,-1){.7}}
      \put(2.2,1.5){\line(-1,-1){.7}}
      \put(.9,.1){$\top$}
      \put(2.9,.1){$\bot$}
    \end{picture}$
  \end{tabular}
\end{center}
\noindent{}Although in $\LaPRO$, as in $\Lambda$, the convergence test operator is not
definable, $M$ and $N$ can be separated using probabilities by running
them in a context $\qct$ that would feed $\Omega \oplus \lambda z. \lambda
u. z$ as argument; then $\ctxone\ctxhole{\termone}\evp{\frac{1}{2}}$
whereas $\ctxone\ctxhole{\termtwo}\evp{\frac{1}{4}}$.
\end{example}

\begin{example}
\label{e:mn2}
Abramsky's canonical model is itself coarser than LLT equality. For instance,
the terms $M \defi \lambda x .x x$ and $N \defi \lambda x.  (x \lambda y.  (x
y))$, have different LLT's but are equal in Abramsky's model (and hence
equal for context equivalence in $\Lambda$).  They are separated by context
equivalence in $\LaPRO$, for instance using the context $\ctxone \defi
\app{\ctxhole{\cdot}}{(\ps{I}{\Omega})}$,
since $\ctxone\ctxhole{\termone}\evp{\frac{1}{4}}$ whereas
$\ctxone\ctxhole{\termtwo}\evp{\frac{1}{2}}$.
\end{example}
 
\noindent{}We already know that on full $\LaPRO$, applicative bisimilarity
($\cbnpab$) implies context equivalence ($\cbnconequiv$). Hence, to prove
that \emph{on pure $\lambda$-terms} the two equivalences collapse to LLT
equality ($\LLT$), it suffices to prove that, for those pure terms,
$\cbnconequiv$ implies $\LLT$, and that $\LLT$ implies $\cbnpab$.

The first implication is obtained by a variation on the B\"{o}hm-out
technique, a powerful methodology for separation results in the
$\lambda$-calculus, often employed in proofs about local structure
characterisation theorems of $\lambda$-models.  For this we exploit an
inductive characterisation of LLT equality via stratification
approximants (Definition~\ref{d:approx}).  
The key Lemma \ref{l:ndet} shows that any difference on the trees
of two $\lambda$-terms within level $n$ can be observed by a suitable
context of the probabilistic $\lambda$-calculus.

We write $\uplus M$ as an abbreviation for the term $\ps{\Omega}{M}$.  We
denote by $Q_n $, $n > 0$, the term $\lambda x_1.\ldots\lambda x_n.x_n x_1 x_2 \cdots
x_{n-1}$. This is usually called the {\em B\"{o}hm permutator} of degree
$n$.  B\"{o}hm permutators play a key role 
in the B\"{o}hm-out technique.  A variant of them, the
   $\uplus$-permutators, play a pivotal role 
in Lemma \ref{l:ndet} below. A term $M \in \LaPRO$ is a 
\emph{$\uplus$-permutator of degree $n$ } if either 
$\termone=Q_n$ or there exists $ 0 \leq r < n $ such that 
$$
M = \lambda  x_1.\ldots\lambda x_r. \uplus \lambda  x_{r+1} \cdots \lambda  x_n . x_n x_1\cdots x_{n-1} \, .
$$
Finally, a function $f$ from the  positive integers to
$\lambda$-terms is a 
\emph{$\uplus$-permutator  function} if, for all $n$, $f(n)$ is a  $\uplus$-permutator of degree $n$.
Before giving the main technical lemma, it is useful some auxiliary concepts.  
The definitions below rely on two notions of reduction: $\termone\longrightarrowP{\probone}\termtwo$
means that $\termone$ call-by-name reduces to $\termtwo$ in one step with probability $\probone$. (As a matter
of fact, $\probone$ can be either $1$ or $\frac{1}{2}$.) Then $\Longrightarrow$ is obtained by composing
$\longrightarrow$ zero or more times (and multiplying the corresponding real numbers). If $p=1$ (because, e.g.,
we are dealing with pure $\lambda$-terms) $\Longrightarrow_p$ can be abbreviated just as $\Longrightarrow$.
With a slight abuse of notation, we also denote with $\Longrightarrow$ the multi-step lazy reduction relation
of pure, \emph{open} terms. The specialised form of probabilistic choice $\uplus\termone$ can be thought of as a new syntactic 
construct. Thus $\Lambda^\uplus$ is the set of pure $\lambda$-terms extended with the $\uplus$ operator. 
As $\uplus$ is a derived operator, its operational rules are the expected ones:
$$
\infer[\uplus L]
  {\uplus\termone\longrightarrowP {\frac{1}{2}}\Om}
  {}
\qquad
\infer[\uplus R]
  {\uplus\termone\longrightarrowP {\frac{1}{2}}\termone}
  {}
$$
The restriction on $\Longrightarrow$ in which $\uplus R$, but not $\uplus L$, can be applied, is called
$\Rrightarrow$. In the following, we need the following lemma:
\begin{lemma}\label{l:uplus}
Let $\termone,\termtwo,\termthree,\termfour$ be closed $\Lambda^\uplus$ terms.  Suppose 
$\sum\sem{\termone}=\sum\sem{\termtwo}$, that $\termone\Rrightarrow_p\termthree$ and 
$\termtwo\Rrightarrow_p\termfour$. Then also
$\sum\sem{\termthree}=\sum\sem{\termfour}$.
\end{lemma}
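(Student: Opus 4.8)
The plan is to isolate a single reduction invariant that drives everything: for any closed $\Lambda^\uplus$ term $\termfive$ with $\termfive\Rrightarrow_q\termsix$ one has $\sumsem{\termfive}=q\cdot\sumsem{\termsix}$. Granting this, the lemma is immediate. From $\termone\Rrightarrow_p\termthree$ and $\termtwo\Rrightarrow_p\termfour$ we obtain $\sumsem{\termone}=p\cdot\sumsem{\termthree}$ and $\sumsem{\termtwo}=p\cdot\sumsem{\termfour}$, so the hypothesis $\sumsem{\termone}=\sumsem{\termtwo}$ gives $p\cdot\sumsem{\termthree}=p\cdot\sumsem{\termfour}$. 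Since $\Rrightarrow_p$ fires only $\beta$ (weight $1$) and $\uplus R$ (weight $\frac{1}{2}$), the accumulated weight is $p=(\frac{1}{2})^k>0$ for some $k\in\NN$, and dividing by $p$ yields the thesis $\sumsem{\termthree}=\sumsem{\termfour}$.

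To prove the invariant I would induct on the length of $\termfive\Rrightarrow_q\termsix$. The zero-step case is trivial ($q=1$, $\termfive=\termsix$), and a nonempty reduction factors as $\termfive\longrightarrowP{q'}\termfive'\Rrightarrow_{q''}\termsix$ with $q=q'\cdot q''$; by the induction hypothesis $\sumsem{\termfive'}=q''\cdot\sumsem{\termsix}$, so it suffices to establish the single-step equality $\sumsem{\termfive}=q'\cdot\sumsem{\termfive'}$. Because the reduction is the lazy (weak head) one and $\termfive$ is closed, the head of $\termfive$ sits in the function position of an application spine, which I record as a closed frame stack $\fsone$; the head is then either a $\beta$-redex, giving $\termfive=\stktm{\fsone}{\app{(\abstr{\varone}{\termseven})}{\termeight}}$, $\termfive'=\stktm{\fsone}{\subst{\termseven}{\varone}{\termeight}}$ and $q'=1$, or a choice $\uplus\termseven=\ps{\Omega}{\termseven}$, giving $\termfive=\stktm{\fsone}{\ps{\Omega}{\termseven}}$, $\termfive'=\stktm{\fsone}{\termseven}$ and $q'=\frac{1}{2}$. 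Note that closedness is preserved along the reduction, so every intermediate term and spine is closed and the CIU machinery applies.

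The two single-step equalities are exactly where in-context versions of the semantic equations are needed, and I would route them through the frame-stack characterisation of convergence rather than the root-level Lemmas~\ref{lemma:sembetaCBN}/\ref{lemma:semsumCBN}. By Lemma~\ref{lemma:ciuctxeq}, $\sumsem{\stktm{\fsone}{\termnine}}=\cbnfssup{\fsone}{\termnine}$ for all $\fsone,\termnine$. For the $\beta$ case, Lemma~\ref{lemma:betaciueq} gives $\app{(\abstr{\varone}{\termseven})}{\termeight}\cbnciuequiv\subst{\termseven}{\varone}{\termeight}$, i.e. $\cbnfssup{\fsone}{\app{(\abstr{\varone}{\termseven})}{\termeight}}=\cbnfssup{\fsone}{\subst{\termseven}{\varone}{\termeight}}$ for every stack, whence $\sumsem{\termfive}=\sumsem{\termfive'}$. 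For the $\uplus R$ case, the frame-stack rule $(\fsone,\ps{\Omega}{\termseven})\cbnfsred(\fsone,\Omega),(\fsone,\termseven)$ together with rule $\Howeredthree$ yields $\cbnfssup{\fsone}{\ps{\Omega}{\termseven}}=\frac{1}{2}\cbnfssup{\fsone}{\Omega}+\frac{1}{2}\cbnfssup{\fsone}{\termseven}$ (the supremum splitting over the two independent, directed families of approximants), and since $\stktm{\fsone}{\Omega}$ diverges we have $\cbnfssup{\fsone}{\Omega}=0$, giving $\sumsem{\termfive}=\frac{1}{2}\sumsem{\termfive'}$. The only genuine obstacle is this passage from the root-level identities to their evaluation-context forms; once the frame-stack semantics absorbs the surrounding application spine the computation is routine, and the final division by $p$ is legitimate precisely because $\Rrightarrow$ forbids $\uplus L$, keeping $p$ strictly positive.
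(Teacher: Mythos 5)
Your proposal is correct and takes essentially the same route as the paper: the paper's one-line proof (``$p=\frac{1}{2^n}$, then $\sumsem{\termthree}=2^n\sumsem{\termone}=2^n\sumsem{\termtwo}=\sumsem{\termfour}$'') is exactly your scaling invariant $\sumsem{\termone}=p\cdot\sumsem{\termthree}$ applied on both sides together with the hypothesis, followed by division by $p>0$. The only difference is one of detail, not of method: the paper asserts the invariant as self-evident (``Of course\ldots''), whereas you prove it by induction on the length of the $\Rrightarrow$ reduction, discharging the single $\beta$ and $\uplus R$ steps through the frame-stack characterisation (Lemmas~\ref{lemma:ciuctxeq} and~\ref{lemma:betaciueq}, plus $\cbnfssup{\fsone}{\Omega}=0$), which is a sound way to fill in the step the paper leaves implicit.
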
 
\begin{proof}
Of course, $p=\frac{1}{2^n}$ for some integer $n\in\NN$. Then
$\sum\sem{\termthree}=2^n\sum\sem{\termone}=2^n\sum\sem{\termtwo}=\sum\sem{\termfour}$.
\end{proof}
The proof of the key Lemma~\ref{l:ndet} below makes essential use of a characterization
of $\LLT$ by a bisimulation-like form of relation:
\begin{definition}[Open Bisimulation]
A relation $\relone$ on pure $\lambda$-terms is an \emph{open bisimulation}
if $\termone \RR \termtwo$ implies:
\begin{varenumerate}
\item 
   if $\termone\Longrightarrow \abstr{\varone}{\termthree}$, then 
   $\termtwo\Longrightarrow\abstr{\varone}{\termfour}$ and $\termthree \RR \termfour$;
\item  
   if  $\termone\Longrightarrow\varone\termthree_1\cdots\termthree_m$, then 
   $\termfour_1,\ldots,\termfour_m$ exist such that $\termtwo\Longrightarrow\varone\termfour_1\cdots\termfour_m$ 
   and $\termthree_i \RR \termfour_i$ for every $1 \leq i \leq m$;
\end{varenumerate}  
and conversely on reductions from $\termtwo$. \emph{Open bisimilarity}, written $\simO$, is the union of 
all open bisimulations. 
\end{definition}
Open bisimulation has the advantage of very easily providing a notion of approximation:
\begin{definition}[Approximants of $\simO$]\label{d:approx}
We set:
\begin{varitemize}
\item 
  ${\simOn 0}  \defi  \Lambda \times \Lambda$;
\item
  $\termone\simOn {n+1}\termtwo$ when       
  \begin{varenumerate}
  \item   
    if $\termone\Longrightarrow\abstr{\varone}{\termthree}$, then $\termfour$ exists such that
    $\termtwo\Longrightarrow\abstr{\varone}{\termfour}$ and $\termthree \simOn n  
    \termfour$;
  \item  
    if  $\termone\Longrightarrow\varone\termthree_1\cdots\termthree_m$, then $\termfour_1,\ldots,\termfour_m$ 
    exist such that $\termtwo\Longrightarrow\varone\termfour_1\cdots \termfour_m$ and   
    $\termthree_i\simOn n \termfour_i$,  for each $1 \leq i \leq m$;  
  \end{varenumerate}
and conversely on the reductions from $\termtwo$.
\end{varitemize}
\end{definition}
Please observe that:
\begin{lemma}
On pure $\lambda$-terms, the relations $\LLT$, $\simO$ and $(\bigcap_{n\in\NN}\simOn n)$ all
coincide.
\end{lemma}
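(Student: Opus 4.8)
The plan is to prove the two equalities $\simO = \LLT$ and $\simO = (\bigcap_{n\in\NN}\simOn{n})$ separately; since coincidence of relations is transitive, this gives that all three agree on pure $\lambda$-terms. Throughout I work up to $\alpha$-equivalence and exploit that lazy (weak head) reduction is \emph{deterministic}: every $M$ has at most one weak head normal form, which is either an abstraction $\abstr{\varone}{M'}$ or a head form $\varone M_1\cdots M_m$, and in either case the immediate constituents ($M'$, resp. $\varone,M_1,\dots,M_m$) are uniquely determined by $M$. Hence the reduction behaviour probed by the open-bisimulation clauses constitutes a deterministic, in particular finitely branching, transition system, a fact that is crucial below.

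For $\LLT\subseteq\simO$ I would check that the relation $\{(M,N)\mid \LL{M}=\LL{N}\}$ is an open bisimulation. Assuming $\LL{M}=\LL{N}$, I case-split on the shape forced by the root of the common tree: if $M\Longrightarrow\abstr{\varone}{M'}$, the tree begins with $\lambda\varone$, so $N$ must reach some $\abstr{\varone}{N'}$ with $\LL{M'}=\LL{N'}$ (stripping one leading lambda preserves tree equality), giving $M'\RR N'$; if $M\Longrightarrow \varone M_1\cdots M_m$, the root is the head variable $\varone$ with subtrees $\LL{M_i}$, so $N\Longrightarrow\varone N_1\cdots N_m$ with $\LL{M_i}=\LL{N_i}$; the converse clauses follow by symmetry of $=$. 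For $\simO\subseteq\LLT$ I would argue coinductively on tree structure: from $M\simO N$, determinism together with the bisimulation clauses forces $M$ and $N$ to reach weak head normal forms of exactly the same kind — both divergent (order $0$, tree $\bot$), or both producing a matching leading abstraction with again bisimilar bodies, or both with identical head variable and arity and argumentwise bisimilar arguments, while terms of order $\infty$ forever produce abstractions and so both carry tree $\top$. Collecting the leading lambdas and recursing through the subterm pairs, which remain $\simO$-related, yields $\LL{M}=\LL{N}$.

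The inclusion $\simO\subseteq\bigcap_{n}\simOn{n}$ is a routine induction on $n$: $\simOn{0}=\Lambda\times\Lambda$ holds of every pair, and if $\simO\subseteq\simOn{n}$ then, since $\simO$ is an open bisimulation, the defining clauses of $\simOn{n+1}$ are met with continuation relation $\simO\subseteq\simOn{n}$, so $\simO\subseteq\simOn{n+1}$. For the reverse inclusion I would show that $R\defi\bigcap_{n\in\NN}\simOn{n}$ is an open bisimulation, whence $R\subseteq\simO$ by maximality of $\simO$. Suppose $M\mathrel{R}N$ and $M\Longrightarrow\abstr{\varone}{M'}$. For each $n$, from $M\simOn{n+1}N$ we get some $N'_n$ with $N\Longrightarrow\abstr{\varone}{N'_n}$ and $M'\simOn{n}N'_n$. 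Here determinism is decisive: $N$ has a \emph{unique} weak head normal form, so all $\abstr{\varone}{N'_n}$ coincide with one $\abstr{\varone}{N'}$, giving $M'\simOn{n}N'$ for every $n$, i.e. $M'\mathrel{R}N'$. The head-form case is identical, the uniqueness of $N_1,\dots,N_m$ again being supplied by determinism, and the converse clauses hold because each $\simOn{n}$, hence $R$, is symmetric.

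The main obstacle is precisely this last step. In general $\bigcap_n\simOn{n}$ can be strictly coarser than bisimilarity, and the argument succeeds only because the per-level witnesses $N'_n$ can be \emph{synchronised} into a single reduct; this synchronisation rests entirely on the determinism (equivalently, finite branching) of lazy reduction recorded at the outset. A secondary point requiring care is the bookkeeping behind $\simO=\LLT$: the Levy--Longo tree packs a whole maximal block of leading lambdas — possibly infinitely many, for order-$\infty$ terms — into a single node, whereas the bisimulation game consumes one lambda per move, so the match between tree nodes and bisimulation steps must be organised coinductively rather than by a naive depth-by-depth induction, and the order-$0$ ($\bot$) and order-$\infty$ ($\top$) cases must be verified explicitly.
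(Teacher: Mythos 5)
Your proposal is correct, and there is nothing in the paper to compare it against step by step: the lemma is stated there as a bare observation (``Please observe that:'') with no proof given, so your argument supplies exactly the details the paper omits. All four inclusions are sound. The two directions between $\LLT$ and $\simO$ are the standard ones, and you rightly flag the two places where care is needed: the $\bot$/$\top$ cases (order $0$ and order $\infty$ unsolvables), and the granularity mismatch between a Levy--Longo tree node, which absorbs a maximal block of leading abstractions, and the bisimulation game, which peels off one abstraction per move. Most importantly, you put your finger on the genuinely non-trivial point, namely $\bigcap_{n\in\NN}\simOn{n}\subseteq\simO$: for an arbitrary (infinitely branching) transition system the $\omega$-limit of the approximants need not be a bisimulation, and your synchronisation of the per-level witnesses $N'_n$ into a single reduct via the determinism of lazy (weak head) reduction --- each term has at most one weak head normal form, and $\Longrightarrow$ never continues past it --- is precisely what makes the inclusion go through here. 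The only cosmetic imprecision is in the $\LLT\subseteq\simO$ direction, where ``the tree begins with $\lambda x$'' silently includes the case $LT(M)=\top$, but since you verify the order-$\infty$ case explicitly elsewhere, this does not affect correctness.
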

We are now ready to state and prove the key technical lemma:
\begin{lemma}
\label{l:ndet}
Suppose $ M \not \simOn n N$ for some $n$, and let $\{x_1,\ldots , x_r\}$ be
the free variables in $M,N$.  Then there are integers $m_{x_1},\ldots ,
m_{x_r}$ and $k$, and permutator functions $f_{x_1},\ldots , f_{x_r}$ such
that, for all $m>k$, there are closed terms $\til {R_m}$ such that the
following holds: 
if $\substn{M}{x_1}{f_{x_1}(m+m_{x_1})}{x_r}{f_{x_r}(m+m_{x_r})}\til
{R_m}\evp{\probthree}$ and
$\substn{N}{x_1}{f_{x_1}(m+m_{x_1})}{x_r}{f_{x_r}(m+m_{x_r})}\til
{R_m}\evp{\probfour}$ , then $\probthree\neq\probfour$.
\end{lemma}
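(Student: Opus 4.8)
The plan is to argue by induction on $n$, the level at which $M$ and $N$ are separated. The case $n=0$ is vacuous, since $\simOn 0$ relates all pairs, so I assume $M\not\simOn{n+1}N$ with the statement available for $n$. First I would reduce $M$ and $N$ to weak head normal form, where it exists, and perform a case analysis on the top-level shapes. By the definition of $\simOn{n+1}$ the failure must fall into one of two families: a \emph{structural mismatch} at the head (one side an abstraction and the other a head-variable application or divergent; or two head-variable applications $\varone M_1\cdots M_m$ and $\vartwo N_1\cdots N_{m'}$ with $\varone\neq\vartwo$ or $m\neq m'$), or a \emph{propagation} case in which the two terms share the top shape but a proper subterm already fails the relation at level $n$.

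For the propagation cases I would invoke the inductive hypothesis on the offending subterms and lift the separating data back to $M,N$. If $M\Longrightarrow\abstr{\varone}{M'}$ and $N\Longrightarrow\abstr{\varone}{N'}$ with $M'\not\simOn n N'$, then $\varone$ becomes free in $M',N'$; the induction supplies offsets, a threshold $k$, permutator functions (now including an $f_\varone$ for $\varone$) and witnesses $\til{R_m}$, and I recover the instance for $M,N$ by prepending $f_\varone(m+m_\varone)$ to the argument list, since $M\sigma\,f_\varone(m+m_\varone)\,\til{R_m}\Longrightarrow (M'\sigma)\{f_\varone(m+m_\varone)/\varone\}\,\til{R_m}$ (writing $\sigma$ for the substitution of the permutators for $x_1,\dots,x_r$) and convergence probability is preserved under head $\beta$-reduction by Lemma~\ref{lemma:sembetaCBN}. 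If instead $M\Longrightarrow \varone M_1\cdots M_m$ and $N\Longrightarrow \varone N_1\cdots N_m$ with $M_i\not\simOn n N_i$, I would substitute for the head $\varone$ a permutator $f_\varone(\cdot)$ of degree exceeding $m$: supplying the right number of fresh probe arguments promotes the $i$-th argument to head position, turns the surplus probes into fresh free variables that the inductive hypothesis then governs, and discards the remaining arguments.

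For the structural mismatch cases the inductive hypothesis is not needed; instead the separation is produced directly by the $\uplus$ carried inside the $\uplus$-permutators. The point is that, after substitution and probing, exactly one of the two terms passes through an inserted occurrence of $\uplus=\ps{\Omega}{(\cdot)}$, so that its convergence mass picks up a factor $\tfrac12$ together with an $\Omega$-branch that the other side lacks; choosing the permutator degrees so that the arity mismatch places the $\uplus$ on precisely one side yields different convergence probabilities. To convert a difference obtained after such reductions into a difference for the original terms, I would reduce along $\Rrightarrow$ (taking only the $\uplus R$ branch) and appeal to the contrapositive of Lemma~\ref{l:uplus}: if the $\Rrightarrow_p$-reducts of the two terms carry different total convergence mass, then so do the terms themselves.

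The technical heart, and the step I expect to be the main obstacle, is the global management of permutator degrees. A single free variable may head several subterms, at different depths and with different arities, yet it must be replaced by one permutator, namely $f_{x_j}$ evaluated at a single degree $m+m_{x_j}$, that behaves correctly at all of them simultaneously; moreover each B\"ohm-out step introduces fresh probe variables that must themselves be assigned permutator functions without clashing with the existing assignment. The offsets $m_{x_j}$ record the extra argument slots each variable needs, the threshold $k$ guarantees that for every $m>k$ all the required degrees are simultaneously realizable as $m+m_{x_j}$, and the uniform parameter $m$ provides the slack that keeps the whole scheme coherent. Checking that these choices can always be made consistently, and that the promised probability gap survives every intervening $\Longrightarrow$ and $\Rrightarrow$ step, is where the bulk of the work will lie.
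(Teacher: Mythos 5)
Your plan is essentially the paper's own proof: induction on the least $n$ with $M \not\simOn{n} N$, B\"ohm-out via permutator substitution, padding and a selector in the propagation cases (prepending $f_x(m+m_x)$ in the abstraction case), $\uplus$-permutators to break the structural mismatches, and Lemma~\ref{l:uplus} applied along $\Rrightarrow$-reductions (avoiding $\uplus L$) to pull the probability gap back to the original terms, with the offsets $m_{x_j}$, the threshold $k$ and the uniform slack $m$ playing exactly the roles you assign them. The only real difference is cosmetic: the paper separates the arity and head-variable mismatches using plain B\"ohm permutators with $\Omega$-padding (convergence probabilities $1$ versus $0$), reserving the $\uplus$-permutators for the abstraction-versus-variable case, whereas you route all mismatch cases through an exposed $\uplus$ (probabilities $1$ versus at most $\tfrac{1}{2}$), which also works since plain B\"ohm permutators are themselves $\uplus$-permutators.
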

\begin{proof}
  The proof proceeds by induction on the least $n$ such that $ M \not
  \simOn n N$.  For any term $M$, $M^{\til f}$ will stand for
  $M\sub{f_{x_1}(m+m_{x_1})}{x_1} \ldots \sub{f_{x_r}(m+m_{x_r})}{x_r}$
  where $x_1\ldots x_r$ are the free variables in $M$. We also write
  $\Omega^m$ for a sequence of $m$ occurrences of $\Omega$: so, e.g., $M
  \Om^3 $ is $M \Om \Om \Om$. Finally, for any term $M$, we write $M \Up$
  to denote the fact that $M$ does not converge.
  \begin{varitemize}
  \item
    \textbf{Basic case.} 
    $M\not \simOn {1} N$. There are  a few  cases  to  consider  (their symmetric ones  are  analogous).
    \begin{varitemize}
    \item 
      The case where only one of the two terms diverges is easy.
    \item 
      $M \Longrightarrow x M_1 \cdots M_t$ and $N \Longrightarrow x N_1
      \cdots N_{s}$ with $t<s$. Take $ m_x = s$ and $f_x(n) = Q_n $ (the
      B\"{o}hm permutator of degree $n$).  The values of the other integers
      ($k,m_y$ for $y\neq x$) and of the other permutation functions are
      irrelevant. Set $\til {R_m} \defi \Om^m$. We have
      $$ 
      M^{\til f} \Om^m \Longrightarrow Q_{m+s} M_1^{\til f} \ldots
      M_t^{\til f} \Om^m \Dwa_1
      $$
      since $t+m < s+m$. We also have
      $$ 
      N^{\til f} \Om^m \Longrightarrow Q_{m+s} N_1^{\til f} \ldots
      N_{s}^{\til f} \Om^m \Up
      $$
      since $m > 0$ and therefore an $\Omega$ term will be end up at the
      head of the term.
    \item 
      $M \Longrightarrow x M_1 \cdots M_t$ and $N \Longrightarrow y
      N_1\cdots N_{s}$ with $x\neq y$.  Assume $t\leq s$ without loss of
      generality. Take $ m_x = s+1$, $m_y = s$, and $f_x(n) = f_y(n) = Q_n
      $.  The values of the other integers and permutation functions are
      irrelevant. Set $\til {R_m} \defi \Om^{m}$.  We have
      $$ 
      M^{\til f} \Om^m \Longrightarrow Q_{m+s+1} M_1^{\til f} \ldots
      M_t^{\til f} \Om^{m} \Dwa_1
      $$
      since $m+s+1>t+m$. We also have
      $$ 
      N^{\til f} \Om^m \Longrightarrow Q_{m+s} N_1^{\til f} \ldots
      N_{s}^{\til f} \Om^{m} \Up
      $$
      since $m > 0$ and therefore an $\Omega$ term will be end up at the
      head of the term.
    \item 
      $M\Longrightarrow \lambda x . M'$ and $N \Longrightarrow y \til
      N$, for some $y$ and $\til N$.  The values of the integers and
      permutator functions are irrelevant. Set $\til {R_m} \defi \emptyset
      $ (the empty sequence), and $f_y(n) \defi \uplus Q_n$. We have $ M^{\til
        f} \Longrightarrow \lambda x. {M'}^{\til f} \Dwa_1$, whereas \[
      N^{\til f}\Longrightarrow \uplus Q_{m+m_y} \til {N^{\til f}}
      \DwaP{<1} \]
    \end{varitemize}
  \item  
    \textbf{Inductive case:} $ M {\not \simOn {n+1}} N$. There are two cases  to look at. 
    \begin{varitemize}
    \item 
      $M \Longrightarrow x M_1 \cdots M_s$, $N \Longrightarrow x N_1
      \cdots N_s$ and for some $i$, $ M_i {\not\simOn {n}} N_i$. By
      induction, (for all variables $y$) there are integers $m_y, k$ and
      permutator functions $f_y$, such that for all $m> k$ there are $\til
      {S_m}$ and we have $\sum\sem{M_i^{\til f} \til {S_m}}\not= \sum\sem{N_i^{\til f}
      \til {S_m}}$.  Redefine $k$ if necessary so to make sure that
      $k>s$. Set $\til {R_m} \defi \Om^{m+m_x-s-1} (\lambda x_1 \ldots
      x_{m+m_x}.x_i )\til {S_m}$. We have:
      $$ 
      M^{\til f }\til {R_m} \Longrightarrow f_x(m+m_x) M_1^{\til f }\ldots
      M_s ^{\til f } \Om^{m+m_x-s-1} (\lambda x_1 \ldots x_{m+m_x}.x_i
      )\til {S_m} \LongrightarrowP{p} M_i^{\til f} \til {S_m}
      $$
      whereas 
      $$
      N^{\til f }\til {R_m} \Longrightarrow f_x(m+m_x) N_1^{\til f }\ldots
      N_s ^{\til f } \Om^{m+m_x-s-1} (\lambda x_1 \ldots x_{m+m_x}.x_i
      )\til {S_m} \LongrightarrowP{p} N_i^{\til f} \til {S_m}
      $$
      where $p $ is $ \myfrac 1 2$ or $1$ depending on whether $f_x$
      contains $\uplus$ or not. In any case, in both derivations, rule
      $\uplus L$ has not been used. By Lemma~\ref{l:uplus} and the
      inductive assumption $\sum\sem{M_i^{\til f} \til {S_m}}\not=
      \sum\sem{N_i^{\til f} \til {S_m}}$ we derive that $\sum\sem{M^{\til f
        }\til {R_m}}\not= \sum\sem{N^{\til f }\til {R_m}}$ too.
    \item 
      $M \Longrightarrow \lambda x. M'$, $N \Longrightarrow \lambda x
      .N'$ and $ M' {\not\simOn {n}} N'$. By induction, (for all variables
      $y$) there are integers $m_y, k$ and permutator functions $f_y$, such
      that for all $m> k$ there are $\til {S_m}$ and we have
      $\sum\sem{M'^{\til f} \til {S_m}}\not= \sum\sem{N'^{\til f} \til
        {S_m}}$.  Set $\til {R_m} \defi f_x(m+m_x)\til {S_m}$. Below for a
      term $L$, $L^{\til {f-x}}$ is defined as $L^{\til {f}}$ except that
      variable $x$ is left uninstantiated. We have:
      $$       
      M^{\til f }\til {R_m} \Longrightarrow (\lambda x . {M'}^{\til {f
          -x}}) f_x(m+m_x) \til {S_m} \longrightarrow ({M'}^{\til {f -x}}
      \sub {f_x(m+m_x)}{x}) \til {S_m} = {M'}^{\til {f}} \til {S_m}
      $$
      whereas 
      $$
      N^{\til f }\til {R_m} \Longrightarrow (\lambda x . {N'}^{\til {f
          -x}}) f_x(m+m_x) \til {S_m} \longrightarrow ({N'}^{\til {f -x}}
      \sub {f_x(m+m_x)}{x}) \til {S_m} = {N'}^{\til {f}} \til {S_m}
      $$
      Again, by Lemma~\ref{l:uplus} and the inductive hypothesis, we derive  
      $\sum\sem{M^{\til f }\til {R_m}}\not= \sum\sem{N^{\til f }\til {R_m}}$.
    \end{varitemize} 
  \end{varitemize} 
  This concludes the proof.
\end{proof}

The fact the B\"ohm-out technique actually works implies that the
discriminating power of probabilistic contexts is at least as strong as the
one of LLT's.
\begin{corollary}\label{c:CEvsLLT}
For  $\termone,\termtwo\in\Lambda$, $\termone\cbnconequiv\termtwo$ 
implies $\termone\LLT\termtwo$. 
\end{corollary}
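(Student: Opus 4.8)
The plan is to argue by contraposition, pushing essentially all of the difficulty into the separation Lemma~\ref{l:ndet}. So suppose $M\not\LLT N$. By the (unnumbered) lemma asserting that on pure $\lambda$-terms $\LLT$, $\simO$ and $\bigcap_{n\in\NN}\simOn n$ all coincide, there is some $n$ with $M\not\simOn n N$. Let $\{x_1,\ldots,x_r\}$ be the free variables of $M,N$ and apply Lemma~\ref{l:ndet}: it produces integers $m_{x_1},\ldots,m_{x_r},k$ and permutator functions $f_{x_1},\ldots,f_{x_r}$ enjoying the stated separation property. Fix any $m>k$ (say $m=k+1$), set $P_i\defi f_{x_i}(m+m_{x_i})$ (closed $\LaPRO$-terms), and let $\til{R_m}$ be the closed terms supplied by the lemma.

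First I would turn this open-term separation into a genuine \emph{closing} context. Define
$$
C\defi(\abstr{x_1}{\cdots\abstr{x_r}{\app{\ctxhole{\cdot}}{\til{R_m}}}})\,P_1\cdots P_r.
$$
A routine check against rules $\Ctxone$--$\Ctxsix$ shows $C\in\ctxsetp{\{x_1,\ldots,x_r\}}{\emptyset}$: one starts from $\ctxhole{\cdot}\in\ctxsetp{\{x_1,\ldots,x_r\}}{\{x_1,\ldots,x_r\}}$ ($\Ctxone$), appends the closed arguments $\til{R_m}$ ($\Ctxthree$, as $\til{R_m}\in\LOP(\{x_1,\ldots,x_r\})$), binds each free variable in turn with $\Ctxtwo$ to reach $\ctxsetp{\{x_1,\ldots,x_r\}}{\emptyset}$, and finally supplies the closed arguments $P_i$ ($\Ctxthree$, as $P_i\in\LOP(\emptyset)$). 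Filling the hole with $M$ captures its free variables, so $C\ctxhole{M}=(\abstr{x_1}{\cdots\abstr{x_r}{\app{M}{\til{R_m}}}})\,P_1\cdots P_r$, which call-by-name $\beta$-reduces in finitely many head steps to $M\sub{P_1}{x_1}\cdots\sub{P_r}{x_r}\,\til{R_m}=M^{\til f}\,\til{R_m}$ (the closed $\til{R_m}$ being untouched by the substitutions), and symmetrically for $N$.

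Next I would transfer convergence probabilities across these $\beta$-steps. Iterating Lemma~\ref{lemma:sembetaCBN} along the head reduction yields $\sem{C\ctxhole{M}}=\sem{M^{\til f}\,\til{R_m}}$ and $\sem{C\ctxhole{N}}=\sem{N^{\til f}\,\til{R_m}}$, hence equal convergence probabilities on each side: if $C\ctxhole{M}\evp{\probthree}$ and $C\ctxhole{N}\evp{\probfour}$, then $M^{\til f}\,\til{R_m}\evp{\probthree}$ and $N^{\til f}\,\til{R_m}\evp{\probfour}$. The separation clause of Lemma~\ref{l:ndet} then gives $\probthree\neq\probfour$. Thus $C$ is a closing context distinguishing $M$ from $N$, so $M\not\cbnconequiv N$, which is the contrapositive of the claim.

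I expect the Böhm-out core to be entirely absorbed into Lemma~\ref{l:ndet}, leaving mostly bookkeeping here. The one genuinely delicate point is faithfulness: one must ensure that the single $\LaPRO$-context $C$ realises \emph{exactly} the simultaneous substitution of permutators together with the application to $\til{R_m}$ that the lemma speaks about, and that passing through the interposed $\beta$-redexes preserves convergence probability; both follow from the variable-capture discipline of contexts and from $\beta$-invariance of the semantics (Lemma~\ref{lemma:sembetaCBN}). It is worth emphasising that the separating context lives in $\LaPRO$ rather than in $\Lambda$, since the $\uplus$-permutators carry occurrences of probabilistic choice --- this is precisely where the extra discriminating power of probabilistic contexts is exploited.
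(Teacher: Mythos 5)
Your proposal is correct and follows essentially the route the paper intends: the corollary is exactly the contrapositive reading of Lemma~\ref{l:ndet} (via the coincidence of $\LLT$ with $\bigcap_{n\in\NN}\simOn{n}$), with the simultaneous substitution of $\uplus$-permutators and application to $\til{R_m}$ packaged into a single capturing closing context, whose agreement in convergence probability with $M^{\til f}\,\til{R_m}$ and $N^{\til f}\,\til{R_m}$ is routine $\beta$-invariance of the semantics (Lemma~\ref{lemma:sembetaCBN} together with compositionality of the application rule). The paper leaves precisely this bookkeeping implicit, so your write-up is a faithful, slightly more explicit rendering of the same argument.
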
 
To show that LLT equality is included in probabilistic applicative bisimilarity,
we proceed as follows. First we define a refinement of the latter, 
essentially one in which we \emph{observe} all probabilistic choices. 
As a consequence, the underlying bisimulation game may ignore probabilities.  
The obtained notion of equivalence is strictly finer than probabilistic 
applicative bisimilarity. The advantage of the refinement is that \emph{both} the 
inclusion of LLT equality in the refinement, and the inclusion of the latter 
in probabilistic applicative bisimilarity turn out to be relatively easy
to prove. A \emph{direct} proof of the inclusion of LLT equality in 
probabilistic applicative bisimilarity would have
been harder, as it would have required extending the notion of a
Levy-Longo tree to $\LaPRO$, then reasoning on substitution closures of such 
trees. 
\begin{definition}
A relation $\relone \subseteq \LOPp{\emptyset}\times\LOPp{\emptyset}$ is a
\emph{strict applicative bisimulation} whenever $\termone \RR \termtwo$ implies
\begin{varenumerate}
\item if $\termone \longrightarrowP 1 \termfour$, then $\termtwo
  \LongrightarrowP 1 \termfive$ and $\termfour \RR \termfive$;
\item if $\termone \longrightarrowP {\frac{1}{2}} \termfour$, then
  $\termtwo \LongrightarrowP {\frac{1}{2}} \termfive$ and $\termfour \RR
  \termfive$;
\item if $\termone = \abstr{\varone}{\termfour}$, then $\termtwo
  \LongrightarrowP 1 \abstr{\varone}{\termfive}$ and $\subst{\termfour}{\varone}{\termthree} \RR
  \subst{\termfive}{\varone}{\termthree}$  for all $\termthree
  \in \LOPp{\emptyset}$;
\item the converse of 1., 2. and 3..
\end{varenumerate}
\emph{Strict applicative bisimilarity} is the union of all strict
applicative bisimulations.
\end{definition}
If two terms have the same LLT, then passing them the same
argument $\termone\in\LOP$ produces \emph{exactly} the same choice
structure: intuitively, whenever the first term finds (a copy of)
$\termone$ in head position, also the second will find $\termone$.  
\begin{lemma}
\label{l:LLrefine}
If $M \LLT N$ then $M \RR N$, for some strict applicative bisimulation
$\R$.
\end{lemma}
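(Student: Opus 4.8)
The plan is to exhibit an explicit relation $\R$, prove it is a strict applicative bisimulation, and check that it contains the pair $(M,N)$. Since on pure $\lambda$-terms $\LLT$ coincides with open bisimilarity $\simO$ (the preceding lemma), from $M\LLT N$ I obtain $M\simO N$, and it suffices to relate any two open-bisimilar pure terms once their free variables have been instantiated by closed $\LOP$-arguments. So I would first set $\R_0\defi\{(M\sigma,N\sigma)\mid M,N\in\Lambda\text{ open},\ M\simO N,\ \sigma\text{ a closing map into }\LOPp{\emptyset}\}$, and then let $\R$ be the closure of $\R_0$ under common $\LOP$-contexts: $(A,B)\in\R$ exactly when $A=L\{P_1/z_1,\ldots,P_k/z_k\}$ and $B=L\{Q_1/z_1,\ldots,Q_k/z_k\}$ for a single $L\in\LOP$ with free variables among $z_1,\ldots,z_k$ and pairs $(P_i,Q_i)\in\R_0$. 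Taking $L$ to be one hole gives $\R_0\subseteq\R$; taking $L$ hole-free puts the diagonal on closed terms into $\R$; symmetry of $\simO$ makes $\R_0$ and $\R$ symmetric, which will discharge the converse clause 4. As $(M,N)\in\R_0\subseteq\R$ via the empty substitution, the whole problem reduces to checking the bisimulation clauses for $\R$.

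Next I would record three facts about pure lazy reduction that control the game. (i) $\simO$ is invariant under reduction: if $M\simO N$ and $M\Longrightarrow M'$, then $M'\simO N$ (lazy reduction is deterministic on pure terms and preserves the Levy-Longo tree). (ii) Shape matching: if $M\simO N$ and $M\Longrightarrow\abstr{\varone}{M'}$ then $N\Longrightarrow\abstr{\varone}{N'}$ with $M'\simO N'$, and if $M\Longrightarrow\varone M_1\cdots M_m$ then $N\Longrightarrow\varone N_1\cdots N_m$ with the same head variable, the same arity, and $M_i\simO N_i$. (iii) A closing substitution leaves a pure lazy reduction undisturbed until a free variable reaches head position: while $M$ performs head $\beta$-steps, $M\sigma$ performs exactly the same steps with probability $1$, and a probabilistic ($\frac{1}{2}$) step can occur only after some $\sigma(\varone)$ has surfaced in head position. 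Because in each $\R_0$-pair the substitution is shared between the two sides, every probabilistic branch originates from one and the same closed argument on both sides and is perfectly synchronised; this is precisely why strict bisimulation \emph{observes} all choices.

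The verification of clauses 1--3 would then turn on a single dichotomy applied to $A=L\{\vec P/\vec z\}$. An \emph{internal} head step --- one coming from the context $L$, or from the pure skeleton of a hole $M_i\sigma$ before any free variable of $M_i$ is uncovered --- rewrites a hole $M_i\sigma$ to $M_i'\sigma$ with $M_i\Longrightarrow M_i'$; by (i) the reduct is still $\R$-related to $B$ itself, so I answer with zero steps on the right, at probability $1$ (this also settles divergence uniformly). The steps that force $B$ to move are those \emph{driven by an uncovered common argument}: once the head has become $\sigma(\varone)$ for some $\varone$, the same closed $\LOP$-term occupies head position on both sides, and I copy its step --- a $\beta$-step ($p=1$) or a probabilistic step ($p=\frac{1}{2}$, into the matching branch). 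The key point is that every reduct can be re-read as a \emph{common} $\LOP$-context applied to $\R_0$-holes: the residual of $\sigma(\varone)$ and the untouched sub-contexts $L_j$ stay inside the new context $K$, while the arguments $\sigma(\varone)$ consumes are exactly the $\simO$-related descendants of $M_i$ and $N_i$ (by (ii)), which become fresh $\R_0$-holes; the original holes $(P_m,Q_m)$ are simply carried along inside $K$. Hence the reduct lands back in $\R$. Clause 3 is analogous: if $A$ is an abstraction then, by (ii), $B$ reduces to an abstraction of the same form, and the supplied closed argument $R$ is absorbed into $K$ while the holes are left intact.

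The main obstacle, and the reason $\R$ must be the context closure rather than the naive relation $\R_0$, is exactly the head-variable case: once a substituted argument reaches head position it drives the computation, and after it unfolds to a $\lambda$-abstraction it feeds back arguments that are related but \emph{not identical} on the two sides. The naive relation is not closed under this, whereas the context-closure $\R$ is, because the residual of the argument and all outer material remain a shared context over $\R_0$-holes. The second delicate point, keeping the probabilistic branching aligned, never actually fails, and for a precise reason: the two sides always carry a syntactically identical closed term in head position whenever a choice is made (a consequence of the shared $\sigma$ inside each $\R_0$-pair), so every $\frac{1}{2}$-step on one side is met by the corresponding $\frac{1}{2}$-step into the matching branch on the other. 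Once these two observations are in place the case analysis closes, $\R$ is a strict applicative bisimulation, and $M\RR N$ as required.
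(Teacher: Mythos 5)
Your construction has a genuine gap, and it sits exactly where the argument is waved through: reduction steps that \emph{straddle the boundary} between a hole and the shared context. Your dichotomy covers (a) steps internal to the context $L$ or to the pure skeleton $M_i$ of a hole, and (b) steps driven by a substituted argument $\sigma(\varone)$ that has reached head position. There is a third kind of step: the pure component $M_i$ of a hole is (or becomes) an abstraction, and that abstraction consumes an argument supplied by the surrounding context $L$ --- an argument which itself contains holes and therefore \emph{differs} on the two sides. After this step, a pair of different (merely $\R$-related) arguments has been substituted into the different (merely $\simO$-related) bodies $M_i'$ and $N_i'$, and the result is expressible neither as an $\R_0$-pair (that would need one \emph{common} $\sigma$ sending the same variable to $L_1\{\vec P\}$ on the left and $L_1\{\vec Q\}$ on the right) nor as a common context over $\R_0$-holes (the mismatching pure structure of $M_i'$ versus $N_i'$ cannot be pushed into the shared context). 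Concretely: let $M_1 \defi \lambda x.I(\lambda u.xu)$, $N_1 \defi \lambda x.I(\lambda u.(\lambda w.w)(xu))$, $V \defi \lambda a.a$, $W \defi \lambda a.(\lambda b.b)a$; all four are closed pure terms with $M_1 \simO N_1$ and $V \simO W$, so with context $z_1(\Omega\oplus z_2)$ the pair $A \defi M_1(\Omega\oplus V)$, $B \defi N_1(\Omega\oplus W)$ is in your $\R$. The unique step of $A$ is $A \longrightarrowP{1} I(\lambda u.(\Omega\oplus V)u)$, while the probability-$1$ reducts of $B$ are $B$ itself, $I(\lambda u.(\lambda w.w)((\Omega\oplus W)u))$ and $\lambda u.(\lambda w.w)((\Omega\oplus W)u)$. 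None of them is related to $A$'s reduct by your $\R$: every syntactic mismatch (the extra $(\lambda w.w)$ wrapper, and $V$ against $W$) lies below the binder $\lambda u$, interleaved with the bound occurrence of $u$ and the shared $\Omega\oplus[\cdot]$; a hole covering a mismatch would have to be a \emph{closed} $\R_0$-pair, which forces either a non-closed filler (containing $u$), or a common $\sigma$ mapping one variable to both $\Omega\oplus V$ and $\Omega\oplus W$, or two distinct head variables in the would-be $\simO$-related pure parts --- all impossible. So your $\R$ is not a strict applicative bisimulation; and such configurations are reachable from pure $\LLT$-equal terms, since clause 3 may feed in a common argument containing both $\oplus$ and abstractions, whose copies later apply hole-descendants to context material.

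The failure is structural, not an oversight in bookkeeping: a single pass of ``common-substitution pairs, then one layer of common-context closure'' is not preserved by the game, because the game interleaves the two constructions without bound. What is needed is a \emph{recursively} defined relation: the least relation $\R$ closed under (i) common $\LOP$-contexts filled with $\R$-related terms and (ii) instantiation of $\simO$-related pure open terms by closing substitutions whose images are pointwise $\R$-\emph{related}, not identical. With that definition your problematic reduct is handled by (ii). But then the verification can no longer be a direct case analysis either: clause 3 of strict bisimulation only quantifies over \emph{identical} arguments, whereas the case above needs related abstractions to stay related under $\R$-related arguments; so one must prove, by induction on the derivation of membership in $\R$, the strengthened invariant ``abstraction bodies remain related under pointwise $\R$-related instantiation'' together with the step-matching clauses --- in effect a small congruence-candidate argument. (Your side remark that ``the two sides always carry a syntactically identical closed term in head position whenever a choice is made'' is broken by the same case; choices do stay synchronised, but only because the $\oplus$'s remain in the shared layers of the recursive decomposition, which your two-layer relation does not record.) Note also that the paper itself gives no proof of this lemma (it is deferred to the extended version), so the construction of the relation \emph{is} the content of the lemma; the one you propose does not close under the game.
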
 
Terms which are strict applicative bisimilar cannot be distinguished by
applicative bisimilarity proper, since the requirements induced by the latter are
less strict than the ones the former imposes:
\begin{lemma}
\label{l:Rrefine}
Strict applicative bisimilarity is included in applicative bisimilarity.
\end{lemma}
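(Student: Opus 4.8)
The plan is to exhibit, on the state space of the labelled Markov chain $\cbn{\LOP}$, a single probabilistic applicative bisimulation that contains strict applicative bisimilarity. Let $\R$ be the reflexive--transitive closure of strict applicative bisimilarity. By a routine closure argument (as for Lemma~\ref{lemma:kleenestar}) $\R$ is again a strict applicative bisimulation, and since strict applicative bisimilarity is symmetric so is $\R$; hence $\R$ is an equivalence relation. Moreover $\R$ absorbs deterministic reduction: the relation pairing $\termone$ with $\termtwo$ whenever $\termone\LongrightarrowP 1\termtwo$, together with its converse and the identity, is readily checked to be a strict applicative bisimulation, so $\LongrightarrowP 1\;\subseteq\;\R$. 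I then extend $\R$ to a relation $\howe{\R}$ on $\LOPp{\emptyset}\uplus\val$ by declaring $\clabstr{\varone}{\termone}\,\howe{\R}\,\clabstr{\varone}{\termtwo}$ iff $\abstr{\varone}{\termone}\,\R\,\abstr{\varone}{\termtwo}$. Proving that $\howe{\R}$ is a probabilistic applicative bisimulation gives strict applicative bisimilarity $\subseteq\R\subseteq\cbnpab$, which is the statement. (Here $\howe{\R}$ is just a convenient name, not Howe's lifting.)

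The clause of Remark~\ref{r:pabfirenze} for distinguished values is immediate: if $\abstr{\varone}{\termone}\,\R\,\abstr{\varone}{\termtwo}$ then clause~3 of strict bisimulation, with the left-hand value already a value, yields $\subst{\termone}{\varone}{\termthree}\,\R\,\subst{\termtwo}{\varone}{\termthree}$ for every $\termthree$, which is exactly $\translop(\clabstr{\varone}{\termone},\termthree,\econe)=\translop(\clabstr{\varone}{\termtwo},\termthree,\econe)$ for each $\R$-class $\econe$. For the term clause it suffices, taking $\setone=\econe\cap\val$, to show $\sem{\termone}(\setone)=\sem{\termtwo}(\setone)$ for every $\R$-closed set of values $\setone$ whenever $\termone\,\R\,\termtwo$. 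To organise this I pass to a ``choice-step'' view of reduction: a closed term either diverges deterministically, or deterministically reduces to a value $\valone$ (written $\termone\Downarrow\valone$), or deterministically reduces to a sum and branches, written $\termone\Rightarrow_c\{\termthree,\termfour\}$. By Lemmas~\ref{lemma:sembetaCBN} and~\ref{lemma:semsumCBN}, $\sem{\cdot}(\setone)$ is the supremum of finite approximants $\sem{\cdot}^{(n)}(\setone)$ (the probability of reaching a value in $\setone$ within $n$ choice-steps), with $\sem{\termone}(\setone)=\tfrac12\sem{\termthree}(\setone)+\tfrac12\sem{\termfour}(\setone)$ when $\termone\Rightarrow_c\{\termthree,\termfour\}$ and $\sem{\termone}(\setone)=1$ exactly when $\termone\Downarrow\valone$ with $\valone\in\setone$. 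Using $\LongrightarrowP 1\subseteq\R$ together with clause~3 one checks that $\termone\,\R\,\termtwo$ and $\termone\Downarrow\valone$ force $\termtwo\Downarrow\valtwo$ with $\valone\,\R\,\valtwo$.

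The crux is the branching case. Suppose $\termone\,\R\,\termtwo$ with $\termone\Rightarrow_c\{\termthree,\termfour\}$, and let $\termtwo\Rightarrow_c\{\termfive,\termsix\}$ be the branches of $\termtwo$ ($\termtwo$ can neither diverge nor reduce to a value, as it must match a $\LongrightarrowP{\frac12}$-move of $\termone$). I claim the multisets of $\R$-classes $\{[\termthree],[\termfour]\}$ and $\{[\termfive],[\termsix]\}$ coincide. Each move $\termone\LongrightarrowP{\frac12}\termthree$, resp.\ $\termone\LongrightarrowP{\frac12}\termfour$, is matched by clause~2 with a move of $\termtwo$ that, after its absorbed deterministic prefix, lands in $\termfive$ or $\termsix$; since $\LongrightarrowP 1\subseteq\R$ this places $[\termthree],[\termfour]$ inside $\{[\termfive],[\termsix]\}$. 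The converse clause~4, applied to the two branches of $\termtwo$, symmetrically places $[\termfive],[\termsix]$ inside $\{[\termthree],[\termfour]\}$. A short case analysis on whether these classes collapse yields multiset equality, and hence a pairing $\termthree\,\R\,\termfive$ and $\termfour\,\R\,\termsix$ with \emph{distinct} branches of $\termtwo$.

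With this pairing the induction on $n$ for $\sem{\termone}^{(n)}(\setone)\le\sem{\termtwo}(\setone)$ closes: the divergence and value cases are handled by the observations above, and in the branching case
\[
\sem{\termone}^{(n+1)}(\setone)=\tfrac12\sem{\termthree}^{(n)}(\setone)+\tfrac12\sem{\termfour}^{(n)}(\setone)\le\tfrac12\sem{\termfive}(\setone)+\tfrac12\sem{\termsix}(\setone)=\sem{\termtwo}(\setone),
\]
the middle step being the induction hypothesis on $\termthree\,\R\,\termfive$ and $\termfour\,\R\,\termsix$. Taking suprema gives $\sem{\termone}(\setone)\le\sem{\termtwo}(\setone)$, and symmetry gives equality, completing the verification that $\howe{\R}$ is a probabilistic applicative bisimulation. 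The main obstacle is precisely this branching step: one must pair each branch of $\termone$ with a \emph{distinct} branch of $\termtwo$, and this is exactly where both directions of the game (clauses~2 and~4) and the absorption of deterministic reduction are indispensable. A naive step-indexed argument using only the forward clause fails, because it may match both branches of $\termone$ to the same branch of $\termtwo$, and then closing the induction would require the very equality $\sem{\termfive}(\setone)=\sem{\termsix}(\setone)$ for $\R$-related branches that one is still trying to prove; the multiset matching is what removes this circularity.
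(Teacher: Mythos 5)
Your proof is correct, but note that the paper offers nothing to compare it against: Lemma~\ref{l:Rrefine} is stated there with only the informal remark that the requirements of applicative bisimilarity are ``less strict'' than those of strict bisimilarity, the details being deferred to the extended version. Your argument supplies exactly the content that remark glosses over, and it locates the real difficulty correctly. The routine part matches the paper's intuition: take the reflexive--transitive closure $\R$ of strict applicative bisimilarity, check it is an equivalence and still a strict bisimulation, observe that ${\LongrightarrowP 1} \subseteq \R$ (deterministic reduction is absorbed), and discharge the clause for distinguished values of Remark~\ref{r:pabfirenze} directly from clause~3 of the strict game. The non-routine part is the term clause, i.e.\ $\sem{\termone}(\setone)=\sem{\termtwo}(\setone)$ for $\R$-closed sets of values $\setone$, and here your diagnosis is the right one: a step-indexed induction closes only if each branch of $\termone$'s choice step is paired with a \emph{distinct} branch of $\termtwo$'s, and your multiset argument --- using both directions of the game, absorption of deterministic prefixes, and the two-case analysis on whether the branch classes collapse --- delivers precisely that bijective pairing, avoiding the circularity you point out. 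Two harmless imprecisions are worth flagging. First, clause~2 of the strict game applies to single $\longrightarrowP{\frac12}$ steps, so the matching must formally be invoked from the choice term $C$ reached by $\termone \LongrightarrowP 1 C$ rather than from $\termone$ itself; your ``absorbed deterministic prefix'' phrase covers this, but it is the step where ${\LongrightarrowP 1}\subseteq\R$ is genuinely used. Second, the recurrences for your approximants need Lemmas~\ref{lemma:sembetaCBN} and~\ref{lemma:semsumCBN} in their ``head redex under applied arguments'' form (e.g.\ $\sem{(\termone\oplus\termtwo)\termthree}=\frac{1}{2}\sem{\termone\termthree}+\frac{1}{2}\sem{\termtwo\termthree}$), which follows from the big-step rule $\ban$ by the same computation; citing the top-level lemmas alone is slightly too weak, though the fix is mechanical. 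Neither point affects correctness.
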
 
Since we now know that for \emph{pure, deterministic} $\lambda$-terms,
$\LLT$ is included in $\cbnpab$ (by Lemma~\ref{l:LLrefine} and
Lemma~\ref{l:Rrefine}), that $\cbnpab$ is included in $\cbnconequiv$ (by
Theorem~\ref{thm:pab_ce}) and that the latter is
included in $\LLT$ (Corollary \ref{c:CEvsLLT}), we can conclude:
\begin{corollary}
\label{c:LLrefine}
The  relations $\LLT$, $\cbnpab$, and $\cbnconequiv$ coincide in $\Lambda$.
\end{corollary}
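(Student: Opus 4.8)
The plan is to close a cycle of inclusions among the three relations, all understood as relations on pure $\lambda$-terms in $\Lambda$. Every one of the three arrows has already been supplied by the preceding development, so at the level of this corollary the argument is a pure chaining step and carries no genuine technical obstacle; the only point requiring attention is that the three inclusions are established on different domains and must all be read off on $\Lambda$ (with the usual extension to open pure terms via closing substitutions, as in Remark~\ref{r:to}).

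First I would record the inclusion $\LLT\;\subseteq\;\cbnpab$ on pure terms. By Lemma~\ref{l:LLrefine}, $\termone\LLT\termtwo$ yields a strict applicative bisimulation $\R$ with $\termone\RR\termtwo$, and by Lemma~\ref{l:Rrefine} strict applicative bisimilarity is contained in applicative bisimilarity; composing the two gives $\termone\cbnpab\termtwo$. Second, Theorem~\ref{thm:pab_ce} gives $\cbnpab\;\subseteq\;\cbnconequiv$ on the whole of $\LaPRO$, hence in particular on $\Lambda$. Third, Corollary~\ref{c:CEvsLLT}---itself obtained from the B\"ohm-out separation of Lemma~\ref{l:ndet}---gives $\cbnconequiv\;\subseteq\;\LLT$ for $\termone,\termtwo\in\Lambda$.

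Chaining the three inclusions produces the cycle
$$
\LLT\;\subseteq\;\cbnpab\;\subseteq\;\cbnconequiv\;\subseteq\;\LLT,
$$
and a cycle of set inclusions forces equality throughout: each relation is sandwiched between copies of itself, so all three coincide on $\Lambda$, which is exactly the claim.

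Since no step of this corollary involves fresh combinatorics, there is no real main obstacle here; the substance of the result has been discharged earlier. The genuinely delicate ingredients are the separation Lemma~\ref{l:ndet} underlying Corollary~\ref{c:CEvsLLT}, where probabilistic contexts and $\uplus$-permutators are used to observe any mismatch within a finite LLT level, and the construction of the strict applicative bisimulation in Lemma~\ref{l:LLrefine}; the present corollary only has to assemble these, together with the soundness Theorem~\ref{thm:pab_ce}, into the closed loop of inclusions above.
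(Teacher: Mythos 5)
Your proof is correct and is exactly the paper's argument: the paper also closes the cycle $\LLT\subseteq\cbnpab$ (via Lemma~\ref{l:LLrefine} and Lemma~\ref{l:Rrefine}), $\cbnpab\subseteq\cbnconequiv$ (Theorem~\ref{thm:pab_ce}), and $\cbnconequiv\subseteq\LLT$ (Corollary~\ref{c:CEvsLLT}), concluding coincidence on $\Lambda$. Nothing is missing.
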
 


\section{Coupled Logical Bisimulation}
In this section we derive a coinductive characterisation of
probabilistic context equivalence on the whole language $\LOP$
(as opposed to the subset of sum-free $\lambda$-terms as in
Section~\ref{sect:dppc}). For this, we need to manipulate
formal weighted sums. Thus we work with an extension of $\LaPRO$ in
which such weighted sums may appear in redex position. 
An advantage of having formal sums is that the transition
system  on the extended
language can be small-step and deterministic~--- any closed 
term that is not a value will have exactly one possible internal transition.
 
This will make it possible to pursue the \emph{logical bisimulation} method,
in  which the congruence of bisimilarity is proved using a standard
induction argument over all contexts. The refinement of the method 
handling probabilities, called \emph{coupled} logical bisimulation, 
uses  \emph{pairs} of relations, as we need to distinguish between
ordinary terms and terms possibly containing formal sums. 
Technically, in the proof of congruence we first
prove a correspondence between the transition system on
extended terms and the original one for $\LOP$; we then derive a 
few up-to techniques for coupled logical bisimulations that are 
needed in the following proofs; finally, we show that coupled logical 
bisimulations are preserved by the closure of the first relation with 
any context, and the closure of the second relation with any \emph{evaluation} 
context.

We preferred to follow logical bisimulations rather then  environmental
bisimulations because the former admit a simpler definition (in the
latter, each pair of terms is enriched with an environment, that is, an
extra set of pairs of terms). Moreover it is unclear what environments 
should be when one also considers formal sums. We leave this 
for future work.

Formal sums are a tool for representing the behaviour of running $\LaPRO$ terms.  
Thus, on terms with formal sums, only the results for closed terms interest us. 
However, the characterization of contextual equivalence in $\LaPRO$ as coupled 
logical bisimulation also holds on open terms. 
\subsection{Notation and Terminology}
We write $\LaDO$ for the extension of $\LaPRO$ in which
formal sums may appear in redex position. Terms of $\LaDO$ are
defined as follows ($M,N$ being $ \LaPRO$-terms):
$$ 
E,F::= E M  \midd \Prod {i\in I} \pair {M_i}{p_i}  \midd M\myoplus N
\midd \lambda x. M .
$$ 
In a formal sum $\Prod {i\in I} \pair{M_i}{p_i} $, $I$ 
is a countable (possibly empty) set of indices such that $\Sum_{i\in I} p_i \leq 1$. 
We use $+$ for binary formal sums.
Formal sums are ranged over by metavariables like $H,K$. When each $M_i$ is a value 
(i.e., an abstraction) then $\Prod {i\in I} \pair {M_i}{p_i} $ is a 
(\emph{formally summed}) \emph{value}; such values  
are ranged over by  $Z,Y,X$. If 
$H =\Prod {i\in I} \pair {M_i}{p_i}$ and $K = \Prod {j\in J} \pair {M_j}{p_j}$
where $I$ and $J$ are disjoint, then  $H \myoplus K$ abbreviates  
$\Prod {r\in I\cup J} \pair {M_r}{\myfrac{p_r}2}$. Similarly, if for every $j\in J$
$H_j$ is  $\Prod{i\in I}\pair{M_{i,j}}{p_{i,j}}$, then 
$\Prod {j} \pair{H_j}{p_j}$ stands for $\Prod {(i,j)} \pair {M_{i,j}}{p_{i,j}\cdot p_j}$.
For $H =\Prod {i} \pair{ M_i }{p_i}$ we write  $\prob H $ for the real number
$\Sum_i p_i$. If $Z =\Prod {i} \pair{ \lambda x. M_i }{p_i}$, then $Z \bullet\termtwo$ stands for 
$\Prod {i} \pair{  M_i \sub{\termtwo}{x}}{p_i}$.  The set of closed
terms is $\LaDO (\emptyset )$.

Any partial value distribution $\distone$ (in the sense of Section~\ref{sect:p})
can be seen as the formal sum $\Prod {\valone\in\val} \pair{\valone}{\distone(\valone)}$.
Similarly, any formal sum $H=\Prod {i\in I} \pair{M_i}{p_i}$ can be mapped to the distribution
$\sum_{i\in I}p_i\cdot\sem{M_i}$, that we indicate with $\sem{H}$.

Reduction  between $\LaDO$ terms, written $E \DSlongrightarrow F$, is defined by the rules in
Figure~\ref{fig:redlado}; these rules are given on top of the operational semantics
for $\LOP$ as defined in Section~\ref{sect:p}, which is invoked in the premise of rule
$\trans{spc}$ (if there is a $i$ with $M_i$ not a value).
\begin{figure}
$$
\infer[\trans{ss}]
   {\ps{M}{N}\DSlongrightarrow\pair{M}{\myfrac{1}{2}}+\pair{N}{\myfrac{1}{2}}}
   {}
$$
$$
\infer[\trans{sl}]
   {\lambda x. M\DSlongrightarrow\pair{\lambda x. M}{1}} 
   {}
\qquad
\infer[\trans{spc}]
   {\Prod{i}\pair{M_i}{p_i}\DSlongrightarrow \Prod {i}\pair{\distone_i}{p_i}}
   {
     \sem{M_i}=\distone_i}
$$
$$
\infer[\trans{sp}]
   {Z M \DSlongrightarrow Z \bullet M}
   {}
\qquad
\infer[\trans{sa}]
   {EM \DSlongrightarrow FM}
   {E\DSlongrightarrow F}
$$
\caption{Reduction Rules for $\LaDO$}\label{fig:redlado}
\end{figure}
The reduction relation $\DSlongrightarrow$ is deterministic
and strongly normalizing. We use $\DSLongrightarrow $ for its reflexive and transitive closure.
Lemma~\ref{l:new} shows the agreement between the new reduction relation
and the original one.
\begin{lemma}\label{l:new}
  For all $M \in \LaPRO (\emptyset )$ there is  a value $Z$ such that 
  $M \DSLongrightarrow Z$ and $\sem{M} = \sem{Z}$.
\end{lemma}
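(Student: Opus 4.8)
The plan is to prove the statement by structural induction on the closed $\LaPRO$-term $M$, in each case \emph{explicitly exhibiting} a reduction $M\DSLongrightarrow Z$ to a (formally summed) value together with a verification that $\sem{M}=\sem{Z}$; set up this way, the construction does not even rely on the determinism and strong normalisation of $\DSlongrightarrow$ stated just above, although it re-confirms them. The conceptual point that makes everything fit is rule $\trans{spc}$: it collapses a formal sum $\Prod{i}\pair{M_i}{p_i}$ of arbitrary terms into the formal sum $\Prod{i}\pair{\sem{M_i}}{p_i}$ of their full, possibly infinite-support, value distributions in a single step, thereby reconciling the finitary reduction $\DSlongrightarrow$ with the genuinely infinitary call-by-name semantics $\sem{\cdot}$ of Section~\ref{sect:p}. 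The only external facts I would use are Lemma~\ref{lemma:semsumCBN} and the compositionality equation for application, $\sem{\app N M}=\sum_{\lambda x.P\in\val}\sem{N}(\lambda x.P)\cdot\sem{\subst P x M}$, which is obtained by passing to the limit in the big-step rule $\ban$ of Figure~\ref{fig:bscbnsem} (a standard property of the semantics of~\cite{DalLagoZorzi}).

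First I would dispatch the easy cases. If $M=\lambda x.N$, then $\trans{sl}$ gives $M\DSlongrightarrow\pair{\lambda x.N}{1}=:Z$, a value, and $\sem{Z}=1\cdot\sem{\lambda x.N}=\sem{M}$. If $M=\ps{N_1}{N_2}$, then $\trans{ss}$ gives $M\DSlongrightarrow\pair{N_1}{\frac12}+\pair{N_2}{\frac12}=:H$; one application of $\trans{spc}$ (or none, when $N_1,N_2$ are already values) yields the value $Z=\pair{\sem{N_1}}{\frac12}+\pair{\sem{N_2}}{\frac12}$, and $\sem{Z}=\frac12\sem{N_1}+\frac12\sem{N_2}=\sem{M}$ by Lemma~\ref{lemma:semsumCBN}. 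Neither case consumes the induction hypothesis, precisely because $\trans{spc}$ appeals to $\sem{\cdot}$ directly rather than recursing.

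The real work is the application case $M=\app{N_1}{N_2}$, which I expect to be the main obstacle. I would apply the induction hypothesis to the proper subterm $N_1$, obtaining a value $Z_1=\Prod{i}\pair{\lambda x.P_i}{p_i}$ with $N_1\DSLongrightarrow Z_1$ and $\sem{N_1}=\sem{Z_1}$; lifting this reduction through the congruence rule $\trans{sa}$ gives $\app{N_1}{N_2}\DSLongrightarrow \app{Z_1}{N_2}$. Then $\trans{sp}$ fires, $\app{Z_1}{N_2}\DSlongrightarrow Z_1\bullet N_2=\Prod{i}\pair{\subst{P_i}x{N_2}}{p_i}=:H$, and $\trans{spc}$ collapses $H$ to the value $Z=\Prod{i}\pair{\sem{\subst{P_i}x{N_2}}}{p_i}$ (a value after flattening by the nested-sum convention), so that $\sem{Z}=\sum_i p_i\,\sem{\subst{P_i}x{N_2}}$. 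The remaining point is the semantic identity $\sem{M}=\sem{Z}$: since $\sem{N_1}=\sem{Z_1}=\sum_i p_i\,\sem{\lambda x.P_i}$, we have $\sem{N_1}(\lambda x.P)=\sum_{i\,:\,\lambda x.P_i=\lambda x.P}p_i$, and the compositionality equation yields
$$
\sem{\app{N_1}{N_2}}=\sum_{\lambda x.P\in\val}\sem{N_1}(\lambda x.P)\cdot\sem{\subst P x {N_2}}=\sum_i p_i\,\sem{\subst{P_i}x{N_2}}=\sem{Z},
$$
the middle equality being a regrouping of the sum over values $\lambda x.P$ according to which indices $i$ contribute to each. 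The difficulty here is entirely in matching the deterministic reduction order (normalise the head $N_1$, substitute, then collapse) against the infinitary sum defining $\sem{\app{N_1}{N_2}}$; once this reindexing is arranged, the remaining manipulations are routine.
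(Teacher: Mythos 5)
Your proof is correct, but it follows a genuinely different route from the paper's. The paper proceeds in two stages: first it shows that \emph{every} $\LaDO$-term $E$ reaches a value after some finite number $n$ of steps; then it proves $\sem{M}=\sem{Z}$ by a double induction, on $n$ and on the transition derivation, exploiting the determinism of $\DSlongrightarrow$ --- in substance, it isolates the fact that each single step preserves $\sem{\cdot}$ and composes this fact along the unique reduction sequence. You instead run a structural induction on the $\LaPRO$-term $M$ itself, explicitly building the reduction sequence (induction hypothesis on the head $N_1$, lifted through $\trans{sa}$; one $\trans{sp}$ step; at most one collapsing $\trans{spc}$ step) and verifying the semantic identity once, at the end. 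What your route buys: termination is a by-product of the construction, and neither determinism nor a per-step preservation lemma is needed. What the paper's route buys: the normalization and preservation facts hold uniformly for all $\LaDO$-terms, not only those coming from $\LaPRO$, which matches the later use of $\DwaDO{p}$ on arbitrary extended terms. The semantic ingredients are in fact the same on both sides: your single external fact, $\sem{NM}=\sum_{\lambda x.P\in\val}\sem{N}(\lambda x.P)\cdot\sem{\subst{P}{x}{M}}$, is exactly what the paper's transition induction would need to discharge the $\trans{sp}$ and $\trans{sa}$ cases, and invoking it is legitimate: it is a standard property of the Dal Lago--Zorzi semantics, and the paper itself relies on it implicitly in the final chain of inequalities in the proof of Lemma~\ref{lemma:keylemma}.

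One caveat concerns a side remark, not the proof. Your construction exhibits \emph{one} terminating reduction sequence, i.e.\ weak normalization for $\LaPRO$-terms; saying that it ``re-confirms'' determinism and strong normalization overstates this, since those still require the separate (easy, purely syntactic) check that a closed non-value $\LaDO$-term has exactly one applicable rule. For the existential statement of the lemma this is immaterial, as you only need \emph{some} reduction to a value, but uniqueness of $Z$ --- which the paper uses when defining $E \DwaDO{p}$ --- does not follow from your argument alone.
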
 
\begin{proof}
  One first show that for all $E$  there is $n$ such that
  $E \DSlongrightarrow^n Z$. Then one reasons with a double induction: an induction on
  $n$, and a transition induction, exploiting the determinism of $\DSlongrightarrow$.
\end{proof}
\subsection{Context Equivalence and Bisimulation}
In $\LaDO$ certain terms (i.e., formal sums) may only appear in redex
position; ordinary terms (i.e., terms in $\LaPRO$),
by contrast, may appear in arbitrary
position. When extending context equivalence to  
$\LaDO$ we therefore have to distinguish these two cases. 
Moreover, as our main objective is the characterisation of context
equivalence in $\LaPRO$, we set a somewhat constrained context
equivalence in $\LaDO$ in which
contexts may not contain formal sums
(thus the  $\LaDO$ contexts are the same as 
the $\LaPRO$ contexts).  
We call these \emph{simple $\LaDO$ contexts}, whereas 
we call
\emph{general $\LaDO$ context} an unconstrained context, i.e.,  
a $\LaDO$ term in which  the hole $\contexthole$ may appear in any places
where a term from $\LaPRO$ was expected --- including within a formal sum.   
(Later we will see that allowing general  contexts does not affect the
resulting context equivalence.)
Terms possibly containing formal sums are tested in 
evaluation contexts, i.e., contexts  
 of the form $\contexthole\til M$. 
We write $E \DwaDO{p}$ if $E \DSLongrightarrow Z$ and $\prob Z = p $
(recall that $Z$ is unique, for a given $E$).
\begin{definition}[Context Equivalence in $\LaDO$]
\label{d:ciu}
Two $\LaPRO$-terms 
$M$ and $N $ are \emph{context equivalent in $\LaDO$}, written 
$M \conteqDO N$, if for all (closing) simple $\LaDO$ contexts  $\qct$, we have 
$\ct M \DwaDO{p}$ iff $\ct N\DwaDO{p}$. Two $\LaDO$-terms 
$E$ and $F $ are \emph{evaluation-context equivalent}, written 
$E \conteqDOeval F$, if for all (closing)  $\LaDO$ evaluation
contexts  $\qct$, we have $\ct E \DwaDO{p}$ iff $\ct F \DwaDO{p}$.
\end{definition}
In virtue of Lemma~\ref{l:new},  context
equivalence in $\LaPRO$ coincides with context
equivalence in $\LaDO$. 

We now introduce a bisimulation that yields a coinductive
characterisation of  context equivalence (and also of
evaluation-context equivalence).
A \emph{\coupledrel} is a pair $( \V, \E)$ where:
$\V \subseteq \LaPRO(\emptyset ) \times \LaPRO (\emptyset )$,
$\E \subseteq \LaDO(\emptyset ) \times \LaDO(\emptyset )$,
and  $\V \subseteq \E$.
Intuitively, we place in $\V$  the pairs of terms that should be preserved
by all contexts, and in $\E$ those that should be preserved by evaluation 
contexts. For a \coupledrel\  $\R = ( \V ,\E)$ we write $\R_1 $ for
$\V$ and $\R_2$ for $\E$. The union of \coupledrel s is defined 
componentwise: e.g., if $\R$ and $\SS$ are \coupledrel s, then 
the \coupledrel\ $\R\;\cup\SS$ has $(\R\;\cup\SS)_1  \defi \R_1\;\cup\SS_1$ and 
$(\R\;\cup\SS)_2  \defi  \R_2\;\cup\SS_2$.
If $\V $ is a relation on $\LaPRO$, then $\starred\V $ is the
context closure of $ \V$ in $\LaPRO$, i.e., the set of all (closed) terms of the form 
$(\ct {\til M} , \ct {\til N}$) where $\qct$ is a multi-hole
$\LaPRO$ context and $\til M \VV \til N$.
 
\begin{definition}
\label{d:clb}
A \coupledrel\ $\R$ is a \emph{coupled logical bisimulation} if whenever $E \RR_2 F$ we
have:
\begin{varenumerate}
\item
  if $E \DSlongrightarrow D$, then $F \DSLongrightarrow G$, where 
  $D\RR_2 G$;
\item 
  if $E$ is a formally summed value, then $F\DSLongrightarrow Y$  
  with $\prob E = \prob Y$, and for all $M \starred{\R_1} N$ we have 
  $(E \bullet M)\RR_2(Y\bullet N)$;
\item 
  the converse of 1. and 2..
\end{varenumerate} 
\emph{Coupled logical bisimilarity},  $\logbis$, is the union of all coupled logical bisimulations
(hence $\logbis_1$ is the union of the first component of all coupled
logical bisimulations, and similarly for $\logbis_2$).
\end{definition}
In a coupled bisimulation $(\R_1,\R_2)$, 
the bisimulation game is only played on the pairs in $\R_2$. 
However, the first relation
$\R_1$ is relevant, as  inputs for tested functions are built using
$\R_1$ (Clause 2.\ of Definition \ref{d:clb}). Actually, also the pairs in $\R_1$ are tested, because in any
coupled relations it must be $\R_1 \subseteq \R_2$.
The values produced by the bisimulation game for coupled bisimulation
on $\R_2$ are formal sums (not plain $\lambda$-terms),  and this is why we do not
require them to be in $\R_1$: formal sums should only  appear in redex
position, but terms in $\R_1$ can be used as arguments to bisimilar functions and can
therefore end up in arbitrary positions.

We will see below another aspect of the relevance of $\R_1$: 
the proof technique of logical bisimulation only allows us to prove  substitutivity of the
bisimilarity in arbitrary contexts \emph{for the pairs of terms} in $\R_1$. For pairs 
in $\R_2$ but not in $\R_1 $ the proof technique only allows us to
derive preservation in evaluation contexts.

In the proof of congruence of coupled logical  bisimilarity we will
push ``as many terms as possible'' into the first relation, i.e.,
the first relation will be as large as possible. However, in   proofs of bisimilarity
for concrete terms, the  first relation  may be very small, possibly a singleton or even empty. 
Then the bisimulation clauses become similar to those of applicative
bisimulation (as inputs of tested function are ``almost'' identical).
Summing up, in coupled logical  bisimulation the use of two relations
gives us more flexibility than in  ordinary logical bisimulation: depending on
the needs, we can tune the size of the first relation. 
It is possible that some of the above aspects of coupled logical
bisimilarity be specific to call-by-name, and that the call-by-value
version would require non-trivial modifications.

\begin{remark}
\label{r:envLog} 
In a coupled logical bisimulation, the first relation is used to 
construct the inputs for the tested functions (the formally summed
values produced in the bisimulation game for the second relation). 
Therefore,  such first relation may be thought of as a  
``global'' environment--- global because it is the same for each pair
of terms on which the bisimulation game is played.
As a consequence, coupled logical bisimulation remains quite different
from environmental bisimulation~\cite{SangiorgiKS11}, where 
 the ``environment'' for  constructing inputs  is local to
each pair of tested terms. Coupled logical bisimulation follows ordinary logical
bisimulation~\cite{SangiorgiKS07fsen}, in which there is only one   global environment; in  
ordinary logical bisimulation, however, the global environment coincides with 
the set of tested terms. The similarity with logical bisimulation is also revealed by
non-monotonicity of the associated functional (in contrast, the
functional associated to environmental bisimulation is monotone); see
Remark~\ref{r:nonmono}.
\end{remark} 
As an example of use of coupled logical bisimulation, we revisit the
counterexample~\ref{ex:count} to the completeness of applicative
bisimilarity with respect to contextual equivalence.

\begin{example}
\label{ex:countRevisited}
We consider the terms of Example~\ref{ex:count} and show that they are
in  $\logbis_1$, hence also in $\cbnconequiv$ (contextual equivalence
of $\LaPRO$), by
Corollary~\ref{c:BisContDO} and   $\conteqDO = \cbnconequiv$.
Recall that the terms are
$M \defi  \lambda x. (L\oplus P)$ and 
$N \defi (\lambda x. L )\oplus(\lambda x. P)$
for $L \defi \lambda z.\Omega$ and $P \defi \lambda y.\lambda z.\Omega$. 
We set $\R_1 $ to contain only $(M,N)$ (this is the pair
that interests us), and $R_2 $ to contain the pairs $(M,N), 
(\pair M 1,\pair{\lambda x . L}{\myfrac   12}+  \pair{\lambda x . P}{\myfrac 12}), 
(\pair{L+P} 1, \pair{ L}{\myfrac   12}+  \pair{ P}{\myfrac
12})$, and a set of  pairs with  identical components, namely 
$ 
(
\pair{ L}{\myfrac   12}+  \pair{ P}{\myfrac   12},
\pair{ L}{\myfrac   12}+  \pair{ P}{\myfrac   12})$,
$ 
(
\pair{\Omega}{\myfrac   12}+  \pair{ \lambda u . \Omega}{\myfrac   12},
\pair{\Omega}{\myfrac   12}+  \pair{ \lambda u . \Omega}{\myfrac   12})$,
$ 
(
 \pair{ \lambda u . \Omega}{\myfrac   12},$ $
   \pair{ \lambda u . \Omega}{\myfrac   12})$,
$ 
(
 \pair{  \Omega}{\myfrac   12},
 \pair{  \Omega}{\myfrac   12})$,
$ 
(\emptyset, \emptyset) $, 
where $\emptyset $ is the empty formal sum. 
Thus $(\R_1,\R_2)$ is a coupled logical bisimulation. 
\end{example} 
The main challenge towards the goal of relating coupled logical bisimilarity 
and context equivalence is the substitutivity of bisimulation. We establish 
the latter exploiting some \emph{up-to techniques} for bisimulation. We only 
give the definitions of the techniques, omitting the statements about their 
soundness. The first up-to technique allows us to drop the bisimulation game on
silent actions:
\begin{definition}[Big-Step Bisimulation]
A \coupledrel\  $\R$ is a \emph{big-step 
coupled logical bisimulation} if whenever $E \RR_2 F$, the following holds:
if $E \DSLongrightarrow Z$ then $F \DSLongrightarrow Y$ with $\prob Z =
\prob Y$, and for all $M \starred{\R_1} N$ we have $(Z \bullet M)\RR_2(Y
\bullet N)$.
\end{definition}
\begin{lemma}
\label{l:big}
If $\R$ is a big-step coupled logical bisimulation, then $\R \subseteq \SS$
for some coupled logical bisimulation $\SS$.
\end{lemma}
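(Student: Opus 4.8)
The plan is to take for $\SS$ the closure of $\R$ under the reduction relation $\DSlongrightarrow$ on \emph{both} components, and then to verify directly that this $\SS$ is an ordinary (small-step) coupled logical bisimulation containing $\R$. Concretely, I would set $\SS_1\defi\R_1$ and
$$
\SS_2\defi\{(D,G)\mid\exists E,F.\;(E,F)\in\R_2,\;E\DSLongrightarrow D,\;F\DSLongrightarrow G\}.
$$
First I would check that $(\SS_1,\SS_2)$ is a \coupledrel: reduction keeps terms closed and inside $\LaDO$, so $\SS_1\subseteq\LaPRO(\emptyset)\times\LaPRO(\emptyset)$ and $\SS_2\subseteq\LaDO(\emptyset)\times\LaDO(\emptyset)$; taking zero reduction steps on both sides shows $\R_2\subseteq\SS_2$, and since $\R_1\subseteq\R_2\subseteq\SS_2$ we get the required $\SS_1\subseteq\SS_2$. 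The inclusion $\R\subseteq\SS$ is then immediate, as $\SS_1=\R_1$ and $\R_2\subseteq\SS_2$.

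The heart of the argument is checking the three bisimulation clauses for a generic pair $(D,G)\in\SS_2$, witnessed by $(E,F)\in\R_2$ with $E\DSLongrightarrow D$ and $F\DSLongrightarrow G$. For clause~1, if $D\DSlongrightarrow D'$ then $E\DSLongrightarrow D'$, so $(D',G)\in\SS_2$ with the same witness $F$, and letting $G$ take zero steps discharges the clause. For clause~2, suppose $D$ is a formally summed value; since formally summed values are $\DSlongrightarrow$-normal forms and $\DSlongrightarrow$ is deterministic and strongly normalizing (and every closed term reaches a value, by Lemma~\ref{l:new}), $D$ is exactly the unique value reached from $E$. The big-step hypothesis applied to $(E,F)\in\R_2$ then yields $F\DSLongrightarrow Y$ with $\prob{D}=\prob{Y}$ and $(D\bullet M,\,Y\bullet N)\in\R_2$ for all $M\starred{\R_1}N$. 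Determinism forces $G$ to lie on the unique reduction path from $F$ to its normal form $Y$, hence $G\DSLongrightarrow Y$; combined with $\prob{D}=\prob{Y}$, with $\starred{\SS_1}=\starred{\R_1}$, and with $(D\bullet M,\,Y\bullet N)\in\R_2\subseteq\SS_2$, this is precisely the value clause for $(D,G)$. Clause~3 is handled symmetrically: reductions of $G$ are absorbed into the witness exactly as in clause~1, and when $G$ is a value the converse direction of the big-step condition supplies the matching value reached from $E$, whose (deterministic) reduction passes through $D$.

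The step I expect to be the main obstacle is the value clause and its converse. Everything there hinges on the global facts that $\DSlongrightarrow$ is deterministic and strongly normalizing and that formally summed values are $\DSlongrightarrow$-normal forms: these are what let me identify $D$ (resp.\ $G$) with the unique normal form of $E$ (resp.\ $F$), and what guarantee that an intermediate reduct such as $G$ still reduces to the \emph{very same} value $Y$ that the big-step clause produces, so that the weights $\prob{\cdot}$ and the input-matching transported from $\R_2$ continue to line up. A secondary point to flag is that the coupled logical bisimulation game is symmetric, since clause~3 demands the converse of clauses~1 and~2; the verification of clause~3 therefore uses the big-step condition in the opposite direction. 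I read ``big-step coupled logical bisimulation'' as imposing the displayed property together with its converse --- the same convention under which the small-step definition is symmetric --- and it is this reading that makes the converse clauses close.
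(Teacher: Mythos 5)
Your proof is correct, and there is no proof in the paper to compare it against: the paper states Lemma~\ref{l:big} without proof, and the construction you give --- $\SS_1\defi\R_1$ together with the saturation of $\R_2$ under $\DSLongrightarrow$ on both components --- is the natural (surely the intended) argument, with the clause-checking resting, as it must, on determinism and strong normalization of $\DSlongrightarrow$ and on the fact that formally summed values are normal forms.

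The one point worth tightening is the caveat you flag at the end. You do not need to read the big-step definition as containing its converse: the lemma holds under the literal, one-directional definition, because the converse condition is derivable from the stated one. Suppose $(D,G)\in\SS_2$ with witness $(E,F)\in\R_2$, and $G$ is a formally summed value, hence the normal form of $F$. By strong normalization, and since every closed $\DSlongrightarrow$-normal form is a formally summed value, $E\DSLongrightarrow Z$ for some value $Z$, and $D\DSLongrightarrow Z$ by determinism. The forward big-step condition applied to $(E,F)$ gives $F\DSLongrightarrow Y$ with $\prob Z=\prob Y$ and $(Z\bullet M)\RR_2(Y\bullet N)$ for all $M\starred{\R_1}N$; determinism forces $Y=G$, both being normal forms on the unique reduction sequence from $F$. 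The subtle point is orientation: the converse of clause 2 of Definition~\ref{d:clb}, unfolded correctly --- that is, as clause 2 for the opposite coupled relation $(\R_1^{\mathit{op}},\R_2^{\mathit{op}})$, whose context closure is the opposite of $\starred{\R_1}$ --- asks for $D\DSLongrightarrow X$ with $\prob G=\prob X$ and $(X\bullet M)\mathrel{\SS_2}(G\bullet N)$ for all $M\starred{\SS_1}N$, i.e.\ with the first component still built on $D$'s side and receiving $M$. Taking $X=Z$, this is exactly what the forward condition supplies; no symmetry of $\starred{\R_1}$ or of the big-step hypothesis is needed. So in your clause-3 argument you can replace the appeal to ``the converse direction of the big-step condition'' by this short derivation; everything else stands unchanged, and the lemma then covers big-step bisimulations exactly as the paper defines them, which is the safer reading given that the paper's later constructions (e.g.\ in the proof of Lemma~\ref{l:uptoMAIN}) are invoked against that definition.
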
 
In the reduction $\DSlongrightarrow$, computation is performed at the level of formal sums;
and this is reflected, in coupled bisimulation, by the application of values to formal sums only. 
The following up-to technique allows  computation, and application of  input values, also with ordinary  terms.
In the definition, we extract a formal sum from a term $E$ in $\LaDO$ using the function $\distQ(\cdot)$ inductively as
follows: 
\begin{align*}
\dist{EM}&\defi  \Prod {i} \pair{M_i M}{p_i}  \mbox{ whenever $\dist E=\Prod {i} \pair{M_i }{p_i}$;}\\
\dist{M}&\defi \pair{M}{1}; \hskip 1.6cm  
\dist{H} \;\,\defi\;\,\overl H.
\end{align*} 
\begin{definition}
\label{d:uptoFS}
A \coupledrel\ $\R$ is a \emph{bisimulation up-to formal sums}  if,
whenever $E \RR_2 F$, then either (one of the  bisimulation clauses of
Definition~\ref{d:clb} applies), or ($E,F \in \LaPRO$ and  one of the
following clauses applies):
\begin{varenumerate}
\item 
   $E\DSlongrightarrow D$ with 
   $\dist{D}=\pair{M}{\myfrac 1 2} + \pair{N}{\myfrac 1 2}$,  
   and $F\DSlongrightarrow G$
   with $\dist{G} = \pair{\termthree}{\myfrac 1 2} + \pair{\termfour}{\myfrac 1 2}$,  
   $\termone\RR_2\termthree$, and $\termtwo\RR_2\termfour$;
\item 
  $E = \lambda x. M$ and 
  $F = \lambda x. N$, and 
  for all $P \starred{\R_1} Q$ we have 
  $M\sub Px \RR_2  N\sub Qx$;
\item 
$E = (\lambda x. M) P \til M $ and 
$F = (\lambda x. N) Q \til N$, and  
$
M\sub {P}x \til M \RR_2  
N\sub {Q}x \til N 
$.
\end{varenumerate}
 \end{definition}
According to Definition~\ref{d:uptoFS}, in the bisimulation game for a
coupled relation, given a pair $(E,F) \in \R_2$, we can either
choose to follow the  bisimulation game in the original
Definition~\ref{d:clb}; or, if $  E$ and $F$ do not contain formal
sums, we can try one of the new clauses above. The advantage of the
first new clause is that it allows us to make a split on the derivatives
of the original terms. The advantage of the other two new clauses is that
they allow us to directly handle the given $\lambda$-terms,
without using the operational rules  of Figure~\ref{fig:redlado} and
therefore without introducing formal sums.  To understand the first
clause, suppose $E \defi (\termone\oplus\termtwo)\termthree$ and 
$F\defi\termfour\oplus\termfive$. We have $E\DSlongrightarrow (\pair{\termone}{\myfrac 1 2} +
\pair{\termtwo}{\myfrac 1 2})\termthree \defi G$ 
with $\dist{G} = \pair{\termone\termthree}{\myfrac 1 2} + \pair{\termtwo\termthree}{\myfrac 1 2}$,
and $F\DSlongrightarrow \pair{\termfour}{\myfrac 1 2}+\pair{\termfive}{\myfrac 1 2} \defi H$,
with $\dist{H}=H$, and it is sufficient now to ensure $(\termone\termthree)\RR_2\termfour$, and 
$(\termtwo\termthree)\RR_2\termfive$.

\begin{lemma}
\label{l:uptoMAIN}
If $\R$ is a bisimulation up-to formal sums, then $\R \subseteq \SS$ for
some coupled logical bisimulation $\SS$.
\end{lemma}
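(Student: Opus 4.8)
The plan is to prove soundness of the up-to-formal-sums technique in the usual way: from a bisimulation up-to formal sums $\R=(\R_1,\R_2)$ I would build a genuine coupled logical bisimulation $\SS=(\SS_1,\SS_2)$ with $\R\subseteq\SS$, so that the claim follows because $\logbis$ is the union of all coupled logical bisimulations. The first relation is kept fixed, $\SS_1\defi\R_1$, and all the work goes into the second. I would let $\SS_2$ be the closure of $\R_2$ under two operations reflecting the shape of the new clauses of Definition~\ref{d:uptoFS}: first, \emph{gluing}, adding every pair of formal sums $\Prod{i\in I}\pair{M_i}{p_i}$ and $\Prod{i\in I}\pair{N_i}{p_i}$ that share the same index set and weights and whose components satisfy $M_i\RR_2 N_i$; and second, the \emph{extraction correspondence}, adding $(E,F)$ whenever $\dist E$ and $\dist F$ are glued in this sense (using the function $\distQ(\cdot)$ of Definition~\ref{d:uptoFS}). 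Since $\R$ is a coupled relation we have $\R_1\subseteq\R_2\subseteq\SS_2$, hence $\SS_1\subseteq\SS_2$ and $\SS$ is again a coupled relation; and $\R\subseteq\SS$ by construction.

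Before checking the clauses of Definition~\ref{d:clb} I would record the elementary facts that make the closure work. Because $\dist{\cdot}$ merely distributes a pending application over a formally summed head, it agrees with reduction: $\dist{Z\bullet M}=\dist Z\bullet M$, and $E$ and $\dist E$ perform matching single $\DSlongrightarrow$-steps; moreover $\DSlongrightarrow$ is deterministic and strongly normalizing (Figure~\ref{fig:redlado}), so each term has a unique derivative and it suffices to examine that one. With these in hand, a pair in $\SS_2$ coming from gluing behaves componentwise: if it is a formally summed value, Clause~2 holds because the weights agree and, for $M\starred{\R_1}N$, the $\bullet$-successors $\pair{A_i\sub Mx}{p_i}$ and $\pair{B_i\sub Nx}{p_i}$ are glued again once each component pair $(\lambda x.A_i,\lambda x.B_i)\in\R_2$ is run through the up-to game; otherwise the unique step is mirrored in each component and the derivative is again glued. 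For a pair $(E,F)\in\R_2$ I would use that $\R$ is a bisimulation up-to formal sums: if the genuine clauses of Definition~\ref{d:clb} hold they transfer to $\SS$ because $\SS_1=\R_1$ (so $\starred{\R_1}=\starred{\SS_1}$) and $\R_2\subseteq\SS_2$; otherwise I re-expand each shortcut into honest reductions. The split clause gives $E\DSlongrightarrow D$, $F\DSlongrightarrow G$ with $\dist D=\pair M{\myfrac12}+\pair N{\myfrac12}$, $\dist G=\pair L{\myfrac12}+\pair P{\myfrac12}$ and $M\RR_2 L$, $N\RR_2 P$, so $(D,G)\in\SS_2$ by extraction and gluing; the value clause is matched by inserting the step $\trans{sl}$, landing on the glued singletons $(\pair{\lambda x.M}1,\pair{\lambda x.N}1)$; and the $\beta$-clause is matched by the $\trans{sl}$–$\trans{sp}$–$\trans{sa}$ run that carries $(\lambda x.M)P\til M$ to $\pair{M\sub Px}1\til M$, whose extraction is the glued singleton $\pair{M\sub Px\,\til M}1$.

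The delicate point --- and the step I expect to be the main obstacle --- is the interaction of the big-step rule $\trans{spc}$ with this componentwise closure. A glued pair of genuine formal sums of non-values, such as the derivative $\pair M{\myfrac12}+\pair N{\myfrac12}$ produced by $\trans{ss}$, takes a single $\DSlongrightarrow$-step straight to $\pair{\sem M}{\myfrac12}+\pair{\sem N}{\myfrac12}$, replacing each component by its \emph{entire} $\LOP$-value distribution; to match it I must know that $M\RR_2 L$ forces $\sem M$ and $\sem L$ to decompose into glued, equi-weighted pieces, which is precisely a coinductive consequence of $\R$ being a bisimulation and not a one-step fact. I would resolve this by aiming first at a \emph{big-step} coupled logical bisimulation, where only full evaluations $E\DSLongrightarrow Z$ are compared, and then invoking Lemma~\ref{l:big}; the equivalence between $\DSlongrightarrow$ and the $\LOP$-semantics recorded in Lemma~\ref{l:new}, together with the determinism and normalization of $\DSlongrightarrow$, lets me turn the $\sem{\cdot}$ occurring in $\trans{spc}$ into terminating runs that the coupled relation can follow component by component while keeping the probabilities $\prob{\cdot}$ aligned. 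Once this reconciliation is carried out for all cases, $\SS$ is a coupled logical bisimulation containing $\R$, which is the claim.
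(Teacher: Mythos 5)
Your construction and overall strategy coincide with the paper's: keep $\SS_1\defi\R_1$, enlarge $\R_2$ by componentwise ``gluing'' of formal sums (pairs $\Prod{i}\pair{H_i}{p_i}$, $\Prod{i}\pair{K_i}{p_i}$ whose components are pairwise $\R_2$-related, possibly as singletons $\pair{M_i}{1}$, $\pair{N_i}{1}$ of related $\LaPRO$-terms), and prove that $\SS$ is a \emph{big-step} coupled logical bisimulation, concluding with Lemma~\ref{l:big}. You also locate the crux exactly where the paper does: for $M\RR_2 N$ with $M\DSLongrightarrow Z$ and $N\DSLongrightarrow Y$, one must show $Z\SS_2 Y$, i.e.\ that the two formally summed values admit equi-weighted decompositions into pairwise related pieces.

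The gap is in how you propose to discharge this crux. Appealing to Lemma~\ref{l:new} plus determinism and strong normalization of $\DSlongrightarrow$ to ``turn the $\sem{\cdot}$ occurring in $\trans{spc}$ into terminating runs that the coupled relation can follow component by component'' does not work: the finite run $M\DSLongrightarrow Z$ is of no help in exhibiting the decomposition, because its $\trans{spc}$ steps swallow an entire (infinitary) distribution $\sem{M_i}$ in a single transition --- there is nothing componentwise left to follow in that run. What is needed, and what the paper's proof actually does, is to unfold the \emph{up-to game itself} starting from the pair $(M,N)\in\R_2$: build the tree whose nodes are the pairs produced by the clauses of Definition~\ref{d:uptoFS} (and Definition~\ref{d:clb}), with the split clause branching into two children each carrying weight $\myfrac{1}{2}$, the value clauses yielding leaves, and the remaining clauses yielding a single child. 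This tree is in general \emph{infinite} and some of its paths diverge; the key observation is that the multiset of its leaves, each weighted by the probability of the path reaching it, constitutes exactly matching decompositions $Z=\Prod{i}\pair{Z_i}{p_i}$ and $Y=\Prod{i}\pair{Y_i}{p_i}$ with every $(Z_i,Y_i)$ again a pair in $\R_2$, whence $Z\SS_2 Y$ by gluing. This limit argument over a possibly infinite, probability-weighted unfolding is the missing idea; no induction on terminating reduction sequences can replace it, since divergent branches and infinite-support formal sums must be accounted for when aligning $\prob{Z}$ and $\prob{Y}$.
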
 
\begin{proof}
We show that the \coupledrel\ $\SS$, with $\SS_1 = \R_1$ and  
$$\SS_2 \defi \R_2 \cup \{ (\Prod i \pair{H_i}{p_i}, \Prod i \pair{K_i}{p_i}) \st
\mbox{for each $i$, 
$\begin{array}[t]{l}
\mbox{ either $H_i \RR_2 K_i$} \\
\mbox{ or $H_i = \pair{M_i}1,K_i = \pair{N_i}1$ and  $M_i \RR_2 N_i$
   } \}, 
\end{array}$}
 $$
 is a big-step bisimulation and then apply
Lemma~\ref{l:big}. 
The key point for this is to show that whenever $M \RR_2 N$, if 
 $M \DSLongrightarrow Z$  and $N \DSLongrightarrow Y$, then $Z \SS_2 Y$.

For this, roughly,  
we reason on the tree whose nodes are the pairs of terms 
 produced by the
up-to bisimulation game for $\R_2$ and with root a pair $(M ,N)$ in
$\R_2$ (and with the proviso that 
a node $(E,F)$, if not a pair of values, and   not a pair of
$\LaPRO$-terms, has 
one only child, namely
 $(Z,Y)$ 
for $Z,Y$ s.t.\ 
$ E \DSLongrightarrow Z $ and $F \DSLongrightarrow Y$).

Certain paths in the tree may be
divergent; those that reach a leaf give the formal sums that
$M$ and $N$  produce.
Thus, 
if $M \ULdwa \overl Z$ and $N \ULdwa \overl Y$, then
we can write $Z = \Prod i \pair{Z_i}{p_i}$ and 
 $Y = \Prod i \pair{Y_i}{p_i}$, for $Z_i,Y_i,p_i$ s.t.\ 
$\{(Z_i,Y_i,p_i)\} $ represent exactly the multiset of the leaves in
the tree together with the probability of the path  reaching  each
 leaf. 
\end{proof} 
Using the above proof technique, we can prove the necessary
substitutivity property for bisimulation.  The use of up-to techniques, and
the way bisimulation is defined (in particular the presence of a clause for
$\tau$-steps and the possibility of using the pairs in the bisimulation
itself to construct inputs for functions), make it possible to use a
standard argument by induction over contexts.

\begin{lemma}
\label{l:CTclo}
If $\R$ is a bisimulation then the context closure $
\SS$ with
\[ \setlength\arraycolsep{3pt}
\begin{array}{rcl}
  \SS_1 & \defi & \starred{\R_1}; \\
  \SS_2 & \defi & \R_2\,\cup\,\starred{\R_1}\,\cup\,\{ (E \til M, F \til N) \st E \RR_2 F \mbox{ and } M_i
  \starred{\R_1} N_i \};
\end{array} \]
 is a  bisimulation up-to formal sums.
\end{lemma}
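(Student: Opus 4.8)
The plan is to verify directly that the coupled relation $\SS=(\SS_1,\SS_2)$ satisfies Definition~\ref{d:uptoFS}. First I would record two bookkeeping facts. One, $\SS$ really is a coupled relation, since $\SS_1=\starred{\R_1}\subseteq\SS_2$. Two, context closure is idempotent, so $\starred{\SS_1}=\starred{\starred{\R_1}}=\starred{\R_1}=\SS_1$; this is what lets the quantifications ``for all $P\starred{\SS_1}Q$'' in the clauses of Definitions~\ref{d:clb} and~\ref{d:uptoFS} coincide with ``for all $P\starred{\R_1}Q$'', which I use throughout. I would also note the routine closure properties of $\starred{\R_1}$: it is preserved under forming applications, abstractions and sums of related terms, and under substituting related (closed) terms into a common context — all by composing the underlying multi-hole contexts.

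Given $(G,G')\in\SS_2$, I split according to the three constituents of $\SS_2$. If $(G,G')\in\R_2$, the clauses of Definition~\ref{d:clb} hold for $\R$ by hypothesis; since all resulting derivatives lie in $\R_2\subseteq\SS_2$ and the input quantification ranges over $\starred{\R_1}=\starred{\SS_1}$, the same clauses hold for $\SS$, so the first disjunct of Definition~\ref{d:uptoFS} is met. If $(G,G')=(E\,\til M,F\,\til N)$ is in the third constituent (with $(E,F)\in\R_2$, each $M_i\starred{\R_1}N_i$, and $\til M$ nonempty), I verify Definition~\ref{d:clb} clause~1 by propagating the head reduction through the spine: if $E\DSlongrightarrow D$ then rule $\trans{sa}$ gives $G\DSlongrightarrow D\,\til M$ while $F\DSLongrightarrow G_0$ gives $G'\DSLongrightarrow G_0\,\til N$, and $(D\,\til M,G_0\,\til N)$ is again of the third form; if $E$ is a formally summed value $Z$, rule $\trans{sp}$ fires the first argument, $G\DSlongrightarrow(Z\bullet M_1)M_2\cdots$, matched using clause~2 of $\R$ (which yields $(Z\bullet M_1)\mathbin{\R_2}(Y\bullet N_1)$ because $M_1\starred{\R_1}N_1$), landing once more in the third form. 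The converses are symmetric and use the converse clauses of $\R$.

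The heart of the proof is the second constituent $\starred{\R_1}$, handled by a case analysis on the structure of the closed $\LaPRO$ term $G=\ct{\til M}$, equivalently on the outermost constructor of $\qct$. A hole gives a pair in $\R_1\subseteq\R_2$. An abstraction $G=\lambda x.\ctP{\til M}$, $G'=\lambda x.\ctP{\til N}$ is matched by up-to clause~2: for $P\starred{\R_1}Q$ the bodies $\ctP{\til M}\sub Px$ and $\ctP{\til N}\sub Qx$ form again a $\starred{\R_1}$ pair (the $M_i,N_i$ being closed, substitution touches only the context), hence belong to $\SS_2$. A sum $G=G_1\myoplus G_2$ reduces by $\trans{ss}$ to the formal sum $\pair{G_1}{\myfrac 1 2}+\pair{G_2}{\myfrac 1 2}$, exactly the shape demanded by up-to clause~1, with branches $G_i\starred{\R_1}G_i'$.

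The application case is the main obstacle, and the key idea that avoids any nested induction is to analyse the head of the application spine rather than the immediate subterms. Writing the closed $\LaPRO$ application as $G=H\,G_1\cdots G_m$ with $H$ its (non-application) head, $H$ must be either a hole of $\qct$ filled by some $M_j$, an abstraction, or a sum, since the head of the outermost spine of a closed term cannot be a variable. If $H=M_j$, then $(G,G')$ falls into the third constituent and is dispatched as above. If $H=\lambda x.H_0$, then $G$ is a head $\beta$-redex, matched by up-to clause~3, whose residual $H_0\sub{G_1}x\,G_2\cdots G_m$ is once more a $\starred{\R_1}$ pair. If $H=H_1\myoplus H_2$, then $\trans{ss}$ followed by $\trans{sa}$ gives $G\DSlongrightarrow(\pair{H_1}{\myfrac 1 2}+\pair{H_2}{\myfrac 1 2})\,G_1\cdots G_m$, and $\dist{\cdot}$ distributes the trailing arguments over the formal sum to yield $\pair{H_1 G_1\cdots G_m}{\myfrac 1 2}+\pair{H_2 G_1\cdots G_m}{\myfrac 1 2}$, precisely the binary split required by up-to clause~1, with each branch a $\starred{\R_1}$ pair. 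I expect the delicate points to be purely combinatorial: checking that each reduction produces exactly the two-summand, equal-weight shape needed by clause~1 (guaranteed because $\oplus$ is binary and $\dist{\cdot}$ commutes with application), that clauses~2 and~3 are invoked only on genuine $\LaPRO$ abstractions and head redexes, and that every residual is reassembled into $\starred{\R_1}$ via the closure properties noted at the outset. Finally, all converse obligations follow by symmetry, since in each case the reductions of $G$ and $G'$ are driven in lockstep by the common context $\qct$, together with the converse clauses already available for $\R$.
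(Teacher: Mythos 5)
Your proof is correct. The paper does not display a proof of this lemma (it is deferred to the extended version), but the prose immediately before it announces ``a standard argument by induction over contexts''; your route reaches the same verification without any induction, and this is a genuine difference in organization. Where a context induction must, in the application case $\qct = \qctU\,\qctD$, case-split on which clause the induction hypothesis delivers for $\qctU$ and propagate it through the application, you decompose the closed term along its application spine and observe that the head can only be a filled hole, an abstraction, or a probabilistic sum (a variable head would contradict closedness). The filled-hole case is absorbed by the third component of $\SS_2$ --- which exists precisely so that this case needs only the clauses of $\R$ together with $\starred{\SS_1}=\starred{\R_1}$ --- the abstraction head is discharged by up-to clauses 2 and 3, and the sum head by up-to clause 1, after checking that $\dist{\cdot}$ distributes the spine arguments over the binary split. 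What this buys is that every one-step residual lands literally back in one of the three components of $\SS_2$, so the whole argument is a finite case analysis, modulo the routine closure of $\starred{\R_1}$ under context composition and substitution of closed related terms, which you rightly isolate as bookkeeping; the inductive version needs exactly the same closure facts, so nothing is lost. Two phrasings are imprecise but harmless: calling $H$ the ``(non-application) head'' of the term is inaccurate in the hole case, where $M_j$ may itself be an application (what you actually decompose is the spine of the context $\qct$, whose head is a hole, an abstraction context, or a sum context); and closure of $\starred{\R_1}$ ``under abstraction'' must be read at the level of contexts, since $\starred{\R_1}$ relates closed terms only. Neither affects the argument.
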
  
\begin{corollary}
\label{c:BisCong}
\begin{enumerate}
\item
$M \approx_1 N  $ implies $\ct M \approx_1 \ct N$, for all $\qct$

\item 
$E \approx_2 F$ implies 
$\ct E \approx_2 \ct F$, for all evaluation contexts $\qct$.
\end{enumerate}
 \end{corollary}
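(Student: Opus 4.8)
The plan is to obtain the corollary directly from Lemma~\ref{l:CTclo} and Lemma~\ref{l:uptoMAIN}, so that no fresh bisimulation argument is needed; the only work is to match the shapes of the two statements against the definition of the context closure $\SS$ in Lemma~\ref{l:CTclo}. First I would unfold the definition of $\logbis$ as a union: a pair lies in $\logbis_1$ (resp.\ $\logbis_2$) precisely when it belongs to $\R_1$ (resp.\ $\R_2$) for some coupled logical bisimulation $\R$. Thus, given $M \logbis_1 N$, I fix a witnessing coupled logical bisimulation $\R$ with $M \RR_1 N$, and given $E \logbis_2 F$ I fix one with $E \RR_2 F$. In each case I form the context closure $\SS$ of that $\R$ as in Lemma~\ref{l:CTclo}. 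By that lemma $\SS$ is a bisimulation up-to formal sums, and by Lemma~\ref{l:uptoMAIN} it is contained in some coupled logical bisimulation; since $\logbis$ is the union of all coupled logical bisimulations, this gives $\SS_1 \subseteq \logbis_1$ and $\SS_2 \subseteq \logbis_2$. Both parts then follow by reading off which component of $\SS$ the relevant pair inhabits.

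For part~1, recall that $\SS_1 = \starred{\R_1}$. For any $\LaPRO$ context $\qct$, the pair $(\ct M, \ct N)$ is by definition an element of $\starred{\R_1}$, obtained by filling the holes of $\qct$ with the single related pair $M \RR_1 N$. Hence $(\ct M, \ct N) \in \SS_1 \subseteq \logbis_1$, i.e.\ $\ct M \logbis_1 \ct N$, which is the claim.

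For part~2, a (closing) evaluation context has the form $\qct = \contexthole\til M$ with $\til M$ closed $\LaPRO$-terms, so $\ct E = E\til M$ and $\ct F = F\til M$. I would place this pair into the third constituent of $\SS_2$, namely $\{ (E'\til{M'}, F'\til{N'}) \st E' \RR_2 F' \mbox{ and } M'_i \starred{\R_1} N'_i \}$, instantiated with $E' = E$, $F' = F$ and $M'_i = N'_i = M_i$. This requires $E \RR_2 F$, which holds by the choice of $\R$, and $M_i \starred{\R_1} M_i$ for each $i$. The latter is exactly reflexivity of the context closure: a multi-hole context with no holes is any closed term $P$, whence $P \starred{\R_1} P$. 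Therefore $(E\til M, F\til M) \in \SS_2 \subseteq \logbis_2$, i.e.\ $\ct E \logbis_2 \ct F$.

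All the substantive content has already been discharged in Lemma~\ref{l:CTclo} (substitutivity of bisimulation under context closure, via the up-to-formal-sums technique) and Lemma~\ref{l:uptoMAIN}; at this level there is therefore no genuine obstacle, only bookkeeping. The one point deserving explicit attention — and which I expect to be the delicate step — is the asymmetry between the two clauses: part~1 uses that $\starred{\R_1}$ is literally the first component $\SS_1$, and so holds for \emph{arbitrary} contexts, whereas part~2 only reaches $\SS_2$ through its third, ``evaluation-context'', constituent, and is hence confined to evaluation contexts. This is precisely the limitation anticipated after Definition~\ref{d:clb}, namely that pairs in $\R_2$ but not in $\R_1$ are preserved only under evaluation contexts. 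Verifying that $\contexthole\til M$ yields a pair of the required shape, and that reflexivity of $\starred{\R_1}$ supplies the identical arguments $M_i$, are the only steps one must not skip.
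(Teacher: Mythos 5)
Your proposal is correct and follows essentially the same route as the paper: the corollary is stated there without explicit proof precisely because it is the immediate consequence of Lemma~\ref{l:CTclo} combined with the soundness of the up-to-formal-sums technique (Lemma~\ref{l:uptoMAIN}), which is exactly how you argue. Your bookkeeping — reading part~1 off $\SS_1 = \starred{\R_1}$ and part~2 off the third constituent of $\SS_2$ via reflexivity of $\starred{\R_1}$ (zero-hole contexts) — is the intended instantiation.
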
  
Using Lemma~\ref{l:CTclo} we can prove the inclusion in context equivalence.
\begin{corollary}\label{c:BisContDO}
  If $M \approx_1 N  $ then $M \conteqDO N$.
  Moreover, if $E \approx_2 F  $ then $E \conteqDOeval F$.
\end{corollary}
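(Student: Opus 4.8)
The plan is to reduce the statement to two ingredients that are essentially already in place: the congruence result of Corollary~\ref{c:BisCong} and the \emph{adequacy} of the second component $\approx_2$, by which I mean that coupled-logically-bisimilar $\LaDO$-terms converge with exactly the same probability. The congruence is given, the inclusion $\approx_1\;\subseteq\;\approx_2$ is immediate, so the only genuinely new work is adequacy.

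First I would establish adequacy: if $E \approx_2 F$ then $E \DwaDO{p}$ iff $F \DwaDO{p}$. Fix a coupled logical bisimulation $\R$ with $(E,F)\in\R_2$. Since $\DSlongrightarrow$ is deterministic and strongly normalizing, $E$ has a unique normal form, a formally summed value $Z$ with $E \DSLongrightarrow Z$ and $\prob{Z}=p$. Writing this reduction as $E = E_0 \DSlongrightarrow E_1 \DSlongrightarrow \cdots \DSlongrightarrow E_k = Z$, I apply clause~1 of Definition~\ref{d:clb} repeatedly, starting from $G_0\defi F$: from $(E_i,G_i)\in\R_2$ and $E_i\DSlongrightarrow E_{i+1}$ I obtain $G_i\DSLongrightarrow G_{i+1}$ with $(E_{i+1},G_{i+1})\in\R_2$, so that $F\DSLongrightarrow G_k$ with $(Z,G_k)\in\R_2$. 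As $Z$ is a formally summed value, clause~2 now yields $G_k\DSLongrightarrow Y$ with $\prob{Y}=\prob{Z}=p$; composing gives $F\DSLongrightarrow Y$, and by determinism $Y$ is the unique normal form of $F$, whence $F\DwaDO{p}$. The converse direction is symmetric, using clause~3, which completes adequacy.

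Next I would invoke the defining constraint $\R_1\subseteq\R_2$ of coupled relations, which passes to the unions and gives $\approx_1\;\subseteq\;\approx_2$. For the first assertion, let $\qct$ be any closing simple $\LaDO$ context, i.e.\ a $\LaPRO$ context. From $M\approx_1 N$, Corollary~\ref{c:BisCong}(1) yields $\ct M\approx_1\ct N$, hence $\ct M\approx_2\ct N$ by the inclusion just noted, and adequacy gives $\ct M\DwaDO{p}$ iff $\ct N\DwaDO{p}$. Since $\qct$ was arbitrary, $M\conteqDO N$ by Definition~\ref{d:ciu}. For the second assertion the argument is identical but uses Corollary~\ref{c:BisCong}(2): from $E\approx_2 F$ we get $\ct E\approx_2\ct F$ for every evaluation context $\qct$, and adequacy gives $\ct E\DwaDO{p}$ iff $\ct F\DwaDO{p}$, i.e.\ $E\conteqDOeval F$.

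I expect the main obstacle to be the adequacy step, and within it the need to reconcile the single-step reduction $\DSlongrightarrow$ along which $E$ reaches its normal form with the multi-step matching $\DSLongrightarrow$ that clause~1 provides for $F$. Determinism and strong normalization of $\DSlongrightarrow$ are precisely what let me conclude that the value $Y$ produced by the bisimulation game is genuinely the normal form of $F$, and that $\prob{Y}$ is the actual convergence probability rather than some intermediate weight; without these properties the matched value need not be canonical and the biconditional on $\DwaDO{p}$ could fail.
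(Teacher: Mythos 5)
Your proof is correct and takes essentially the route the paper intends: the paper's one-line justification invokes Lemma~\ref{l:CTclo} (via its consequence, Corollary~\ref{c:BisCong}), and the only remaining content --- adequacy of $\approx_2$, i.e.\ that $\R_2$-related terms converge with equal probability --- is exactly the step you spell out, correctly using determinism and strong normalization of $\DSlongrightarrow$ to ensure the value produced by the bisimulation game is the canonical normal form. Nothing is missing.
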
  
The  converse of Corollary~\ref{c:BisContDO} is proved exploiting 
a few simple properties of $\conteqDOeval$ (e.g.,  its transitivity,  the
inclusion  ${\DSlongrightarrow} \subseteq {\conteqDOeval}$).

\begin{lemma}
\label{l:redEQUIV}
$E \DSlongrightarrow E'$ implies $ E \conteqDOeval E'$.
\end{lemma}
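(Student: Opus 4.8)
The plan is to unfold the definition of $\conteqDOeval$ and reduce the claim to a statement about convergence through a single reduction step. By Definition~\ref{d:ciu}, proving $E \conteqDOeval E'$ amounts to showing, for every closing $\LaDO$ evaluation context $\qct = \contexthole\,\til M$, that $\ct E \DwaDO{p}$ iff $\ct{E'} \DwaDO{p}$, where $\ct E = E\til M$ and $\ct{E'} = E'\til M$. So everything reduces to comparing the unique values reached from $E\til M$ and from $E'\til M$.

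First I would establish a context-preservation property for one-step reduction: if $E \DSlongrightarrow E'$ then $E\til M \DSlongrightarrow E'\til M$ for every finite sequence $\til M$ of $\LaPRO$-terms. This is an easy induction on the length of $\til M$; the base case ($\til M$ empty) is the hypothesis itself, and the inductive step is a direct application of rule $\trans{sa}$ of Figure~\ref{fig:redlado}, reading the reduction $E M_1\cdots M_k \DSlongrightarrow E' M_1\cdots M_k$ furnished by the induction hypothesis as the premise of $\trans{sa}$ for the extra argument $M_{k+1}$. Note that $\trans{sa}$ is genuinely applicable because $E$, being reducible, is not a formally summed value, so there is no clash with the value-application rule $\trans{sp}$ and determinism is respected.

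With this in hand, I would invoke the two global properties of $\DSlongrightarrow$ recorded just after Figure~\ref{fig:redlado}, namely that it is deterministic and strongly normalizing. For any $\LaDO$ term $G$ these facts yield a \emph{unique} value $Z$ with $G\DSLongrightarrow Z$, so that $G \DwaDO{p}$ holds precisely for $p = \prob Z$. Now since $\ct E = E\til M \DSlongrightarrow E'\til M = \ct{E'}$ is a single step and $\DSlongrightarrow$ is deterministic, $\ct{E'}$ is the unique one-step reduct of $\ct E$; hence the maximal reduction sequence from $\ct E$ to its normal form factors through $\ct{E'}$, and the two terms reach the same value $Z$. Consequently $\ct E \DwaDO{p}$ holds iff $p = \prob Z$ iff $\ct{E'} \DwaDO{p}$, giving the required equivalence for this context; as $\qct$ was arbitrary, $E \conteqDOeval E'$ follows.

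The argument is short, and I do not expect a genuine obstacle. The only point requiring a little care is the interaction between the lifted reduction and determinism: one must check that applying $\trans{sa}$ underneath the evaluation context does not introduce an alternative redex (it does not, precisely because the head $E$ is not a value), so that the unique deterministic reduction sequence out of $\ct E$ really passes through $\ct{E'}$ and the two terms share a normal form. Everything else is routine bookkeeping.
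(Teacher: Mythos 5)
Your proof is correct, but it takes a genuinely different route from the paper's. The paper disposes of this lemma in one line: from $E \DSlongrightarrow E'$ it concludes $E \approx_2 E'$ (a term and its unique reduct are easily coupled logically bisimilar, since clause 1 of Definition~\ref{d:clb} lets the partner match with zero steps), and then invokes the soundness result (Corollary~\ref{c:BisContDO}) to get $E \conteqDOeval E'$. You instead argue directly from the operational semantics: you lift the single step through evaluation contexts by iterating rule $\trans{sa}$ (correctly noting that $E$, being reducible, is not a formally summed value, so there is no overlap with $\trans{sp}$ and determinism is preserved), and then use determinism plus strong normalization to conclude that $\ct E$ and $\ct{E'}$ share the same normal form $Z$, hence the same convergence probability. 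Both arguments are sound. What each buys: the paper's proof is essentially free given the bisimulation infrastructure already built (Lemma~\ref{l:CTclo}, Corollary~\ref{c:BisContDO}), whereas yours is elementary and self-contained, independent of the soundness of coupled logical bisimilarity — a feature that would matter if one ever needed this lemma \emph{before} establishing soundness. The only point you lean on implicitly is that closed normal forms of $\DSlongrightarrow$ are formally summed values (so that "normal form" and "value reached" coincide); this progress-style fact is what the paper establishes in the proof sketch of Lemma~\ref{l:new}, and citing it would make your argument fully airtight.
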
 

\begin{proof}
If $E \DSlongrightarrow E'$ then $E \approx_2 E'$ hence
 $E \conteqDOeval E'$.
\end{proof} 

\begin{lemma}
\label{l:appEQUIV}
$Z \conteqDOeval Y$ implies
$Z \bullet M \conteqDOeval Y \bullet M$ for all $M$.
\end{lemma}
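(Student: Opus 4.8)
The plan is to derive the conclusion directly from the hypothesis $Z \conteqDOeval Y$ by a reduction-matching argument, observing that feeding $M$ to $Z$ and \emph{then} testing with an evaluation context is, up to one reduction step, the same as testing $Z \bullet M$ with that context. Concretely, I would fix an arbitrary closing evaluation context, necessarily of the form $C = \contexthole\,\til N$ (a hole applied to a sequence of terms), and compare $\ct{Z\bullet M} = (Z\bullet M)\,\til N$ with $\ct{Y\bullet M} = (Y\bullet M)\,\til N$. The device is to introduce the auxiliary evaluation context $C' \defi \contexthole\,M\,\til N$, which is again just the hole applied to the sequence $M,\til N$, and to test $Z$ and $Y$ with it.

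The key computation I would carry out is that, since $Z$ is a formally summed value, its head application fires by rule $\trans{sp}$, giving $Z\,M \DSlongrightarrow Z \bullet M$; rule $\trans{sa}$ then lifts this contraction through the trailing spine $\til N$, one application at a time, so that
\[
\ctP{Z} \;=\; Z\,M\,\til N \;\DSlongrightarrow\; (Z\bullet M)\,\til N \;=\; \ct{Z\bullet M},
\]
and symmetrically $\ctP{Y} \DSlongrightarrow \ct{Y\bullet M}$. The derivation of this single $\DSlongrightarrow$ step uses $\trans{sp}$ once at the head and $\trans{sa}$ once for each component of $\til N$.

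Next, using that $\DSlongrightarrow$ is deterministic and strongly normalizing, convergence is preserved along this step in both directions: $\ctP{Z}\DwaDO{p}$ iff $\ct{Z\bullet M}\DwaDO{p}$, and $\ctP{Y}\DwaDO{p}$ iff $\ct{Y\bullet M}\DwaDO{p}$. Since $C'$ is an evaluation context and $Z \conteqDOeval Y$, we also have $\ctP{Z}\DwaDO{p}$ iff $\ctP{Y}\DwaDO{p}$. Chaining the three equivalences gives $\ct{Z\bullet M}\DwaDO{p}$ iff $\ct{Y\bullet M}\DwaDO{p}$; as $C$ was an arbitrary closing evaluation context, this is exactly $Z\bullet M \conteqDOeval Y\bullet M$. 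The matching of closing conditions is immediate: $C$ closes $Z\bullet M$ precisely when $Z$ and $M$ are closed and $\til N$ is closed, which is also when $C'$ closes $Z$.

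I expect the only genuinely delicate point~--- and hence the main obstacle~--- to be the precise justification of the reduction $Z\,M\,\til N \DSlongrightarrow (Z\bullet M)\,\til N$: one must check that the redex is forced at the head position $Z\,M$ and that rule $\trans{sa}$ correctly propagates the single contraction outward through the whole spine $\til N$, without triggering any other rule. Everything else~--- the transfer of $\DwaDO{p}$ across a deterministic reduction step and the final appeal to the hypothesis on the evaluation context $C'$~--- is routine.
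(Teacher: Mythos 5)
Your proof is correct and takes essentially the same route as the paper: the paper's one-line proof (definition of $\conteqDOeval$, transitivity, and Lemma~\ref{l:redEQUIV}) amounts precisely to your chain relating $\ct{Z\bullet M}$ to $\ctP{Z}$ via the step $Z\,M\,\til N \DSlongrightarrow (Z\bullet M)\,\til N$, applying the hypothesis with the evaluation context $\contexthole\,M\,\til N$, and coming back symmetrically for $Y$. The only presentational difference is that you justify the transfer of $\DwaDO{p}$ across that step directly from determinism of $\DSlongrightarrow$, whereas the paper delegates this to Lemma~\ref{l:redEQUIV} (itself proved via bisimilarity); both justifications are sound.
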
 

\begin{proof}
Follows from definition of $\conteqDOeval$, transitivity of $\conteqDOeval$,
and Lemma~\ref{l:redEQUIV}.  
\end{proof} 

\begin{lemma}
\label{l:MiNiEQUIV}
If $M_i \conteqDO N_i$ for each $i$, then 
$\Prod i \pair{M_i}{p_i} \conteqDOeval
\Prod i \pair{N_i}{p_i}$
\end{lemma}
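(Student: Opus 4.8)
The plan is to unfold the definition of $\conteqDOeval$ and reduce the claim to an equality of convergence probabilities, and then to establish that a formal sum sitting in redex position distributes, probability-wise, over the arguments supplied by an evaluation context.

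First I would fix an arbitrary closing evaluation context; since formal sums occur only in head position, it has the shape $\contexthole\,\til L$ with $\til L = L_1\cdots L_k$ a sequence of closed $\LaPRO$-terms. Recall that $E \DwaDO{\probone}$ holds precisely when $E \DSLongrightarrow Z$ with $\prob Z = \probone$, that this $Z$ is unique, and that for it $\prob Z = \sumsem Z = \sumsem E$ by the agreement between $\DSlongrightarrow$ and $\sem{\cdot}$ (Lemma~\ref{l:new}). Thus proving $\Prod{i}\pair{M_i}{p_i} \conteqDOeval \Prod{i}\pair{N_i}{p_i}$ amounts to showing, for every such $\til L$,
$$
\sumsem{(\Prod{i}\pair{M_i}{p_i})\,\til L} = \sumsem{(\Prod{i}\pair{N_i}{p_i})\,\til L}.
$$
The point that makes the hypothesis usable is that $\contexthole\,\til L$ is at the same time a \emph{simple} $\LaDO$ context (it contains no formal sum), so $M_i \conteqDO N_i$ applies to it and delivers $\sumsem{M_i\,\til L} = \sumsem{N_i\,\til L}$ for every $i$.

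The heart of the argument is a linearity property of the semantics in redex position,
$$
\sem{(\Prod{i}\pair{M_i}{p_i})\,\til L} = \Sum_i p_i\cdot\sem{M_i\,\til L},\qquad(\star)
$$
which I would prove by induction on $k$. The base case $k=0$ is just the defining equation $\sem{\Prod{i}\pair{M_i}{p_i}} = \Sum_i p_i\,\sem{M_i}$. For the inductive step, reducing the head by $\trans{spc}$ and $\trans{sa}$ replaces $\Prod{i}\pair{M_i}{p_i}$ by the value $Z = \Sum_i p_i\,\sem{M_i}$ (again via Lemma~\ref{l:new}); rule $\trans{sp}$ then distributes the leading argument $L_1$ through $Z$, and the call-by-name application rule $\ban$ (equivalently Lemma~\ref{lemma:sembetaCBN}) lets the result be regrouped as $\Sum_i p_i\,\sem{M_i L_1}$, after which the induction hypothesis on the remaining $k-1$ arguments yields $(\star)$. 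Summing $(\star)$ and substituting the component equalities from the previous paragraph closes the proof:
$$
\sumsem{(\Prod{i}\pair{M_i}{p_i})\,\til L} = \Sum_i p_i\,\sumsem{M_i\,\til L} = \Sum_i p_i\,\sumsem{N_i\,\til L} = \sumsem{(\Prod{i}\pair{N_i}{p_i})\,\til L}.
$$

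The main obstacle is $(\star)$: it is the only step that requires genuine reasoning about the operational semantics of $\LaDO$, namely that pushing a formal sum in head position through a stack of applications commutes with $\sem{\cdot}$. Its proof rests on the determinism and strong normalization of $\DSlongrightarrow$, on Lemma~\ref{l:new} relating $\DSlongrightarrow$ to the $\LOP$-semantics, and on the linearity of the call-by-name semantics over finite value distributions. Once $(\star)$ is available the remainder is routine bookkeeping; in particular I do not expect to need Lemmas~\ref{l:redEQUIV} and~\ref{l:appEQUIV} here, since those are tailored to the later converse direction.
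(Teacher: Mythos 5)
Your proposal is correct and takes essentially the same route as the paper's proof: both reduce the claim to the fact that the convergence probability of $\Prod i \pair{M_i}{p_i}\,\til L$ is the $p_i$-weighted sum of the convergence probabilities of the $M_i\,\til L$, and then apply the hypothesis $M_i \conteqDO N_i$ componentwise to the simple evaluation context $\contexthole\,\til L$. The only difference is presentational: the paper asserts this decomposition operationally (the normal form of $\Prod i \pair{M_i}{p_i}\,\til L$ is $\Prod i \pair{Z_i}{p_i}$ where $M_i\,\til L \DSLongrightarrow Z_i$), whereas you establish the same fact denotationally via $\sem{\cdot}$ and Lemma~\ref{l:new}, proving your distribution property $(\star)$ by induction on the number of arguments.
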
  
\begin{proof}
  Suppose $\Prod i \pair{M_i}{p_i} \til M \DSLongrightarrow Z$ and 
  $\Prod i \pair{N_i}{p_i}\til M \DSLongrightarrow Y$.
  We have to show $\prob Z = \prob Y$.
  We have $Z = \Prod i \pair{Z_i}{p_i}$ for 
  $Z_i$ with  $M_i \DSLongrightarrow Z_i$. 
  Similarly
  $Y = \Prod i \pair{Y_i}{p_i}$ for 
  $Y_i$ with  $N_i \DSLongrightarrow N_i$.
  Then the result follows from $\prob{Z_i}=\prob{Y_i}$.
\end{proof} 

\begin{theorem}\label{t:ciuBis}
  We have ${\conteqDO} \subseteq {\approx_1}$, and   
  ${\conteqDOeval} \subseteq {\approx_2}$. 
\end{theorem}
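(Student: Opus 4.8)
The plan is to exhibit the pair $\R \defi (\conteqDO,\conteqDOeval)$ and prove that it is a coupled logical bisimulation in the sense of Definition~\ref{d:clb}; the two desired inclusions then follow at once, since $\logbis_1$ and $\logbis_2$ are by definition the unions of the first and second components of all coupled logical bisimulations. First I would verify that $\R$ is a legitimate \coupledrel, i.e. that $\conteqDO\,\subseteq\,\conteqDOeval$: this holds because every $\LaDO$ evaluation context $\contexthole\til M$ is a simple $\LaDO$ context, so equivalence under all simple contexts implies, in particular, equivalence under evaluation contexts. Both $\conteqDO$ and $\conteqDOeval$ are equivalence relations (their defining biconditionals are reflexive, symmetric and transitive), so clause~3 of Definition~\ref{d:clb} will be automatic by symmetry once clauses~1 and~2 are established for the second component $\conteqDOeval$.

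Clause~1 is essentially free. Suppose $E\conteqDOeval F$ and $E\DSlongrightarrow D$. By Lemma~\ref{l:redEQUIV} we have $E\conteqDOeval D$, hence $D\conteqDOeval F$ by symmetry and transitivity; taking $G\defi F$, so that $F\DSLongrightarrow G$ holds in zero steps, discharges the clause, since $D\conteqDOeval G$.

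Clause~2 is the substantive part, and the place where I expect the real work. Let $E$ be a formally summed value $Z$ with $Z\conteqDOeval F$. Because $\DSlongrightarrow$ is deterministic and strongly normalizing, $F\DSLongrightarrow Y$ for a unique value $Y$. Testing with the trivial evaluation context $\contexthole$ (empty argument list), and noting that $Z$ is already a normal form, we get $Z\DwaDO{\prob Z}$ and $F\DwaDO{\prob Y}$, whence $\prob Z=\prob Y$ from $Z\conteqDOeval F$. Moreover $F\DSLongrightarrow Y$ gives $F\conteqDOeval Y$ (Lemma~\ref{l:redEQUIV} plus transitivity), so $Z\conteqDOeval Y$. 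It then remains to prove, for every $M\starred{\conteqDO}N$, that $(Z\bullet M)\conteqDOeval(Y\bullet N)$. I would obtain this by a short chain splicing two lemmas: $Z\bullet M\conteqDOeval Y\bullet M$ by Lemma~\ref{l:appEQUIV} applied to $Z\conteqDOeval Y$, and $Y\bullet M\conteqDOeval Y\bullet N$ by Lemma~\ref{l:MiNiEQUIV}, the two being joined by transitivity of $\conteqDOeval$.

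The crux, and the step I expect to be the main obstacle, is supplying the hypothesis of Lemma~\ref{l:MiNiEQUIV}: writing $Y=\Prod i \pair{\lambda x. Y_i}{q_i}$, I need $Y_i\sub Mx\conteqDO Y_i\sub Nx$ for each $i$. Here I would use that $\conteqDO$ coincides with context equivalence of $\LaPRO$ (by Lemma~\ref{l:new}), which is a congruence. Congruence for single-hole contexts lifts to multi-hole contexts by filling holes one at a time and using transitivity, so $M\starred{\conteqDO}N$ forces $M\conteqDO N$, i.e. $\starred{\conteqDO}\,\subseteq\,\conteqDO$; substitutivity of context equivalence then yields $Y_i\sub Mx\conteqDO Y_i\sub Nx$, as required. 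With clauses~1--3 verified, $\R$ is a coupled logical bisimulation, giving $\conteqDO\,\subseteq\,\approx_1$ and $\conteqDOeval\,\subseteq\,\approx_2$, which is the theorem.
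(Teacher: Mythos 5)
Your proposal is correct and follows essentially the same route as the paper: both exhibit the pair $(\conteqDO,\conteqDOeval)$ and verify it is a coupled logical bisimulation, discharging clause~1 via Lemma~\ref{l:redEQUIV} plus transitivity, and clause~2 by chaining Lemma~\ref{l:appEQUIV} (giving $Z\bullet M\conteqDOeval Y\bullet M$) with Lemma~\ref{l:MiNiEQUIV} (giving $Y\bullet M\conteqDOeval Y\bullet N$), using $\starred{\conteqDO}\subseteq\conteqDO$ to reduce context-closed arguments to $\conteqDO$-related ones. In fact your write-up is tighter than the paper's: it makes explicit the checks the paper leaves implicit (that $\conteqDO\subseteq\conteqDOeval$ so the pair is a legitimate \coupledrel, that $\prob Z=\prob Y$ comes from testing with the empty evaluation context, and the congruence argument feeding the hypothesis of Lemma~\ref{l:MiNiEQUIV}).
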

\begin{proof}
  We take the  \coupledrel\ $\R$ with 
  \[
  \begin{array}{rcl}
  \R_1 & \defi & \{ (M,N) \st M \conteqDO N\}\\
  \R_2 & \defi & \{ (E,F) \st E \conteqDOeval F\}
   \} 
  \end{array}  
  \]  
  and show that $\R$ is a bisimulation.

For clause (1), one uses Lemma~\ref{l:redEQUIV} and transitivity of
$\conteqDOeval$.
For clause (2), 
consider a term
$Z$ with $Z\conteqDOeval F$.
By definition of $\conteqDOeval$,
 $F \DSLongrightarrow Y$ with $\prob Z =
\prob Y$. Take now arguments $M \conteqDO N$ (which is sufficient, since
$\starred\conteqDO \subseteq \conteqDO$). 
By Lemma~\ref{l:appEQUIV},
$Z \bullet M \conteqDOeval Y\bullet N$.
By Lemma~\ref{l:MiNiEQUIV}, 
$W \bullet M \conteqDOeval Y\bullet N$. 
Hence also 
$Z \bullet M \conteqDOeval Y\bullet N$, and we have
 $Z \bullet M \RR_2 Y\bullet N$.
\end{proof}
\noindent{}It also holds that coupled logical bisimilarity is preserved by
the formal sum construct; i.e., $M_i \approx_1 N_i$ for each $i\in I $
implies $ \Prod{ i\in I} \pair{M_i}{p_i} \approx_2 \Prod{ i\in I}
\pair{N_i}{p_i}$. As a consequence, context equivalence defined on general
$\LaDO$ contexts is the same as that set on simple contexts
(Definition~\ref{d:ciu}).

\begin{remark}
\label{r:nonmono} 
The functional induced by coupled logical bisimulation is \emph{not} monotone.
For instance, if $\V \subseteq \W$, then  a pair of terms may satisfy
the bisimulation clauses on $(\V,\E)$, for some $\E$, but not on
$(\W,\E)$, because the input for functions may be taken from the
larger relation $\W$.
(Recall that coupled relations are \emph{pairs} of relations. Hence operations on coupled
relations, such as union and inclusion, are defined component-wise.)
 However, Corollary~\ref{c:BisContDO} and 
Theorem~\ref{t:ciuBis} tell us that there is indeed a largest bisimulation, 
namely the pair $(\conteqDO,\conteqDOeval)$. 
\end{remark}
With logical (as well as environmental) bisimulations, up-to
techniques are particularly important to relieve the burden of proving
concrete equalities. A powerful up-to technique in higher-order
languages is \emph{up-to contexts}. We present  a form of up-to
contexts combined with the big-step version of logical bisimilarity. 
Below, for a relation $\R$ on $\LaPRO$, we write $\starredD\R$ for
the closure of the relation under general (closing) $\LaDO$ contexts.

\begin{definition}
\label{d:bigUPTOcon}
A \coupledrel\  $\R$ is a \emph{big-step 
coupled logical bisimulation up-to contexts} if whenever $E \RR_2 F$, the
following holds:
if $E \DSLongrightarrow Z$ then $F \DSLongrightarrow Y$ with $\prob Z =
\prob Y$, and for all $M \starred{\R_1} N$, we have $(Z \bullet
M)\starredD{\R_1}(Y \bullet N)$.
\end{definition}
For the soundness proof, we first derive the soundness of a small-step
up-to context technique, whose proof, in turn, is similar to that 
of Lemma~\ref{l:CTclo} (the up-to-formal-sums technique of
Definition~\ref{d:uptoFS} already
allows  some context manipulation; we need this technique for the proof 
of the up-to-contexts technique).

\begin{example}
We have seen that the terms \hsk{expone} and \hsk{exptwo}
of Example~\ref{ex:exp} are not applicative
bisimilar. We can show that they  are context equivalent, by proving
that they are coupled bisimilar.  We sketch a proof of this, in which
we employ the up-to technique from Definition~\ref{d:bigUPTOcon}.  We
use the coupled relation $\R$ in which $\R_1 \defi \{ (\hsk{expone},
\hsk{exptwo})\}$, and $\R_2 \defi \R_1 \cup \{ (A_M,B_N) \; | \; 
M
\starred{\R_1} N \}$ where 
$A_M \defi 
  \lambda n . ( (M n) \oplus (\hsk{expone}\, M \, (n+1)))$, and 
$B_N \defi 
(\lambda x . N x) \oplus (\hsk{exptwo}\, (\lambda x .N (x+1)))
$. This is a big-step coupled logical bisimulation up-to contexts. 
The interesting part is the matching argument for the terms $A_M,
B_N$; upon receiving an argument $m$ they yield the summed values 
$\Prod {i} \pair{M (m+1)}{p_i} $ and 
$\Prod {i} \pair{N (m+1)}{p_i} $ (for some $p_i$'s), and these  are in
$\starredD{\R_1}$.
\end{example}

\section{Beyond Call-by-Name Reduction}\label{sect:beyond}
So far, we have studied the problem of giving sound (and sometime complete)
coinductive methods for program equivalence in a probabilistic $\lambda$-calculus endowed
with \emph{call-by-name} reduction. One may  wonder whether what
we have obtained can be adapted to other notions of reduction, and in particular
to \emph{call-by-value} reduction (e.g., the  call-by-value
operational semantics of $\LOP$ from~\cite{DalLagoZorzi}).

Since our construction of a labelled Markov chain for $\LOP$ is somehow independent on the underlying
operational semantics, {defining} a call-by-value probabilistic applicative
bisimulation is effortless. The  proofs of congruence of the bisimilarity
and its soundness      in this paper can also be transplanted to
call-by-value. In defining $\LOP$ as a multisorted labelled Markov
chain for the strict regime, one should recall that functions are
applied to values only.
\begin{definition}
\label{d:multisortCBV}
$\LOP$ can be seen as a multisorted labelled Markov chain
$(\LOPp{\emptyset}\uplus\val,\val\uplus\{\evlabel\},\translop)$ that we
denote with $\cbv{\LOP}$. Please observe that, contrary to how we gave
Definition~\ref{d:multisort} for call-by-name semantics, labels here are
either values, which model parameter passing, or $\evlabel$, that models
evaluation. We define the transition probability matrix $\translop$ as
follows:
\begin{varitemize}
\item 
  For every term $\termone$ and for every 
  distinguished value $\clabstr{\varone}{\termtwo}$,
  $$
  \translop(\termone, \evlabel,\clabstr{\varone}{\termtwo})\defi\sem{\termone}(\clabstr{\varone}{\termtwo});
  $$
\item 
  For every value $\valone$ and for every 
  distinguished value $\clabstr{\varone}{\termtwo}$,
  $$
  \translop(\clabstr{\varone}{\termtwo}, \valone,
  \subst{\termtwo}{\varone}{\valone})\defi 1;
  $$
\item
  In all other cases, $\translop$ returns $0$.
\end{varitemize}
\end{definition}
Then, similarly to the call-by-name case, one can define both probabilistic
applicative simulation and bisimulation notions as probabilistic simulation
and bisimulation on $\cbv{\LOP}$. This way one can define
\emph{probabilistic applicative bisimilarity}, which is denoted $\cbvpab$,
and \emph{probabilistic applicative similarity}, denoted $\cbvpas$.

Proving that $\cbvpas$ is a precongruence, follows the reasoning we have
outlined for the lazy regime. Of course, one must prove a Key Lemma first.
\begin{lemma}\label{lemma:keylemmaCBN}
  If $\relu{\termone}{\howe{\cbvpas}}{\termtwo}$, then for every
  $\setone\subseteq\LOP(\varone)$ it holds that
  $\sem{\termone}(\abstr{\varone}{\setone})\leq\sem{\termtwo}(\abstr{\varone}{(\howe{\cbvpas}(\setone))})$.
\end{lemma}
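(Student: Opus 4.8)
The plan is to follow, step by step, the strategy used for the call-by-name Key Lemma~\ref{lemma:keylemma}, adapting each case to the strict regime. First I would fix a call-by-value big-step approximation semantics $\ibsemv{\termone}{\distone}$, analogous to the one in Figure~\ref{fig:bscbnsem} but whose application rule evaluates the argument to a value distribution and substitutes \emph{values} into function bodies, in accordance with Definition~\ref{d:multisortCBV}. As in the lazy case, I would reformulate the statement as the implication: whenever $\relu{\termone}{\howe{\cbvpas}}{\termtwo}$ and $\ibsemv{\termone}{\distone}$, then for every $\setone\subseteq\LOP(\varone)$ it holds that $\distone(\abstr{\varone}{\setone})\leq\sem{\termtwo}(\abstr{\varone}{\howe{\cbvpas}(\setone)})$. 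The proof then proceeds by induction on the derivation of $\ibsemv{\termone}{\distone}$, with one case per last rule used.

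The cases for the empty distribution, for a value, and for a probabilistic sum carry over essentially verbatim from the call-by-name argument, since they depend only on the shape of the Howe-lifting rules \Howeone--\Howefour, on Lemma~\ref{lemma:semsumCBN}, and on the commutation property of Lemma~\ref{lemma:pascomm}, whose call-by-value analogue holds by the same reasoning (it uses only that $\cbvpas$ is a probabilistic simulation on $\cbv{\LOP}$). For the value case I would exploit that the derivation of $\relu{\termone}{\howe{\cbvpas}}{\termtwo}$ ends with \Howetwo, producing an intermediate abstraction $\abstr{\varone}{\termfour}$ with $\rel{\emcon}{\abstr{\varone}{\termfour}}{\cbvpas}{\termtwo}$, and then conclude via the inclusion $\cbvpas(\termfour)\subseteq\howe{\cbvpas}(\setone)$ and the call-by-value version of Lemma~\ref{lemma:pascomm}.

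The genuine difficulty is again the application case $\termone=\termthree\termfour$, and here call-by-value departs essentially from call-by-name: the argument is evaluated before being passed, so the derivative distribution is a \emph{double} sum, indexed both by the abstractions $\abstr{\varone}{\termfive_i}$ in the support of the function distribution $\distthree$ and by the values $\valone$ in the support of the argument distribution $\distfour$. I would first apply the induction hypothesis to the function subterm, obtaining $\distthree(\abstr{\varone}{\settwo})\leq\sem{\termsix}(\abstr{\varone}{\howe{\cbvpas}(\settwo)})$ for every $\settwo$, put the finitely many summands from $\supp{\distthree}$ into the form required by Lemma~\ref{lemma:disentangling}, and extract coefficients $\realone_i^{\termnine,\termsix}$ exactly as in the lazy proof. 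The new ingredient is that, before invoking the induction hypothesis on the substituted bodies, I must also control the argument: using the call-by-value substitutivity of $\howe{\cbvpas}$ (the value-substitution analogue of Lemma~\ref{lemma:closesubsCBN}) together with the induction hypothesis applied to $\termfour\howe{\cbvpas}\termseven$, I would match each $\valone\in\supp{\distfour}$ with a value in $\howe{\cbvpas}(\valone)$, so that $\termfive_i\howe{\cbvpas}\termnine$ and $\valone\howe{\cbvpas}\valtwo$ yield $\subst{\termfive_i}{\varone}{\valone}\howe{\cbvpas}\subst{\termnine}{\varone}{\valtwo}$.

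I expect the main obstacle to be precisely this interaction of two disentanglings: in call-by-value one must simultaneously split the mass of the function distribution (through Lemma~\ref{lemma:disentangling}) and account for the mass of the argument distribution, ensuring the two decompositions are compatible so that the final chain of inequalities collapses to $\sem{\termsix\termseven}(\abstr{\varone}{\howe{\cbvpas}(\setone)})\leq\sem{\termtwo}(\abstr{\varone}{\howe{\cbvpas}(\setone)})$. I would resolve this either by applying the disentangling lemma a second time to the argument's value distribution, or by folding the argument values into the index set over which Lemma~\ref{lemma:disentangling} is applied; the bookkeeping is heavier than in the call-by-name case but requires no combinatorial tool beyond Max-Flow-Min-Cut (Theorem~\ref{t:mf-mc}). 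The remaining steps---assembling the coefficients, commuting the sums, bounding by $\sem{\termtwo}$ through the final $\cbvpas$-step and Lemma~\ref{lemma:pascomm}---would then follow the lazy template line by line.
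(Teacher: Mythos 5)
Your proposal is correct and takes essentially the same approach as the paper: the paper gives no detailed proof of this lemma, saying only that it is ``not particularly different'' from the call-by-name Key Lemma (Lemma~\ref{lemma:keylemma}), with application as the one delicate case because the argument now evaluates to a distribution of values. Your plan---induction on a call-by-value big-step approximation derivation, the empty/value/sum cases carried over, and the application case resolved by disentangling both the function and the argument distributions (via Lemma~\ref{lemma:disentangling}) together with value-substitutivity of the Howe lifting---is precisely that adaptation, spelled out at the right level of detail.
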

As the statement, the proof is not particularly different from the one we
have provided for Lemma~\ref{lemma:keylemma}. The only delicate case is
obviously that of application. This is due to its operational semantics
that, now, takes into account also the distribution of values the
parameter reduces to. Anyway, one can prove $\cbvpab$ of implying context
equivalence. 

When we restrict our attention to pure $\lambda$-terms, as we do in
Section~\ref{sect:dppc}, we are strongly relying on call-by-name
evaluation: LLT's only reflect term equivalence in a call-by-name lazy
regime. We leave the task of generalizing the results to eager evaluation
to future work, but we conjecture that, in that setting, probabilistic
choice \emph{alone} does not give contexts the same discriminating power as
probabilistic bisimulation.  Similarly we have not investigated the
call-by-value version of coupled logical bisimilarity, as our current
proofs rely on the appearance of formal sums only in redex position, a
constraint that would probably have to be lifted for call-by-value.

\section{A Comparison with Nondeterminism}\label{sect:comparison}
Syntactically, $\LOP$ is identical to an eponymous language introduced by
de'Liguoro and Piperno~\cite{deLiguoroPiperno95}.  The semantics we present
here, however, is quantitative, and this has of course a great impact on
context equivalence.  While in a nondeterministic setting what one observes
is the \emph{possibility} of converging (or of diverging, or both), terms
with different convergence probabilities are considered different in an
essential way here. Actually, nondeterministic context equivalence and
probabilistic context equivalence are incomparable.  As an example of terms
that are context equivalent in the \emph{must} sense but not
probabilistically, we can take $\ps{I}{(\ps{I}{\Omega})}$ and
$\ps{I}{\Omega}$. Conversely, $I$ is probabilistically equivalent to any
term $\termone$ that reduces to $\ps{I}{\termone}$ (which can be defined
using fixed-point combinators), while $I$ and $\termone$ are not equivalent
in the \emph{must} sense, since the latter can diverge (the divergence is
irrelevant probabilistically because it has probability zero). \emph{May}
context equivalence, in contrast, is coarser than probabilistic
context equivalence.

Despite the differences, the two semantics have similarities.  Analogously
to what happens in nondeterministic $\lambda$-calculi, applicative
bisimulation and context equivalence do \emph{not} coincide in the
probabilistic setting, at least if call-by-name is considered. 
The counterexamples to full abstraction are much more complicated
in call-by-value $\lambda$-calculi ~\cite{Las98a}, and cannot
be easily adapted to the probabilistic setting.

\section{Conclusions}
This is the first paper in which bisimulation  techniques for program
equivalence 
are shown to be applicable to 
probabilistic $\lambda$-calculi.

On the one hand, Abramsky's idea of seeing interaction as application is
shown to be amenable to a probabilistic treatment, giving rise to a
congruence relation that is sound for context equivalence. Completeness,
however, fails: the way probabilistic applicative bisimulation is defined
allows one to distinguish terms that are context equivalent, but which behave
differently as for \emph{when} choices and interactions are performed.  On
the other, a notion of coupled logical bisimulation is introduced and
proved to precisely characterise context equivalence for $\LOP$. Along the
way,  applicative bisimilarity is proved to coincide with context
equivalence on pure $\lambda$-terms, yielding the Levy-Longo tree equality.

The crucial difference between the two main bisimulations studied
in the paper is not the style (applicative \emph{vis-\`a-vis} logical), but rather the fact
that while applicative bisimulation insists on relating only individual
terms, coupled logical bisimulation is more flexible and allows us to relate
formal sums (which we may think as distributions). This also explains
why we need distinct reduction rules for the two bisimulations.
See examples~\ref{ex:count} and \ref{ex:countRevisited}. 
While not complete, applicative bisimulation, as it stands,  is
simpler to use than coupled logical bisimulation. Moreover it is a natural
form of bisimulation, and it should be interesting trying to transport 
the techniques for handling it onto variants or extensions of the language.

Topics for future work abound --- some have already been hinted at in
earlier sections. Among the most interesting ones, one can mention  
 the transport of applicative bisimulation  onto the language $\LaDO$.
We conjecture that the resulting relation would coincide with coupled
logical bisimilarity and 
context equivalence, but going through  Howe's technique seems more difficult than for $\LOP$, given
the infinitary nature of formal sums and their confinement to redex positions.

Also interesting would be a more \emph{effective} notion of equivalence: even if the two introduced notions of bisimulation
avoid universal quantifications over all possible contexts, they refer to an essentially infinitary
operational semantics in which the meaning of a term is obtained as the least upper bound of all
its finite approximations. Would it be possible to define bisimulation in terms of approximations
without getting too fine grained? 

Bisimulations in the style of logical bisimulation (or environmental
bisimulation) are known to require up-to techniques in order to avoid
tedious equality proofs on concrete terms. In the paper we have
introduced some up-to techniques for coupled logical bisimilarity, but
additional techniques  would be useful. 
Up-to techniques could also be
developed for  applicative bisimilarity.

More in the long-run, we would like to develop sound operational techniques for so-called
\emph{computational indistinguishability}, a key notion in modern cryptography.
Computational indistinguishability is defined similarly to context
equivalence; the context is however required to work within appropriate resource bounds, while
the two terms can have different observable behaviors (although with negligible probability). We see this work
as a very first step in this direction: complexity bounds are not yet there, but probabilistic
behaviour, an essential ingredient, is correctly taken into account.



\bibliographystyle{plain}
\bibliography{biblio}

\end{document}